\def\bos#1{{\mathbf{#1}}}
\def\mb#1{\hbox{{\mathversion{bold}$#1$}}}
\begin{document}


\catcode`\@=11 

\global\newcount\nsecno \global\nsecno=0
\global\newcount\meqno \global\meqno=1
\def\newsec#1{\global\advance\nsecno by1
\eqnres@t
\section{#1}}
\def\eqnres@t{\xdef\nsecsym{\the\nsecno.}\global\meqno=1}
\def\sequentialequations{\def\eqnres@t{\bigbreak}}\xdef\nsecsym{}

\def\draftmode{\message{ DRAFTMODE }
\writelabels

{\count255=\time\divide\count255 by 60 \xdef\hourmin{\number\count255}
\multiply\count255 by-60\advance\count255 by\time
\xdef\hourmin{\hourmin:\ifnum\count255<10 0\fi\the\count255}}}
\def\nolabels{\def\wrlabeL##1{}\def\eqlabeL##1{}\def\reflabeL##1{}}
\def\writelabels{\def\wrlabeL##1{\leavevmode\vadjust{\rlap{\smash%
{\line{{\escapechar=` \hfill\rlap{\tt\hskip.03in\string##1}}}}}}}%
\def\eqlabeL##1{{\escapechar-1\rlap{\tt\hskip.05in\string##1}}}%
\def\reflabeL##1{\noexpand\llap{\noexpand\sevenrm\string\string\string##1}
}}

\nolabels

\def\eqn#1#2{
\xdef #1{(\nsecsym\the\meqno)}
\global\advance\meqno by1
$$#2\eqno#1\eqlabeL#1
$$}

\def\eqalign#1{\null\,\vcenter{\openup\jot\m@th
\ialign{\strut\hfil$\displaystyle{##}$&$\displaystyle{{}##}$\hfil
\crcr#1\crcr}}\,}

\def\foot#1{\footnote{#1}}

\catcode`\@=12 
%

\def\a{\alpha}
\def\b{\beta}
\def\c{\chi}
\def\d{\delta}  \def\D{\Delta}
\def\e{\varepsilon} \def\ep{\epsilon}
\def\f{\phi}  \def\F{\Phi}
\def\g{\gamma}  \def\G{\Gamma}
\def\k{\kappa}
\def\l{\lambda}  \def\La{\Lambda}
\def\m{\mu}
\def\n{\nu}
\def\r{\rho}
\def\vr{\varrho}
\def\o{\omega}  \def\O{\Omega}
\def\p{\psi}  \def\P{\Psi}
\def\s{\sigma}  \def\S{\Sigma}
\def\th{\theta}  \def\vt{\vartheta}
\def\t{\tau}
\def\w{\varphi}
\def\x{\xi}
\def\z{\zeta}
\def\U{\Upsilon}
\def\CA{{\cal A}}
\def\CB{{\cal B}}
\def\CC{{\cal C}}
\def\CD{{\cal D}}
\def\CE{{\cal E}}
\def\CF{{\cal F}}
\def\CG{{\cal G}}
\def\CH{{\cal H}}
\def\CI{{\cal I}}
\def\CJ{{\cal J}}
\def\CK{{\cal K}}
\def\CL{{\cal L}}
\def\CM{{\cal M}}
\def\CN{{\cal N}}
\def\CO{{\cal O}} 
\def\CP{{\cal P}}
\def\CQ{{\cal Q}}
\def\CR{{\cal R}}
\def\CS{{\cal S}}
\def\CT{{\cal T}}
\def\CU{{\cal U}}
\def\CV{{\cal V}}
\def\CW{{\cal W}}
\def\CX{{\cal X}}
\def\CY{{\cal Y}}
\def\CZ{{\cal Z}}
%

\def\V{\mathbb{V}}
\def\E{\mathbb{E}}
\def\R{\mathbb{R}}
\def\C{\mathbb{C}}
\def\Z{\mathbb{Z}}
\def\A{\mathbb{A}}
\def\T{\mathbb{T}}
\def\L{\mathbb{L}}
\def\D{\mathbb{D}}
\def\Q{\mathbb{Q}}


\def\mJ{\mathfrak{J}}
\def\mq{\mathfrak{q}}
\def\mQ{\mathfrak{Q}}
\def\mP{\mathfrak{P}}
\def\mp{\mathfrak{p}}
\def\mH{\mathfrak{H}}
\def\mh{\mathfrak{h}}
\def\ma{\mathfrak{a}}
\def\mA{\mathfrak{A}}
\def\mC{\mathfrak{C}}
\def\mc{\mathfrak{c}}
\def\ms{\mathfrak{s}}
\def\mS{\mathfrak{S}}
\def\mm{\mathfrak{m}}
\def\mM{\mathfrak{M}}
\def\mn{\mathfrak{n}}
\def\mN{\mathfrak{N}}
\def\mt{\mathfrak{t}}
\def\ml{\mathfrak{l}}
\def\mT{\mathfrak{T}}
\def\mL{\mathfrak{L}}
\def\mo{\mathfrak{o}}
\def\mg{\mathfrak{g}}
\def\mG{\mathfrak{G}}
\def\mf{\mathfrak{f}}
\def\mF{\mathfrak{F}}
\def\md{\mathfrak{d}}
\def\mD{\mathfrak{D}}
\def\mO{\mathfrak{O}}
\def\mk{\mathfrak{k}}
\def\mK{\mathfrak{K}}
\def\mR{\mathfrak{R}}
\def\sA{\mathscr{A}}
\def\sB{\mathscr{B}}
\def\sC{\mathscr{C}}
\def\sD{\mathscr{D}}
\def\sE{\mathscr{E}}
\def\sF{\mathscr{F}}
\def\sG{\mathscr{G}}
\def\sL{\mathscr{L}}
\def\sM{\mathscr{M}}
\def\sN{\mathscr{N}}
\def\sO{\mathscr{O}}
\def\sP{\mathscr{P}}
\def\sQ{\mathscr{Q}}
\def\sR{\mathscr{R}}
\def\sS{\mathscr{S}}
\def\sT{\mathscr{T}}
\def\sU{\mathscr{U}}
\def\sV{\mathscr{V}}
\def\sW{\mathscr{W}}
\def\sX{\mathscr{X}}
\def\sY{\mathscr{Y}}
\def\sY{\mathscr{Z}}


\hyphenation{anom-aly anom-alies coun-ter-term coun-ter-terms
}

\def\tr{{\rm tr}} \def\Tr{{\rm Tr}}

\def\tilde{\widetilde} 
\def\hat{\widehat}
%

\def\grad#1{\,\nabla\!_{{#1}}\,}
\def\gradgrad#1#2{\,\nabla\!_{{#1}}\nabla\!_{{#2}}\,}
\def\ph{\varphi}
\def\psibar{\overline\psi}
\def\om#1#2{\omega^{#1}{}_{#2}}
\def\vev#1{\langle #1 \rangle}
\def\ha{{1\over2}}
\def\half{{\textstyle{1\over2}}} 
\def\roughly#1{\raise.3ex\hbox{$#1$\kern-.75em\lower1ex\hbox{$\sim$}}}

\def\rd{\partial}
\def\ha{{\textstyle{1\over2}}}
\def\fr#1#2{{\textstyle{#1\over#2}}}
\def\Fr#1#2{{#1\over#2}}
\def\fs#1{#1\!\!\!/\,}   
\def\Fs#1{#1\!\!\!\!/\,} 
\def\ato#1{{\buildrel #1\over\longrightarrow}}
\def\up#1#2{{\buildrel #1\over #2}}

\def\pr{\prime}
\def\ppr{{\prime\prime}}

\def\bari{\bar\imath}
\def\barj{\bar\jmath}
\def\mapr#1{\!\smash{\mathop{\longrightarrow}\limits^{#1}}\!}
\def\mapl#1{\!\smash{\mathop{\longleftarrow}\limits^{#1}}\!}
\def\mapbr{\!\smash{\mathop{\longrightarrow}\limits^{\bbs_+}}\!}
\def\mapbl{\!\smash{\mathop{\longleftarrow}\limits^{\bbs_-}}\!}
\def\mapd#1{\Big\downarrow\rlap{$\vcenter{#1}$}}
\def\mapu#1{\Big\uparrow\rlap{$\vcenter{#1}$}}
\def\maprd{\rlap{\lower.3ex\hbox{$\scriptstyle\bs_+$}}\searrow}
\def\mapld{\swarrow\!\!\!\rlap{\lower.3ex\hbox{$\scriptstyle\bs_-$}}}
\def\ne{\nearrow}
\def\se{\searrow}
\def\nw{\nwarrow}
\def\sw{\swarrow}
\def\etal{et al.}
\def\Ker{\hbox{Ker}\;}
\def\Im{\hbox{Im}\;}

\def\ket#1{\left|\bos{ #1}\right>}\vspace{.2in}
   \def\bra#1{\left<\bos{ #1}\right|}
\def\oket#1{\left.\bos{ #1}\right>}
\def\obra#1{\left<\bos{ #1}\right.}
\def\epv#1#2#3{\left<\bos{#1}\left|\bos{#2}\right|\bos{#3}\right>}
\def\qbvk#1#2{\bos{\left(\bos{#1},\bos{#2}\right)}}
\def\Hoch{{\tt Hoch}}
\def\rrd{\up{\rightarrow}{\rd}}
\def\lrd{\up{\leftarrow}{\rd}}


\def\mod{\hbox{ }mod\hbox{ }}


\title{Algebraic Principles
of Quantum Field Theory I}
\subtitle{Foundation and an exact solution of BV QFT}
\author{Jae-Suk Park
\thanks{This work was supported by 
the Korea Research Foundation Grant 
funded by the Korean Government (MOEHRD, Basic Research Promotion Fund) 
(KRF-2006-331-C00032).}
}
\institute{
Department of Mathematics,
Yonsei University,
Seoul 120-749, Korea.
\\
\email{jaesuk@yonsei.ac.kr}
}
\dedication{
I believe it might interest a philosopher, one who can think himself, 
to read my notes. 
For even if I have hit the mark only rarely, 
he would recognize what targets I had been ceaselessly aiming at. 
-Ludwig Wittgenstein
}


\maketitle
\begin{abstract}
This is the first in a series of papers on an attempt to understand
quantum field theory mathematically.  In this paper we shall
introduce and
study  BV QFT algebra and BV QFT as the proto-algebraic model of
quantum field theory by exploiting Batalin-Vilkovisky quantization scheme.
We shall develop a complete theory of obstruction (anomaly) to quantization of 
classical observables and propose that expectation value of quantized observable is 
certain quantum homotopy invariant.
We shall, then, suggest a new method,
bypassing Feynman's
path integrals,
of computing quantum correlation functions 
when there is no anomaly. 
An exact solution for all quantum correlation functions shall be presented  
provided that the number of equivalence classes of observables is finite for each ghost numbers.
Such a theory shall have its natural family parametrized by a smooth-formal moduli space 
in quantum coordinates, which notion generalize that of flat or special  coordinates
in topological string theories and shall be interpreted as 
an example of quasi-isomorphism of general QFT algebra.

\end{abstract}

\newsec{Introduction}

This series of papers is on a quest to gain some
mathematical understanding of  Quantum Field Theory,
hoping to arrive at  certain algebraic category  equivalent 
to the  "category of  quantum field theories".  
Such algebraic category is to be called category of QFT algebras (up to homotopy), and it shall be proposed 
that 
quantum field theory is  a study of morphisms of QFT algebras such that
two quantum field theories are physically equivalent if and only if the associated
QFT algebras are  equivalent.
As a justification of this elusive program, we will argue that it is possible 
to capture rather complete physical information by studying such algebraic category. 

Our journey begins with  setting up the prototype of QFT algebra named  BV QFT algebra
after some reflections on the widely accepted 
and  rather general scheme for quantization of classical field theory due to 
Batalin and Vilkovisky (BV) \cite{BV}.
A BV QFT shall be a BV QFT algebra 
with a natural algebraic counterpart to Feynman path integral.
We shall develop
a complete obstruction theory (theory of anomaly) of quantization of classical observables 
to quantum observables. We shall, then,  present exact solution for all quantum correlation functions for a BV QFT
without anomaly and with a finite number of physically in-equivalent observables. Such a BV QFT
always comes with its family  parametrized by a smooth formal moduli space
in "quantum coordinates". This result shall be a case study of quasi-isomorphism of QFT algebra.

A further study of our notion of quantum coordinates, which is a natural generalization of
the notion of flat or special coordinates on moduli spaces of various topological string theories 
(the flat structure of K.\ Saito \cite{Saito} or Witten-Dijkgraaf-Verlinde-Verlinde equation \cite{W,DVV}) 
to general anomaly free QFT
   , and  its natural homotopy  generalization in the context of "homotopy path integrals" shall be
    the subjects of  two sequels \cite{PII,PIII} to this paper. 
We shall, then, return to the very beginning  to face with anomaly and its fundamental  implications
to  quantization of classical field theory in $4$-th paper \cite{PIV}.
The $5$th and the final paper in this series is about correct definition and some properties
of general\foot{In this series we do not concern non-commutative QFT.} QFT algebra \cite{PV}. 
We are also planning to write few  companion papers on some
examples and applications.

In what follows, we have  prologue, summary and epilogue to this paper partly due to the nature 
of our program and partly because of rather lengthy and sometimes technical nature of its main body. 
\begin{itemize}
\item
Prologue is a description of the BV quantization scheme such as the meaning of BV classical and
quantum master equations,  classical and quantum observables 
and how the notion of expectation value of quantum observable via path integral
is realized. After some
reflections on the scheme, we shall be driven to consider deformations of
the given quantum field theory to study quantum correlators. The deformation
problem appears to be governed by Maurer-Cartan equation of certain differential
graded Lie algebra (DGLA) which is nothing but the BV quantum master equation
for family of quantum master action functional. Then we shall face arbitrariness of 
quantum correlation functions. A resolution of this conundrum shall be the basic content of this paper. 
We shall argue that such DGLA  should be regarded as a descendant structure 
of more fundamental algebraic structure. 

\item
In Summary, we sketch  the notion of BV QFT algebra and its descendant DGLA as
the prototype of QFT algebra and its descendant.
A BV QFT algebra shall be a "quantum cochain complex" together with a
super-commutative  associative  product satisfying certain conditions with respect to 
$\hbar$ (formal  Planck constant). A BV QFT shall be a BV QFT algebra with a natural algebraic
counterpart to Feynman path integral [Section 2]. 
 We shall also sketch 
a complete obstruction theory (theory of anomaly) of quantization of 
classical observables to quantum observables as certain extension
problem of classical cochain map to its quantum counterpart.
Expectation value of quantized observable shall be a  "quantum homotopy
invariant" [Section 3]. We shall, then, sketch the exact solution 
of all quantum correlation functions for a BV QFT with the assumption
that
(i) there is no anomaly in quantization of classical observables 
and 
(ii) the number of physically in-equivalent observables is finite.
As a corollary, we shall show that such a BV QFT always
comes with its family  parametrized by a smooth formal moduli space
in "quantum coordinates", which notion generalize that of flat coordinates
in moduli space of topological strings [Section 4]. 

\item
In Epilogue, we argue that the above summarized result of this paper on  quantum correlation functions
of BV QFT and quantum coordinates on its moduli space
should be interpreted as  a case study of morphism of QFT algebra and its descendant morphism.
This suggests a new method of computing quantum correlation functions 
bypassing perturbative Feynman path integrals.

\end{itemize}

I would like to thank Dennis Sullivan and John Terilla for our long standing conversations  on
mathematics of quantum field theory. I am also grateful to Bung Heup Jun for a proof-reading.
Some results in this paper have been presented at the Graduate Center of CUNY (04,06,07), 
Erwin Schr\"odinger Institute (07), and Max Planck Institute of Bonn (08). 
I would  like to thank to those institutions for invitations and hospitality. 
I was managed, after a long gestation, to type most of this manuscript during my visit  to IHES
(Jan-Feb, 2010)
to which I am very grateful for its ideal working environment and hospitality.

\subsection{Prologue}

\subsubsection{The BV Quantization Scheme}
Historically  BV quantization scheme was invented to have a consistent
path integral approach to QFT in the presence of certain gauge symmetry of 
a given classical action $S_{cl}$,  which is certain function on the space $\mL_{cl}$ 
of relevant classical fields \cite{BV}.
The presence of gauge symmetry requires Physicists to make a suitable choice of gauge fixing 
to do Feynman path integrals a la Faddev-Popov and its homological (BRST) interpretation.
BV quantization scheme unifies both Faddev-Popov and BRST procedure not only 
in a greater generality 
but also gives certain consistent condition of path integrals to be independent of 
choice of gauge fixing.
Such a consistency condition is stated in terms of so-called BV quantum master equation, 
which solution ${\bf{S}}$ is called BV quantum master action if it is related with 
the given classical action $S_{cl}$ in certain ways.

We first  recall the common setup to define the BV quantum master equation.  
One of the most important  ingredient for the recipe is the BV operator $\Delta$ of  BV-algebra, 
which is a triple $\left(\sC, \Delta, \,\cdot\,\right)$ 
satisfying the following properties: (i) $\left(\sC,\,\cdot\,\right)$ is 
a $\Z$-graded associative and super-commutative 
$\R$-algebra, and the $\Z$-grading of $\sC$ is specified by so called the ghost number.
(ii) BV operator $\Delta$ is a $\R$-linear operator 
$\Delta:\sC^i\longrightarrow \sC^{i+1}$ of ghost number $1$
satisfying $\Delta^2=0$ which  failure of being a derivation of the product $\cdot$ defines
so called BV bracket 
$\left(\hbox{ },\hbox{ }\right):\sC^{i}\otimes \sC^j\longrightarrow \sC^{i+j+1}$, 
\eqn\introaa{
(-1)^{|a|}(a,b):=\Delta ({a}\cdot {b}) - \Delta{a}\cdot {b} -(-1)^{|{a}|}{a}\cdot \Delta {b},
}
where $|a|$ denotes the ghost number of $a$,
which is  a derivation of  the product. It follows that BV bracket is a graded Lie bracket with
ghost number $1$.
One also introduce the Planck constant $\hbar$, 
regarded as a formal parameter (for our purpose), 
and extends those algebraic structures naturally and trivially (no star product)
to those on $\sC[[\hbar]]$  by the condition of $\R[[\hbar]]$-linearity and $\hbar$-adic
continuity.

In BV quantization procedure
such a BV  algebra is realized as algebra 
of functions on a graded space $\mC$  of certain {\tt fields} and their {\tt anti-fields}.
The space $\mL$ of all {\tt fields} includes the space $\mL_{cl}$ of  classical fields
and ghost fields due to gauge symmetry of $S_{cl}$, symmetry of gauge symmetry etc. 
The space $\mC$ is in the form $\mC\simeq T^{*}[-1]\mL$ such that it has 
a natural odd symplectic structure with  ghost number $-1$ and the space $\mL$ of 
all {\tt fields} is a Lagrangian subspace. Then the BV operator $\Delta$ is certain 
odd differential operator of $2$nd-order
such that its BV bracket  $\left(\hbox{ },\hbox{ }\right)$ corresponds to the graded Poisson bracket
associated with the symplectic structure on $\mC$.

A BV quantum master action 
 ${\bf{S}}=S + \hbar S^{(1)} + \cdots \in \sC[[\hbar]]^0$ is a solution
to  the {\it BV quantum master equation};
\eqn\introa{
\hbar^2 \Delta\, e^{-{\bf{S}}/\hbar}=0,
}
which is equivalent to 
\eqn\introay{
-\hbar \Delta \bos{S} +\Fr{1}{2}\left(\bos{S},\bos{S}\right)=0,
}
such that
its classical limit  $S:={\bf{S}}\bigl|_{\hbar=0} \in \sC^0$, called  BV classical master action, 
restricted to 
the space $\mL$ of  {\tt fields} is the classical action 
$S_{cl}$, i.e.,  $S\bigl|_{\mL}= S_{cl}$, and 
is supposed to have encoded complete information of the classical field theory -- the nature 
gauge symmetry of the classical action $S_{cl}$ and symmetry of the gauge symmetry  etc., such that
it satisfies the condition $(S,S)=0$, called classical BV master equation.

%


To be more concrete, let's decompose BV classical master action $S$ 
based on the condition that $S\bigl|_{\mL}=S_{cl}$
as follows
\eqn\introi{
S =S_{cl}+\G.
}
Then, the piece $\G$, which contains  ``ghosts and antis'', should
have information of gauge symmetry of the classical action $S_{cl}$ and the nature of symmetry 
etc. BV classical master equation  $\left(S,S\right)=0$ is, then, equivalent to
\eqn\introj{
\left(S_{cl}, \G\right)+\Fr{1}{2}\left(\G,\G\right)=0,
}
since  BV bracket vanishes on $\mL$ so that  $\left(S_{cl},S_{cl}\right)=0$.

Here is a simple explanation for the classical picture.
Recall that classical physics is dictated by classical equation of motion so that everything 
should be considered modulo classical equation of motion. In the framework of Batalin and Vilkovisky,
any expression in the form $\left(S_{cl},\l\right)$ vanishes by classical equation of motion
and expression in the form $\left(\G, \l\right)$ represents  action of the symmetry on $\l$.
An element $O \in \sC$ satisfying $\left(S,O\right)=0$ is called a {\it classical observable},
and two classical observables $O$ and $O^{\pr}$ are said to be classical physically equivalent
if there is some $\l \in \sC$ such that $O^{\pr}-O = \left(S,\l\right)$. 
We, then, note that  the condition $\left(S,O\right)=0$  is rewritten as follows, 
\eqn\introk{
\left(\G, O\right)=- \left(S_{cl},O\right),
}
and  means that $O$ should be invariant under the symmetry 
of $S_{cl}$ modulo classical equation of motion to be a classical observable. 
It follows that two classical observables $O$ and $O^{\pr}$ must be classical
physically equivalent (indistinguishable to a classical observer)  if their difference
$O^{\pr}-O$ can be gauge transformed away modulo equation of motion, i.e., 
$O^{\pr}-O= (\G, \l) + (S_{cl},\l)$. 
Finally the criterion  \introk\ itself  should be invariant under the symmetry
modulo the classical equation of motion, leading to 
the consistency condition \introj.

\begin{remark}
To be more faithful to  physical viewpoint,
we better say a classical observable $O$ above a classical master observable and it
is the restriction $O_{cl}$ of $O$ to $\mL$  is classical observable, which should 
depend  only on classical fields. Decomposing $O = O_{cl}+ V$ accordingly,
and the equation \introk\ contains the requirement 
$$(\G,O_{cl})\biggl|_{\mL}=-(S_{cl},V)\biggl|_{\mL}$$ 
that 
$O_{cl}$ should be invariant under the gauge symmetry 
of $S_{cl}$ modulo the classical equation of motion. 
But the above condition leads to, possibly infinite, sequence of integrability (or consistency)
conditions, all of which can be summarized by \introk. Being understood we shall maintain
to call $O$ a classical observable. 
\end{remark}

\begin{remark}
For a  peace with one more widely used classical physical terminology,
consider the operation (representing action of the symmetry) $(\G,\bullet)\bigr|_{\mL}$, 
which corresponds to the BRST operator $\d_{BRST}$. 
The classical BV master equation \introj\ has  the following leading requirement
$$\Fr{1}{2}\left(\G,\G\right)\bigl|_{\mL}=-\left(S_{cl}, \G\right)\bigl|_{\mL}$$
that (a representation represents) $\d_{BRST}^{2}=0$ modulo the classical
equation of motion. Again there is, possibly infinite, sequence of integrability 
(or consistency) conditions, all of which can be summarized by the classical master
equation \introj. By the way the term BRST quantization is a misnomer.
\end{remark}

A quantum theoretic notion of observables and their equivalence are to be based on
Feynman Path Integral. Batalin and Vilkovisky  interpreted a path integral
as an integral over the Lagrangian subspace $\mL$ of $\mC$ in the
following form
\eqn\introb{
\left<\bos{O}\right>="\int_{\mL} \!\!d\m"\; \bos{O} \cdot e^{-{\bf{S}}/\hbar},
}
where $\bos{O} \in \sC[[\hbar]]$, called a {\it quantum observable}, 
should satisfy the following condition
\eqn\introc{
\hbar^2 \Delta \left(-\Fr{1}{\hbar}\bos{O} \cdot e^{-{\bf{S}}/\hbar}\right)=0,
}
which is equivalent to 
\eqn\introcy{
-\hbar \Delta \bos{O} + \left(\bos{S},\bos{O}\right)=0.
}
The above condition \introc\ together with the condition \introa\ is
formal assurance that the path integral \introb\ 
does not depend on continuous changes of $\mL$
(homologous deformation of $\mL$ in general), which changes amount 
to the changes of gauge fixing.

\begin{remark}
A BV quantum master action functional ${\bf{S}}$ may be regarded
as a sequence of $\hbar$-corrections 
to classical BV action functional $S$, which contains information of 
the given classical action functional $S_{cl}$ 
and its gauge symmetry etc.\ etc., such that its path integral is independent of 
choice of gauge fixing. Then, being assured, one may choose a gauge suitable 
to the given situation
and proceed to study perturbative Feynman path integrals.
We should, however, emphasize  that the BV quantization scheme is more than 
obtaining quantum master action as a preparation of gauge fixing and  
subsequent computations of Feynman diagrams. 
\end{remark}

There is a fundamental  identity that a  Batalin-Vilkovisky-Feynman path integral is
supposed to be satisfied; for any $\bos{\l}=\l +\hbar \l^{(1)}+\cdots \in \sC[[\hbar]]$,
\eqn\introd{
 "\int_{\mL} \!\!d\!\m"\;\hbar\Delta\left(\bos{\l} \cdot e^{-{\bf{S}}/\hbar}\right)=0.
}
 This identity, besides from 
its practical utilities of informing us what kind of Feynman path integrals must vanish 
before gauge fixing and perturbative analysis\foot{This identity is a simultaneous generalization
of  the Schwinger-Dyson equation and the Ward identity.}, 
gives the notion of quantum physical equivalence of
observables: Assume that  
two quantum observables $\bos{O}$ and $\bos{O}^\pr$ are related
as follows;
\eqn\introda{
\bos{O}^\pr \cdot e^{-{\bf{S}}/\hbar} 
=\bos{O}\cdot e^{-{\bf{S}}/\hbar} -\hbar\Delta \left(\bos{\l}\cdot e^{-{\bf{S}}/\hbar}\right),
} 
Then the identity \introd\ implies that the two quantum observables 
must have the same  value in path integrals, 
$\left<\bos{O}^\pr\right> = \left<\bos{O}\right>$. So those observables
are said to be quantum physically equivalent.

The relation \introda\ is equivalent
to $\bos{O}^{\pr}-\bos{O}= -\hbar \Delta \bos{\l} +\left(\bos{S},\bos{\l}\right)$,
so that the classical limit $O$ and $O^{\pr}$ of $\bos{O}$ and $\bos{O}^{\pr}$ are
classical physically equivalent observables, i.e., $O^{\pr}-O = (S,\l)$ where $\l=\bos{\l}\big|_{\hbar=0}$.  
Physicist may say a classical observable $O\in \sC$, $(S,O)=0$,
is quantizable   if there is a sequence of quantum corrections 
$\bos{O}=O +\hbar O^{(1)}+\hbar^{2}O^{(2)}+\cdots \in \sC[[\hbar]]$ of it 
such that  $-\hbar \Delta \bos{O} +\left(\bos{S}, \bos{O}\right)=0$.

\begin{remark}

No Physicist would say that she or he is actually defining and doing math with the integral 
\introb, which since  is more like an artistic symbol for collective wisdom and mastery.
It should be noted, however, the finite dimensional version of Batalin-Vilkovisky-Feynman path integral
exists mathematically and all of its desired properties are theorems \cite{Schwarz}. 

\end{remark}


\begin{remark}
Physicist  may call $\left<\bos{O}\right>$ in \introb\   un-normalized expectation value
of the (quantum) observable $\bos{O}$.  The following  normalization
$$
{\left<\bos{O}\right>\over \left<{1}\right>}
={"\int_{\mL} \!d\!\m"\; \bos{O} \cdot e^{-{\bf{S}}/\hbar}
\over "\int_{\mL} \!d\!\m"\;  e^{-{\bf{S}}/\hbar}},
$$
is canonical, provided that the partition function $\left<1\right>$ is non-zero.
In general, physicist seems to assume that there is a suitable normalization such that
expectation value of every quantum observable has no negative power in $\hbar$.
We shall adopt such viewpoint throughout this paper, and expectation
value shall always mean such the normalized expectation value.
\end{remark}



\subsubsection{Conundrum:  arbitrariness of quantum correlators}


An implication of BV quantization scheme is that  one might identify path integral 
with a certain linear map from the space of equivalence classes of quantum observables
to $\R[[\hbar]]$. 
At this stage it is convenient to denote the operator $-\hbar\Delta +\left(\bos{S},\bullet\right)$
by the single letter $\bos{K}$. Then $\bos{K}$ increase the ghost number by $1$ and
satisfies $\bos{K}^2=0$ due the BV quantum master equation \introa. Let $Q$ denote
the classical limit of $\bos{K}$, $Q=\bos{K}\bigl|_{\hbar=0}=(S,\bullet)$ such that $Q^{2}=0$.
The condition \introc\ for quantum observable $\bos{O}$ is $\bos{K}\bos{O}=0$ and 
the identity \introd\ is $\left<\bos{K}\bos{\l}\right>=0$.
Also the condition \introda\
for  the two quantum observables $\bos{O}$, and $\bos{O}^\pr$ being physically equivalent 
is
$\bos{O}^\pr=\bos{O}+\bos{K}\bos{\l}$, 
so  that we have
$\left<\bos{O}^\pr\right> = \left<\bos{O}\right>+\left<\bos{K}\bos{\l}\right>
= \left<\bos{O}\right>$.
Thus path integral might be interpreted as a certain linear map from the cohomology of the 
complex $(\sC[[\hbar]],\bos{K})$.   
It might, then, be natural to study correlation functions of quantum observables by exploiting
algebraic structure in cohomology of the cochain complex  $(\sC[[\hbar]],\bos{K})$.  
%


%
Naively, correlation function of two quantum observables is the expectation value 
of the product two quantum observables. However,  the product of two quantum observables 
$\bos{O}_{1}$ and $\bos{O}_{2}$  may not even be a quantum observable in general. 
Even for the case that $\bos{K}\left(\bos{O}_{1}\cdot\bos{O}_{2}\right)=0$, 
the $\bos{K}$-cohomology class of the product $\bos{O}_{1}\cdot\bos{O}_{2}$ 
has no canonical meaning in terms of equivalence classes of $\bos{O}_{1}$ and $\bos{O}_{2}$.
This is due to the fundamental property 
\introaa\ of the BV operator $\Delta$  that it is not a derivation of the product, which implies 
that the  operator $\bos{K}:=-\hbar\Delta + ({\bf{S}},\bullet)$ is also not 
a derivation of the product -the failure of $\bos{K}$ being a derivation of 
the product is  proportional to $\hbar$ so that
that the classical limit $Q$ of $\bos{K}$ is a derivation of the product.
In the classical picture the story is  different. 
The condition for classical observable is $Q O=0$, and two classical observables
$O$ and $O^{\pr}$ are (classical) physically equivalent if $O^{\pr}=O +Q \l$. Thus
two classical observables are physically equivalent if and only if they belong to the same
cohomology class of the cochain complex $(\sC,Q)$. In the classical picture, however,
$Q$ is a derivation of the product so that there is well-defined algebra of equivalence
classes of classical observables.

There is seemingly  a natural resolution of the above vexing problem 
if the given QFT comes with certain  family \cite{Park1}.
We  observe that the two consistent conditions \introa\ and \introc\
are combined into the following single equation with certain parameter $t$
\eqn\introe{
\hbar^2 \Delta\, e^{-\Fr{1}{\hbar}\left({\bf{S}} + t \bos{O}\right)}=0 \hbox{ modulo } t^2.
}
Thus it seems natural to associate the given BV quantized field theory with 
BV quantum master action ${\bf{S}}$ to a  family of theories 
with deformed BV quantum master action $\bos{S}_{\bos{\Theta}} = \bos{S} + \bos{\Theta}$;
\eqn\introaa{
\hbar^2 \Delta\, e^{-\bos{S}_{\bos{\Theta}} /\hbar}\equiv
\hbar^2 \Delta \left( e^{-\bos{\Theta}/\hbar} \cdot e^{-\bos{S}/\hbar}\right)=0,
}
such that infinitesimal part of the deformation term $\bos{\Theta}$ is given by 
quantum observables of the initial theory.
The above equation for such a family is equivalent to the following equation
\eqn\deform{
\bos{K} \bos{\Theta} +\Fr{1}{2}\left(\bos{\Theta},\bos{\Theta}\right) =0.
}
It turns out that an all order solution $\bos{\Theta}$, if exists, 
also can be used to "quantum correct" the products 
of $n$ quantum observables  to be quantum observables for all $n=2,3,\cdots$.

 \begin{example}

Let $\{\bos{O}_i\}$, $i\in I$, denote a certain set of quantum observables, 
of our interests. 
Introduce a corresponding set of parameters
$\{t_i\}$ such that $|t_i|+|\bos{O}_i|=0$. Then 
$\bos{\Theta}=\sum_i t_i \bos{O}_i + \cdots$ is a solution to \deform\ 
modulo $(t)^2$ since $\bos{K}\bos{O}_{i}=0$.
The ability to extend the first order solution to a second order solution amounts to 
existence of certain set
 $\{ \bos{O}_{ij}\}$ in $\sC[[\hbar]]^{|{\bf O}_{i}|+|{\bf O}_{j}|}$ 
satisfying
\eqn\dema{
-(-1)^{|\bos{O}_{i}|}\left(\bos{O}_{i}, \bos{O}_{j}\right)=\bos{K}\bos{O}_{ij}.
}
Then 
$$\bos{\Theta}=\sum_i t_i \bos{O}_i + \Fr{1}{2}\sum_{i,j}t_{j} t_{i}\bos{O}_{ij} \mod t^{3}
$$
solves \deform\ modulo $(t)^3$.
The products $\bos{O}_{i}\cdot \bos{O}_{j}$of  two quantum observables, in general, are not
quantum observables - they do not belong to $\hbox{Ker }\bos{K}$ but
\eqn\demb{
\bos{K}\left(\bos{O}_{i}\cdot \bos{O}_{j}\right)
=-\hbar(-1)^{|\bos{O}_{i}|}\left(\bos{O}_{i}, \bos{O}_{j}\right).
}
Combining the above with \dema, we see that the existence of $\bos{O}_{ij}$ 
is equivalent to the ability of finding quantum correction to the products 
$\bos{O}_{a_1}\cdot \bos{O}_{a_2}$ as follows
$$
\bos{\pi}_{ij}:=\bos{O}_{i}\cdot \bos{O}_{j}-\hbar \bos{O}_{ij}
$$
such that $\bos{K}\bos{\pi}_{ij}=0$. 
Then  we might take  $\bos{\pi}_{ij}$ as a definition of $2$-point correlator
and its expectation value as $2$-point correlation function.
Assuming that $\bos{\Theta}$ can be extended to all orders, $\bos{\Theta}$ allows
quantum corrections to $n$-tuple products $\bos{O}_{i_{1}}\cdots \bos{O}_{i_{n}}$
of quantum observables to get
$n$-point correlators $\bos{\pi}_{i_1\cdots i_n}$ 
satisfying $\bos{K}\bos{\pi}_{i_1\cdots i_n}=0$
for all $n=2,3,\ldots$, simultaneously. 
Then all $n$-point correlation functions are determined algebraically 
after  fixing a $\Bbbk[[\hbar]]$-linear map
from the space of equivalence classes of quantum observables, i.e.,
$\left<\bos{\pi}_{i_1\cdots i_n}\right>$.
\end{example}

But there are serious problems in the above approach.
It is suffice to consider $2$-point correlators.
%
\begin{example}
Assume that $\bos{O}_{ij}$ solves \dema, 
so that $\bos{\pi}_{ij}:=\bos{O}_i\cdot \bos{O}_j-\hbar \bos{O}_{ij}$ 
satisfies $\bos{K}\bos{\pi}_{ij}=0$. 
Then, for
any ${\bf{X}}_{ij}$ satisfying $\bos{K}\bos{X}_{ij}=0$,
$\bos{O}^\pr_{ij}= \bos{O}_{ij} -  {\bf{X}}_{ij}$ also solves \dema\ so that
$\bos{\pi}^\pr_{ij}:=\bos{O}_i\cdot \bos{O}_j-\hbar \bos{O}^\pr_{ij}$ also satisfies 
$\bos{K}\bos{\pi}^\pr_{ij}=0$. 
Thus we obtain $\bos{\pi}^\pr_{ij}-\bos{\pi}_{ij} =\hbar {\bf{X}}_{ij}$
and 
$$
\left<\bos{\pi}^\pr_{ij}\right>-\left<\bos{\pi}_{ij}\right> 
=\hbar \left<{\bf{X}}_{ij}\right>
$$
as a consequence.
We may say that the two solutions $\bos{O}_{ij}$ and $\bos{O}^\pr_{ij}$ are equivalent 
if ${\bf{X}}_{ij}=\bos{K}\bos{\l}_{ij}$, leading to the same $2$-point correlation function.
But we can also choose  ${\bf{X}}_{ij}$  to be  an arbitrary  $\Bbbk[[\hbar]]$-linear combinations 
of non-trivial quantum observables. 
Then  the value
$\left<{\bf{X}}_{ij}\right>$ can be anything, meaning that 
$\hbar$-dependent part of $2$-point quantum correlation function (thus the quantum correction)
has essentially  {\it zero information}. 
\end{example}

The story for  higher-point correlation functions are more subtle, 
and we need a systematic way 
to remove the similarly irrelevant information.  
Thus we somehow need to look for not  any  solution 
but for a certain special  solution 
to the equation \deform, equivalently, the BV quantum master equation \introaa,
with a justification of such a choice. 

\begin{remark}
Let's call, for this paper, a graded Lie algebra $\mg$ with degree $1$ bracket $[\bullet,\bullet]$ 
and degree $1$ differential $d$, which squares to zero and a graded derivation 
of the bracket, a DGLA. The Maurer-Cartan (MC) equation of the DGLA is the equation
$d\g +\Fr{1}{2}[\g,\g]=0$ for $\g \in \mg^{0}$. 
The MC equation can be naturally
generalized  such that suitably parametrized solution can be considered
by tensoring the DGLA with appropriate  parameter
algebra. Study of MC equation of DGLA is equivalent to study of $L_{\infty}$-morphisms
up to homotopy from the cohomology of the cochain complex $(\mg, d)$ to the DGLA.
In the BV quantization procedure, there are several DGLAs with their MC equations being involved.
\begin{enumerate}

\item From a classical action $S_{cl}$ to classical BV master action $S=S_{cl}+\G$:
Consider the classical BV master equation
$$
\Fr{1}{2}(S,S)\equiv Q_{cl}\G +\Fr{1}{2}(\G,\G)=0.
$$
Here $\left(\sC, Q_{cl}, (\bullet,\bullet)\right)$ is a DGLA over $\R$,
where $Q_{cl}:=(S_{cl},\bullet)$, which satisfies $Q_{cl}^{2}=0$ since $(S_{cl},S_{cl})=0$,
and its MC equation is the classical BV master equation.

\item Classical BV master action $S$ and classical observables: Let $Q:=(S,\bullet)$,
which satisfies $Q^{2}=0$ since $(S,S)=0$. Then $\left(\sC, Q, (\bullet,\bullet)\right)$
is a DGLA over $\bos{R}$. The MC equation is then
$$
Q\Theta +\Fr{1}{2}\left(\Theta,\Theta\right)=0,
$$
which an infinitesimal solution $O$ is a classical observable.

\item From the BV algebra $\left(\sC, \Delta,\,\cdot\,\right)$ with the associated BV bracket 
$\left(\bullet,\bullet\right)$,
the triple $\left(\sC, \Delta, (\bullet,\bullet)\right)$ is also a DGLA over $\bos{R}$. 
But, we have no use
of this DGLA.

\item From the above data one can construct the triple 
$\left(\sC[[\hbar]], -\hbar \Delta, (\bullet,\bullet)\right)$,
 which is a DGLA over $\R[\hbar]]$. Its MC equation 
$$
-\hbar\Delta \bos{S} +\Fr{1}{2}\left(\bos{S},\bos{S}\right)=0
$$
is, then, the quantum BV master equation \introa.

\item 
Let $\bos{S}$ be a quantum BV master action. Then the operator
$\bos{K}:=-\hbar \Delta + \left(\bos{S},\bullet\right)$ satisfies
$\bos{K}^{2}=0$ and the triple $\big(\sC[[\hbar]],\bos{K},(\bullet,\bullet)\big)$
is a DGLA over $\R[[\hbar]]$. Its MC equation 
$$
\bos{K}\bos{\Theta}+\Fr{1}{2}\left(\bos{\Theta},\bos{\Theta}\right)=0
$$
is
the quantum BV master equation \deform\ for the deformation of quantum BV master action
$\bos{S}$ to $\bos{S} +\bos{\Theta}$, 
which an infinitesimal solution $\bos{O}$ is a quantum observable.

\end{enumerate}

An implication of our demonstration is that the various DGLAs 
should be regarded as secondary notions in quantization procedure.
Satisfying  MC equation of the relevant DGLA at each stage of quantization procedure 
should be a consequence but not the goal. 
\end{remark}

There is an alternative way to approach the problem. Instead of trying to solve
the equation \introaa, we may reduce the problem to certain extension problem
of classical observables to quantum observables, which procedure automatically
gives a special solution to \introaa. 
\begin{example}
Let $\{O_{i}\}$, $i\in I$ be a certain
set of classical observables, $Q O_{i}=0$,  in which we are interested.
Let's also assume that those classical observables are extendable to quantum observables, say 
$\{\bos{O}_{i}\}$, $i\in I$ and we know their expectation values $\{\left<\bos{O}_{i}\right>\}$.
Now we want to figure out $2$-point quantum correlation functions
between them. To simplify example we assume that the ghost numbers of 
$O_{i}$ are all zero, i.e., $O_{i}\in \sC^{0}$.
Let's assume that 
$\{O_{i}\}$ somehow form a closed algebra in the sense that there are
identities like
\eqn\brainy{
O_{i}\cdot O_{j}=\sum_{k\in I} m_{ij}{}^{k}O_{k} + Q x_{ij}
}
where $x_{ij}\in \sC^{-1}$. Then it can be easily shown that
the structure constants $m_{ij}{}^{k}$ depend only on the $Q$-cohomology classes 
of $O_{\ell}$, $\ell \in I$. Note that $x_{ij}$ above are not uniquely determined but
only up to $\Ker Q$. Note also that the commutativity 
$O_{i}\cdot O_{j}=O_{j}\cdot O_{i}$ of the product $\cdot$ implies
that $m_{ij}{}^{k}=m_{ji}{}^{k}$ as well as 
$Q x_{[ij]}=0$, where $x_{[ij]}:=\Fr{1}{2}\left(x_{ij}-x_{ji}\right)$.  
Thus the term $Q x_{ij}$ in \brainy\ can be replaced with $Q \l_{ij}$, where
$\l_{ij}=\Fr{1}{2}(x_{ij}+ x_{ji})$. 
Then the  expression
$\bos{L}_{ij}=\bos{O}_{i}\cdot \bos{O}_{j}-\sum_{k\in I} m_{ij}^{k}\bos{O}_{k} 
- \bos{K} {\l}_{ij}$ obviously satisfy $\bos{L}_{ij}=\bos{L}_{ji}$
and is divisible by $\hbar$, since $\bos{L}_{ij}\bigr|_{\hbar=0}=0$ by definition.
So we can {\it define} $\bos{O}_{ij} \in \sC[[\hbar]]^{0}$ by the following formula
\eqn\nobrainer{
\hbar \bos{O}_{ij}:= \bos{O}_{i}\cdot \bos{O}_{j}-\sum_{k\in I} m_{ij}{}^{k}\bos{O}_{k} 
- \bos{K} \bos{\l}_{ij},
}
which automatically gives us $2$-point quantum correlators 
$\bos{\pi}_{ij}:=\bos{O}_{i}\cdot \bos{O}_{j}-\hbar\bos{O}_{ij}$.
It also follows that
$\bos{\pi}_{ij}=\sum_{k\in I} m_{ij}{}^{k}\bos{O}_{k} +\bos{K} \bos{\l}_{ij}$
and $2$-points correlation functions 
$\left<\bos{\pi}_{ij}\right>= \sum_{k\in I} m_{ij}{}^{k}\left<\bos{O}_{k}\right>$
determined by the classical data $m_{ij}{}^{k}$ 
and the expectation values $\{\left<\bos{O}_{k}\right>\}_{k\in I}$.

Note that the equation \deform\ has played no roles in the above. Now let's
apply $\bos{K}$ to the both hand sides of \nobrainer\ to obtain
$\hbar\bos{K} \bos{O}_{ij}:= -\hbar\left(\bos{O}_{i}, \bos{O}_{j}\right)$, that is
\eqn\dummer{
\bos{K} \bos{O}_{ij}+\left(\bos{O}_{i}, \bos{O}_{j}\right)=0
}
Thus $\bos{\Theta}=\sum_{i} t_{i}\bos{O}_{i} +\Fr{1}{2}\sum_{ij}t^{j}t^{i}\bos{O}_{ij}\mod t^{3}$
solves the equation \deform\ modulo $t^{3}$. It is clear that not every
solution of the equation \deform\ modulo $t^{3}$ satisfies \nobrainer.
We also emphasize that the three assumptions that we have made
to have  $2$-point quantum correlators among the set $\{\bos{O}_{i}\}$, $i\in I$
of quantum observables are not sufficient conditions to have 
$3$-point quantum correlators among the set $\{\bos{O}_{i}\}$, $i\in I$.
Similarly the ability to define $n$-point quantum correlators among certain
set of quantum observables does not imply that we have $(n+1)$-point
quantum correlators among its members. This is just in the nature of quantum correlations.
\end{example}

The proper setting for the above  turns out to be the notion 
of BV QFT algebra, which produces the DGLA in \deform\ as a  {\it descendant } notion.
We shall, then, introduce new notion of master equation of BV QFT algebra,
which solution is  automatically the desired special solution to 
the MC equation \deform\ (equivalently, 
to the BV quantum master equation \introaa), while 
not every solution of \deform\ is descended from the new quantum master equation.
The corresponding obstruction theory should  directly deal with 
obstructions to those quantum corrections  
to all order products of quantum observables.  
We shall see that such obstruction theory is completely determined 
by obstruction  to extending classical observables to quantum observables.

\subsection{Summary of This Paper}

\subsubsection{BV QFT algebra and its descendant.}

Fix a ground field $\Bbbk$ of characteristic zero, $\Bbbk=\R$ for example.
Let $(\sC,\,\cdot\,)$ be a ${Z}$-graded super-commutative and associative unital
$\Bbbk$-algebra with the multiplication $\cdot$.
Let
$$
\sC[[\hbar]]=\left\{\sum_{n\geq 0} \hbar^n a^{(n)} \big| a^{(n)}\in \sC\right\}.
$$
Then $\sC[[\hbar]]$ has the canonical multiplication induced from $\sC$, which will be denoted
by the same symbol $\cdot$. Thus $(\sC[[\hbar]],\,\cdot\,)$ is a ${Z}$-graded 
super-commutative and associative unital $\Bbbk[[\hbar]]$-algebra. 
In general a $\Bbbk$-multilinear map of $\sC$ into $\sC$  canonically induces 
a $\Bbbk[[\hbar]]$ multilinear map of $\sC[[\hbar]]$ into $\sC[[\hbar]]$,
and we shall not distinguish them.
Projection of any structure parametrized by $\hbar$ on $\sC[[\hbar]]$ 
to $\sC$ will be called taking classical limit.

\begin{definition}
\label{Def1}
Let  $\bos{K}=Q +\hbar K^{(1)} +\hbar^{2}K^{(2)}+\hbar^3 K^{(3)}+\cdots$ 
be a sequence of $\Bbbk$-linear maps,
parametrized by $\hbar$, of ghost number $1$
 on $\sC$ into $\sC$
satisfying $\bos{K}^2=0$ and $\bos{K}1=0$.
Then the triple
$$
\big(\sC[[\hbar]], \bos{K}, \hbox{ }\cdot\hbox{ }\big)
$$ 
is  a BV QFT algebra if the failure of $\bos{K}$ being  a derivation of the product
$\cdot$ is {\it divisible} by $\hbar$ and the binary operation measuring
the failure is a derivation of the product.
\end{definition}
It follows that the classical limit $Q$ of $\bos{K}$ is a derivation of the
product. Thus, the classical limit 
$$
\big(\sC, Q,\,\cdot\,\big)
$$
of the BV QFT algebra  is a  super-commutative associative unital
differential graded algebra (CDGA) over $\Bbbk$. 

On $\sC[[\hbar]]$, being freely generated by $\sC$, there is natural automorphism 
$\bos{g}=1+ g^{(1)}\hbar +g^{(2)}\hbar^{2}+ \cdots$, where $g^{(\ell)}$ are ghost number 
preserving $\Bbbk$-linear maps on $\sC$ into $\sC$.  Such an automorphism
will acts on both the unary operation $\bos{K}$ and the binary operation $\cdot$
as 
$
\bos{K}\rightarrow \bos{K}^{\pr}$ such that $\bos{K}^{\pr}=\bos{g}\bos{K} \bos{g}^{-1}
$
and $\cdot \rightarrow \cdot^{\pr}$ such that 
$\cdot^{\pr}= \bos{g}\left(\bos{g}^{-1}\cdot \bos{g}^{-1}\right)$.
It is trivial that $\left(\sC[[\hbar]],\bos{K}^{\pr},\,\cdot^{\pr}\,\right)$ is also a BV
QFT algebra. Note that such automorphisms  fix the classical limit, i.e.,
$Q=Q^{\pr}:=\bos{K}^{\pr}\bigr|_{\hbar=0}$ and $a\cdot^{\pr}b= a\cdot b$
for $a,b\in \sC$.
Such an automorphism should be regarded as ``gauge symmetry'' of ``underlying QFT'', 
so that
the resulting two BV QFT algebras $\big(\sC[[\hbar]], \bos{K},\,\cdot\,\big)$ 
and $\big(\sC[[\hbar]], \bos{K}^{\pr},\,\cdot^{\pr}\,\big)$ should be regarded as equivalent.
Thus we are lead to study BV QFT algebra modulo the ``gauge symmetry'', while our 
algebraic path integral shall be ``gauge invariant''.

Let $\bos{\left(\hbox{ },\hbox{ }\right)}:\sC[[\hbar]]^{k_{1}}\otimes\sC[[\hbar]]^{k_{2}}
\longrightarrow \sC[[\hbar]]^{k_{1}+k_{2}-1}$ be the binary operation divided by $\hbar$;
\eqn\djo{
-\hbar (-1)^{|\bos{a}|} \left(\bos{a},\bos{b}\right):= \bos{K}\left(\bos{a}\cdot \bos{b} \right)
- \bos{K a}\cdot \bos{b} -(-1)^{|\a|}\bos{a}\cdot \bos{Kb}. 
}
Then the triple $\left(\sC[[\hbar]], \bos{K},(\hbox{ },\hbox{ })\right)$, after forgetting
the product, is a differential graded Lie algebra (DGLA) over $\Bbbk[[\hbar]]$. 
We emphasis that the bracket $(\hbox{ },\hbox{ })$ is a purely secondary notion 
in the definition of  BV QFT algebra. 
Thus,  the triple $\bigl(\sC[[\hbar]], \bos{K},(\hbox{ },\hbox{ })\bigr)$ 
shall be called the {\it descendant} DGLA to the BV QFT algebra
$(\sC[[\hbar]], \bos{K}, \hbox{ }\cdot\hbox{ })$. 
The classical limit 
 $\bigl(\sC, Q,\left(\hbox{ },\hbox{ }\right)\bigr)$
of the  descendant DGLA is a DGLA over $\Bbbk$
(we are abusing the notations by not distinguishing the bracket $(\hbox{ }, \hbox{ })$ \djo\
with its classical limit).
Note that not every DGLA  over $\Bbbk$ is a classical limit of the descendant DGLA of
a BV QFT algebra.  We also emphasis that the  DGLA  $
\bigl(\sC, Q,\left(\hbox{ },\hbox{ }\right)\bigr)$ has a quantum origin, 
however the secondary notion as it is. 
Under a gauge symmetry of BV QFT algebra the bracket in its descendant DGLA changes as 
$\left(\phantom{\bos{a}}, \phantom{\bos{b}}\right)^{\pr}
= \bos{g}\left(\bos{g}^{-1}, \bos{g}^{-1}\right)$,  
while its classical limit remains fixed,

\begin{remark}
A typical example of BV QFT algebra is an output of  a successful BV quantization, 
which procedure  has been briefly summarized earlier. 
Let  ${\bf{S}}=S +\hbar S^{(1)}+\cdots$ be the resulting
BV quantum master action. Then
$\big(\sC[[\hbar]], \bos{K}:=-\hbar \Delta + ({\bf{S}}, \hbox{ }), \hbox{ }\cdot\hbox{ }\big)$
is a BV QFT algebra with  with the descendant DGLA 
$\big(\sC[[\hbar]], \bos{K},(\bullet,\bullet)\big)$. The classical limit of the BV QFT
algebra is $\big(\sC, Q:=(S,\hbox{ }), \,\cdot\,\big)$. 
\end{remark}

\begin{example}
A better example could be the bare data of a classical field theory,  a classical action
$S_{cl}$ which is certain function on the space $\mL_{cl}$ of classical fields 
with zero ghost number. Assuming
that some artist can always supply a BV operator $\Delta_{cl}$ to the algebra $(\sC_{cl},\;\cdot\;)$
of functions on $T^*[-1]\mL_{cl}$, it is automatic that $\hbar^2\Delta_{cl} e^{-S_{cl}/\hbar}=0$.
Then every classical field theory to quantize gives us a BV QFT algebra
$\big(\sC_{cl}[[\hbar]], \bos{K}_{cl}= -\hbar\Delta_{cl} +\left(S_{cl},\hbox{ }\right),\;\;\cdot\;\;\big)$.
This setting shall be a starting point of the $4$-th paper in this series.
\end{example}

\subsubsection{Observables and  expectation values}

We denote cohomology of the cochain complex $\big(\sC,Q\big)$ over $\Bbbk$ by $H$, 
which is a $\Z$-graded 
$\Bbbk$-module (a graded vector space over $\Bbbk$). 
Following physics terminology we call an element $O \in \sC$ a 
{\it classical observable} if $Q O=0$. 
Two classical observable $O$ and $O^{\pr}$ are
(classical) physically equivalent if there is some $\l$ such that $O^{\pr}-O =Q\l$.
Thus a classical observable $O$ is a representative of its cohomology class $[O]$ in $H$, and
two classical observables are physically equivalent if and only if they are representatives of the
same cohomology class.  
It follows that classical observables can be organized by 
a $\Bbbk$-linear map $f:H\longrightarrow \sC$ preserving the ghost number 
such that $Q f=0$ and $\left[f\big([O]\big)\right]=[O]$.
Such a map $f$ is not unique since any map $f^\pr = f + Q s$
also satisfy $\left[f^\pr\big([O]\big)\right]=[O]$
for an arbitrary $\Bbbk$-linear map $s:H\longrightarrow \sC$  of ghost number $-1$,
and classical physics must not distinguish them.

Following  physics terminology we might say that a classical observable $O \in \sC^{|O|}$
is quantized to a {\it quantum observable} if there is an $\bos{O}\in \sC[[\hbar]]^{|O|}$ 
such that   $\bos{O}\bigr|_{\hbar=0} = O$ and $\bos{K}\bos{O}=0$. On the other hand such quantized observable
$\bos{O}$ is supposed to be  (quantum) physically equivalent to
$\bos{O}^\pr= \bos{O}+ \bos{K}\bos{\l}$ for any $\bos{\l}\in \sC[[\hbar]]^{|O|-1}$,
which classical limit $O^\pr$ is, in general, differ to $O$ by a $Q$-exact term.   
Thus quantization of a classical observable $O$ should be a statement about its cohomology
class $[O]$.

Sorting out those terminologies, classical observables are organized by a cochain
map $f$ from the cohomology $H$, regarded as a cochain complex with zero differential, 
to the cochain complex $(\sC,Q)$ 
which induces the identity map on $H$ and is defined up to homotopy. And, quantization of
classical observables is an extension of $f$ to a sequence 
$\bos{f}=f +\hbar f^{(1)} +\hbar f^{(2)}+\cdots$ of $\Bbbk$-linear and ghost number preserving 
maps, parametrized by $\hbar$, on $H$ into $\sC$ such that $\bos{f}$ satisfy $\bos{K}\bos{f}=0$.
Such an extension is not always possible and should be defined up to "quantum homotopy",
The obstruction for extending $f$ to the whole
sequence  $f, f^{(1)}, f^{(2)},\cdots$ 
as well as every possible ambiguity of the procedure
is summarized by
our first  theorem.

\begin{theorem}
\label{Propos1}
Let $f$ be a cochain map from $(H, 0)$ to $(\sC, Q)$ which induces
the identity map on the cohomology $H$.
On $H[[\hbar]]$,
{ modulo its natural automorphism},
\begin{enumerate}

\item
there is an { unique} $\Bbbk[[\hbar]]$-linear map
$\bos{\k}=\hbar \k^{(1)}+\hbar^{2}\k^{(2)}+\cdots$ of ghost number $1$
into itself,
which is induced from a sequence $0,\k^{(1)},\k^{(2)},\cdots$ of
$\Bbbk$-linear maps on $H$ into $H$,
satisfying $\bos{\k}^{2}=0$ and $\bos{\k}\bigr|_{\hbar=0}=0$,

\medskip
\item
there is a 
$\Bbbk[[\hbar]]$-linear map
$\bos{f}=f+\hbar f^{(1)}+\hbar^{2}f^{(2)}+\cdots$
of ghost number $0$
into $\sC[[\hbar]]$, 
which is induced from a sequence $f,f^{(1)},f^{(2)},\cdots$ of
$\Bbbk$-linear maps on $H$ into $\sC$, which satisfies
$$
\bos{K}\,\bos{f}=\bos{f} \,\bos{\k},
$$
and is defined
up to quantum  homotopy;
$$
\bos{f}\,\bos{\sim}\,\bos{f}^{\pr}=\bos{f} + \bos{K}\,\bos{s} +\bos{s}\,\bos{\k},
$$
where $\,\bos{s}=s+\hbar {s}^{(1)}+\hbar^{2}s^{(2)}+\cdots$
is an arbitrary sequence of
$\,\,\Bbbk$-linear maps 
of ghost number $-1$
parametrized by $\hbar$ on $H$ into $\sC$. 
\end{enumerate}

\end{theorem}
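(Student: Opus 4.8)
The plan is to solve the intertwining relation $\bos{K}\bos{f}=\bos{f}\bos{\k}$ recursively in powers of $\hbar$, pinning down $\k^{(n)}$ and $f^{(n)}$ one order at a time. Writing $\bos{K}=\sum_{i\ge 0}\hbar^i K^{(i)}$ with $K^{(0)}=Q$, $\bos{f}=\sum_{i\ge 0}\hbar^i f^{(i)}$ with $f^{(0)}=f$, and $\bos{\k}=\sum_{j\ge 1}\hbar^j\k^{(j)}$, the coefficient of $\hbar^n$ reads
\[ Q f^{(n)} = f\,\k^{(n)} - R^{(n)}, \qquad R^{(n)}:=\sum_{i=1}^{n}K^{(i)}f^{(n-i)}-\sum_{j=1}^{n-1}f^{(n-j)}\k^{(j)}, \]
where $R^{(n)}$ depends only on the data built at lower orders; the order $\hbar^0$ equation is $Qf=0$, which holds by hypothesis. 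The first observation is that, because $f$ induces the identity on $H$, a composite $f\a$ with $\a\colon H\to H$ is $Q$-exact \emph{if and only if} $\a=0$ (its class is $\a$). Hence the equation for $f^{(n)}$ is solvable precisely when $\k^{(n)}=[R^{(n)}]$, the cohomology class of $R^{(n)}$, so $\k^{(n)}$ is \emph{forced}; existence then reduces entirely to showing that $R^{(n)}$ is $Q$-closed, so that $[R^{(n)}]$ is defined.

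This $Q$-closedness is the main obstacle, and it must be proved \emph{simultaneously} with the side condition $\bos{\k}^{2}=0$, since a direct computation of $QR^{(n)}$ would call for $\bos{\k}^{2}=0$ one order beyond what the induction has so far supplied. The device is to apply $\bos{K}$ to the truncated error $\Phi:=\bos{K}\bos{f}_{<n}-\bos{f}_{<n}\bos{\k}_{<n}$, which by the inductive hypothesis is $O(\hbar^{n})$ with leading coefficient $R^{(n)}$. Using only $\bos{K}^{2}=0$ and resubstituting $\bos{K}\bos{f}_{<n}=\Phi+\bos{f}_{<n}\bos{\k}_{<n}$ yields $\bos{K}\Phi=-\Phi\,\bos{\k}_{<n}-\bos{f}_{<n}\,\bos{\k}_{<n}^{2}$. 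Reading off the coefficient of $\hbar^{n}$, the first term on the right drops out (it is $O(\hbar^{n+1})$) and the lower orders of $\bos{\k}^{2}$ vanish by induction, leaving
\[ f\,\big(\bos{\k}^{2}\big)^{(n)} = -\,Q R^{(n)}. \]
The right side is $Q$-exact and $\big(\bos{\k}^{2}\big)^{(n)}\colon H\to H$, so the property of $f$ noted above forces both $\big(\bos{\k}^{2}\big)^{(n)}=0$ (advancing the $\bos{\k}^{2}=0$ induction) and $QR^{(n)}=0$ (the desired closedness). One then sets $\k^{(n)}:=[R^{(n)}]$ and solves for $f^{(n)}$, closing the induction; the whole construction is a concrete instance of the homological perturbation lemma, perturbing $Q$ by $\bos{K}-Q=O(\hbar)$.

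It remains to read off the ambiguity. With the lower orders fixed, $f^{(n)}$ is determined only up to a $Q$-closed map $c^{(n)}\colon H\to\sC$, which I split as $c^{(n)}=Qt^{(n)}+f\s^{(n)}$ with $\s^{(n)}=[c^{(n)}]\colon H\to H$. The exact part $Qt^{(n)}$ changes $\bos{f}$ by $\bos{K}\bos{s}+\bos{s}\bos{\k}$ with $\bos{s}=\hbar^{n}t^{(n)}+\cdots$; a one-line check using $\bos{K}^{2}=\bos{\k}^{2}=0$ shows this preserves $\bos{K}\bos{f}=\bos{f}\bos{\k}$ with the \emph{same} $\bos{\k}$, which is exactly the quantum homotopy of clause (2). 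The class part $f\s^{(n)}$ amounts to post-composing $\bos{f}$ with the automorphism $\bos{g}=1+\hbar^{n}\s^{(n)}+\cdots$ of $H[[\hbar]]$, and maintaining the intertwining then forces $\bos{\k}\mapsto\bos{g}^{-1}\bos{\k}\,\bos{g}$; this conjugation is the ``modulo natural automorphism'' of clause (1) and is the sole source of non-uniqueness of $\bos{\k}$. Tracking these two freedoms order by order, with their obstructions vanishing by the same class-injectivity argument, yields uniqueness of $\bos{\k}$ up to automorphism and of $\bos{f}$ up to quantum homotopy.
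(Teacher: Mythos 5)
Your argument is correct, and its skeleton is the paper's: the same order-by-order induction, the same obstruction cocycle (your $R^{(n)}$ is the paper's $g^{(n)}$), the same definition $\k^{(n)}:=[R^{(n)}]$ with $\bos{\k}^{2}=0$ advanced simultaneously, and the same pivotal use of the fact that $f\a$ is $Q$-exact only if $\a=0$ because $f$ induces the identity on $H$. Where you genuinely depart is in how the two workhorse lemmas are obtained. The paper proves its first proposition (that $Q g^{(n+1)}=-\sum_{\ell} f\k^{(n+1-\ell)}\k^{(\ell)}$) by an explicit componentwise expansion with several re-summations; you get the identical identity in a few lines from the operator computation
$$
\bos{K}\Phi=-\Phi\,\bos{\k}_{<n}-\bos{f}_{<n}\,\bos{\k}_{<n}^{2},
\qquad \Phi:=\bos{K}\bos{f}_{<n}-\bos{f}_{<n}\bos{\k}_{<n},
$$
which uses only $\bos{K}^{2}=0$ and coefficient extraction, and is sign-safe since everything is composition of maps. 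Likewise, the paper's second proposition verifies by a long computation that an arbitrary given pair $(\bos{\xi},\bos{s})$ shifts $g^{(n+1)}$ by a $Q$-exact term plus $f\left(\bos{\xi}^{-1}\bos{\k}\,\bos{\xi}\right)^{(n+1)}$; you instead generate the ambiguity structurally, splitting the $Q$-closed freedom in $f^{(n)}$ into an exact part (quantum homotopy, with the one-line check that $\bos{K}\bos{s}+\bos{s}\bos{\k}$ preserves the intertwining with the same $\bos{\k}$) and a class part (automorphism, forcing the conjugation of $\bos{\k}$). Your route buys brevity and conceptual clarity (it is visibly a homological perturbation argument); the paper's explicit computations buy the closed-form transformation rules for $\k^{\pr(n+1)}$ and $f^{\pr(n+1)}$ that it reuses verbatim. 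The one place you are thinner than the paper is the exhaustion step you compress into ``tracking these two freedoms order by order'': to fully match the statement you should note that two solutions agreeing below order $n$ satisfy $f\bigl(\k^{(n)}-\k^{\pr(n)}\bigr)=Q\bigl(f^{(n)}-f^{\pr(n)}\bigr)$, so class-injectivity gives $\k^{(n)}=\k^{\pr(n)}$ and a $Q$-closed difference to split; that the moves preserve the normal form, e.g. $\left(\bos{f}+\bos{K}\bos{s}+\bos{s}\bos{\k}\right)\bos{\xi}=\bos{f}\bos{\xi}+\bos{K}\left(\bos{s}\bos{\xi}\right)+\left(\bos{s}\bos{\xi}\right)\left(\bos{\xi}^{-1}\bos{\k}\,\bos{\xi}\right)$; and that an order-$n$ move does not disturb lower orders, so the infinite composite converges $\hbar$-adically. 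Also a terminological nit: $\bos{f}\mapsto\bos{f}\bos{g}$ is pre-composition with the automorphism $\bos{g}$ of $H[[\hbar]]$ (the paper's $\bos{f}\bos{\xi}$), not post-composition.
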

We shall sometimes refer a map $\bos{f}$ with the above stated properties a quantum
extension map. An essential content of the above theorem concerning obstruction is
that a classical observable $O$ is extendable to a quantum observable 
if and only if its cohomology class $[O]$ is annihilated by $\bos{\k}$, i.e.,
$\k^{[\ell]}\left(\left[O\right]\right)=0$ for all 
$\ell=1,2,3,\cdots$. 

\begin{remark}
The classical limit of quantum homotopy equivalence
$\bos{f}\,\bos{\sim}\,\bos{f}^{\pr}=\bos{f} + \bos{K}\,\bos{s} +\bos{s}\,\bos{\k}$
is $f\,\sim\,f^{\pr}=f + Q \,s$ since  the classical limit of $\bos{\k}$ is zero.
Thus it reduces to homotopy equivalence of cochain maps from $(H,0)$ to $(\sC, Q)$.
\end{remark}

\begin{remark}
An automorphism on $H[[\hbar]]$ is an arbitrary sequence 
$\bos{\xi}=1+\hbar \xi^{(1)}+\hbar^{2}\xi^{(2)}+\cdots$ of $\Bbbk$-linear
maps with ghost number $0$  parametrized by $\hbar$  on $H$ into itself 
satisfying
$\bos{\xi}\bigr|_{\hbar=0}=1$. Such an automorphism fix $H$ and send 
$\bos{\k}$ to $\bos{\xi}^{-1}\,\bos{\k}\,\bos{\xi}$
and $\bos{f}$ to $\bos{f}\,\bos{\xi}$. Note that every automorphism
fix $f$ as well as $\k^{(1)}$, since $\bos{\k}\bigl|_{\hbar=0}=0$. 
\end{remark}


Now we have a room to accommodate "Feynman Path Integral".

\begin{definition}
A BV QFT with ghost number anomaly $N$ is a 
BV QFT algebra 
with a sequence of $\Bbbk$-linear maps
$\bos{c}:= c^{(0)}+\hbar c^{(1)}+\hbar^{2}c^{(2)}+
\cdots$, parametrized by $\hbar$,
of ghost number $-N$
on $\sC$ into $\Bbbk$ which  satisfies $\bos{c}\,\bos{K}=0$ and 
is defined up to quantum homotopy;
$$
\bos{c}\,\bos{\sim}\,\bos{c}^{\pr}=\bos{c} + \bos{r}\,\bos{K},
$$
where $\bos{r}=r^{(0)}+\hbar {r}^{(1)}+\hbar^{2}r^{(2)}+\cdots$
is an arbitrary sequence of
$\Bbbk$-linear maps 
of ghost number $-N-1$ 
parametrized by $\hbar$ on $\sC$ into $\Bbbk$. 

\end{definition}
\begin{remark}
Note again that the ghost number of $\Bbbk$ (and $\Bbbk[[\hbar]]$) is concentrated to
zero. So the sequence $c^{(0)},c^{(1)}, c^{(2)},\cdots$ of $\Bbbk$-linear maps
should be zero maps on $\sC^{n}$ for $n\neq N$.
\end{remark}
\begin{remark}
The different choice of $\bos{c}$ within the same quantum
homotopy class is a realization of the different choice of gauge fixing.
\end{remark} 

We recall that our first theorem give
a sequence $\bos{f}:=f +\hbar f^{(1)}+\hbar^{2} f^{(2)}+\cdots$ of $\Bbbk$-linear maps
parametrized by $\hbar$ on $H$ into $\sC$ defined up to 
quantum  homotopy satisfying $\bos{K}\,\bos{f}= \bos{f}\,\bos{\k}$.
We can compose the map $\bos{f}$, regarded as a $\Bbbk[[\hbar]]$-linear map
on $H[[\hbar]]=H\otimes_{\Bbbk}\Bbbk[[\hbar]]$ into $\sC[[\hbar]]$, with
the map $\bos{c}:= c^{(0)}+\hbar c^{(1)}+\hbar^{2}c^{(2)}+
\cdots$, regarded as a $\Bbbk[[\hbar]]$-linear map
on $\sC[[\hbar]]$ into $\Bbbk[[\hbar]]$, 
to obtain a sequence
$$
\bos{\iota}:=\bos{c}\,\bos{f}=\iota^{(0)}+\hbar \iota^{(1)}+\hbar^{2}\iota^{(2)}+\cdots
$$ 
of $\Bbbk$-linear maps parametrized by $\hbar$ on $H$ into $\Bbbk$  such that
$$
\iota^{(n)}=\sum_{\ell=0}^{n}c^{(n-\ell)} f^{(\ell)}, \quad n=0,1,2,\cdots.
$$
The ambiguity of $\bos{\iota}$ due to the ambiguities of $\bos{f}$ and $\bos{c}$
up to quantum homotopy, $\bos{f}\,\bos{\sim}\,\bos{f}^{\pr}$ and 
$\bos{c}\,\bos{\sim}\,\bos{c}^{\pr}$,
is
$$
\bos{\iota}^{\pr}-\bos{\iota}\equiv \bos{c}^{\pr}\bos{f}^{\pr} -\bos{c}\,\bos{f}
= \left(\bos{c}\,\bos{s}+\bos{r}\,\bos{f} +\bos{r}\,\bos{K}\,\bos{s}\right)\bos{\k}.
$$

\begin{remark}
An automorphism $\bos{g}$ on $\sC[[\hbar]]$ sends $\bos{f}$ to
$\bos{g}\,\bos{f}$ and $\bos{c}$ to $\bos{c}\,\bos{g}^{-1}$,
since $\bos{f}$ and $\bos{c}$ are $\Bbbk[[\hbar]]$-linear maps
to $\sC[[\hbar]]$ and from $\sC[[\hbar]]$, respectively.
Thus $\bos{\iota}=\bos{c}\bos{f}$ is invariant under the automorphism of BV QFT algebra.
\end{remark}   

We  recall that a classical observable $O$ is extendable to 
a quantum observable $\bos{O}$
if and only if $\bos{\k}\left(\left[O\right]\right)=0$ and, then,
$\bos{\iota}\left(\left[O\right]\right)=\bos{\iota}^{\pr}\left(\left[O\right]\right)$.
By the way, it is the cohomology class of classical observable 
that is observable to a classical observer.  Also there is no genuine classical observable so
that every classical observation must be classical limit of quantum observation.
So we can omit the decorations ``classical'' and ``quantum'' and define an 
and their expectation values:
\begin{definition}
An observable $\mo$ is an element of the cohomology $H$ of the complex $(\sC,Q)$
satisfying $\k^{(n)}\left(\mo\right)=0$ for all $n=1,2,3,\cdots$. An element of $H$ which
is not an observable shall be called an invisible.
The expectation value
an observable $\mo$ is 
$$
\bos{\iota}
\left(\mo\right)
=\sum_{n=0}^{\infty}\hbar^{n}\sum_{\ell=0}^{n}c^{(n-\ell)}\left(f^{(\ell)}\left(\phi\right)\right),
$$
which  is a  quantum  homotopy invariant as well as invariant under the automorphism of BV
QFT algebra.
\end{definition}

\begin{remark}
Being understood, we may continue to use the notation
$\left<\bos{O}\right>$ for the expectation value of quantum observable
$\bos{O}$ if $\bos{O}=\bos{f}(\mo)$ instead of $\bos{\iota}(\mo)$.
The composition $\bos{\iota}=\bos{c}\circ\bos{f}$ is our take of 
{\it $\bos{\iota}$nt\'egrale de 
$\bos{c}$hemin de $\bos{f}$eynman}.
\end{remark}

\subsubsection{Quantum master  equations and quantum correlation functions: a case study.}

Fix a BV QFT algebra  $\big(\sC[[\hbar]], \bos{K}, \hbox{ }\cdot\hbox{ }\big)$ 
and its descendant  DGLA $\big(\sC[[\hbar]], \bos{K},(\bullet,\bullet)\big)$ with classical limits 
$\big(\sC, Q, \hbox{ }\cdot\hbox{ }\big)$ and $\big(\sC, {Q},(\bullet,\bullet)\big)$, respectively. 
 Let $H$ denote the cohomology group of the classical complex $(\sC, Q)$.
The purpose of this section is to study quantum correlations specialized to a class of BV QFTs that
$\bos{\k}=0$ on $H$ identically so that we don't need to deal with invisibles.  We shall also assume that
$H$ is finite dimensional for each ghost numbers for the sake of simplicity. Those assumptions
shall allow us to describe every quantum correlation function, thus exact solution of a BV QFT.

From the assumption that $\bos{\k}=0$ and theorem \ref{Propos1}, we have a sequence 
$\bos{f}=f +\hbar f^{(1)}+\cdots$ of 
$\Bbbk$-linear maps on $H$ into $\sC$
of ghost number zero
such that $\bos{K} \bos{f}=0$, which classical limit  
$f=\bos{f}\bigr|_{\hbar=0}$ is 
a quasi-isomorphism of complexes $f:(H, 0)\longrightarrow (\sC, Q)$,
which induces the identity map on $H$. 
From the condition $\bos{K}1=0$, thus $Q1=0$, in the definition of BV QFT algebra,
there is a distinguished element $e \in H$ corresponding to the cohomology class
$[1]$ of the unit $1$ in $(\sC,\;\cdot\;)$.  On $H$ there is also 
an unique binary product $m_2:H\otimes H \longrightarrow H$ of ghost number $0$
induced from the product in the CDGA $(\sC, Q,\;\cdot\;)$;
let $a,b \in H$ then  $m_2(a,b):=\left[f(a)\cdot f(b)\right]$ 
which is an homotopy invariant since $Q$ is a derivation of the product $\cdot$,
and $m_2(e, b)=m_2(b,e)=b$, such that $(H,0, m_2)$ is a CDGA with unit $e$ with zero differential.
It is natural to fix $f$ and $\bos{f}$ such that $f(e)=1$ and $\bos{f}(e)=1$.

It is convenient to  fix a basis $\{e_\a\}$ of $H$ such that one of its component,
say $e_0$ is the distinguished element.
Let $t_H=\{t^\a\}$ be the dual basis (basis of $H^*$) such that $|t^\a| + |e_\a|=0$, which is 
a coordinates system on $H$ with a distinguished coordinate $t^0$.
We denote $\{\rd_\a =\rd/\rd t^\a\}$ be the corresponding formal partial derivatives action
on $\Bbbk[[t_H]]$ a derivations. 
The product $m_{2}$ is specified by structure constants 
$m_{\a\b}{}^\g\in \Bbbk$ such that $m_2(e_\a, e_\b)=m_{\a\b}{}^\g e_\g$ and
$m_{0\b}{}^\g =\d_\b{}^\g$.
The binary multiplication $m_2$ on $H$ can be identified with a derivation
$m_2^\sharp= \Fr{1}{2}t^{\a_2}t^{\a_1}m_{\a_1\a_2}{}^\g \Fr{\rd}{\rd t^\g}$ 
on $\Bbbk[[t_H]]$ with ghost number $0$, where we are using and going to use 
Einstein summation convention that a repeated upper and lower index  is summed over.
Any multilinear map $m_n:S^n H\rightarrow H$ of ghost number $0$ is similarly identified
with a derivation $m_n^\sharp$ on $\Bbbk[[t_H]]$ of ghost number $0$. 
We shall also use notations $\bos{f}(e_\a) =\bos{O}_\a$ such that $\bos{K}\bos{O}_\a =0$.

Now the triple $\big(\Bbbk[[t_H]]\otimes\sC[[\hbar]], \bos{K}, \;\cdot\;\big)$ is a BV QFT
algebra, where $\bos{K}$ and $\cdot$ are the shorthand notions 
for $1\otimes \bos{K}$ and 
$
(a\otimes \bos{x})\cdot (b\otimes \bos{y})= (-1)^{|\bos{x}||{b}|}ab\otimes \bos{x}\bos{y}
$
for $a,b \in \Bbbk[[t_H]]$ and $\bos{\a},\bos{\b}\in \sC[[\hbar]]$, respectively.
We denote its descendant algebra by $\big(\Bbbk[[t_H]]\otimes\sC[[\hbar]], \bos{K}, (\hbox{ },\hbox{ })\big)$,
where 
$
\big( a\otimes \bos{x}, b\otimes \bos{y}\big)
= (-1)^{(|\bos{x}|+1)|{b}|}ab\otimes \big(\bos{x},\bos{y}\big)$. 
The symbol $\otimes$ means tensor (or completed tensor) product, which shall be omitted 
whenever possible.
Then, the following theorem contains
the complete information of quantum correlation functions;
\begin{theorem}
On $H$
there is a sequence $m_2, m_3, m_4, \cdots$ of multilinear multiplications 
$m_n:S^n H\rightarrow H$ of ghost number $0$ such that
$m_2(e_0, e_\a)= e_\a$ and $m_n(e_0,e_{\a_2},\cdots, e_{\a_{n-1}})=0$ for all $n=3,4,5,\cdots$.
And, 
there is a family of BV QFTs specified by
$$
\bos{\Theta}=\bos{\Theta}_1 + \bos{\Theta}_2+\bos{\Theta}_3+\cdots
\in \big(\Bbbk[[t_H]]\otimes \sC[[\hbar]]\big)^0,
$$
where $\bos{\Theta}_1= t^\a\bos{f}(e_\a)=t^\a\bos{O}_\a$ and
$\;\bos{\Theta}_n =\Fr{1}{n!}t^{\a_n}\cdots t^{\a_1} \bos{O}_{\a_1\cdots\a_n}
 \in \big(S^n(H^*)\otimes \sC[[\hbar]]\big)^0$,
satisfying

1. quantum master equation:
$$
\eqalign{
0=&\bos{K}\bos{\Theta}_1
,\cr
\hbar \bos{\Theta}_2  =&\Fr{1}{2}\bos{\Theta}_1\cdot \bos{\Theta}_1
- m_2^\sharp \bos{\Theta}_1 - \bos{K} \La_2
,\cr
\hbar \bos{\Theta}_3  =&
\Fr{2}{3}\bos{\Theta}_1\cdot \bos{\Theta}_2
-\Fr{1}{3}m_2^\sharp\bos{\Theta}_2
-\Fr{1}{3}\big(\bos{\Theta}_1,\La_2\big) 
-m_3^\sharp\bos{\Theta}_1
-\bos{K}\La_3
,\cr
\vdots\;&
\cr
\hbar\bos{\Theta}_{n} =& 
\sum_{k=1}^{n-1}\Fr{ k(n-k)}{n(n-1)} \bos{\Theta}_{k}\cdot  \bos{\Theta}_{n-k}
-\sum_{k=2}^{n-1}\Fr{k(k-1)}{n(n-1)}\left(  m_{k}^\sharp \bos{\Theta}_{n-k+1}
+ \big(\bos{\Theta}_{n-k},\La_{k}\big)\right)
\cr
&
- m_n^\sharp \bos{\Theta}_1-\bos{K}\La_n
,\cr
\vdots\;&
}
$$
for some $\La_2,\La_3,\cdots \in \big(\Bbbk[[t_H]]\otimes
\sC\big)^{-1}$.

2. quantum identity: $\rd_0 \bos{\Theta}=1$.

3. quantum descendant equation
$$
\bos{K}\bos{\Theta} +\Fr{1}{2}\big(\bos{\Theta},\bos{\Theta}\big)=0,
$$
as  a consequence of quantum master equation.

\end{theorem}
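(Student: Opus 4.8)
The plan is to construct the multiplications $m_n$, the elements $\La_n$, and the deformation pieces $\bos{\Theta}_n$ simultaneously by induction on the polynomial degree $n$ in the coordinates $t_H$, reading the $n$-th line of the quantum master equation not as a condition to be checked but as the \emph{definition} of $\bos{\Theta}_n$ once $m_n$ and $\La_n$ have been pinned down. The base case is forced: $\bos{\Theta}_1=t^\a\bos{O}_\a$ with $\bos{K}\bos{O}_\a=0$ by Theorem \ref{Propos1} (using $\bos{\k}=0$), so the first line holds identically. At step $n$ I assume that $\bos{\Theta}_1,\dots,\bos{\Theta}_{n-1}$, the products $m_2,\dots,m_{n-1}$ and $\La_2,\dots,\La_{n-1}$ have been produced, that the quantum master equation holds through order $n-1$, and --- crucially --- that the descendant equation $\bos{K}\bos{\Theta}+\Fr{1}{2}(\bos{\Theta},\bos{\Theta})=0$ already holds through order $n-1$. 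The latter will be re-established at order $n$ inside the same induction, so there is no circularity.

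The heart of the inductive step is a divisibility lemma. Writing $\Theta_j:=\bos{\Theta}_j\big|_{\hbar=0}$ and denoting by $R_n$ the right-hand side of the $n$-th master equation with the terms $m_n^\sharp\bos{\Theta}_1$ and $\bos{K}\La_n$ removed, I need $R_n\big|_{\hbar=0}$ to be cancelled by $m_n^\sharp\Theta_1+Q\La_n$ so that $\bos{\Theta}_n\in\sC[[\hbar]]$ exists. I would first show $R_n\big|_{\hbar=0}$ is $Q$-closed, computing its $Q$-differential term by term using that $Q$ is a derivation of $\cdot$, commutes with each $m_k^\sharp$, and is a derivation of the classical bracket, after substituting $Q\Theta_j=-\Fr{1}{2}\sum_i(\Theta_i,\Theta_{j-i})$ (the classical limit of the lower descendant equation) and the expression for $Q\La_k$ read off from the classical limit of the lower master equations. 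Granting closedness, the class $[R_n\big|_{\hbar=0}]\in H$ is a degree-$n$ polynomial in $t_H$ valued in $H$; since $f$ induces the identity on cohomology, this class is realized as $m_n^\sharp\Theta_1$ for a unique symmetric $m_n\colon S^nH\to H$, and a choice of contracting homotopy for $Q$ yields $\La_n$ with $R_n\big|_{\hbar=0}-m_n^\sharp\Theta_1-Q\La_n=0$. The full right-hand side is then divisible by $\hbar$ and defines $\bos{\Theta}_n$.

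To recover the descendant equation at order $n$ I would apply $\bos{K}$ to the $n$-th master equation. Using $\bos{K}^2=0$, $\bos{K}\bos{\Theta}_1=0$, the commutation of $\bos{K}$ with $m_k^\sharp$, and above all the defining relation \djo\ --- which (all ghost numbers being zero) replaces $\bos{K}(\bos{\Theta}_k\cdot\bos{\Theta}_{n-k})$ by $\bos{K}\bos{\Theta}_k\cdot\bos{\Theta}_{n-k}+\bos{\Theta}_k\cdot\bos{K}\bos{\Theta}_{n-k}-\hbar(\bos{\Theta}_k,\bos{\Theta}_{n-k})$ --- together with the inductive $\bos{K}\bos{\Theta}_j=-\Fr{1}{2}\sum_i(\bos{\Theta}_i,\bos{\Theta}_{j-i})$ and the Leibniz rule of $\bos{K}$ on the bracket, the non-bracket contributions should cancel while the bracket contributions assemble into $-\Fr{\hbar}{2}\sum_k(\bos{\Theta}_k,\bos{\Theta}_{n-k})$; dividing by $\hbar$ gives $\bos{K}\bos{\Theta}_n+\Fr{1}{2}\sum_k(\bos{\Theta}_k,\bos{\Theta}_{n-k})=0$. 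The quantum identity $\rd_0\bos{\Theta}=1$ I would prove in parallel: at order one $\rd_0\bos{\Theta}_1=\bos{O}_0=1$ since $\bos{f}(e_0)=1$, and for $n\ge2$ I would arrange $m_2(e_0,e_\a)=e_\a$, $m_n(e_0,\dots)=0$, and $\rd_0\La_n=0$, so $\rd_0$ annihilates every higher $\bos{\Theta}_n$; this unit normalization is exactly the representative freedom left by the divisibility lemma.

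The main obstacle is the twin bookkeeping in the two verifications. Everything rests on the precise rational weights $\Fr{k(n-k)}{n(n-1)}$ and $\Fr{k(k-1)}{n(n-1)}$: in the closedness lemma they must combine with the classical $C_\infty$-identities satisfied by the transferred products $m_n$ on $H$, and in the descendant step they must reproduce exactly the graded-symmetric sum $\sum_k(\bos{\Theta}_k,\bos{\Theta}_{n-k})$ after the product terms have been traded for brackets through \djo. Verifying that these coefficients are consistent across both computations --- equivalently, that the quantum master equation is the correct $\hbar$-deformed minimal-model recursion whose classical shadow is homotopy transfer for the CDGA $(\sC,Q,\cdot)$ --- is where the real work lies; the remainder is the standard homological-perturbation argument carried out $\hbar$-adically.
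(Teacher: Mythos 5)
Your proposal takes essentially the same route as the paper's own proof: the paper builds exactly your inductive system (its $\bos{P}(1)\subset\bos{P}(2)\subset\cdots$), defines $\bos{\Theta}_n$ by the same divisibility mechanism $\hbar\bos{\Theta}_n=\mathbb{M}_n-m_n^\sharp\bos{\Theta}_1-\bos{K}\La_n$ with $m_n$ unique because $\{[O_\a]\}$ is a basis of $H$, and effects the coefficient bookkeeping by precisely the three cancellations you name (Leibniz law of the bracket, graded Jacobi law, and $m_k^\sharp$ being a derivation of the bracket), after eliminating $\bos{K}\La_k$ via the lower master equations. The only economy you miss is that the paper runs a single quantum computation, $\bos{K}\mathbb{M}_n=-\Fr{\hbar}{2}\sum_{k=1}^{n-1}\big(\bos{\Theta}_k,\bos{\Theta}_{n-k}\big)$ together with $\rd_0\mathbb{M}_n=0$, whose classical limit gives your $Q$-closedness of $R_n\bigl|_{\hbar=0}$ for free and whose application to the defining formula for $\bos{\Theta}_n$ immediately yields the order-$n$ descendant equation, so your two separate verifications collapse into one.
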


\begin{remark}
The quantum master equation is better  understood as definition of $\bos{\Theta}$ order by order
in the word-length in $t_H$ from the quantum extension map $\bos{f}$. To begin with
$\bos{\Theta}_1:= t^a \bos{f}(e_\a)$, implying that $\bos{K}\bos{\Theta}_1=0$, implying that
$Q{\Theta}=0$, implying that $\Theta_1\cdot \Theta_1 \in \Ker Q$, inferring  that
$\Theta_1\cdot \Theta_1=m_2^\sharp \Theta_1 + Q \La_2$ for unique $m_2^\sharp$ and
for some $\La_2$ defined modulo $\Ker Q$, implying that 
$\Fr{1}{2}\bos{\Theta}_1\cdot \bos{\Theta}_1- m_2^\sharp \bos{\Theta}_1 - \bos{K} \La_2$ 
is divisible by $\hbar$, thus we {\it define} $\bos{\Theta}_2$
by the $2$-nd of quantum master equation  and infers that
$\bos{K}\bos{\Theta}_2 +\Fr{1}{2}\big(\bos{\Theta}_1,\bos{\Theta}_1)=0$.
Then it can be shown that  the expression $\Fr{2}{3}{\Theta}_1\cdot {\Theta}_2
-\Fr{1}{3}m_2^\sharp{\Theta}_2-\Fr{1}{3}\big({\Theta}_1,\La_2\big) \in \Ker Q$ 
such that it can be expressed
as $m_3^\sharp\Theta_1 + Q \La_3$ for unique $m_3^\sharp$ and
for some $\La_3$ defined modulo $\Ker Q$, implying that
the expression 
$\Fr{2}{3}\bos{\Theta}_1\cdot \bos{\Theta}_2
-\Fr{1}{3}m_2^\sharp\bos{\Theta}_2
-\Fr{1}{3}\big(\bos{\Theta}_1,\La_2\big) 
-m_3^\sharp\bos{\Theta}_1
-\bos{K}\La_3$ is divisible by $\hbar$,  thus we {\it define} $\bos{\Theta}_3$
by the $3$rd of quantum master equation and infers that
$\bos{K}\bos{\Theta}_3 +\big(\bos{\Theta}_1,\bos{\Theta}_2)=0$, et cetera, ad infinitum.

\end{remark}

One of the immediate consequence of our main result is that
that the classical limit $\Theta$ of $\bos{\Theta}$ is a solution
to the DGLA $\big(\sC, Q.\;\cdot\;\big)$ of very special kind.

\begin{corollary}
There exists a solution 
to  the classical descendant equation
\eqn\sdeform{
\eqalign{
Q{\Theta} +\Fr{1}{2}\big({\Theta},{\Theta}\big)=0,
\qquad
{\Theta} = t^\a {O}_\a 
+ \sum_{n=2}^\infty \Fr{1}{n!}t^{\a_n}\cdots t^{\a_1} {O}_{\a_1\cdots\a_n}
 \in \big(\Bbbk[[t_H]]\otimes\sC\big)^0
}
}
such that
\begin{enumerate}
\item (versality) the set of cohomology classes $[O_\a]$ 
form a basis of cohomology $H$ of the classical complex $(\sC, Q)$

\item (quantum coordinates) $\Theta$ is the classical limit of the solution to quantum master equation

\item (quantum identity) $\rd_0 \Theta=1$.
\end{enumerate}
\end{corollary}

It is a standard fact that there is a structure of minimal $L_\infty$-algebra (an $L_\infty$-algebra with
zero-differential) on cohomology of DGLA which is quasi-isomorphic as $L_\infty$-algebra, and
such a minimal $L_\infty$-structure is the obstruction to have versal solution to its Maurer-Cartan
equation. A DGLA is called formal if the minimal $L_\infty$-algebra on its cohomology is a graded
Lie algebra, and a formal DGLA has an associated smooth moduli space if and only if 
the graded Lie algebra on its cohomology is Abelian, 
i.e., the graded Lie bracket vanishes on $H$ \cite{Kontsevich}.
Now the versal solution we have is an $L_\infty$-quasi-isomorphism of very special kind 
since not every versal solution 
of \sdeform\ arises as the classical limit of solution of quantum master equation.
Hence we conclude that an anomaly-free BV QFT has its natural family parametrized by
a smooth moduli space $\CM$, a formal super-manifold, in quantum coordinates.

It shall be argued that the notion of quantum coordinates is a proper name  and generalization 
of that of flat or special coordinates
on moduli spaces of topological strings - in the context of Witten-Dijkgraaf-Verlinde-Verlinde (WDDV)
equation \cite{W,DVV} as well as the  mirror map by Candelas-de la Ossa-Green-Parkes 
\cite{COGP,Witten}. For the mathematical side, both the pioneering works of K.\ Saito 
and Barannikov-Kontsevich
on  flat structure
on moduli space of universal unfolding of simple singularities \cite{Saito}
and the flat coordinates in differential
BV algebra satisfying  a version of $\rd\!\bar\rd$-lemma   \cite{BK},
respectively 
are also examples of
quantum coordinates.

The solution $\bos{\Theta}$ of the quantum master equation shall be used to
define generating function of all
quantum correlators by the formula
$$
e^{-\Fr{\bos{\Theta}}{\hbar}}
=1 + \sum_{n=1}^\infty \Fr{1}{n!}\Fr{(-1)^n}{\hbar^{n}}\bos{\Theta}^n
=1+\sum_{n=1}^\infty \Fr{(-1)^n}{\hbar^n} \bos{\Omega}_n,
$$
where the sequence $\bos{\Omega}_1,\bos{\Omega}_2, \cdots$  is defined 
by matching the word-lengths in $t_H$:
$$
\bos{\Omega}_n=\Fr{1}{n!}t^{\a_n}\cdots t^{\a_1}\bos{\pi}_{\a_1\cdots \a_n}
$$
where $\bos{\pi}_{\a_1\cdots \a_n}\in \sC[[\hbar]]^{|\a_1|+\cdots+|\a_n|}$
with the classical limit
$$
\bos{\pi}_{\a_1\cdots \a_n}\big|_{\hbar=0}=O_{\a_1}\cdots O_{\a_n}.
$$
Note that $\bos{\Omega}_1=\bos{\Theta}_1=t^\a \bos{O}_\a$ generates $1$-point quantum correlators.
The quantum descendant equation, which is equivalent to
$\hbar^2 \bos{K}\,e^{-\bos{\Theta}/\hbar}=0$,
implies that 
$\bos{K}\bos{\Omega}_n=0$ for all $n=1,2,\ldots$, 
that $\bos{\Omega}_{n}$ generates $n$-point quantum correlators.

The following lemma, due to quantum master equation, relates generating functions of  
$n$-points quantum correlators for every $n$ to $\bos{\Theta}_1$;

\begin{lemma}
For every $n \geq 2$, we have
$$
\bos{\Omega}_n = \bos{p}^\sharp_n \bos{\Theta}_1 + \bos{K}\bos{x}_n
$$
where $\bos{p}^\sharp_2=m^\sharp_2$, $\bos{x}_2=\La_2$, and
$$
\eqalign{
\bos{p}_n^\sharp &= 
(-\hbar)^{n-2}m_n^\sharp
+\Fr{1}{n(n-1)}\sum_{k=2}^{n-1}(-\hbar)^{k-2}k(k-1) m_{k}^\sharp \bos{p}_{n+1-k}^\sharp
,\cr
\bos{x}_n &= 
(-\hbar)^{n-2}\La_n
+\Fr{1}{n(n-1)}\sum_{k=2}^{n-1}(-\hbar)^{k-2}k(k-1)
\Big(
m_{k}^\sharp \bos{x}_{n+1-k}
+\La_k\cdot\bos{\Omega}_{n-k}\Big).
}
$$
\end{lemma}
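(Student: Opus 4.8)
The plan is to prove the lemma by strong induction on the word-length $n$ in $t_H$, after first isolating the only genuinely analytic input---the definition of $\bos{\Omega}_n$ through $e^{-\bos{\Theta}/\hbar}$---as a single clean recursion. Let $E=t^\a\rd_\a$ be the Euler vector field on $\Bbbk[[t_H]]$, so that $E$ acts on the word-length-$n$ component as multiplication by $n$; in particular $E\bos{\Theta}_k=k\bos{\Theta}_k$ and $E\bos{\Omega}_n=n\bos{\Omega}_n$. Applying the derivation $E$ to $e^{-\bos{\Theta}/\hbar}=1+\sum_{n\geq1}\Fr{(-1)^n}{\hbar^n}\bos{\Omega}_n$ gives $E\,e^{-\bos{\Theta}/\hbar}=-\Fr{1}{\hbar}(E\bos{\Theta})\cdot e^{-\bos{\Theta}/\hbar}$, and matching word-lengths yields the bridge recursion
$$n\bos{\Omega}_n=\sum_{k=1}^{n}k(-\hbar)^{k-1}\,\bos{\Theta}_k\cdot\bos{\Omega}_{n-k},\qquad \bos{\Omega}_0:=1,$$
which already reproduces $\bos{\Omega}_1=\bos{\Theta}_1$ and $\bos{\Omega}_2=\Fr12\bos{\Theta}_1\cdot\bos{\Theta}_1-\hbar\bos{\Theta}_2$.

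The second step is a purely formal reduction showing that the lemma is equivalent to a recursion carried by $\bos{\Omega}_n$ alone. Assuming the statement for all indices below $n$, I substitute $\bos{p}_{n+1-k}^\sharp\bos{\Theta}_1=\bos{\Omega}_{n+1-k}-\bos{K}\bos{x}_{n+1-k}$ into the defining recursion for $\bos{p}_n^\sharp$, using that $m_k^\sharp$, a derivation acting only on the $\Bbbk[[t_H]]$-factor, commutes with $\bos{K}=1\otimes\bos{K}$. The contributions $\bos{K}m_k^\sharp\bos{x}_{n+1-k}$ then cancel exactly against the matching terms produced by the recursion for $\bos{x}_n$, and $\bos{p}_n^\sharp\bos{\Theta}_1+\bos{K}\bos{x}_n$ collapses to the target recursion
$$\bos{\Omega}_n=(-\hbar)^{n-2}\big(m_n^\sharp\bos{\Theta}_1+\bos{K}\La_n\big)+\Fr{1}{n(n-1)}\sum_{k=2}^{n-1}(-\hbar)^{k-2}k(k-1)\big(m_k^\sharp\bos{\Omega}_{n+1-k}+\bos{K}(\La_k\cdot\bos{\Omega}_{n-k})\big).$$
Thus it suffices to establish this identity for $\bos{\Omega}_n$ directly; the base case $n=2$ reads $\bos{\Omega}_2=m_2^\sharp\bos{\Theta}_1+\bos{K}\La_2$, which follows at once from the bridge recursion together with the second quantum master equation.

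To prove the target recursion I would feed the quantum master equation into the bridge recursion. Writing the latter as $n\bos{\Omega}_n=n(-\hbar)^{n-1}\bos{\Theta}_n+\sum_{k=1}^{n-1}k(-\hbar)^{k-1}\bos{\Theta}_k\cdot\bos{\Omega}_{n-k}$ and replacing the top term via $(-\hbar)^{n-1}\bos{\Theta}_n=-(-\hbar)^{n-2}\big(\hbar\bos{\Theta}_n\big)$ and the $n$-th quantum master equation, the operator term $m_n^\sharp\bos{\Theta}_1$ and the term $\bos{K}\La_n$ emerge immediately with the correct prefactor $(-\hbar)^{n-2}$, matching the first line of the target. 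The remaining pieces are the products $\bos{\Theta}_k\cdot\bos{\Theta}_{n-k}$, the operator terms $m_k^\sharp\bos{\Theta}_{n-k+1}$ and the brackets $(\bos{\Theta}_{n-k},\La_k)$ coming from the master equation, together with the convolution $\sum_k k(-\hbar)^{k-1}\bos{\Theta}_k\cdot\bos{\Omega}_{n-k}$. Here the two structural tools do the work: the inductive hypothesis promotes each $\bos{\Theta}_{n-k+1}$ occurring in an $m_k^\sharp$ term to $\bos{\Omega}_{n-k+1}$, the subleading correction $\bos{\Omega}_{n-k+1}-(-\hbar)^{n-k}\bos{\Theta}_{n-k+1}$ being reabsorbed by the product terms through the bridge recursion at lower order; and the descendant-bracket identity \djo, in the form $\bos{K}(\La_k\cdot\bos{\Omega}_{n-k})=\hbar(\La_k,\bos{\Omega}_{n-k})+\bos{K}\La_k\cdot\bos{\Omega}_{n-k}$ (using $\bos{K}\bos{\Omega}_{n-k}=0$ from the quantum descendant equation), converts the master-equation brackets into the required $\bos{K}$-exact summand $\bos{K}(\La_k\cdot\bos{\Omega}_{n-k})$.

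The main obstacle I anticipate is the combinatorial bookkeeping of the numerical coefficients: one must verify that after substituting the master equation and repeatedly applying the bridge recursion to resum the products $\bos{\Theta}_j\cdot\bos{\Theta}_\ell$ and promote $\bos{\Theta}$'s to $\bos{\Omega}$'s, the weights organize themselves precisely into $\Fr{(-\hbar)^{k-2}k(k-1)}{n(n-1)}$. The factors $k(k-1)$ and $\Fr{1}{n(n-1)}$ signal that the cleanest way to control this is to track everything through $E(E-1)$ on the generating series, i.e. to multiply each $n$-th relation by $n(n-1)$ so that the product terms assemble into $(E\bos{\Theta})^2$ and the operator terms into a single convolution; the coefficient identities then reduce to elementary binomial rearrangements, with the $n=3$ case already exhibiting every mechanism (the cancellation of the $\Fr16\bos{\Theta}_1^3$ terms and the \djo-rewriting of $-\Fr{\hbar}{3}(\bos{\Theta}_1,\La_2)$ into $\Fr13\bos{K}(\La_2\cdot\bos{\Theta}_1)$). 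Throughout, I would use super-commutativity of $\cdot$ to symmetrize and the fact that $\bos{\Theta}_1$ and $\La_k$ carry definite ghost number so that the Koszul signs in \djo are exactly as claimed.
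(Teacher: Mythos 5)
Your proposal is sound, and its two load-bearing steps coincide with the paper's proof: the bridge recursion you obtain by applying $E=t^\a\rd_\a$ to $e^{-\bos{\Theta}/\hbar}$ is the paper's recursive formula for $\bos{\Omega}_n$, and your reduction of the lemma to a recursion carried by $\bos{\Omega}_n$ alone --- substituting $\bos{p}_{n+1-k}^\sharp\bos{\Theta}_1=\bos{\Omega}_{n+1-k}-\bos{K}\bos{x}_{n+1-k}$ and cancelling via $m_k^\sharp\bos{K}=\bos{K}m_k^\sharp$ --- is exactly the paper's inductive step, with the same base case $\bos{\Omega}_2=m_2^\sharp\bos{\Theta}_1+\bos{K}\La_2$. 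Where you differ is in how that target recursion is established. The paper never substitutes the master equation term by term into the first-order bridge recursion; instead it multiplies the $n$-th master equation by $n(n-1)$ and sums, obtaining the single closed identity
$$
-\hbar\, t^\b t^\a\rd_\a\rd_\b\bos{\Theta}
+ t^\a\rd_\a\bos{\Theta}\cdot t^\b\rd_\b\bos{\Theta}
- A\bos{\Theta}
= \bos{K}\La + \big(\bos{\Theta},\La\big),
\qquad
A=\sum_{k\geq 2}k(k-1)\,m_k^\sharp,
\quad
\La=\sum_{k\geq 2}k(k-1)\,\La_k,
$$
then multiplies by $e^{-\bos{\Theta}/\hbar}$ and applies your bracket identity \emph{once, globally}: $\bos{K}\big(\La\, e^{-\bos{\Theta}/\hbar}\big)=\big(\bos{K}\La+(\bos{\Theta},\La)\big)e^{-\bos{\Theta}/\hbar}$, using $\bos{K}e^{-\bos{\Theta}/\hbar}=0$. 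Extracting the word-length-$n$ component of $\big(\hbar^2 t^\b t^\a\rd_\a\rd_\b+\hbar A\big)e^{-\bos{\Theta}/\hbar}=\bos{K}\big(\La\, e^{-\bos{\Theta}/\hbar}\big)$ then delivers the target recursion with the coefficients $(-\hbar)^{k-2}k(k-1)/n(n-1)$ already in place, so no promotion of $\bos{\Theta}$'s to $\bos{\Omega}$'s and no reabsorption of subleading corrections is ever needed. Your foregrounded term-by-term route is weakest exactly where you flag it: the ``reabsorption'' of $\bos{\Omega}_{n+1-k}-(-\hbar)^{n-k}\bos{\Theta}_{n+1-k}$ into the product terms, and the matching of $(\bos{\Theta}_{n-k},\La_k)$ against $\hbar(\La_k,\bos{\Omega}_{n-k})$, both trigger a cascade of further substitutions of $\bos{K}\La_k$ from lower master equations, and it is not evident this closes at fixed $n$ without the repackaging. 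But your closing remedy --- weight the $n$-th relation by $n(n-1)$ and track everything through $E(E-1)$ on the generating series --- is verbatim the paper's device, so your plan is completable as written. The only dividend of the first-order detour is the explicit formula $\bos{\Omega}_n=(-\hbar)^{n-1}\bos{\Theta}_n+\Fr{1}{n}\sum_{j=1}^{n-1}j(-\hbar)^{j-1}\bos{\Theta}_j\cdot\bos{\Omega}_{n-j}$, which the paper records separately but uses only to prove $\rd_0\bos{\Omega}_n=\bos{\Omega}_{n-1}$, not this lemma.
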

Hence, we conclude that 
$$
\left<\bos{\Omega}_n\right> 
= \bos{p}_n^\sharp\left<\bos{\Theta}_1\right>
=\Fr{1}{n!}t^{\a_n}\cdots t^{\a_1} \bos{p}_{\a_1\cdots\a_n}{}^\g\left<\bos{O}_\g\right>,
$$
for $n\geq 2$, while $\left<\bos{\Omega}_1\right>=t^\a\left<\bos{O}_\a\right>$. 

\begin{example}
The first few quantum correlators are
$$
\eqalign{
\bos{\Omega}_1&=\bos{\Theta}_1,\cr
\bos{\Omega}_2 &=\Fr{1}{2!}\bos{\Theta}_1^2 -\hbar \bos{\Theta}_2,\cr
\bos{\Omega}_3 &=\Fr{1}{3!}\bos{\Theta}_1^3 -\hbar \bos{\Theta}_1\bos{\Theta}_2+\hbar^2\bos{\Theta}_3,
\cr
\bos{\Omega}_4 &=\Fr{1}{4!}\bos{\Theta}_1^4 -\Fr{\hbar}{2}\bos{\Theta}_1^2\bos{\Theta}_2
+\hbar^2\left(\bos{\Theta}_1\bos{\Theta}_3+\Fr{1}{2}\bos{\Theta}_2^2\right)
-\hbar^3 \bos{\Theta}_4.
}
$$
and
$$
\eqalign{
\left<\bos{\Omega}_2\right> =& m_2^\sharp \left<\bos{\Theta}_1\right>,\cr
\left<\bos{\Omega}_3\right>
=&\left(\Fr{1}{3}m_2^\sharp m_2^\sharp -\hbar m_3^\sharp \right)\left<\bos{\Theta}_1\right>
,\cr
\left<\bos{\Omega}_4\right> 
=& \left(\Fr{1}{18}m_2^\sharp m_2^\sharp m_2^\sharp-\Fr{\hbar}{6} m_2^\sharp  m_3^\sharp  
-\Fr{\hbar}{2}m_3^\sharp  m_2^\sharp  +\hbar^2 m_4^\sharp   \right)
\left<\bos{\Theta}_1\right>
.
}
$$
\end{example}

We define the generating
functional $\mb{\CZ}(t_H)$ of all quantum correlation functions 
as follows
$$
\eqalign{
\mb{\CZ}(t_H)
&:= <1> + \sum_{n=1}^\infty \Fr{(-1)^n}{\hbar^{n}}\left<\bos{\Omega}_n\right>
=\left<1\right> + \sum_{n=1}^\infty \Fr{1}{n!}\Fr{(-1)^n}{\hbar^{n}}
t^{\a_{n}}\cdots t^{\a_{1}}\left<\bos{\pi}_{\a_{1}\cdots\a_{n}}\right>
}
$$
which gives an arbitrary $n$-point correlation function by 
$$\left<\bos{\pi}_{\a_1\cdots\a_n}\right>
\equiv (-\hbar)^n \rd_{\a_1}\cdots \rd_{\a_n}\mb{\CZ}(t_H)\bigl|_{t=0}.
$$
Now the lemma $1.1$ implies that
$$
\mb{\CZ}(t_H) =\left<1\right> -\Fr{1}{\hbar} \mb{T}(t_H)^{\g}\left<\bos{O}_{\g}\right>
$$
where 
$$
\mb{T}^{\g}:=t^{\g}-\Fr{1}{2\hbar} t^\b t^\a m_{\a\b}{}^\g + 
\sum_{n=3}^\infty \Fr{1}{n!}\Fr{(-1)^{n-1}}{\hbar^{n-1}}
t^{\a_{n}}\cdots t^{\a_{1}}\bos{p}_{\a_{1}\cdots\a_{n}}{}^{\g} 
\in \Bbbk\left[\!\left[t_H,\hbar^{-1}\right]\!\right].
$$
We call $\{\mb{T}^{\g}\}$ quantum coordinates for family, which encodes the essential information
of quantum correlation functions. Detailed discussions on it is a subject of the next paper in this series.

\subsection{Epilogue: It's morphism of QFT algebra, stupid}

Quantum field theory, in general, 
has infinitely many observables. The results summarized in the above are
applicable to certain  finite super-selection sector and
to general topological field theory. Or an effective description of
quantum field theory with certain filtration of super-selection sectors
(with respect to suitable scale , etc..)
\begin{definition}
A super-selection sector of BV QFT algebra $\big(\sC[[\hbar]], \bos{K},\:\cdot\:\big)$
is a sub-algebra $\big(\sC^{\G}[[\hbar]], \bos{K},\:\cdot\:\big)$ such that
$1\in \sC^{\G}$, $\sC^{\G}\cdot \sC^{\G} \subset \sC^{\G}$, $Q \sC^{\G}\subset \sC^{\G}$,
and $K^{(\ell)}\sC^{\G} \subset\sC^{\G}$ for all $\ell\geq 1$.
A super-selection sector with super-selection rule $\G$ is called finite if the cohomology 
$H^{\G}$ of  the reduced classical complex $\big(\sC^{\G}, Q\big)$ is finite dimensional
for each ghost number.
\end{definition}

The assumption of the finite dimensionality of the space $H$ of observables  may be a technicality,
though we may hardly expect to be able to determine all possible quantum correlations between
infinitely many different observables in practice. 
At present, it is more important for us to gain  
some preliminary understanding of underlying algebraic structures of quantum field theory 
as the premise of our program to unfold besides from those possible applications,
We would, however, also like to suggest an algorithm of computing quantum correlations.
The main purpose of this epilogue is to describe such a procedure and to justify our assertion that
quantum filed theory is a study of morphisms of QFT algebras.

The essential point is to figuring out the quantization map $\bos{f}:H \rightarrow \sC[[\hbar]]$
such that $\bos{K}\bos{f}=0$, $\bos{f}(e)=1$ and its classical limit $f: H \rightarrow \sC$
a  map of choosing representative in $\sC$ of each and every element in $H$ such that $f$
is $\Bbbk$-linear and $f(e)=1$. (In Section $3$ an algorithm to extend $f$ to $\bos{f}$ is
described). Once $\bos{f}=f +\hbar f^{(1)}+\cdots$ is known algebra of quantum correlation functions is 
completely determined from it.
Let $a_1,a_2 \in H$ be any pairs of observables, i.e., $\bos{\k}a_1=\bos{\k} a_2=0$, 
 and we are interested in their two point quantum correlations.
From $Q f(a_1)=Q f(a_2)=0$, then $Q(f(a_1)\cdot f(a_2))=0$ since $Q$ is
a derivation of the product $\cdot$. 
Define 
$$
m_2(a_1,a_2)=\left[f(a_1)\cdot f(a_2)\right]\in H^{|a_1|+|a_2|},
$$
by computing the classical cohomology class of $f(a_1)\cdot f(a_2)$.
Then $f: H \rightarrow \sC$ is an algebra map up to homotopy
$$
f(a_1)\cdot f(a_2) = f(m_2(a_1,a_2)) + Q \l_2\left(a_1,a_2\right)
$$
for some $\Bbbk$-bilinear map $\l_2:S^2(H)\longrightarrow \sC$ with ghost number $-1$.
It follows that the expression 
$\bos{f}(a_1)\cdot \bos{f}(a_2) -\bos{f}\big(m_2(a_1,a_2)\big) - \bos{K} \l_2\left(a_1,a_2\right)$
is divisible by $\hbar$.  Thus the failure of $\bos{f}$ being an algebra map up to homotopy
 is divisible by $\hbar$.
Then  $\bos{\phi}_2$, defined by the following formula
$$
\hbar \bos{\phi}_2\big(a_1,a_2\big)
:=\bos{f}(a_1)\cdot \bos{f}(a_2) -\bos{f}\big(m_2(a_1,a_2)\big) - \bos{K} \l_2\left(a_1,a_2\right),
$$
is a $\Bbbk[[\hbar]]$-bilinear map on $S^2 H[[\hbar]]$ into $H[[\hbar]]$ of ghost number $0$.
Consequently we have the following $2$-point quantum correlator 
$$
\bos{\pi}_2(a_1,a_2)= \bos{f}_1(a_1)\cdot \bos{f}(a_2)-\hbar \bos{f}_2\big(a_1,a_2\big)
=\bos{f}\big(m_2(a_1,a_2)\big) + \bos{K} \l_2\left(a_1,a_2\right)
$$
with the correlation function
$$
\left<\bos{\pi}_2(a_1,a_2)\right>=\left< \bos{f}\big(m_2(a_1,a_2)\big)\right>.
$$

\begin{remark}
We note that the conditions $\bos{\k}a_1=\bos{\k} a_2=0$ do not imply
$\bos{\k}m_2(a_1,a_2)=0$. We are assuming that $\bos{\k}=0$ identically on $H$.
\end{remark}

Let
$\bos{\phi}=\bos{\phi}_1, \bos{\phi}_2, \bos{\phi}_3,\cdots$
be an infinite sequence of $\Bbbk[[\hbar]]$-multilinear maps 
$$
\bos{\phi}_n=\phi_n +\hbar \phi^{(1)}_n+\hbar^2 \phi^{(2)}_n+\cdots
: S^n(H)\longrightarrow \sC[[\hbar]], \qquad n=1,2,3,4,\cdots
$$
of ghost number zero
defined by the following recursive relations
\eqn\morphism{
\eqalign{
\bos{\phi}_1:= &\bos{f}
,\cr
\hbar\bos{\phi}_2(a_1, a_2):=
&
\bos{\phi}_1(a_1)\cdot \bos{\phi_1}(a_2)
- \bos{\phi}_1(m_2(a_1,a_2))
- \bos{K}\l_2(a_1,a_2)
,\cr
\hbar\bos{\phi}_3(a_1, a_2,a_3):=
&\bos{M}_3\big(a_1,a_2,a_3\big) - \bos{\phi}_1(m_3(a_1,a_2,a_3))
- \bos{K}\l_2(a_1,a_2,a_3)
,\cr
\vdots &
\cr
\hbar\bos{\phi}_n\big(a_1,\cdots,a_n\big)
:=&\bos{M}_n\big(a_1,\cdots,a_n\big)
 -\bos{\phi}_1\left(m_n\big(a_1,\cdots,a_n\big)\right)
-\bos{K}\l_n\big(a_1,\cdots,a_n\big)
,\cr
\vdots &
\cr
}
}
where $\bos{M}_2(a_1, a_2)=\bos{\phi}_1(a_1)\cdot \bos{\phi_1}(a_2)$
and $\bos{M}_n$ for $n\geq 3$ is
$$
\eqalign{
\bos{M}_n\big(&a_1,\cdots,a_n\big)
\cr
:=&
\sum_{k=1}^{n-1}\Fr{k(n-k)}{n(n-1)} \sum_{\s\in S_{n}}
(-1)^{|\s|}\bos{\phi}_k(a_{\s(1)},\cdots, a_{\s(k)})\cdot 
\bos{\phi}_{n-k}(a_{\s(k+1)},\cdots, a_{\s(n)})
\cr
&
-\sum_{k=2}^{n-1}\Fr{k(k-1)}{n(n-1)} \sum_{\s\in S_{n}}(-1)^{|\s|}
  \bos{\phi}_{n-k+1}
  \Big(a_{\s(1)},\cdots, a_{\s(n-k)}, m_k\big(a_{\s(n-k+1)},\cdots, a_{\s(n)}\big)\Big)
  \cr
&
-\sum_{k=2}^{n-1}\Fr{k(k-1)}{n(n-1)} \sum_{\s\in S_{n}}(-1)^{|\s|}
  \Big(
  \bos{\phi}_{n-k}\left(a_{\s(1)},\cdots, a_{\s(n-k)}\right)
      , \l_k\big(a_{\s(n-k+1)},\cdots, a_{\s(n)}\big)\Big),
}
$$
which depends only on $\bos{\phi}_1,\cdots,\bos{\phi}_{n-1}$, $m_2,\dots,m_{n-1}$ and
$\l_2,\cdots,\l_{n-1}$. 
The recursive relation implies or is based on the property that 
the classical limit ${M}_n\big(a_1,\cdots,a_n\big)$ of  the expression
$\bos{M}_n\big(a_1,\cdots,a_n\big)$ belongs to $\Ker Q$ such that
$$
{M}_n\big(a_1,\cdots,a_n\big)=f\big(m_n(a_1,\cdots,a_n)\big)
- Q\l_n(a_1,\cdots,a_n),
$$
where $m_n(a_1,\cdots,a_n):=\left[{M}_n\big(a_1,\cdots,a_n\big)\right]$, and, hence,
the expression $\bos{M}_n -\bos{f}(m_n) -\bos{K}\l_n$ is divisible by $\hbar$.
A separate proof of the above assertion is  redundant after theorem $1.2$ on
quantum master equation. 
Also quantum descendant equation implies
that
$$
\eqalign{
\bos{K}\bos{\phi}_1(a)=0
,\cr
\bos{K}\bos{\phi}_2(a_1, a_2)+ (-1)^{|a_1|}\big(\bos{\phi}_1(a_1),\bos{\phi}_1(a_2)\bigr)=0
,\cr
\vdots\phantom{00}
\cr
\bos{K}\bos{\phi}_n(a_1,\cdots, a_n)
+\Fr{1}{2}\sum_{k=1}^{n-1} \sum_{\s\in S_{n}}(-1)^{|\s|}
  \Big(
  \bos{\phi}_{k}(a_{\s(1)},\cdots, a_{\s(k)})
      , \bos{\phi}_{n-k}(a_{\s(k+1)},\cdots, a_{\s(n)})\Big)
=0.
\cr
}
$$
It should be clear that the recursive definition of $\bos{\phi}=\bos{\phi}_1, \bos{\phi}_2, \bos{\phi}_3,\cdots$ 
is nothing but an infinite sequence of $\hbar$-divisibility conditions of the quantum extension map $\bos{f}$
 with respects to the sequence $m=m_2,m_3,m_4,\cdots$ of $\Bbbk$-multilinear
multiplications on $H$ up to homotopy.

We may regard the triple $\big(H[[\hbar]], 0, m\big)$, where $m=m_2,m_3,m_4,\cdots$,
as a structure of super-commutative QFT algebra with zero differential ($\bos{\k}=0$) 
with trivial quantum descendant algebra $\big(H[[\hbar]], 0, 0\big)$. The BV QFT algebra 
$\big(\sC[[\hbar]],\bos{K}, \;\cdot\;\big)$
at the chain level is also regarded as an example of super-commutative QFT algebra with
a binary product $\cdot$ only and which quantum descendant algebra 
is a DGLA $\big(\sC[[\hbar]],\bos{K}, (\hbox{ },\hbox{ })\big)$ over $\Bbbk[[\hbar]]$.
In general quantum descendant algebra of super-commutative QFT algebra shall be
an $L_\infty$-algebra over $\Bbbk[[\hbar]]$.
Then $\bos{\phi}=\bos{\phi}_1, \bos{\phi}_2, \bos{\phi}_3,\cdots$
is automatically a morphism from the trivial quantum descendant algebra $\big(H[[\hbar]],0,0\big)$ to
the  quantum descendant algebra $\big(\sC[[\hbar]],\bos{K},(\hbox{ },\hbox{ })\big)$
as $L_\infty$-algebra over $\Bbbk[[\hbar]]$. 

We say the quantum extension map $\bos{f}$
a quasi-isomorphism from the QFT algebra $\big(H[[\hbar]], 0, m\big)$ to the QFT algebra
$\big(\sC[[\hbar]],\bos{K}, \;\cdot\;\big)$. 
We, then, we call $\bos{\phi}=\bos{\phi}_1, \bos{\phi}_2, \bos{\phi}_3,\cdots$ 
quantum descendant morphism of quasi-isomorphism $\bos{f}$ as QFT algebra.
It also follows that the classical limit $\phi=\phi_1, \phi_2, \phi_3,\ldots$ of 
quantum descendant morphism $\bos{\phi}=\bos{\phi}_1, \bos{\phi}_2, \bos{\phi}_3,\cdots$
is a  quasi-isomorphism from $\big(H,0,0,\big)$ to $\big(\sC, Q, (\hbox{ },\hbox{ })\big)$
as $L_\infty$-algebras over $\Bbbk$, since $\phi_1=f$ induces an isomorphism
on the cohomology $H$.
It should be clear that not every $L_\infty$ quasi-isomorphism from $\big(H, 0,0\big)$ to 
$\big(\sC, Q, (\hbox{ },\hbox{ })\big)$ is the classical limit of the quantum descendant
of quasi-isomorphism as QFT algebra. In case that $H$ is finite dimensional for each ghost number,
the moduli space $\CM$ defined by the MC equation of the DGLA 
$(\sC, Q, (\hbox{ },\hbox{ }))$ is smooth-formal and is equipped with quantum coordinates
due  the $L_\infty$-quasi-isomorphism $\bos{\phi}=\bos{\phi}_1, \bos{\phi}_2, \bos{\phi}_3,\cdots$ 
descended from the quasi-isomorphism $\bos{f}$ of QFT algebra.

The quasi-isomorphism $\bos{f}$ of QFT algebra 
also determines arbitrary
$n$-point quantum correlators of observables
via its quantum descendant 
$\bos{\phi}=\bos{\phi}_1, \bos{\phi}_2, \bos{\phi}_3,\cdots$ as follows: 
A partition of the set $\{1,2,\cdots, n\}$ is
a set $\pi=\{B_1, \cdots, B_{|\pi|}\}$ of nonempty subsets $B_k$, $1\leq k \leq |\pi|$
of $\{1,2,\cdots, n\}$ such that  every element in $\{1,2,\cdots, n\}$ is exactly one of these subsets.
Then $n$-point quantum correlator of observables $a_1,\cdots, a_n$ is
$$
\bos{\pi}_n(a_1,\cdots ,a_n)
:=\sum_{\pi=\{B_1, \cdots, B_{|\pi|}\}}\hbox{sign}(\pi)(-\hbar)^{n-|\pi|} 
\bos{\phi}_{|B_1|}\left(B_1\right)\cdot \bos{\phi}_{|B_2|}\left(B_2\right)
\cdots\bos{\phi}_{|B_{|\pi|}|}\left(B_{|\pi|}\right)
$$
where $\pi$ runs for all partition of the set $\{1,2,\cdots, n\}$,  
 $|B_k|$ is the size of the set $B_k$ and $\bos{\phi}_{|B_k|}\left(B_k\right)$
 means $\bos{\phi}_{|B_k|}\left(a_{i_1}, \cdots, a_{i_{|B_k|}}\right)$ for
 $B_k=\left\{{i_1}, \cdots, {i_{|B_k|}}\right\}$.
The sign $\hbox{sign}(\pi)$ of  $\pi=\{B_1, \cdots, B_{|\pi|}\}$ is determined as follows:
consider the union $B_1\cup \cdots B_{|\pi|}$  as ordered list $\{\pi(1),\pi(2),\cdots, \pi(n)\}$,
then a sign $+1$ or $-1$ can be chosen by comparing ordering of the list 
$\{a_{\pi(1)}, a_{\pi(2)},\cdots, a_{\pi(n)}\}$ with
that of  the list $\{ a_1, a_2, \cdots, a_n\}$
such that the sign is compatible with the super-commutativity of the product $\cdot$
and that of $\bos{\phi}_k$. 
For example
$$
\eqalign{
\bos{\pi}_1(a)&= \bos{\phi}_1(a)
,\cr
\bos{\pi}_2(a_1,a_2)&= \bos{\phi}_1(a_1)\cdot\bos{\phi}_1(a_2) -\hbar   \bos{\phi}_2(a_1,a_2)
,\cr
\bos{\pi}_3(a_1,a_2,a_3)&= \bos{\phi}_1(a_1) \cdot\bos{\phi}_1(a_2)\cdot \bos{\phi}_1(a_3)
\cr
-\hbar&  \bos{\phi}_1(a_1) \cdot\bos{\phi}_2(a_2,a_3)
-\hbar  \bos{\phi}_2(a_1,a_2) \cdot \bos{\phi}_1(a_3)
-\hbar(-1)^{|a_1||a_2|}\bos{\phi}_1(a_2)\cdot \bos{\phi}_2(a_1,a_3)
\cr
+\hbar&^2 \bos{\phi}_3(a_1,a_2,a_3)
,\cr
}
$$
et cetera. Then $\bos{K}\bos{\pi}_n(a_1,\cdots ,a_n)=0$ for all $n$ and
any $a_1,\cdots,a_n \in H$. A separate proof of the above assertion is  
redundant after theorem $1.3$. One may clearly notice an analogy with
the relation between $n$-point correlation function and products of connected
correlation functions in Feynman path integrals. 
 
 Now an upshot is that quantum correlation functions can be determined
 certain computation of classical cohomology. For an arbitrary $2$-point
 quantum correlator $\bos{\pi}_2(a_1,a_2)$ we have
$\left<\bos{\pi}_2(a_1,a_2)\right>= \left<\bos{f}(m_2(a_1,a_2))\right>$,
thus the quantum expectation value of the observable $m_2(a_1,a_2) \in H$,
which is the classical cohomology class $\left[f(a_1)\cdot f(a_2)\right]$ of $f(a_1)\cdot f(a_2)$.
For an arbitrary $3$-point
 quantum correlator $\bos{\pi}_3(a_1,a_2,a_3)$, we have
$$
\Big<\bos{\pi}_2(a_1,a_2,a_3)\Big>= \Big<\bos{f}\big(m_2(m_2(a_1,a_2),a_3\big)\Big>
-\hbar  \Big<\bos{f}\big(m_3(a_1,a_2,a_3\big)\Big>,
$$
where $m_3(a_1,a_2,a_3)$ is the classical cohomology class of $\left[M_3(a_1,a_2,a_3)\right]$;
$$
\eqalign{
M_3&(a_1,a_2,a_3)\cr
=&
\Fr{2}{3}\Big[\phi_1(a_{1}) \cdot\phi_2(a_{2},a_{3})
+\phi_2(a_{1},a_{2})\cdot\phi_1(a_{3})
+(-1)^{|a_{1}||a_{2}|}\phi_1(a_{2})\cdot\phi_2(a_{1},a_{3})  \Big]
\cr
&
-\Fr{1}{3}\Big[
\phi_2\big(a_{1}, m_2(a_{2},a_{3})\big)
+\phi_2\big(m_2(a_{1},a_{2}),a_{3}\big)
+(-1)^{|a_{1}||a_{2}|}\phi_2\big(a_{2}, m_2(a_{1},a_{3})\big)
\Big]
\cr
&
-\Fr{1}{3}\Big[
\big(\phi(a_{1}), \l_2(a_{2},a_{3})\big)
-\big( \l_2(a_{1},a_{2}),\phi_1(a_{3})\big)
-(-1)^{|a_{1}||a_{2}|}\big(\phi_1(a_{2}), l_2(a_{1},a_{3})\big)
\Big].
}
$$
We also note that 
$$
\phi_2(a_i, a_j)=f^{(1)}(a_i)\cdot f (a_j)+f(a_i)\cdot f^{(1)} (a_j)
-f^{(1)}\big(m_2(a_i,a_j)\big) -K^{(1)}\l_2(a_i, a_j).
$$
In general
$$
\Big<\bos{\pi}_n(a_1,\cdots ,a_n)\Big> =  \Big< \bos{f}\big(\bos{p}_n(a_1,\cdots, a_n\big)\Big>
$$
where $\bos{p}_n$ is defined recursively as follows
$$
\eqalign{
\bos{p}_n(&a_1,\cdots, a_n\big)
\cr
=
&(-\hbar)^{n-2} {m}_n(a_1,\cdots, a_n)
\cr
&
+\Fr{1}{n(n-1)}\sum_\s(-1)^{|\s|}
\sum_{k=2}^{n-1}(-\hbar)^{k-2}k(k-1)
\bos{p}_{n-1-k}\big(m_k(a_{\s(1)},a_{\s(2)}), a_{\s(3)},\cdots, a_{\s(n)}\big)
}
$$
with the initial condition that $\bos{p}_2(a_i, a_j)=m_2(a_i,a_j)$.
It follows that it suffice to determine $f, f^{(1)},\cdots, f^{(n-1)}$ to determine
$m_2,\cdots, m_n$, thus $\bos{p}_2,\cdots, \bos{p}_n$ via classical computations.

\newsec{BV QFT Algebra}

In this section we define  BV QFT algebra and its descendant DGLA with
several examples. We begin with recalling some standard algebra notions 
including DGLA, differential $0$-algebra, BV algebra and differential BV algebra
before discussing BV QFT algebra.

\subsection{Differential $0$-algebra, BV and differential BV algebras}
The contents of this subsection are standard, though things may be named differently
in literature. Fix a ground field $\Bbbk=\R, \C,\ldots$ of characteristic zero.
Every algebra in this subsection is defined over $\Bbbk$, which may be replaced with
a commutative ring.

Let $\sC$ denote a $\Z$-graded $\Bbbk$-module
$$
\sC = \bigoplus_{i\in \Z} \sC^i
$$
We say that a homogeneous element $a \in \sC^i$ carries   the ghost number $i$, and
use notation $|a|$ for the ghost number of $a$. The ground field $\Bbbk$ is assigned
to have the ghost number $0$.
Let $(\sC, \;\cdot\;)$ denote a $\Z$-graded  super-commutative associative $\Bbbk$-algebra;
$\sC$ is a $\Z$-graded $\Bbbk$-module and the product 
$$
\;\cdot\; : \sC^{i}\otimes \sC^{j}\longrightarrow \sC^{i+j}
$$
is a $\Bbbk$-bilinear map of ghost number $0$ satisfying 
the super-commutativity
$$
a\cdot b = (-1)^{|a||b|} b\cdot a
$$ 
and the associativity 
$$
a\cdot (b\cdot c)= (a\cdot b)\cdot c.
$$  
A cochain complex over $\Bbbk$ is a pairs $(\sC, Q)$ where
$Q$ is a $\Bbbk$-linear map of ghost number $1$ on $\sC$ into itself,
i.e., $Q:\sC^{j}\longrightarrow \sC^{j+1}$ for all $j$, satisfying $Q^{2}=0$.

\begin{definition}
A super-commutative differential $\bos{Z}$-graded algebra (CDGA) over $\Bbbk$ is
a triple $\left(\sC, Q,\,\cdot\,\right)$ if the pair $(\sC, \,\cdot\,)$ is
a  $\Z$-graded  super-commutative associative $\Bbbk$-algebra,
the pair $(\sC,Q)$ is a cochain complex over $\Bbbk$ and $Q$ is
a (graded) derivation of the product;
$$
Q (a\cdot b)= (Q a)\cdot b + (-1)^{|a|} a\cdot (Q b).
$$
\end{definition}

\begin{definition}
A differential graded 0-Lie algebra  over $\Bbbk$ 
is a triple $\left(\sC,Q,(\bullet,\bullet)\right)$, where the pairs $(\sC,Q)$ is a cochain complex
over $\Bbbk$ and
\begin{enumerate}
\item the bracket $
(\bullet,\bullet): \sC^{k_1}\otimes \sC^{k_2}\longrightarrow \sC^{k_1+k_2+1}$ 
is $\Bbbk$-bilinear with  ghost number $1$.
\item the bracket $(\bullet,\bullet)$  is graded-commutative 
$$
(a, b) = -(-1)^{(|a|+1)(|b|+1)}(b, a),
$$
and is a derivation of the bracket (graded-Jacobi law)
$$
(a,(b,c)) = ((a,b),c) +(-1)^{(|a|+1)(|b|+1)}(b, (a,c)).
$$
\item the differential $Q$ is a derivation of the bracket
$$
Q(a, b) = (Q a, b) +(-1)^{|a|+1}(a, Q b).
$$
A graded 0-Lie algebra  is a differential graded 0-Lie algebra with zero differential, $Q=0$.
\end{enumerate}
\end{definition}
\begin{remark}
The bracket in the standard definition of differential graded Lie algebra carries
the ghost number $0$. The differential graded 0-Lie algebras have the same properties
and utilities as differential graded Lie algebras after shifting ghost number by $1$.
The standard differential graded Lie algebra shall never appear in this paper and whenever
we use the term DGLA we refer to a differential graded 0-Lie algebra.
\end{remark}

\begin{definition}
The cohomology of CDGA or DGLA is the cohomology of the underlying cochain complex
\end{definition}
The Maurer-Cartan (MC) equation of a differential $0$-Lie algebra (DGLA)
$\left(\sC, Q, (\bullet,\bullet)\right)$ is the following equation
$$
Q\g +\Fr{1}{2}\left(\g,\g\right)=0,
$$
for $\g \in \sC^{1}$. Let $\g$ is a solution. Then the MC equation implies that
$Q_{\g}:= Q +\left(\g,\bullet\right):\sC^{j}\longrightarrow \sC^{j+1}$ satisfies
$Q_{\g}^{2}=0$ after using the graded Jacobi-law of the bracket and $Q$ being a
derivation of the bracket. Also $Q_{\g}$ is a derivation of the bracket, which property
follows from  the graded Jacobi-law. Thus  
$\left(\sC, Q_{\g},   (\bullet,\bullet)\right)$ is also a DGLA.
Solutions of MC equation comes with a natural notion of gauge equivalence:
One can check that $\left(\sC^{-1}, (\bullet,\bullet)\right)$ is a standard Lie algebra, since
the bracket has ghost number $1$. This Lie algebra acts on $\g\in \sC$ infinitesimally
by $\dot{\g}=Q\l +\left(\l,\g\right)$, which action can be exponentiated to the gauge group 
provided that the Lie algebra is nilpotent,

\begin{definition}
A quadruple $(\sC, Q,\cdot, (\bullet,\bullet))$ is a differential $0$-algebra over $\Bbbk$
if (i) the triple $\left(\sC, Q,\,\cdot\,\right)$ is a CDGA over $\Bbbk$,
(ii)  the triple $\left(\sC,Q,(\bullet,\bullet)\right)$ is a DGLA over $\Bbbk$,
and (iii) the bracket is a derivation of the product
(graded-Poisson law)
$$
(a, b\cdot c)= (a,b)\cdot c + (-1)^{(|a|+1)|b|}b\cdot (a,c).
$$
A differential $0$-algebra with the zero differential $Q=0$ is a $0$-algebra.
\end{definition}
Our standard example of differential $0$-algebra is from so-called classical BV master equation;
\begin{example}
Let  $\left(\sC, \cdot, (\bullet,\bullet)\right)$ be a $0$-algebra with an element $S \in \sC^0$
of the ghost number $0$
satisfying $$(S,S)=0.$$ 
Define $Q:=(S,\bullet)$, which is a $\Bbbk$-linear map of ghost number $1$
on $\sC$ to $\sC$,
then  the $4$-tuple $\big(\sC, Q, \cdot, (\bullet,\bullet)\big)$
is a differential $0$-algebra, since  
(i) the graded-Jacobi law of the bracket
implies $Q$ is a derivation of the bracket, (ii) the graded-Poisson law implies
that $Q$ is a derivation of the product, (iii) the condition $(S,S)=0$ and the graded-Jacobi
law implies that $Q^2=0$.
\end{example}

\begin{definition}
A BV algebra  over $\Bbbk$ is a triple  $(\sC, \cdot, \Delta)$, where $(\sC, \cdot)$ is
a $\Z$-graded super-commutative associative algebra over $\Bbbk$
and  $\Delta$ is  $\Bbbk$-linear, $\Delta \Bbbk=0$, operator of ghost number $1$,  
$\Delta:\sC^{k} \longrightarrow \sC^{k+1}$ satisfying
$\Delta^2=0$, such that
\begin{enumerate}
\item the BV operator $\Delta$ is not a derivation of the product,
which failure is measured by  so called BV bracket 
$$
(\bullet,\bullet): \sC^{k_1}\otimes \sC^{k_2}\longrightarrow \sC^{k_1+k_2+1}
$$
by the following formula
$$
(-1)^{|a|}(a, b):=\Delta(a\cdot b) - \Delta a \cdot b -(-1)^{|a|}a\cdot \Delta b,
$$
\item 
 the BV bracket is a derivation of the product (graded-Poisson law)
$$
(a, b\cdot c)= (a,b)\cdot c + (-1)^{(|a|+1)|b|}b\cdot (a,c),
$$
\end{enumerate}
\end{definition}

\begin{corollary}
For a BV algebra $\left(\sC, \cdot, \Delta\right)$ with associated 
BV bracket  $(\bullet,\bullet)$,
\begin{enumerate}
\item the pair $\left(\sC, (\bullet,\bullet)\right)$ is 
a graded $0$-Lie algebra  over $\Bbbk$, 
\item $\Delta$ is a derivation of the BV bracket;
$$
\Delta (a, b) = (\Delta a, b) + (-1)^{|a|+1}(a,\Delta b).
$$
\end{enumerate}
\end{corollary}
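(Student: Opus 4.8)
The statement has two parts, and the plan is to dispose of the structural items first and isolate the graded Jacobi identity as the one genuinely delicate computation. The bracket manifestly raises ghost number by $1$, since $\Delta$ does and the product preserves it, so that requirement is immediate. For graded commutativity I would use only super-commutativity of the product: writing the defining relation for $(b,a)$ and substituting $a\cdot b=(-1)^{|a||b|}b\cdot a$, $\;\Delta a\cdot b=(-1)^{(|a|+1)|b|}b\cdot\Delta a$ and $a\cdot\Delta b=(-1)^{|a|(|b|+1)}\Delta b\cdot a$ into the definition of $(a,b)$, then collecting signs, one finds $(-1)^{|a|}(a,b)=(-1)^{|a||b|+|b|}(b,a)$. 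Since $|a||b|+|a|+|b|=(|a|+1)(|b|+1)-1$, this rearranges to exactly $(a,b)=-(-1)^{(|a|+1)(|b|+1)}(b,a)$.

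For Part 2, that $\Delta$ is a derivation of the bracket, I would apply $\Delta$ to the defining relation $(-1)^{|a|}(a,b)=\Delta(a\cdot b)-\Delta a\cdot b-(-1)^{|a|}a\cdot\Delta b$. The term $\Delta\Delta(a\cdot b)$ drops out by $\Delta^2=0$, and I re-expand $\Delta(\Delta a\cdot b)$ and $\Delta(a\cdot\Delta b)$ using the defining relation once more, now with $\Delta^2 a=\Delta^2 b=0$. The spurious quadratic terms $\Delta a\cdot\Delta b$ produced by the two re-expansions come out with opposite signs and cancel, leaving $\Delta(a,b)=(\Delta a,b)+(-1)^{|a|+1}(a,\Delta b)$. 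I emphasize that this uses only $\Delta^2=0$ and the definition of the bracket, not the Jacobi identity, so it can be established independently and then used freely in Part 1.

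The graded Jacobi identity is the main obstacle, since it is the only place where the second-order (graded-Poisson) axiom is essential; the idea is to recast it as an operator identity. Setting $\Phi_a:=(a,\bullet)$, the graded-Poisson law says precisely that each $\Phi_a$ is a graded derivation of degree $|a|+1$, and the Jacobi identity is equivalent to $\Phi_{(a,b)}=[\Phi_a,\Phi_b]$ for the graded commutator of operators. Writing $L_a$ for left multiplication by $a$, the defining relation of the bracket rearranges to $[\Delta,L_a]=L_{\Delta a}+(-1)^{|a|}\Phi_a$, that is $\Phi_a=(-1)^{|a|}\big([\Delta,L_a]-L_{\Delta a}\big)$. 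I would then expand $[\Phi_a,\Phi_b]$ from this expression using three facts: the operator graded Jacobi identity (which holds for any operators), $[\Delta,\Delta]=2\Delta^2=0$, and $[L_a,L_b]=0$ (just super-commutativity of the product). Under these, every term carrying two $\Delta$'s vanishes and the multiplication operators recombine, so that what survives is exactly $\Phi_{(a,b)}$; evaluating on an arbitrary $c$ yields $(a,(b,c))=((a,b),c)+(-1)^{(|a|+1)(|b|+1)}(b,(a,c))$ and finishes the graded $0$-Lie algebra structure. The genuinely delicate point throughout is the bookkeeping of Koszul signs, so that the $\Delta^2=0$ and $[L_a,L_b]=0$ cancellations land as claimed; once the signs are tracked, every identity above is forced.
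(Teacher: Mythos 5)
Your proof is correct, and while your treatment of graded commutativity and of $\Delta$ being a derivation of the bracket coincides with the paper's own sketch (the paper likewise gets commutativity by substituting super-commutativity of the product into the defining relation, and gets the derivation property by applying $\Delta$ to the defining equation and using $\Delta^2=0$; your observation that the two spurious $\Delta a\cdot\Delta b$ terms cancel with opposite signs is exactly the computation the paper leaves to the reader), your route to the graded Jacobi identity is genuinely different. The paper's sketch applies $\Delta$ to the graded-Poisson law $(a,b\cdot c)=(a,b)\cdot c+(-1)^{(|a|+1)|b|}b\cdot(a,c)$ and expands each resulting $\Delta$ of a product via the defining relation --- an element-level computation. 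You instead encode everything in operators, $\Phi_a=(-1)^{|a|}\bigl([\Delta,L_a]-L_{\Delta a}\bigr)$, and reduce Jacobi to the commutator identity $[\Phi_a,\Phi_b]=\Phi_{(a,b)}$; this is the second-order-operator (Koszul-style) viewpoint, and it buys a cleaner structural picture at the price of heavier Koszul-sign bookkeeping. One caution on your accounting: the ``three facts'' you list are, read literally, insufficient. Writing $P_a=[\Delta,L_a]$, operator Jacobi together with $[\Delta,\Delta]=0$ and $[L_a,L_b]=0$ only yield $[P_a,P_b]=(-1)^{|a|+1}[\Delta,[P_a,L_b]]$ and $[P_a,L_b]=(-1)^{|a|}[\Phi_a,L_b]$; to convert $[\Phi_a,L_b]$ into the multiplication operator $L_{(a,b)}$ you must invoke the graded-Poisson axiom --- precisely the derivation property of $\Phi_a$ you mention in your setup --- and indeed for a general odd square-zero $\Delta$ that is not second order the Jacobi identity fails, so Poisson cannot be dispensed with. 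Moreover, after that conversion the surviving multiplication operators $L_{\Delta(a,b)}$, $L_{(a,\Delta b)}$ and $L_{(\Delta a,b)}$ cancel only once you feed in Part 2 and graded commutativity, which is why your decision to prove Part 2 first, independently of Jacobi, is essential to the logical order and should be named as a fourth and fifth ingredient of the expansion rather than left implicit; with that bookkeeping made explicit, the expansion closes exactly as you predict and yields $(a,(b,c))=((a,b),c)+(-1)^{(|a|+1)(|b|+1)}(b,(a,c))$.
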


\begin{proof}
The proof of the above corollary is standard or may be served as a good exercise.
Here is a sketch of a proof.
The graded-commutativity of the BV bracket follows from the super-commutativity
of the product and  the graded-Jacobi identity follows from 
the graded-Poisson law after applying $\Delta$. 
Finally, $\Delta$ being a derivation of the bracket follows by
applying  $\Delta$ to the defining equation of  the BV bracket and 
use the property that $\Delta^2=0$.
\qed
\end{proof}

\begin{corollary}
Let $\big(\sC,  \Delta,\,\cdot\,\big)$ be a  BV algebra  with 
the associated BV bracket  $(\bullet,\bullet)$.
Then the  triple $\big(\sC, \,\cdot\,, (\bullet,\bullet)\big)$, 
after forgetting $\Delta$, is a $0$-algebra. 
\end{corollary}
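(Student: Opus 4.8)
The plan is to unwind the definition of $0$-algebra in the special case $Q=0$ and observe that each of its defining conditions has already been supplied, either by the hypotheses of a BV algebra or by the preceding corollary. First I would record that, upon setting the differential to zero in the definition of differential $0$-algebra, a $0$-algebra is precisely a triple $(\sC, \cdot, (\bullet,\bullet))$ such that: (i) $(\sC, \cdot)$ is a $\Z$-graded super-commutative associative $\Bbbk$-algebra; (ii) $(\sC, (\bullet,\bullet))$ is a graded $0$-Lie algebra, i.e.\ the bracket is graded-commutative and obeys the graded-Jacobi identity; and (iii) the bracket is a derivation of the product, i.e.\ the graded-Poisson law holds. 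The two conditions demanding that the differential be a derivation (of the product and of the bracket) become vacuous once $Q=0$, so nothing further is required and no residual role for a differential survives.

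Then I would match these three requirements against what is on hand. Condition (i) is immediate, since super-commutativity and associativity of $\cdot$ are built into the definition of a BV algebra. Condition (ii) is exactly the content of part (1) of the preceding corollary, which asserts that the BV bracket endows $\sC$ with the structure of a graded $0$-Lie algebra over $\Bbbk$ (its graded-commutativity coming from super-commutativity of the product, its graded-Jacobi identity from the graded-Poisson law after applying $\Delta$). Condition (iii) is literally axiom (2) in the definition of a BV algebra, namely that the BV bracket satisfies the graded-Poisson law with respect to $\cdot$.

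Assembling these three facts shows that the triple $(\sC, \cdot, (\bullet,\bullet))$ verifies all the axioms of a $0$-algebra, which is the assertion. Honestly there is no genuine obstacle here: the statement is a bookkeeping consequence of results already in place, and the single point meriting any care is confirming that the $Q=0$ specialization of the differential $0$-algebra axioms collapses exactly to (i)--(iii), with no leftover compatibility between $\Delta$ and the forgotten structure to be checked. This is precisely why the clause \emph{after forgetting $\Delta$} is indispensable to the statement: the whole force of the corollary is that the $0$-algebra structure lives entirely in the product and the induced bracket, the operator $\Delta$ having already discharged its duty by producing and constraining that bracket.
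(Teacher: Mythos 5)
Your proof is correct and follows the same route the paper intends: the paper states this corollary without a separate proof precisely because, as you observe, it assembles part (1) of the preceding corollary (the graded $0$-Lie algebra structure) with the BV algebra axioms (super-commutative associative product and the graded-Poisson law), the $Q=0$ specialization rendering the differential conditions vacuous. Your closing remark about the role of the clause ``after forgetting $\Delta$'' is a fair reading of the paper's own emphasis that the $0$-algebra structure, once induced, no longer references $\Delta$.
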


\begin{remark}
We should emphasis that not every $0$-algebra is originated from BV algebra.
\end{remark}

\begin{definition}
A differential BV algebra  over $\Bbbk$ is a $4$-tuple     
$\big(\sC,\Delta, Q, \,\cdot\,\big)$
where the triple $\big(\sC,\Delta,\,\cdot\,\big)$ is a BV algebra over $\Bbbk$ and the triple
$\big(\sC,Q,\,\cdot\,\big)$ is a CDGA over $\Bbbk$ such that
$Q\Delta + \Delta Q=0$.
\end{definition}

\begin{corollary}
Let $\big(\sC, \Delta, Q, \,\cdot\,\big)$ be a differential BV algebra  with 
BV bracket  $(\bullet,\bullet)$ associated to the BV algebra $\big(\sC, \Delta,\,\cdot\, \big)$.
Then $Q$ is a derivation of the BV bracket $(\bullet,\bullet)$.
\end{corollary}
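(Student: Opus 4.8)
The plan is to apply the differential $Q$ directly to the defining relation of the BV bracket and then reorganize the result using the two compatibility axioms of a differential BV algebra: that $Q$ is a graded derivation of the product $\cdot$, and that $Q$ anticommutes with $\Delta$, i.e. $Q\Delta=-\Delta Q$. This mirrors the proof, sketched in the earlier corollary, that $\Delta$ itself is a derivation of the bracket; there one applies $\Delta$ and invokes $\Delta^2=0$, whereas here one applies $Q$ and invokes the anticommutation relation instead. Concretely I would begin from
$$(-1)^{|a|}(a,b) = \Delta(a\cdot b) - \Delta a\cdot b - (-1)^{|a|}a\cdot \Delta b,$$
apply $Q$ to both sides, and note that the scalar $(-1)^{|a|}$ passes through $Q$, so that the left side becomes $(-1)^{|a|}Q(a,b)$.

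First I would process the right side term by term. For $Q\Delta(a\cdot b)$ I use $Q\Delta=-\Delta Q$ to rewrite it as $-\Delta Q(a\cdot b)$ and then expand $Q(a\cdot b)$ by the Leibniz rule. For $Q(\Delta a\cdot b)$ and $Q(a\cdot \Delta b)$ I apply the Leibniz rule for $Q$ directly, taking care that the first-slot sign uses the shifted degree $|\Delta a|=|a|+1$, and wherever the composite $Q\Delta$ reappears I again substitute $-\Delta Q$. After this step every summand is a product in which $Q$ and $\Delta$ each act on a single factor.

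Next I would recollect these summands into BV brackets. Multiplying the desired identity $Q(a,b)=(Qa,b)+(-1)^{|a|+1}(a,Qb)$ through by $(-1)^{|a|}$ gives $(-1)^{|a|}Q(a,b)=(-1)^{|a|}(Qa,b)-(a,Qb)$, and I would expand the two brackets on the right via the defining relation again, with the degree of $Qa$ shifted to $|a|+1$. Each bracket contributes three monomials, and comparing these six monomials against the six produced by applying $Q$ to the left should yield a term-by-term match.

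The only genuine obstacle is sign bookkeeping, since three sources of Koszul signs must be tracked simultaneously: the degree shift introduced by $\Delta$ in the first argument of a product, the sign $(-1)^{|a|}$ built into the definition of the bracket, and the sign change from each use of $Q\Delta=-\Delta Q$. No structure beyond the differential-BV axioms and the derivation properties of the bracket already recorded in the preceding corollaries is needed, so the statement is purely formal once the signs are verified.
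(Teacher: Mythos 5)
Your proposal is correct and takes essentially the same route as the paper's own (sketched) proof: both apply $Q$ to the defining relation of the BV bracket — equivalently, expand $(Q\Delta+\Delta Q)(a\cdot b)=0$ — and then use the Leibniz rule for $Q$ on the product together with $Q\Delta=-\Delta Q$ to recollect the resulting monomials into $(Qa,b)$ and $(-1)^{|a|+1}(a,Qb)$. The sign bookkeeping you flag does close up: one arrives at $(-1)^{|a|}Q(a,b)=(-1)^{|a|}(Qa,b)-(a,Qb)$, which upon dividing by $(-1)^{|a|}$ is exactly the stated identity.
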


\begin{proof}
For any homogeneous elements $a,b\in\sC$, we have
$$(Q\Delta + \Delta Q)(a\cdot b)=0.$$ 
Now use the definition of the BV bracket and
the property that $Q$ is a derivation of the product $\cdot$. 
Then use the property $Q\Delta +\Delta Q=0$
again to deduce that
$$
Q(a,b)= (Q a, b) + (-1)^{|a|+1} (a, Q b).
$$
 \qed
\end{proof}

\begin{corollary}

Let $\left(\sC, \Delta, \cdot, Q\right)$ be a differential BV algebra  with 
BV bracket  $(\bullet,\bullet)$ associated to the BV algebra $\left(\sC, \Delta,\,\cdot\,\right)$.
Then the quadruple $\left(\sC, \cdot, (\bullet,\bullet), Q\right)$, after forgetting $\Delta$, 
is a differential $0$-algebra.

\end{corollary}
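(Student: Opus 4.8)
The plan is to check the three defining clauses of a differential $0$-algebra one at a time, in each case invoking a structural result already established for the differential BV algebra $(\sC,\Delta,Q,\cdot)$; no fresh computation should be needed. Recall the target: $(\sC,Q,\cdot,(\bullet,\bullet))$ is a differential $0$-algebra provided (i) $(\sC,Q,\cdot)$ is a CDGA, (ii) $(\sC,Q,(\bullet,\bullet))$ is a DGLA, and (iii) the bracket is a derivation of the product (graded-Poisson law).

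First, clause (i) is granted outright: the very definition of a differential BV algebra stipulates that $(\sC,Q,\cdot)$ is a CDGA over $\Bbbk$, so in particular $Q^2=0$ and $Q$ is a derivation of $\cdot$. Nothing new is required here.

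Next, for the purely algebraic (non-differential) content I would appeal to the corollary asserting that, after forgetting $\Delta$, the triple $(\sC,\cdot,(\bullet,\bullet))$ is a $0$-algebra. Since a $0$-algebra is by definition a differential $0$-algebra with $Q=0$, this single statement delivers two things at once: that $(\sC,(\bullet,\bullet))$ is a graded $0$-Lie algebra --- i.e.\ the bracket is graded-commutative and obeys the graded-Jacobi law --- and that the bracket is a derivation of the product, which is exactly clause (iii). Both of these depend only on the BV algebra structure $(\sC,\Delta,\cdot)$ and so are already in hand.

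It then remains only to promote the graded $0$-Lie algebra $(\sC,(\bullet,\bullet))$ to a genuine DGLA carrying the differential $Q$, which is clause (ii). For this I need $Q^2=0$, already noted from (i), together with the compatibility that $Q$ is a derivation of the bracket, $Q(a,b)=(Qa,b)+(-1)^{|a|+1}(a,Qb)$. This last identity is precisely the content of the immediately preceding corollary, whose proof uses the differential-BV hypothesis $Q\Delta+\Delta Q=0$ applied to $\Delta(a\cdot b)$ together with the fact that $Q$ derives the product. With the graded $0$-Lie structure, $Q^2=0$, and $Q$ deriving the bracket all secured, $(\sC,Q,(\bullet,\bullet))$ is a DGLA, completing clause (ii); assembling (i)--(iii) then yields the claim. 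I do not expect a genuine obstacle here: every ingredient has been proved earlier, and the only place the specifically \emph{differential}-BV hypothesis $Q\Delta+\Delta Q=0$ is used is in feeding clause (ii) through the preceding corollary; all the rest is transport of structure that survives forgetting $\Delta$.
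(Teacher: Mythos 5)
Your proof is correct and follows essentially the same route as the paper, which states this corollary without separate proof precisely because it is the assembly you describe: clause (i) is built into the definition of differential BV algebra, clause (iii) and the graded $0$-Lie structure come from the earlier corollary that $\left(\sC,\cdot,(\bullet,\bullet)\right)$ is a $0$-algebra, and the only genuinely differential-BV input, $Q$ being a derivation of the bracket via $Q\Delta+\Delta Q=0$, is exactly the immediately preceding corollary. Nothing is missing.
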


\begin{remark}
We should emphasis that not every differential $0$-algebra is originated from
differential BV algebra.
\end{remark}

Our standard example of differential BV algebra is 
from so-called semi-classical BV master equation.

\begin{example}
Let  $\left(\sC, \Delta, \,\cdot\,\right)$ is a BV algebra with associated BV bracket 
$(\bullet,\bullet)$.
For an $S \in \sC^0$ satisfying
$$
\eqalign{
\Delta S=0,\cr
(S,S)=0,
}
$$
define $Q:= (S,\bullet):\sC^k\longrightarrow \sC^{k+1}$. Then 
 $\left(\sC,\Delta,\, \cdot\,,  Q:=(S,\bullet)\right)$
is a differential BV algebra. To see this note that,
for any $a\in\sC$,
$$
\Delta(S, a) +(S, \Delta a)=(\Delta S, a) - (S, \Delta a) + (S,\Delta a)=0,
$$
using $\Delta$ being a derivation of the bracket and the condition $\Delta S=0$.
Thus $\Delta Q + Q\Delta =0$. It remains to show
that $(\sC, Q,\,\cdot\,)$ is a CDGA, which follows from (i)
the graded-Poisson law of the bracket,  $(S, a\cdot b)= (S,a)\cdot b +(-1)^{|a|} a\cdot (S,b)$,
implies that $Q$ is a derivation of the product, (ii) the condition $(S,S)=0$ and the super-Jacobi
law implies that $Q^2=0$.
\end{example}

\begin{remark}
Let  $\left(\sC, \Delta, \,\cdot\,\right)$ is a BV algebra with associated BV bracket 
$(\bullet,\bullet)$. Then the triple $\left(\sC, \Delta,(\bullet,\bullet)\right)$
is obviously a DGLA. But we shall never be interested in such a DGLA  
since it is irrelevant to our problems.
By the way the above DGLA is rather boring.
Also we shall  never be interested in the $\Delta$-cohomology.
\end{remark}

\subsection{BV QFT Algebra}

Let $(\sC,\,\cdot\,)$ be a ${Z}$-graded super-commutative and associative unital
$\Bbbk$-algebra with the multiplication $\cdot$.
The physical Planck constant $\hbar$ will be regarded as a formal parameter with zero ghost number.
Let  $\sC[[\hbar]]:= \sC\otimes_{\Bbbk}\Bbbk[[\hbar]]
=\sC \oplus \hbar \sC\oplus \hbar^{2}\sC\oplus \cdots$.
We shall denote an element of $\sC[[\hbar]]$ by an upright {\bf bold} letter, 
i.e., $\bos{a}\in \sC[[\hbar]]$, and an element of $\sC$ by an {\itshape italic} letter, i.e.,
$a \in \sC$. Formal power series expansion of an element $\bos{a}$ in $\sC[[\hbar]]$ 
shall be denoted as follows; 
$$
\bos{a} = a^{(0)} + \hbar a^{(1)} +\hbar^2 a^{(2)}+\cdots,  
$$
where $a^{(n)}\in \sC$ for all $n=0,1,2,\cdots$.
We shall often denote $a^{(0)}$ by $a$.
On $\sC[[\hbar]]$ there is a canonical $\Bbbk[[\hbar]]$-bilinear
product induced from $\sC$, which will be denoted
by the same symbol $\cdot$;
$$
\bos{a}\cdot \bos{b} :=\sum_{n=0}^{\infty}\hbar^{n}\sum_{j=0}^{n}a^{(j)}\cdot b^{(n-j)}, 
$$
such that
$\big(\sC[[\hbar]],\,\cdot\,\big)$ is a ${Z}$-graded super-commutative and associative 
unital $\Bbbk[[\hbar]]$-algebra. The unit in $\sC$,
which is also the unit in $\sC[[\hbar]]$, shall be denoted by $1$. 
In general a $\Bbbk$-multilinear map 
of $\sC$ into $\sC$  canonically induces a $\Bbbk[[\hbar]]$ multilinear map 
of $\sC[[\hbar]]$ into $\sC[[\hbar]]$,
and we shall not distinguish them.
We shall often deals with certain $\Bbbk[[\hbar]]$-linear map in the form
$\bos{L}=L^{(0)}+\hbar L^{(1)}+\hbar^{2}L^{(2)}+\cdots$ on $\sC[[\hbar]]$ into itself,
where $L^{(0)},L^{(1)}, L^{(2)},\cdots$ is an infinite sequence of $\Bbbk$-linear
maps on $\sC$ into itself and each $L^{(n)}$ increase the ghost number by $N$.
Then we shall often say that $\bos{L}=L^{(0)}+\hbar L^{(1)}+\hbar^{2}L^{(2)}+\cdots$
is a sequence of $\Bbbk$-linear maps of ghost number $N$ parametrized by $\hbar$ 
on $\sC$ into itself. Such the map $\bos{L}=L^{(0)}+\hbar L^{(1)}+\hbar^{2}L^{(2)}+\cdots$
is a $\Bbbk[[\hbar]]$-linear map of ghost number $N$ on $\sC[[\hbar]]$ into itself,
and its action on $\bos{a}\in \sC[[\hbar]]$ is
$$
\eqalign{
{\bf{L}}\left(\bos{a}\right) 
&= \sum_{n=0}^{\infty}\sum_{j=0}^{n}\hbar^{n}L^{(n-j)}\left(a^{(j)}\right)
\cr
&=
L^{(0)}a^{(0)} + \hbar\left( L^{(0)}a^{(1)} +  L^{(1)}a^{(0)}\right) +\hbar^2
\left(L^{(0)}a^{(2)} +L^{(2)}a^{(0)} +L^{(1)}a^{(1)}\right)+\cdots. 
}
$$
Let $\bos{L}_{1}$ and $\bos{L}_{2}$ be two sequences of 
of $\Bbbk$-linear maps of ghost number $N_{1}$ and $N_{2}$, respectively, 
parametrized by $\hbar$ on $\sC$ into itself. 
Then the composition $\bos{L}_{3}=\bos{L}_{1}\bos{L}_{2}$
as $\Bbbk[[\hbar]]$-linear maps is a sequence $\bos{L}_{3}=L_{3}^{(0)}+\hbar L_{3}^{(1)}
+\cdots$ of $\Bbbk$-linear maps of ghost number $N_{1}+N_{2}$ on $\sC$ into itself
such that
$$
\bos{L}_{3}=\sum_{n=0}^{\infty}\sum_{j=0}^{n}\hbar^{n}L_{1}^{(n-j)}L_{2}^{(j)}.
$$
Projection of any structure parametrized by $\hbar$ on $\sC[[\hbar]]$ 
to $\sC$ will be called taking classical limit.

\begin{definition}
\label{Def1}
Let  $\bos{K}=Q +\hbar K^{(1)} +\hbar^{2}K^{(2)}+\cdots$ 
be a sequence of $\, \Bbbk$-linear maps of ghost number $1$
parametrized by $\hbar$ on $\sC$ into $\sC$
satisfying $\bos{K}^2=0$ and $\bos{K}1=0$.
Then the triple
$$
\bigl(\sC[[\hbar]], \bos{K}, \hbox{ }\cdot\hbox{ }\bigr)
$$ 
is  a BV QFT algebra if the failure of $\bos{K}$ being  a derivation of the product
$\cdot$ is {\it divisible} by $\hbar$ and the binary operation measuring
the failure is a derivation of the product.
\end{definition}
The condition $\bos{K}^{2}=0$ says that
the infinite sequence $Q,K^{(1)}, K^{(2)},\cdots$ of $\Bbbk$-linear maps
on $\sC$ into itself with ghost number $1$ satisfy the following infinite
sequence of relations;
\eqn\toera{
\eqalign{
Q^{2}&=0,\cr
Q K^{(n)}+K^{(n)}Q+\sum_{\ell=1}^{n-1}K^{(n-\ell)} K^{(\ell)}&=0 
\hbox{ for all } n=1,2,\cdots .
}
}
In particular 
the classical limit $Q$ of $\bos{K}$ satisfies
$Q^{2}=0$ so that $(\sC, Q)$ is a cochain complex.
Since the failure of $\bos{K}$ being a derivation of the product is proportional to $\hbar$,
it follows that $Q$ is a derivation of the product.
Thus, the classical limit 
$$
\bigl(\sC, Q,\,\cdot\,\bigr)
$$
of the BV QFT algebra  is a $\bos{Z}$-graded super-commutative unital
differential graded algebra (CDGA) over $\Bbbk$. 

On $\sC[[\hbar]]$, as a $\Bbbk[[\hbar]]$-module freely generated by $\sC$, 
there is a natural automorphism group consists of arbitrary 
sequence $\bos{g}=1+ g^{(1)}\hbar +g^{(2)}\hbar^{2}+ \cdots$
of $\Bbbk$-linear maps of ghost number $0$ on $\sC$ into
itself parametrized by $\hbar$ satisfying $\bos{g}\bigl|_{\hbar=0}=1$.
Such an automorphism
will act on both the unary operation $\bos{K}$ and the binary operation $\cdot$
as 
$
\bos{K}\rightarrow \bos{K}^{\pr}$ such that $\bos{K}^{\pr}=\bos{g}\bos{K} \bos{g}^{-1}
$
and $\cdot \rightarrow \cdot^{\pr}$ such that 
$\cdot^{\pr}= \bos{g}\left(\bos{g}^{-1}\cdot \bos{g}^{-1}\right)$.
It is, then, trivial that $\left(\sC[[\hbar]],\bos{K}^{\pr},\,\cdot^{\pr}\,\right)$ is also a BV
QFT algebra. Note that such automorphisms  fix the classical limit, i.e.,
$Q=Q^{\pr}:=\bos{K}^{\pr}\bigr|_{\hbar=0}$ and $a\cdot^{\pr}b= a\cdot b$
for $a,b\in \sC$.
Such an automorphism should be regarded as ``gauge symmetry'' 
of quantization procedure, so that
the resulting two BV QFT algebras $\left(\sC[[\hbar]], \bos{K},\,\cdot\,\right)$ 
and $\left(\sC[[\hbar]], \bos{K}^{\pr},\,\cdot^{\pr}\,\right)$ should be regarded as equivalent.
Thus we are lead to study BV QFT algebra modulo the ``gauge symmetry'', while our 
algebraic path integral shall be ``gauge invariant''.

Let $\bos{\n}_{2}$ denotes the binary operation,
which measures the failure of $\bos{K}$ being a derivation of the product;
$$
(-1)^{|\bos{a}|}\bos{\n}_{2}\left(\bos{a},\bos{b}\right):= \bos{K}\left(\bos{a}\cdot \bos{b} \right)
- \bos{K a}\cdot \bos{b} -(-1)^{|\a|}\bos{a}\cdot \bos{Kb}.
$$
Then $\bos{\n}_{2}$ is a $\Bbbk[[\hbar]]$-bilinear map of ghost number $1$ on 
$\sC[[\hbar]]\otimes \sC[[\hbar]]$ into $\sC[[\hbar]]$.
By definition $\bos{\n}_{2}$ is divisible by $\hbar$, thus 
$-\Fr{1}{\hbar}\bos{\n}_{2}$ is  also $\Bbbk[[\hbar]]$-bilinear map of ghost number $1$ on 
$\sC[[\hbar]]\otimes \sC[[\hbar]]$ into $\sC[[\hbar]]$. Then, again by definition,
both $\bos{\n}_{2}$ and $-\Fr{1}{\hbar}\bos{\n}_{2}$ is a derivation of the product.
We note that
a BV QFT algebra $\big(\sC[[\hbar]],\bos{K},\,\cdot\,\big)$ is a BV algebra
over $\Bbbk[[\hbar]]$ with associated BV bracket $\bos{\n}_{2}$.
It follows that the triple $\left(\sC[[\hbar]], \bos{K},\bos{\n}_{2}\right)$ is
a DGLA over $\Bbbk[[\hbar]]$. Also
the triple 
 $\big(\sC[[\hbar]], \bos{K},-\Fr{1}{\hbar}\bos{\n}_{2}\big)$
is another DGLA over $\Bbbk[[\hbar]]$.
We emphasis that  {\it not} every BV algebra over $\Bbbk[[\hbar]]$ is a BV QFT algebra.
Here are simple examples:
\begin{example}
Let $(\sC,\Delta,\,\cdot\,)$ be a BV algebra over $\Bbbk$ with associated
BV bracket $\left(\hbox{ },\hbox{ }\right)$. Then
\begin{itemize}

\item
the triple $\left(\sC[[\hbar]], \Delta,\,\cdot\,\right)$ is a BV-algebra
over $\Bbbk[[\hbar]]$ but is not a BV QFT algebra, since the failure of
$\Delta$ being derivation of the product is not divisible by $\hbar$,

\item
the triple $\left(\sC[[\hbar]], -\hbar\Delta,\,\cdot\,\right)$ is a BV-algebra
over $\Bbbk[[\hbar]]$ as well as  a BV QFT algebra.

\end{itemize}
We are not interested in BV algebras over $\Bbbk[[\hbar]]$ but in BV QFT algebras.
Also we have no use of the 
DGLA $\left(\sC[[\hbar]], \bos{K},\bos{\n}_{2}\right)$, which is rather boring object.
\end{example}
\begin{definition}

Let the triple $\bigl(\sC[[\hbar]], \bos{K}, \hbox{ }\cdot\hbox{ }\bigr)$ be
a BV QFT algebra. Define a $\Bbbk[[\hbar]]$-bilinear map
$
\left(\hbox{ },\hbox{ }\right):\sC[[\hbar]]^{k_{1}}\otimes \sC[[\hbar]]^{k_{2}}
\longrightarrow \sC[[\hbar]]^{k_{1}+k_{2}+1}
$
by the formula
$$
\left(\bos{a},\bos{b}\right):= -(-1)^{|\bos{a}|}\Fr{1}{\hbar}
\left(\bos{K}\left(\bos{a}\cdot \bos{b} \right)
- \bos{K a}\cdot \bos{b} -(-1)^{|\bos{a}|}\bos{a}\cdot \bos{Kb}\right).
$$
Then we call the triple 
$$
\big(\sC[[\hbar]], \bos{K},(\bullet,\bullet)\big)
$$
the descendant algebra of the the BV QFT algebra  
$\bigl(\sC[[\hbar]], \bos{K}, \hbox{ }\cdot\hbox{ }\bigr)$.

\end{definition}

\begin{corollary}
The descendant algebra 
$\big(\sC[[\hbar]], \bos{K},(\bullet,\bullet)\big)$ 
is a DGLA over $\Bbbk[[\hbar]]$ and the bracket 
$(\bullet,\bullet)$ is a derivation of the product $\cdot$;

\begin{enumerate}
\item the operator $\bos{K}$ is a (graded) derivation of the bracket
$$
\bos{K}\left(\bos{a}, \bos{b}\right)=\left(\bos{K}\bos{a}, \bos{b}\right)
+(-1)^{|\bos{a}|+1}\left(\bos{a}, \bos{K}\bos{b}\right),
$$
\item the bracket is graded commutative
$$
(\bos{a}, \bos{b}) = -(-1)^{(|\bos{a}|+1)(|\bos{b}|+1)}(\bos{b}, \bos{a}),
$$
\item the bracket is a (graded) derivation of the bracket (the graded Jacobi-identity) 
$$
(\bos{a},(\bos{b},\bos{c})) = ((\bos{a},\bos{b}),\bos{c}) 
+(-1)^{(|\bos{a}|+1)(|\bos{b}|+1)}(\bos{b}, (\bos{a},\bos{c})),
$$
\item the bracket is a (graded) derivation of the product (graded Poisson-law)
$$
(\bos{a}, \bos{b}\cdot \bos{c})= (\bos{a},\bos{b})\cdot \bos{c} 
+ (-1)^{(|\bos{a}|+1)|\bos{b}|}\bos{b}\cdot (\bos{a},\bos{c}).
$$
\end{enumerate}
\end{corollary}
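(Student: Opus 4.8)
The plan is to avoid reproving the four identities by hand and instead deduce them from the structure theory of BV algebras already established above. The key observation, recorded just before the statement, is that $\big(\sC[[\hbar]],\bos{K},\,\cdot\,\big)$ is itself a BV algebra over $\Bbbk[[\hbar]]$ whose BV operator is $\bos{K}$ and whose associated BV bracket is $\bos{\n}_2$. Hence the corollaries on BV algebras proved above apply verbatim, with $\Bbbk$ replaced by $\Bbbk[[\hbar]]$ and $\Delta$ replaced by $\bos{K}$: they yield the graded commutativity of $\bos{\n}_2$, its graded Jacobi identity, and the fact that $\bos{K}$ is a derivation of $\bos{\n}_2$, while the graded Poisson law for $\bos{\n}_2$ is one of the BV algebra axioms itself. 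Thus all four desired properties already hold for $\bos{\n}_2$, and it only remains to transport them to the descendant bracket.

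The second step is to pass from $\bos{\n}_2$ to $(\bullet,\bullet)=-\Fr{1}{\hbar}\bos{\n}_2$. Unwinding the two defining formulas and using $(-1)^{2|\bos{a}|}=1$ confirms this identification, and the quotient is legitimate precisely because $\bos{\n}_2$ is divisible by $\hbar$ by the defining condition of a BV QFT algebra, so $(\bullet,\bullet)$ is a genuine $\Bbbk[[\hbar]]$-bilinear map of ghost number $1$ into $\sC[[\hbar]]$. For the three bracket-linear identities — graded commutativity, the $\bos{K}$-derivation property, and the Poisson law — each term carries exactly one factor of the bracket, so multiplying the whole identity through by the common scalar $-\Fr{1}{\hbar}$ reproduces the corresponding identity for $(\bullet,\bullet)$ term by term. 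The one point needing a word of care is that $\bos{K}$ commutes with division by $\hbar$ on $\hbar$-divisible elements; this follows from the $\Bbbk[[\hbar]]$-linearity of $\bos{K}$ together with the divisibility, writing $\bos{\n}_2=-\hbar\,(\bullet,\bullet)$ and using $\bos{K}(\hbar\,\bos{x})=\hbar\,\bos{K}\bos{x}$.

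For the graded Jacobi identity the bookkeeping is one order higher, since every term is a doubly nested bracket; substituting $(\bullet,\bullet)=-\Fr{1}{\hbar}\bos{\n}_2$ produces a uniform factor $\Fr{1}{\hbar^2}=\left(-\Fr{1}{\hbar}\right)^2$ on every term of both sides. Because this factor is common to all terms, and because the Koszul signs depend only on the ghost numbers $|\bos{a}|,|\bos{b}|,|\bos{c}|$ and on the bracket raising ghost number by $1$ — a feature shared by $\bos{\n}_2$ and $(\bullet,\bullet)$, the scalar $-\Fr{1}{\hbar}$ being of ghost number $0$ — the Jacobi identity for $(\bullet,\bullet)$ is term-by-term equivalent to that for $\bos{\n}_2$, which is already in hand.

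Finally, the assertion that $\big(\sC[[\hbar]],\bos{K},(\bullet,\bullet)\big)$ is a DGLA over $\Bbbk[[\hbar]]$ is assembled from graded commutativity, graded Jacobi, $\bos{K}$ being a derivation of the bracket, and $\bos{K}^2=0$ (granted in the definition), together with the bracket having ghost number $1$. The main — indeed essentially the only — obstacle is clerical: keeping the powers of $\hbar$ and the signs straight through the rescaling. There is no genuine difficulty, because each identity is homogeneous of fixed degree in the bracket and the rescaling factor is central of ghost number $0$, so it factors cleanly out of both sides.
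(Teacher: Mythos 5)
Your proposal is correct and takes essentially the same route as the paper: the paper likewise observes that $\big(\sC[[\hbar]],\bos{K},\,\cdot\,\big)$ is a BV algebra over $\Bbbk[[\hbar]]$ with associated BV bracket $\bos{\n}_{2}$, invokes the earlier BV-algebra corollaries to conclude that $\big(\sC[[\hbar]],\bos{K},\bos{\n}_{2}\big)$ is a DGLA, and then records that the rescaled bracket $-\Fr{1}{\hbar}\bos{\n}_{2}=(\bullet,\bullet)$ gives another DGLA, stating the corollary without separate proof. Your explicit bookkeeping — legitimacy of division by $\hbar$ via the divisibility axiom and freeness of $\sC[[\hbar]]$, and the uniform factor $\hbar^{-2}$ in the Jacobi identity — merely spells out what the paper treats as immediate.
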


\begin{remark}
We emphasis that the bracket $(\hbox{ },\hbox{ })$ is a  secondary notion 
in the definition of  BV QFT algebra. So we call
the triple $\big(\sC[[\hbar]], \bos{K},(\bullet,\bullet)\big)$ 
the {\it descendant} DGLA to the BV QFT algebra
$(\sC[[\hbar]], \bos{K}, \hbox{ }\cdot\hbox{ })$. 
We also remark that the bracket $(\hbox{ },\hbox{ })$ in general may depend on $\hbar$;
$$
(\hbox{ },\hbox{ })=
(\hbox{ },\hbox{ })^{(0)}+ \hbar (\hbox{ },\hbox{ })^{(1)}+\hbar^2 (\hbox{ },\hbox{ })^{(2)}
+\cdots,
$$
where $(\hbox{ },\hbox{ })^{(\ell)}$ are $\Bbbk$-bilinear maps on $\sC\otimes \sC$ into $\sC$.
Abusing notation, we shall denote the classical limit $(\hbox{ },\hbox{ })^{(0)}$ of  the bracket 
$(\hbox{ },\hbox{ })$ by the same notation $(\hbox{ },\hbox{ })$.

\end{remark}

The classical limit 
 $\bigl(\sC, Q,(\bullet,\bullet)\bigr)$
of the  descendant DGLA 
is a DGLA over $\Bbbk$.
We  emphasis that the  DGLA  $\bigl(\sC, Q,(\bullet,\bullet)\bigr)$ 
has a quantum origin, however the secondary notion as it is. 
Under the natural automorphism group of BV QFT algebra 
the bracket in its descendant DGLA changes as 
$\left(\bullet,\bullet\right)^{\pr}
= \bos{g}\left(\bos{g}^{-1}\bullet, \bos{g}^{-1}\bullet\right)$,  
while its classical limit remains fixed.

\begin{remark}
Not every DGLA  over $\Bbbk$ is a classical limit of the descendant DGLA of
a BV QFT algebra.  
\end{remark}

\begin{corollary}
Let $\big(\sC,Q,\,\cdot\,\big)$ be the classical limit of a BV QFT algebra.
Let $\big(\sC,Q,(\bullet,\bullet)\big)$ be the classical limit of the
descendant DGLA. Then the quadruple 
$\big(\sC, Q,\,\cdot\,, (\bullet,\bullet)\big)$
is a differential $0$-algebra over $\Bbbk$.
\end{corollary}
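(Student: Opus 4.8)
The plan is to verify directly the three defining conditions of a differential $0$-algebra for the quadruple $\big(\sC, Q,\,\cdot\,, (\bullet,\bullet)\big)$, and to observe that two of them are already in hand from the preceding material, so that the only real task is the graded-Poisson law. First, condition (i) --- that $\big(\sC, Q,\,\cdot\,\big)$ is a CDGA over $\Bbbk$ --- was established immediately after the definition of a BV QFT algebra: the decisive point is that the failure of $\bos{K}$ to be a derivation of $\cdot$ is divisible by $\hbar$, so its classical limit $Q$ is an honest derivation of the product, giving a super-commutative unital CDGA over $\Bbbk$. Likewise, condition (ii) --- that $\big(\sC, Q, (\bullet,\bullet)\big)$ is a DGLA over $\Bbbk$ --- is precisely the statement, recorded just after the corollary showing $\big(\sC[[\hbar]],\bos{K},(\bullet,\bullet)\big)$ is a DGLA over $\Bbbk[[\hbar]]$, that the classical limit of the descendant DGLA is a DGLA over $\Bbbk$. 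Thus both (i) and (ii) may simply be cited.

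Hence the only remaining thing to check is condition (iii), the graded-Poisson law
$$
(a, b\cdot c)= (a,b)\cdot c + (-1)^{(|a|+1)|b|}b\cdot (a,c),
\qquad a,b,c\in\sC.
$$
For this I would take the classical limit of the corresponding identity over $\Bbbk[[\hbar]]$. Part (4) of the corollary on the descendant algebra already supplies the quantum graded-Poisson law
$$
\big(\bos{a}, \bos{b}\cdot \bos{c}\big)= \big(\bos{a},\bos{b}\big)\cdot \bos{c}
+ (-1)^{(|\bos{a}|+1)|\bos{b}|}\bos{b}\cdot \big(\bos{a},\bos{c}\big)
$$
for all $\bos{a},\bos{b},\bos{c}\in \sC[[\hbar]]$. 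Specializing to classical elements $a,b,c\in\sC$ (viewed as constant power series) and extracting the coefficient of $\hbar^0$ yields exactly the desired identity: $b\cdot c$ again lies in $\sC$, the product $\cdot$ and the bracket $(\bullet,\bullet)$ are $\Bbbk[[\hbar]]$-(bi)linear so that taking the $\hbar^0$ part commutes with them, the ghost numbers satisfy $|\bos{a}|=|a|$ and $|\bos{b}|=|b|$ since $\hbar$ carries ghost number $0$ (so the Koszul signs are unchanged), and by our convention the classical limit $(\bullet,\bullet)^{(0)}$ is denoted again by $(\bullet,\bullet)$.

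I do not expect a serious obstacle: the entire content is already packaged into conditions (i), (ii) and the quantum Poisson law, all proved earlier, so the argument is an assembly of those facts with the remark that the classical limit is compatible with the product and the bracket. It is worth emphasizing, however, why one must pass to the classical limit rather than work over $\Bbbk[[\hbar]]$: the full quantum quadruple $\big(\sC[[\hbar]],\bos{K},\,\cdot\,,(\bullet,\bullet)\big)$ is \emph{not} a differential $0$-algebra over $\Bbbk[[\hbar]]$, because $\bos{K}$ fails to be a derivation of $\cdot$ --- this failure is precisely what the descendant bracket measures. The differential $0$-algebra structure appears only after setting $\hbar=0$, where $Q$ becomes a genuine derivation, which is the concrete meaning of the earlier remark that the DGLA $\big(\sC, Q,(\bullet,\bullet)\big)$ is of ``quantum origin.'' The only care required is the routine sign bookkeeping just indicated.
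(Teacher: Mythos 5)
Your proposal is correct and coincides with the paper's (implicit) argument: the corollary is stated there as an immediate consequence of the earlier facts that the classical limit $\big(\sC,Q,\,\cdot\,\big)$ is a CDGA, that the classical limit of the descendant DGLA is a DGLA over $\Bbbk$, and of the graded-Poisson law for the descendant bracket, with the remaining step being exactly your extraction of the $\hbar^{0}$ coefficient (keeping in mind, as you do, that the bracket may depend on $\hbar$ and only its $\hbar^{0}$ part survives). Your closing remark on why the structure exists only after setting $\hbar=0$ also matches the paper's emphasis that the DGLA $\big(\sC,Q,(\bullet,\bullet)\big)$ has a quantum origin.
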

\begin{remark} Not every a differential $0$-algebra over $\Bbbk$ is a classical limit
of a BV QFT algebra and its descendant combined.
\end{remark}

Now consider some examples of BV QFT algebra and its descendant.

\begin{example}
Let $\big(\sC,\Delta,\,\cdot\,\big)$ be a BV algebra over $\Bbbk$ with associated
BV bracket $(\bullet,\bullet)$. Then
the triple 
$$
\big(\sC[[\hbar]], -\hbar\Delta,\,\cdot\,\big)
$$
is a BV QFT algebra
with the descendant DGLA 
$$
\big(\sC[[\hbar]], -\hbar\Delta,(\bullet,\bullet)\big).
$$
The Maurer-Cartan equation of the descendant DGLA is
\eqn\muac{
-\hbar \Delta \bos{S} +\Fr{1}{2}\left(\bos{S},\bos{S}\right)=0,
}
where $\bos{S}=S+\hbar S^{(1)}+\hbar^{2}S^{(2)}+\cdots\in \sC[[\hbar]]^{0}$, 
is precisely the BV quantum master equation
and which solution  is a BV quantum master action.
Let $\bos{S}$ be such a solution. Then \muac\ implies that
the operator 
$$
-\hbar \Delta +\left(\bos{S}, \hbox{ }\right):\sC[[\hbar]]^{\bullet}\longrightarrow 
\sC[[\hbar]]^{\bullet+1}
$$ 
squares to zero.
Then the triple
$$
\big(\sC[[\hbar]], -\hbar\Delta+\left(\bos{S}, \hbox{ }\right),\,\cdot\,\big)
$$
is also a BV QFT algebra, which descendant DGLA is
$$
\big(\sC[[\hbar]], 
-\hbar\Delta +\left(\bos{S},\hbox{ }\right),(\bullet,\bullet)
\big),
$$
with the same bracket of the previous one, since the bracket is a derivation of the bracket
- note that the BV bracket does not depend on $\hbar$, and $(\bos{S},\hbox{ })$ is
a derivation of the product. 
The Maurer-Cartan equation of the above DGLA 
$$
-\hbar \Delta \bos{\G} +\left(\bos{S},\bos{\G}\right)+\Fr{1}{2}\left(\bos{\G},\bos{\G}\right)=0,
\quad \bos{\G}\in \sC[[\hbar]]^{0}
$$
appears to be controlling deformation of QFT, as we mentioned in the introduction.
But the DGLA is a descendant notion and so is its Maurer-Cartan equation is.
\end{example}

\begin{example}
Let $\big(\sC,\Delta,Q,\,\cdot\,\big)$ be a differential BV algebra over $\Bbbk$ with associated
BV bracket $(\bullet,\bullet)$. Then
the triple 
$$
\big(\sC[[\hbar]], -\hbar\Delta + Q,\,\cdot\,\big)
$$
is a BV QFT algebra
with the descendant DGLA 
$$
\big(\sC[[\hbar]], -\hbar\Delta+Q,(\bullet,\bullet)\big).
$$
The Maurer-Cartan equation of the descendant DGLA is
\eqn\muad{
-\hbar \Delta \bos{S} +Q\bos{S}+\Fr{1}{2}\left(\bos{S},\bos{S}\right)=0,
}
where $\bos{S}=S+\hbar S^{(1)}+\hbar^{2}S^{(2)}+\cdots\in \sC[[\hbar]]^{0}$.
Assume that there is a solution $\bos{S}$ such that $\bos{S}=S \in \sC^{0}$, i.e.,
\eqn\muae{
\eqalign{
\Delta S &=0
,\cr
Q S +\Fr{1}{2}\left(S,S\right)&=0.
}
}
The above equation is called semi-classical BV master equation.
\end{example}

\begin{example}
Let $\big(\sC[[\hbar]],\bos{K}, \,\cdot\,\big)$ is a BV QFT algebra 
such that $\bos{K}= Q + \hbar K^{(1)}$. Let
$\big(\sC[[\hbar]],\bos{K},(\bullet,\bullet)\big)$ be the
descendant DGLA.
Then the quadruple
$\big(\sC,-K^{(1)}, Q, \,\cdot\,\big)$ is a differential BV algebra with associated
BV bracket $(\bullet,\bullet)$, since
\begin{enumerate}
\item
the condition $\bos{K}^{2}=0$ implies that $Q^{2}=Q K^{(1)}+K^{(1)}Q
=\left(K^{(1)}\right)^{2}=0$, 
\item
the condition
$\bos{K}\left(\bos{a}\cdot\bos{b}\right) -\bos{K}\bos{a}\cdot \bos{b}
-(-1)^{|\bos{a}|}\bos{a}\cdot \bos{K}\bos{b}=-\hbar (-1)^{|\bos{a}|}\left(\bos{a},\bos{b}\right)$
implies that $Q$ is a derivation of the product and
$$
-(-1)^{|a|}\left(a,b\right)
=K^{(1)}\left(a\cdot b\right)- K^{(1)}a\cdot b +(-1)^{|a|}a\cdot K^{(1)}b,
$$
\item
the bracket $(\bullet,\bullet)$ is a derivation of the product by definition.
\end{enumerate}
\end{example}

\begin{remark}
A differential BV algebra $\big(\sC, \Delta, Q,\,\cdot\,\big)$ or BV algebra
is not a BV QFT algebra, though we can make one out of it in particular fashion and
might be able to extract the original. Once constructing a BV QFT algebra
$\big(\sC[[\hbar]], Q-\hbar\Delta,\,\cdot\,\big)$ its natural automorphism
group send $Q-\hbar\Delta$ to 
$$
Q-\hbar\left(\Delta - [Q,g^{(1)}] \right)
-\hbar^{2}\left([\Delta, g^{(1)}]-[Q, g^{(2)}] +g^{(1)}Q g^{(1)}- g^{(1)}g^{(1)}Q\right)
+\cdots
$$
so that we certainly do not have a differential BV algebra by extracting coefficients of
above.
\end{remark}

\begin{example}
Let $\big(\sC[[\hbar]],\bos{K},\,\cdot\,\big)$ be a BV QFT algebra with
the descendant DGLA 
$$
\big(\sC[[\hbar]],\bos{K},\left(\hbox{ },\hbox{ }\right)\big).
$$
Let $(V, \cdot^{\pr})$ be a $\Z$-graded associative algebra over $\Bbbk$.
Then $\big((\sC\otimes V)[[\hbar]],\bos{K},\,\cdot\,\big)$,
with abusing notations,
is also a BV QFT algebra, where $\bos{K}$ means that 
$\bos{K}\otimes 1$ and the product $\cdot$ means that
$(\bos{a}\otimes \bos{\a})\cdot (\bos{b}\otimes \bos{\b})=
(-1)^{|\bos{\a}||\bos{b}|}\bos{a}\cdot\bos{b}\otimes \bos{\a}\cdot^{\pr}\bos{\b}$,
with the descendant DGLA 
$$
\big((\sC\otimes V)[[\hbar]],\bos{K},\left(\hbox{ },\hbox{ }\right)\big),
$$
where the bracket $\left(\hbox{ },\hbox{ }\right)$ means that
$\left(\bos{a}\otimes \bos{\a}, \bos{b}\otimes \bos{\b}\right)=
(-1)^{|\bos{\a}|(|\bos{b}|+1)}\left(\bos{a},\bos{b}\right)\otimes \bos{\a}\cdot^{\pr}\bos{\b}$.
\end{example}

\newsec{Observables and Expectation Values/Quantum Complex and Quantum Homotopy Invariants}

Throughout this section we consider the cochain complex 
$\big(\sC[[\hbar]],\bos{K}\big)$ and its classical limit $\big(\sC, Q\big)$
 in a BV QFT algebra   $\big(\sC[[\hbar]], \bos{K}, \hbox{ }\cdot\hbox{ }\big)$. 
 We denote the cohomology
of the complex $\left(\sC, Q\right)$ by $H$ and the cohomology class of an element
$O \in \sC$ satisfying $Q O=0$ by $[O]$. The cochain complex 
$\left(\sC[[\hbar]],\bos{K}\right)$ is defined modulo natural automorphism 
$\bos{g}= 1 +\hbar g^{(1)}+\hbar^{2}g^{(2)}+\cdots :\sC[[\hbar]]\longrightarrow
\sC[[\hbar]]$, where $g^{(\ell)}$, for $\ell=1,2,3,\cdots$, is a $\Bbbk$-linear map
on $\sC$ into $\sC$ of ghost number zero, such that it is gauge equivalent to
the cochain complex $\left(\sC[[\hbar]],\bos{K}^{\pr}=\bos{g}\bos{K}\,\bos{g}^{-1}\right)$.
Such automorphism preserves the classical parts of both $\sC[[\hbar]]$ and $\bos{K}$,
i.e., $\bos{K}\bigl|_{\hbar=0}= \bos{K}^{\pr}\bigl|_{\hbar=0}=Q$. Thus both the cochain
complex $(\sC,Q)$ and its cohomology $H$ are fixed. 

The goal of this section is twofold;
we are going to formalize (i) the procedure of constructing observables and (ii) the procedure
of evaluating   expectation values of observables. 

\begin{itemize}
\item
For the first goal, we are going to
build a certain  cochain complex $(H[[\hbar]],\bos{\k})$ on 
$H[[\hbar]]:= H\otimes_{\Bbbk}\Bbbk[[\hbar]]$, 
where $\bos{\k}=\hbar\k^{(1)}+\hbar^{2}\k^{(2)}+\cdots$ is a sequence of
$\Bbbk$-linear maps on $H$ into itself parametrized by $\hbar$
such that $\bos{\k}\bigl|_{\hbar=0}=0$ and $\bos{\k}^{2}=0$. 
Together with $(H[[\hbar]],\bos{\k})$, we are going
to build certain  map $\bos{f}$ on $H[[\hbar]]$ into $\sC[[\hbar]]$ of ghost number zero
such that
\begin{enumerate}
\item the $\Bbbk[[\hbar]]$-linear map
$\bos{f}=f+\hbar f^{(1)}+\hbar^{2} f^{(2)}+\cdots$ is a sequence of
$\Bbbk$-linear maps on $H$ into $\sC$  parametrized by $\hbar$,
\item 
the classical part $f$ of $\bos{f}$ is a cochain map on $(H,0)$ into $(\sC,Q)$
which induces the identity map on the cohomology $H$,
\item the $\Bbbk[[\hbar]]$-linear map $\bos{f}$ is a cochain map
on $\left(H[[\hbar]],\bos{\k}\right)$ into $\left(\sC[[\hbar]], \bos{K}\right)$;
$$
\bos{K}\bos{f}=\bos{f}\bos{\k}.
$$
\end{enumerate}
Both $\bos{\k}$ and $\bos{f}$ are defined modulo the natural automorphism on $H[[\hbar]]$.
Also the map $\bos{f}$ shall be defined up to natural notion
of  quantum homotopy.
The map $f:H\longrightarrow \sC$ corresponds to assigning a classical observable
to each and every cohomology class in $H$ which exists always.
The map $\bos{f}:H[[\hbar]]\longrightarrow \sC[[\hbar]]$ 
corresponds to an attempt to assign a quantum observable
to each and every cohomology class in $H$, which is not away possible. 
One of our conclusion shall be that a classical observable $O$,
that is
$O\in \sC$ and $Q O=0$,  can be extended to a quantum observable
if and only if $\bos{\k}\left(\left[O\right]\right)=0$.

\item
For the second goal we are going to examine certain  
$\Bbbk[[\hbar]]$-linear map $\bos{c}$ on $\sC[[\hbar]]$ into $\Bbbk[[\hbar]]$,
where 
\begin{enumerate}
\item the $\Bbbk[[\hbar]]$-linear map
 $\bos{c}=c^{(0)}+\hbar c^{(1)}+\hbar^{2} c^{(2)}+\cdots$ is a sequence of
$\Bbbk$-linear maps on $H$ into $\Bbbk$  parametrized by $\hbar$, 
\item the $\Bbbk[[\hbar]]$-linear map
$\bos{c}$ is a cochain map from  $\left(\sC[[\hbar]],\bos{K}\right)$ 
into $\left(\Bbbk[[\hbar]],0\right)$;
$$
\bos{c}\,\bos{K} =0,
$$
which is defined up to a natural notion
of  quantum homotopy.
\end{enumerate}
The map $\bos{c}$ shall correspond to a Batalin-Vilkovisky-Feynman path integral
and variations of $\bos{c}$ within a quantum homotopy class
correspond to continuous variations of gauge fixing condition. Then we shall associate
quantum expectation value to a classical observable $O$ by
$\bos{c}\circ\bos{f}\left(\left[O\right]\right) \in \Bbbk[[\hbar]]$,
which depends only on the cohomology class of classical observable and
is a quantum homotopy invariant if and only if $\bos{\k}\left(\left[O\right]\right)=0$.
In other word the expectation value does not depends on continuous variations 
of gauge fixing condition if the classical observable is extendable to a quantum observable.
Also the 
$\Bbbk[[\hbar]]$-linear map $\bos{c}\circ\bos{f}: H[[\hbar]]\longrightarrow \Bbbk[[\hbar]]$
is invariant under the automorphism of $\sC[[\hbar]]$.
\end{itemize}
Our formalizations are not against to the lore of QFT but clarifications of them.

\subsection{Classical to Quantum Observables}

We call  a representative $O \in \sC$ of a cohomology class of $(\sC, Q)$
a {\it classical observable}, and two classical
observables (classical physically)-equivalent if they belong to the same cohomology class.
Thus the set of equivalence classes of classical observables
is just the cohomology $H$ of the classical complex $(\sC, Q)$.
By a {\it quantum observable} $\bos{O}$ 
we mean a representative of a  nontrivial cohomology class of the quantum complex  
$(\sC[[\hbar]], \bos{K})$. We say two quantum observables are physically equivalent
if they belong to the same cohomology class of the complex  
$(\sC[[\hbar]], \bos{K})$. In this paper we are not interested in general quantum observables
but in quantum observables which are extended from classical observables.

We say  a classical observable $O\in \sC^{|O|}$
is  extendable to a quantum observable if there is an 
$\bos{O}\in \sC[[\hbar]]^{|O|}$ such that
 $\bos{O}\bigr|_{\hbar=0} = O$ and $\bos{K}\bos{O}=0$. 
 We call such an element
$\bos{O}$  an extension
of the classical observable $O$ to a quantum observable. Note that such an $\bos{O}$
does not belong to $\hbox{Im} \bos{K}$  unless the classical observable $O$ is trivial
- assume that $\bos{O} = \bos{K}\bos{\La}$ and $O$ is nontrivial,
then $O = Q \La$ for $\La =\bos{\La}\bigr|_{\hbar=0}$, which is a contradiction. 
Let $O\in \sC^k$ is  a representative of a cohomology
class of $(\sC, Q)$ which admits an extension to quantum
observable $\bos{O}$. Then any other representative $O^\pr \in \sC^k$ of 
the cohomology class $[O]$ of $O$ also has an extension
to a quantum observable $\bos{O^\pr}$; let 
$O^\pr = O + Q \l$ for some $\l \in \sC^{k-1}$, then $\bos{O^\pr} = \bos{O} + \bos{K}\l$
is an extension of $O^\pr$ to a quantum observable - $\bos{K}\bos{O^\pr} =0$ and 
$\bos{O^\pr}\bigl|_{\hbar=0} = \bos{O}\bigl|_{\hbar=0} + (\bos{K}\l)\bigl|_{\hbar=0}=
O + Q\l =O^\pr$. 
Let $O\in \sC^k$ is  a representative of a cohomology
class of  $(\sC, Q)$ and assume that $O$  does not admit an extension to quantum
observable. Then any other representative $O^\pr \in \sC^k$ of 
the  cohomology class $[O]$ of $O$ also does not admit an extension
to quantum observable; assume that a classical observable $O$
does not admit an extension to a quantum observable while
$O^\pr = O + Q \l$ for some $\l \in \sC^{k-1}$ has an extension  
to a quantum observable $\bos{O}^\pr$. Then $\bos{O} = \bos{O^\pr} - \bos{K}\l$
is an extension of $O$ to a quantum observable - $\bos{K}\bos{O} =0$ and 
$\bos{O}\bigl|_{\hbar=0} = \bos{O^\pr}\bigl|_{\hbar=0} - (\bos{K}\l)\bigl|_{\hbar=0}=
O + Q\l -Q\l =O$, which is a contradiction. 
Thus the existence of extension of a classical observable depends on its
classical cohomology class.  

So  an extension of a classical observable $O$ to quantum observable 
is an association $a \in H^{|a|}$ to $\bos{O}\in \sC[[\hbar]]^{|a|}$ such
that $\bos{K}\bos{O}=0$ and  the classical cohomology class $[O]$ of 
the projection $O=\bos{O}\bigr|_{\hbar=0}$ of $\bos{O}$ to $\sC$ is $a$.  
We may say two such extensions $\bos{O}$ and $\bos{O}^{\pr}$ of a classical observable is
equivalent if $\bos{O}^{\pr}-\bos{O} = \bos{K}\bos{\La}$ for some 
$\bos{\La}\in \sC[[\hbar]]^{|O|-1}$. We, however, note that extensions of a classical observable
can be much more arbitrary. Here is a simple demonstration: Let $\bos{X}$ and
$\bos{Y}$ be extensions of two classical observables $X$ and $Y$ such that $[X]\neq [Y]$.
Then both $\bos{X} +\hbar \bos{X}$ and $\bos{X} +\hbar\bos{Y}$  are extensions of $X$,
and all three extensions of $X$ are not equivalent. Clearly, there are infinitely many
variations of above examples with the similar feature. But all of such infinite possible
examples are silly things to take seriously.

In the following subsection we are going to develop obstruction theory for extending
classical observable to quantum observables together with dealing every ambiguity
in details. The basic strategy is to work with every equivalence class of classical observables
simultaneously.

\subsection{Obstructions and Ambiguities}

We begin with recalling some elementary terminology from homological algebra.
Let $(V, d_{V})$ and $(W, d_{W})$ be two cochain complexes over $\Bbbk$
and let $H_{V}$ and $H_{W}$ denote their cohomology. 
A cochain map $f$ on $(V,d_{V})$ into $(W, d_{W})$ is a degree preserving
$\Bbbk$-linear map $f: V^{\bullet}\longrightarrow W^{\bullet}$, which commutes
with the differentials, $ f d_{V}= d_{W}f$. It is understood that the map $f$ denotes
collectively for every map defined for each degree, say $f_{j}:V^{j}\longrightarrow W^{j}$, 
and it is zero map whenever  its source or target is trivial. A cochain map $f$ induces
a well-defined map on $H_{V}$ into $H_{W}$ since it sends $\Ker d_{V}$ to $\Ker d_{W}$
and $\hbox{Im }d_{V}$ to $\hbox{Im } d_{W}$. A cochain map is called quasi-isomorphism
if it induces an isomorphism between the cohomologies. There is an obvious way of
constructing a cochain map $f$ from arbitrary  $\Bbbk$-linear map 
$s: V^{\bullet}\longrightarrow W^{\bullet-1}$ of degree $-1$ by the formula 
$f= s d_{V}+ d_{W} s$. Such a cochain
map is called homotopic to zero and denoted by $f\,\sim\, 0$. It is clear that a cochain
map $f \,\sim\, 0$ vanishes on cohomology. 
We say two cochain maps $f$ and $f^{\pr}$ are homotopic 
and denote $f\,\sim\,f^{\pr}$ if $f^{\pr}-f\,\sim\,0$. Cochain homotopy is an equivalence
relation and the equivalence class of a cochain map is called homotopy type of the cochain map.
Whenever we consider a cochain map it is understood to be defined up to homotopy.

Let $a$ be an element of the cohomology $H$ of the complex $(\sC,Q)$. 
We say an element $O \in \sC^{|a|}$ a representative of $a$ if $Q O=0$ and
the cohomology class $[O]$ of $O$ is $a$.  It follows that such a choice of representative 
is defined modulo $Q$-exact term.
Choosing a representative for each and every element in the cohomology $H$ of 
$(\sC, Q)$ such that $\Bbbk$-linearity is preserved
defines a cochain map $f:H\longrightarrow \sC$ from the cochain complex
$(H,0)$ with zero differential to $(\sC, Q)$, i.e., $Q f=0$, which induces an isomorphism
of the cohomology since $H$ is the own cohomology of $(H,0)$ as well as the cohomology of
$(\sC,Q)$.  Thus $f$ is a quasi-isomorphism.
Furthermore the induced isomorphism on $H$ must be the identity map
since $[f(a)]=a$ for every $a\in H$ by definition.
The ambiguity in choosing representatives  corresponds to homotopy of the cochain map
$f$: Let $f^{\pr}=f + Q s$, where $s$ is a $\Bbbk$-linear map $s:H^{\bullet}\longrightarrow
\sC^{\bullet -1}$ of ghost number $-1$. Then $Q f^{\pr}=0$ and $f^{\pr}$ also induces
the identity map on the cohomology $H$, since $Q$ vanishes on $H$.  
Thus the map $f$ 
is unique up to homotopy. 

\begin{remark}
Let $\g \in \left(\Ker Q\cap \sC^{|\g|}\right)$, i.e., $\g$ has the ghost number $|\g|$
and satisfies $Q\g=0$. By taking cohomology class of $\g$ we have $[\g]\in H^{|\g|}$.
Then apply the map $f$ to $\left[\g\right]$ to obtain an element $f\left(\left[\g\right]\right)$
in $\sC^{|\g|}$ satisfying $Q f\left(\left[\g\right]\right)=0$.
Since $\left[f\left(\left[\g\right]\right)\right] =\left[\g\right]$, it follows that
$\g=f\left(\left[\g\right]\right)+ Q \beta$ for some $\beta \in \sC^{|\g|-1}$.
Now consider any $\Bbbk$-linear map $g: H^{\bullet}\longrightarrow \sC^{\bullet+|g|}$
which image belongs to $\Ker Q$. Such a map can be identified with composition of
a linear map $\xi:H^{\bullet}\longrightarrow H^{\bullet +|g|}$ with
the map $f:H^{\bullet} \longrightarrow \sC^{\bullet}$ up to homotopy;
$$
g - f\xi = Q \eta,
$$
for some $\Bbbk$-linear map $\eta: H^{\bullet}\longrightarrow \sC^{\bullet +|g|-1}$.
The linear map $\xi:H^{\bullet}\longrightarrow H^{\bullet +|g|}$ can be constructed
by taking the homotopy type of the map $g$. To be more explicit consider any $a \in H$
and its image $g(a)$ of the map $g: H^{|a|}\longrightarrow \sC^{|a|+|g|}$ so that $Q g(a)=0$.
By taking the cohomology class  of $g(a)$ we obtain $[g(a)] \in H^{|a|+|g|}$. Doing
this for each and every elements in $H$ defines a linear map $\xi:H^{\bullet}\longrightarrow
H^{\bullet +|g|}$ such that $\xi(a):=\left[g(a)\right]$. We may simply say that
$\xi$ is the cohomology class $[g]$ of $g$.
Now we compose $\xi$ with
$f$ to obtain $f\xi: H^{\bullet}\longrightarrow \sC^{\bullet +|g|}$. Since $f$ is a map
choosing a representative of each and every cohomology class, it follow that
$f\xi(a)$ and $g(a)$ belongs to the same cohomology class.
\end{remark}

The differential $0$
in $(H,0)$ is induced from the differential $Q$ in $(\sC, Q)$ and is zero since
$H$ is the $Q$-cohomology.
The cohomology  $H$ of the  complex $(\sC,Q)$ has more structure
induced from the cochain complex $(\sC[[\hbar]],\bos{K})$,
which is, modulo its natural automorphism,
nothing but
an infinite sequence $Q,K^{(1)}, K^{(2)},\cdots$ of $\Bbbk$-linear maps
on $\sC$ into itself with ghost number $1$ satisfying the following infinite
sequence of relations;
\eqn\toera{
\eqalign{
Q^{2}&=0,\cr
Q K^{(n)}+K^{(n)}Q+\sum_{\ell=1}^{n-1}K^{(n-\ell)} K^{(\ell)}&=0 \hbox{ for all } n=1,2,\cdots .
}
}
On $H$ the differential $Q$ vanishes, while
the $\Bbbk$-linear maps
$K^{(1)},K^{(2)},\cdots$ on $\sC$ shall induce 
certain infinite sequence $\k^{(1)},\k^{(2)}, \cdots$ of $\Bbbk$-linear
maps on $H$ into itself with ghost number $1$. It is natural to expect that 
such the sequence $\k^{(1)},\k^{(2)}, \cdots$  of $\Bbbk$-linear maps may satisfy
the following infinite
sequence of relations;
\eqn\toerb{
\eqalign{
\sum_{\ell=1}^{n-1}\k^{(n-\ell)} \k^{(\ell)}&=0 \hbox{ for all } n=1,2,\cdots .
}
}
It is also natural to consider the expression
$\bos{\k}:= \hbar \k^{(1)}+\hbar^{2}\k^{(2)}+\cdots$,
the sequence $\k^{(1)},\k^{(2)}, \cdots$ of $\Bbbk$-linear maps on $H$ into $H$ 
parametrized by $\hbar$, and $H[[\hbar]]:=H\otimes_{k}\Bbbk[[\hbar]]$ so that
$\bos{\k}$ is a $\Bbbk[[\hbar]]$-linear map on $H[[\hbar]]$ into $H[[\hbar]]$
with ghost number $1$. Then the expected relations in \toerb\ is summarized
by $\bos{\k}^{2}=0$ and $\bos{\k}\bigr|_{\hbar=0}=0$, 
so that $\left(H[[\hbar]],\bos{\k}\right)$ is a cochain
complex over $\Bbbk[[\hbar]]$ with the classical limit $(H,0)$.
There is natural automorphism on $H[[\hbar]]$
generated by an arbitrary infinite sequence
 $\bos{\xi}=1+\hbar\xi^{(1)}+\hbar\xi^{(2)}
+\cdots$ of $\Bbbk$-linear maps, parametrized by $\hbar$, on $H$ into $H$ 
with ghost number $0$ satisfying $\bos{\xi}\bigr|_{\hbar=0}=1$.
 So it is also natural to expect that
$\bos{\k}:= \hbar \k^{(1)}+\hbar^{2}\k^{(2)}+\cdots$ is defined up the gauge symmetry
$\bos{\k}^{\pr}= \bos{\xi}^{-1}\bos{\k}\bos{\xi}$. 
Then,  $\k^{(1)}$ must be invariant, $\k^{\pr(2)}$ is sent to 
$\k^{\pr(2)}=\k^{(2)} + \k^{(1)}\xi^{(1)}-\xi^{(1)}\k^{(1)}$ etc.

%


Now we are going to construct $\bos{\k}=\hbar\k^{(1)}+\hbar^{2}\k^{(2)}+\cdots$ 
and morphism
$\bos{f}=f +\hbar f^{(1)}+\hbar^{2} f^{(2)}+\cdots :H[[\hbar]]\longrightarrow
\sC[[\hbar]]$ satisfying $\bos{K}\bos{f} = \bos{f}\bos{\k}$ with taking care of all
ambiguities. Our construction and proof is inductive using the identification
$$
\Bbbk[[\hbar]]=\lim_{\leftarrow}\left( \Bbbk[\hbar]\bigr/ \hbar^{n}\Bbbk[\hbar]\right) 
\hbox{ as } n\longrightarrow \infty.
$$

To see the leading part of it, we 
write down first two leading terms for the condition $\bos{K}^{2}=0\mod \hbar^{2}$;
\eqn\thea{
\eqalign{
Q^{2}&=0,\cr
Q K^{(1)} + K^{(1)}Q&=0.\cr
}
}
The second relation in \thea\ implies that $K^{(1)}$ induces unique $\Bbbk$-linear map 
of ghost number $1$ on
$H$ into itself;
$$
\k^{(1)}:H^\bullet \longrightarrow H^{\bullet +1},
$$ 
since $K^{(1)}$ sends $\Ker Q$ to $\Ker Q$
and $\hbox{Im } Q$ to $\hbox{Im }Q$; (i) let $\eta \in \Ker Q$, $Q\eta=0$,
then $K^{(1)}(\eta) \in \Ker Q$ since $Q K^{(1)}\eta = - K^{(1)}Q\eta=0$, (ii)
let $\eta \in \hbox{Im }Q$, that is $\eta = Q\l$, then
$K^{(1)}(\eta) \in Im Q$ since $K^{(1)}\eta = K^{(1)}Q\l= - Q (K^{(1)}\l)$.

Now we consider the role of $f$. It is easy to show that
$K^{(1)} f -f\k^{(1)} \in \Ker Q$, since $Q(K^{(1)} f)= -K^{(1)} (Q f)=0$
and $Q f=0$. It can be also shown that
$(K^{(1)} f -f\k^{(1)} ) \subset Im Q$ using a contradiction. 
Assume that for an $a \in H^i$ there exists some $b \in H^{i+1}$ with
$b\neq 0$
such that
$$
(K^{(1)} f -f\k^{(1)} )(a) = f(b) - Q \l,
$$
with some $\l \in \sC^{i}$. We take the cohomology class in the both hand sides
of the above to get $\left[(K^{(1)} f -f\k^{(1)} )(a)\right] = \left[f(b)\right]$,
which implies that   $\k^{(1)} (a)-\k^{(1)} (a)=b$ since
$K^{(1)}=\k^{(1)}$ and $f$ is the identity map on $H$. Then $b$ must be zero,
which is a contradiction.
Now may declare a solution $\l$ of
$(K^{(1)} f -f\k^{(1)} )(a) = - Q \l$ for each and every $a \in H$ as the image  
$f^{(1)}(a)$ of another map
$f^{(1)}:H^{\bullet}\longrightarrow \sC^{\bullet}$, so that we have
\eqn\modha{
K^{(1)}f -f\k^{(1)}=- Q f^{(1)}.
}
Define 
$\bos{f}:=f +\hbar f^{(1)}\mod \hbar^{2}$
and $\bos{\k} :=\hbar\k^{(1)}\mod \hbar^{2}$.
Then we have $\bos{K}\,\bos{f}= \bos{f}\,  \bos{\k}\mod \hbar^{2}$ 
and $\bos{\k}^{2}=0\mod \hbar^{2}$, 
which summarize
\eqn\modhb{
\eqalign{
Q f&=0,\cr 
K^{(1)}f+ Q f^{(1)}&=f \k^{(1)}.
}
}
We should emphasis that $\bos{\k}\bigr|_{\hbar=0}=0$, and, thus
the condition $\bos{\k}^{2}=0\mod \hbar^{2}$ is vacuous.

Now we should examine possible ambiguity in the above procedure.
First of all the map $f:H^{\bullet}\longrightarrow \sC^{\bullet}$ is
defined up to homotopy, i.e., modulo $\hbox{Im } Q$. Let $f^{\pr}$ be
a map defined by another choice of representative for each and every
element in $H$. Then  
$$
f^{\pr}=f+ Q s^{(0)},
$$ 
for some arbitrary $\Bbbk$-linear map 
$s^{(0)}:H^{\bullet}\longrightarrow \sC^{\bullet-1}$.
Secondly $f^{(1)}$ in \modha\ is defined
modulo $\Ker Q$. To deal with such ambiguity, let's first repeat
the same procedure as above using the map $f^{\pr}=f + Q s^{(0)}$ instead of $f$.
We shall end up 
\eqn\modhab{
\eqalign{
Q f^{\pr}&=0,\cr
K^{(1)}f^{\pr} + Q f^{\pr(1)}&=f^{\pr}\k^{\pr(1)},
}
}
where $f^{\pr(1)}$ is a $\Bbbk$-linear map on $H$ to $\sC$ defined
modulo $\Ker Q$ and $\k^{\pr(1)}=\k^{(1)}$.
Rewriting the $2$nd equation above, by substituting $f^{\pr}=f+ Q s^{(0)}$,
 as follows;
$$
f\k^{(1)}= - Q s^{(0)}\k^{(1)} -Q K^{(1)}f +K^{(1)}f+ Q f^{\pr(1)}, 
$$
where we have used $K^{(1)}Q=-Q K^{(1)}$, we can compare with \modhb\
to conclude that 
$$
Q w^{(1)}=0,
$$ 
where
$$
w^{(1)}:=f^{(1)\pr} - f^{(1)} - K^{(1)}f - s^{(0)}\k^{(1)}.
$$
So we may have some controls of ambiguity; $w^{(1)}$ could be an arbitrary
$\Bbbk$-linear map on $H$ into $\sC$ of ghost number zero, but its
image is contained in $\Ker Q\cap \sC$. 
By taking cohomology class of $w^{(1)}(a)$ for each and every
$a \in H$, we obtain a $\Bbbk$-linear map on $H$ into itself;
$$
\xi^{(1)}:H\longrightarrow H.
$$
where $\xi^{(1)}(a)=\left[w^{(1)}(a)\right]$ for each and every $a\in H$.
Note that $\xi^{(1)}$ is also arbitrary. Now we compose $\xi^{(1)}$ with
$f$ to get a $\Bbbk$-linear map $f\xi^{(1)}:H\longrightarrow \sC$.
Since $f$ is a map choosing a representative for each and every $a \in H$,
we know that $\left[f\xi^{(1)}(a)\right]=\left[w^{(1)}(a)\right]$. Thus
$$
w^{(1)}= f\xi^{(1)} + Q s^{(1)}
$$
for some $\Bbbk$-linear map $s^{(1)}$ on $H$ into $\sC$ of ghost number zero.
Combining with the definition of $w^{(1)}$
we have $f^{\pr(1)}=f^{(1)} + f\xi^{(1)}
+Q s^{(1)}+ K^{(1)} s^{(0)}+ s^{(0)}\k^{(1)}$.

Now we collect everything together
to have the following general forms;
\eqn\modhaa{
\k^{\pr (1)}=\k^{(1)},
\qquad
\left\{\eqalign{ 
f^{\pr}&=f+  Q s^{(0)}
,\cr
f^{\pr(1)}&=f^{(1)}+ f\xi^{(1)}
+Q s^{(1)}+ K^{(1)} s^{(0)}+ s^{(0)}\k^{(1)}.
}\right.
}
The whole things can be written in more suggestive way.
Once we  define $\bos{\xi}:= 1+\hbar \xi^{(1)} \mod \hbar^{2}$, 
the relations in \modhaa\
is
 \eqn\modhc{
\eqalign{
\bos{\k}^{\pr}&=\bos{\xi}^{-1}\bos{\k}\bos{\xi}\mod \hbar^{2}
,\cr
\bos{f}^{\pr}&= \bos{f}\,\bos{\xi}
+ \bos{K}\,\bos{s} + \bos{s}\,\bos{\k}^{\pr}\mod \hbar^{2}.
}
}
It is obvious that $\bos{\k}^{\pr2}=0\mod \hbar^{2}$
and $\bos{K}\bos{f}^{\pr}=\bos{f}^{\pr}\bos{\k}^{\pr}\mod \hbar^{2}$,
which summarize \modhab. Thus every  arbitrariness represented by
$\bos{\xi}\mod\hbar^{2}$ is from the natural automorphism on 
$H[[\hbar]]$ modulo $\hbar^{2}$.

Now we consider the problem of  a classical observable
$O$ to a quantum observable.
Note that  $f^{\pr}([O])= f([O])+ Q s^{(0)}([O])$
with arbitrary $s^{(0)}$ belongs to the same $Q$-cohomology class
of $O$ and gives every possible representative of the class by variations
of $s^{(0)}$. Also  $K^{(1)}O$ and $K^{(1)}f^{\pr}([O])$
belongs to the same $Q$-cohomology class, which is $\k^{(1)}([O])\in H$;
$$
\left[K^{(1)}O\right] =\left[K^{(1)}f^{\pr}([O])\right]
=\k^{(1)}([O]).
$$
But the classical observable $O$ is extendable to quantum observable modulo $\hbar^{2}$
if and only if $K^{(1)}O$ is $Q$-exact, i.e., $K^{(1)}O=-Q O^{(1)}$ for some $O^{(1)}$
so that $\bos{O}:=O +\hbar O^{(1)}$ satisfies $\bos{K}\bos{O}=0\mod\hbar^{2}$.
Thus the necessary and sufficient condition for that is $\k^{(1)}([O])=0$. Assuming
so, $\bos{f}^{\pr}([O]) =\bos{f}\left(\bos{\xi}([O])\right) + \bos{K}\bos{s}([O])\mod \hbar^{2}$ 
gives every possible extension by variations of
$\bos{s}\mod \hbar^{2}$ and $\bos{\xi}\mod \hbar^{2}$.

For the next order, we 
consider the following map;
$$
g^{(2)}:= K^{(2)}f + K^{(1)}f^{(1)} -f^{(1)}\k^{(1)}
$$
on $H$ into $\sC$ of ghost number $1$.
We claim that 
$$
Q g^{(2)}=- f\k^{(1)}\k^{(1)},
$$
which is a consequence of $\bos{K}\bos{f}=\bos{f}\bos{\k}\mod\hbar^{2}$ and 
$\bos{K}^{2}=0\mod \hbar^{3}$. By taking 
the $Q$-cohomology class to the above relation
we have  $\left[f\k^{(1)}\k^{(1)}\right]=0$, which implies that
$$
\k^{(1)}\k^{(1)}=0,
$$ 
since $f$ induces the identity map on $H$.
It, then, follows that $Q g^{(2)}=0$. Thus the image of the map 
$g^{(2)}:H^{\bullet}\longrightarrow \sC^{\bullet+1}$ is contained in $\Ker Q$.
By taking the $Q$-cohomology class of $g^{(2)}$ we obtain a $\Bbbk$-linear map
$\k^{(2)}:H^{\bullet}\longrightarrow H^{\bullet+1}$. Composing
it with $f:H^{\bullet}\longrightarrow C^{\bullet}$, we have the map $f\k^{(2)}:H^{\bullet}
\longrightarrow \sC^{\bullet+1}$, which belongs to the same cohomology class of
$g^{(2)}$. Thus
\eqn\modhd{
g^{(2)}=f\k^{(2)} - Q f^{(2)}
}
for some $\Bbbk$-linear map $f^{(2)}$ on 
$H$ into $\sC$ of ghost number $0$, which is defined modulo $\Ker Q$.
\modhd. Combined with the definition of $g^{(2)}$, \modhd\ gives
$$
K^{(2)}f + K^{(1)}f^{(1)}+Q f^{(2)}=f\k^{(2)} +f^{(1)}\k^{(1)}
$$
Now we define 
$\bos{f}:=f +\hbar f^{(1)}+\hbar^{2}f^{(2)}\mod \hbar^{3}$
and $\bos{\k} :=\hbar\k^{(1)}+\hbar^{2}\k^{(2)}\mod \hbar^{3}$.
Then we have $\bos{K}\,\bos{f}= \bos{f}\,  \bos{\k}\mod \hbar^{3}$ 
and $\bos{\k}^{2}=0\mod \hbar^{3}$, 
which summarize
\eqn\modhbi{
\k^{(1)}\k^{(1)}=0,\qquad
\eqalign{
Q f&=0,\cr 
K^{(1)}f+ Q f^{(1)}&=f \k^{(1)},\cr
K^{(2)}f + K^{(1)}f^{(1)}+Q f^{(2)}&=f\k^{(2)} +f^{(1)}\k^{(1)}.
}
}

We also consider the ambiguities from the previous steps \modhaa.
Let
$$
g^{\pr(2)}:= K^{(2)}f^{\pr} + K^{(1)}f^{\pr(1)} -f^{\pr(1)}\k^{(1)}
$$
After a direct computation we obtain 
\eqn\mora{
\eqalign{
g^{\pr(2)}=
& g^{(2)} -Q\left(
 f^{(1)}\xi^{(1)}
+K^{(1)} s^{(1)}
+K^{(2)}s^{(0)}
+s^{(1)}\k^{(1)}\right)
\cr
&
+f\left(\k^{(1)}\xi^{(1)}- \xi^{(1)}\k^{(1)}\right).
}
}
It follows that $Q g^{\pr(2)}=0$ since $Q g^{(2)}=0$ and $Q f=0$.
Thus we have
$$
\k^{\pr(2)}:=\left[g^{\pr (2)}\right]= \k^{(2)} +\k^{(1)}\xi^{(1)}- \xi^{(1)}\k^{(1)}
$$
and 
\eqn\morb{
g^{\pr(2)}
=f^{\pr}\k^{\pr(2)}- Q f^{\pr (2)}
}
for some $f^{\pr(2)}$ defined modulo $\Ker Q$.

Now we want to compare $f^{\pr(2)}$ with $f^{(2)}$.
We begin with rewriting \morb\ as follows;
$$
\eqalign{
g^{\pr(2)}
&=f^{\pr}\k^{\pr(2)}- Q f^{\pr (2)}
\cr
&=f \k^{(2)} 
+f\left(\k^{(1)}\xi^{(1)}- \xi^{(1)}\k^{(1)}\right)
-Q\left( f^{\pr(2)}-s^{(0)}\k^{\pr 2}\right).
}
$$
while we recall that
$$
g^{(2)}= f\k^{(2)} - Q f^{(2)}
$$
Since both equations above contain the same term $f\k^{(2)}$,
we have
$$
g^{\pr(2)}
-f\left(\k^{(1)}\xi^{(1)}- \xi^{(1)}\k^{(1)}\right)
+Q\left( f^{\pr(2)}-s^{(0)}\k^{\pr 2}\right)
=g^{(2)}+ Q f^{(2)}.
$$
Now we use the relation \mora\ to conclude
that  $Q w^{(2)}=0$, where
$$
w^{(2)}:=f^{\pr(2)}- f^{(2)} -\left(
f^{(1)}\xi^{(1)}
+K^{(1)} s^{(1)}
+K^{(2)}s^{(0)}
+s^{(1)}\k^{(1)}
+s^{(0)}\k^{\pr(2)}
\right)
$$
Then by taking cohomology  we have a $\Bbbk$-linear
map $\left[w^{(2)}\right]:H^{\bullet}\longrightarrow H^{\bullet}$, 
which is an arbitrary $\Bbbk$-linear map. 
So we introduce a new ``ghost variable'' 
$\xi^{(2)}:H^{\bullet} \longrightarrow H^{\bullet}$ for the arbitrariness.
Then we have $w^{(2)}=f\xi^{(2)}+ Q s^{(2)}$ for some $\Bbbk$-linear
map $s^{(2)}$ on $H$ into $\sC$ of ghost number $0$.
Finally we have
$$
f^{\pr(2)}= f^{(2)}+f\xi^{(2)} + 
f^{(1)}\xi^{(1)}
+Q s^{(2)}
+K^{(1)} s^{(1)}
+K^{(2)}s^{(0)}
+s^{(1)}\k^{(1)}
+s^{(0)}\k^{\pr(2)}
$$
Now we denote $\bos{s}=s^{(0)}+\hbar s^{(1)}+\hbar^{2}s^{(2)}\mod\hbar^{3}$
and $\bos{\xi}=1+\hbar\xi^{(1)}+\hbar^{2}\xi^{(2)}\mod\hbar^{3}$. Then
every ambiguity is summarized by
$$
\eqalign{
\bos{\k}^{\pr}&=\bos{\xi}^{-1}\bos{\k}\bos{\xi}\mod \hbar^{3},\cr
\bos{f}^{\pr}&= \bos{f}\bos{\xi} + \bos{K}\bos{s} + \bos{s}\bos{\k}^{\pr}
\mod \hbar^{3},
}
$$
and we have $\bos{\k}^{\pr 2}=0\mod\hbar^{3}$ and 
$\bos{K}\bos{f}^{\pr}=\bos{\k}^{\pr}\bos{f}^{\pr}\mod \hbar^{3}$.
Thus every  arbitrariness represented by
$\bos{\xi}\mod\hbar^{3}$ is from the natural automorphism on 
$H[[\hbar]]$ modulo $\hbar^{3}$. Also a classical observable
$O$ is extendable to a quantum observable modulo $\hbar^{3}$
if and only if $\bos{\k}([O])=0\mod \hbar^{3}$ and
$\bos{f}^{\pr}([O]) =\bos{f}\left(\bos{\xi}([O])\right) + \bos{K}\bos{s}([O])\mod \hbar^{3}$ 
gives every possible extension by variations of
$\bos{s}\mod \hbar^{3}$ and $\bos{\xi}\mod \hbar^{3}$.

It is clear what to expect  in general.

\begin{theorem}
Let $f$ be a cochain map from $(H, 0)$ to $(\sC, Q)$ which induces
the identity map on the cohomology $H$.
On $H[[\hbar]]:= H\otimes_{\Bbbk}\Bbbk[[\hbar]]$,
{ modulo its natural automorphism},
\begin{enumerate}

\item
there is a unique $\Bbbk[[\hbar]]$-linear map 
$\bos{\k}=\hbar \k^{(1)}+\hbar^{2}\k^{(2)}+\cdots$
into itself of ghost number zero,
which is induced from an infinite sequence $0,\k^{(1)},\k^{(2)},\cdots$ of
$\;\Bbbk$-linear maps on $H$ into $H$ parametrized by $\hbar$,
satisfying $\bos{\k}^{2}=0$ and $\bos{\k}\bigr|_{\hbar=0}=0$,

\item
there is a 
$\Bbbk[[\hbar]]$-linear map 
$\bos{f}=f+\hbar f^{(1)}+\hbar^{2}f^{(2)}+\cdots$
into $\sC[[\hbar]]$ itself of ghost number zero,
which is induced from an infinite sequence $f,f^{(1)},f^{(2)},\cdots$ of
$\;\Bbbk$-linear maps on $H$ into $\sC$  parametrized by $\hbar$,
satisfying $\bos{f}\bigr|_{\hbar=0}=f$, $\bos{K}\bos{f}=\bos{f} \bos{\k}$ 
and being defined
up to ``quantum homotopy'';
$$
\bos{f}\,\bos{\sim}\,\bos{f}^{\pr}=\bos{f} + \bos{K}\,\bos{s} +\bos{s}\,\bos{\k},
$$
where $\bos{s}=s^{(0)}+\hbar {s}^{(1)}+\hbar^{2}s^{(2)}+\cdots$
is an arbitrary sequence of
$\;\Bbbk$-linear maps of ghost number $-1$, parametrized by $\hbar$, on $H$ into $\sC$ . 
\end{enumerate}

\end{theorem}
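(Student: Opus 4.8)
The plan is to prove both assertions by a single induction on the order $m$ in $\hbar$, using the identification $\Bbbk[[\hbar]]=\lim_{\leftarrow}\bigl(\Bbbk[\hbar]/\hbar^{n}\Bbbk[\hbar]\bigr)$. The base cases modulo $\hbar^{2}$ and modulo $\hbar^{3}$ have already been carried out explicitly above, so it suffices to promote a solution known modulo $\hbar^{m}$ to one modulo $\hbar^{m+1}$, both for existence (building $\k^{(m)}$ and $f^{(m)}$) and for the classification of ambiguities (building $\xi^{(m)}$ and $s^{(m)}$). Throughout I adopt the conventions $K^{(0)}=Q$ and $\k^{(0)}=0$, which let me read the $\hbar^{m}$-coefficient of $\bos{K}\bos{f}=\bos{f}\bos{\k}$ as $Q f^{(m)}+g^{(m)}=f\k^{(m)}$, where $g^{(m)}:=\sum_{a=1}^{m}K^{(a)}f^{(m-a)}-\sum_{c=1}^{m-1}f^{(m-c)}\k^{(c)}$ involves only the already-constructed lower-order data.

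For the existence step I would first establish the cocycle identity $Q\,g^{(m)}=-f\bigl(\sum_{\ell=1}^{m-1}\k^{(m-\ell)}\k^{(\ell)}\bigr)$. This comes from applying $\bos{K}$ to $\bos{K}\bos{f}-\bos{f}\bos{\k}$ (whose leading defect is $\hbar^{m}g^{(m)}$ for the truncated maps): using $\bos{K}^{2}=0$, the inductive relation $\bos{K}\bos{f}=\bos{f}\bos{\k}$ modulo $\hbar^{m}$, and the inductive vanishing $\bos{\k}^{2}=0$ modulo $\hbar^{m}$, the only surviving term at order $\hbar^{m}$ is $-f\,[\bos{\k}^{2}]_{\hbar^{m}}$. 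Passing to $Q$-cohomology kills the left-hand side, and since $f$ induces the identity on $H$ this forces the next $\k$-relation $\sum_{\ell=1}^{m-1}\k^{(m-\ell)}\k^{(\ell)}=0$, extending $\bos{\k}^{2}=0$ one order. With that relation $g^{(m)}$ is a $Q$-cocycle, so I define $\k^{(m)}:=[g^{(m)}]$, a $\Bbbk$-linear map $H^{\bullet}\to H^{\bullet+1}$; because $f$ chooses representatives, $g^{(m)}-f\k^{(m)}$ is $Q$-exact and I write it as $-Q f^{(m)}$, which is exactly $Q f^{(m)}+g^{(m)}=f\k^{(m)}$, with $f^{(m)}$ determined modulo $\Ker Q$. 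This closes the existence induction.

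The more delicate half is showing that every ambiguity is accounted for by the natural automorphism $\bos{\xi}=1+\hbar\xi^{(1)}+\cdots$ of $H[[\hbar]]$ together with the quantum homotopy $\bos{s}=s^{(0)}+\hbar s^{(1)}+\cdots$. The inductive hypothesis is that any two solutions agree modulo $\hbar^{m}$ after acting by some $\bos{\xi},\bos{s}$, i.e. $\bos{\k}^{\pr}=\bos{\xi}^{-1}\bos{\k}\bos{\xi}$ and $\bos{f}^{\pr}=\bos{f}\bos{\xi}+\bos{K}\bos{s}+\bos{s}\bos{\k}^{\pr}$ to that order. To extend it I would form the primed defect $g^{\pr(m)}$ and compute it directly, as at orders $2$ and $3$, obtaining $g^{\pr(m)}=g^{(m)}-Q(\cdots)+f\bigl(\k^{(1)}\xi^{(m-1)}-\xi^{(m-1)}\k^{(1)}+(\text{lower commutators})\bigr)$, where the $f$-terms are precisely the order-$\hbar^{m}$ part of $\bos{\xi}^{-1}\bos{\k}\bos{\xi}-\bos{\k}$. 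Taking cohomology classes yields $\k^{\pr(m)}=[\bos{\xi}^{-1}\bos{\k}\bos{\xi}]_{\hbar^{m}}$ and shows that the residual map $w^{(m)}:=f^{\pr(m)}-f^{(m)}-(\text{explicit lower-order combination})$ is a $Q$-cocycle. Its class defines the new automorphism term $\xi^{(m)}:=[w^{(m)}]$, and writing $w^{(m)}=f\xi^{(m)}+Q s^{(m)}$ produces the new homotopy term $s^{(m)}$; repackaging gives $\bos{\k}^{\pr}=\bos{\xi}^{-1}\bos{\k}\bos{\xi}$ and $\bos{f}^{\pr}=\bos{f}\bos{\xi}+\bos{K}\bos{s}+\bos{s}\bos{\k}^{\pr}$ modulo $\hbar^{m+1}$, and quotienting by $\bos{\xi}$ recovers the stated quantum homotopy.

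I expect the main obstacle to be the ambiguity bookkeeping of the third paragraph rather than the existence argument: one must verify that the computed defect $g^{\pr(m)}$ really organizes into a $Q$-exact term plus an $f$-conjugation term with no leftover freedom, so that $\k^{(1)}$ stays fixed, each $\xi^{(j)}$ enters only through a graded commutator with $\bos{\k}$, and the whole arbitrariness collapses onto $(\bos{\xi},\bos{s})$. Keeping the signs and the $K^{(a)}$-versus-$\k^{(c)}$ index ranges straight in these identities is where the real work lies; by contrast the uniqueness of $\bos{\k}$ modulo automorphism is immediate once $\k^{(1)}$ is seen to be automorphism-invariant and each higher $\k^{(m)}$ is pinned down up to a commutator. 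Finally, the extendability criterion noted after the theorem is free: from $\bos{K}\bos{f}=\bos{f}\bos{\k}$ one has $\bos{K}\bigl(\bos{f}([O])\bigr)=\bos{f}\bigl(\bos{\k}([O])\bigr)$, so $\bos{f}([O])$ is a quantum observable precisely when $\bos{\k}([O])=0$, i.e. $\k^{(\ell)}([O])=0$ for all $\ell$, since $f$ is injective on $H$.
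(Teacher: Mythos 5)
Your proposal is correct and follows essentially the same route as the paper's own proof: your defect map $g^{(m)}$, the cocycle identity forcing $\sum_{\ell=1}^{m-1}\k^{(m-\ell)}\k^{(\ell)}=0$, the extraction of $\k^{(m)}=[g^{(m)}]$ with $f^{(m)}$ as a primitive, and the ambiguity bookkeeping via $w^{(m)}=f\xi^{(m)}+Q s^{(m)}$ reproduce exactly the paper's inductive lemma together with its two technical claims. The only cosmetic difference is that you obtain the cocycle identity by applying $\bos{K}$ to the truncated defect $\bos{K}\bos{f}-\bos{f}\bos{\k}$ at the operator level, which is a slightly slicker packaging of the explicit resummation the paper carries out term by term.
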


\begin{remark}
The ``quantum homotopy'' relation
$\bos{f}\,\bos{\sim}\,\bos{f}^{\pr}=\bos{f} + \bos{K}\,\bos{s} +\bos{s}\,\bos{\k}$
modulo $\hbar$ is $f\,\sim\,f^{\pr}=f + Q \,s^{(0)}$ since $\bos{\k}=0\mod \hbar$.
Thus it reduce to homotopy equivalence of cochain maps from $(H,0)$ to $(\sC, Q)$.
\end{remark}

\begin{remark}
The natural automorphism on $H[[\hbar]]$ is an arbitrary sequence 
$\bos{\xi}=1+\hbar \xi^{(1)}+\hbar^{2}\xi^{(2)}+\cdots$ of $\;\Bbbk$-linear
maps, parametrized by $\hbar$, on $H$ into itself of ghost number $0$ satisfying
$\bos{\xi}\bigr|_{\hbar=0}=1$. Such an automorphism fixes $H$ and sends 
$\bos{\k}$ to $\bos{\xi}^{-1}\,\bos{\k}\,\bos{\xi}$
and $\bos{f}$ to $\bos{f}\,\bos{\xi}$, since  $\bos{\k}:H[[\hbar]]\longrightarrow H[[\hbar]]$
and $\bos{f}:H[[\hbar]]\longrightarrow \sC[[\hbar]]$. Note that every automorphism
fixes $f$ as well as $\k^{(1)}$, since $\bos{\k}\bigl|_{\hbar=0}=0$. An automorphism
$\bos{\xi}=1+\hbar\xi^{(1)}+\hbar\xi^{(2)}+\cdots $ sends, for examples, 
$$
\eqalign{
\k^{(1)}&\longrightarrow \k^{(1)}
,\cr
\k^{(2)}&\longrightarrow \k^{(2)} + \k^{(1)}\xi^{(1)}-\xi^{(1)}\k^{(1)}
,\cr
\k^{(3)}&\longrightarrow \k^{(3)} + \k^{(2)}\xi^{(1)}-\xi^{(1)}\k^{(2)}
+ \k^{(1)}\xi^{(2)}-\xi^{(2)}\k^{(1)}
-\xi^{(1)}\k^{(1)}\xi^{(1)} +\xi^{(1)}\xi^{(1)}\k^{(1)},
}
$$
since $\bos{\xi}^{-1}= 1 -\hbar \xi^{(1)}+\hbar^{2}\left(\xi^{(1)}\xi^{(1)}-\xi^{(2)}\right)
+\cdots$.
\end{remark}

Our proof of the above theorem is based an induction and relies on the
following mouthy lemma:

\begin{lemma}\label{speea}
Let $\bos{\xi}:= 1 +\hbar\xi^{(1)}+\hbar^{2}\x^{(2)}+\cdots 
$ 
be an arbitrary infinite sequence
of $\;\Bbbk$-linear maps, parametrized by $\hbar$, on $H$ into $H$ of ghost number 
$0$ such that $\bos{\xi}\bigl|_{\hbar=0}=1$.
Let $f$ be a cochain map from $(H, 0)$ to $(\sC, Q)$ which induces
the identity map on the cohomology $H$.

Fix $n>1$. Assume that 
there is a sequence 
$$
\bos{\k}:=\hbar \k^{(1)}+\hbar^{2}\k^{(2)}+\cdots+\hbar\k^{(n)}\mod \hbar^{n+1},
$$  
of $\;\Bbbk$-linear maps, parametrized by $\hbar$ such that
$\bos{\k}\bigl|_{\hbar=0}=0$, on $H$ into $H$ of ghost number $1$
and a sequence 
$$
\bos{f} := f +\hbar f^{(1)}+\cdots + \hbar^{n}f^{(n)}\mod \hbar^{n+1},
$$
of  $\;\Bbbk$-linear maps, parametrized by $\hbar$, on $H$ into $\sC$ 
of ghost number $0$ such that
\begin{enumerate}

\item 
$\bos{\k} \mod\hbar^{n+1}$ 
satisfies $\bos{\k}^{2}=0 \mod \hbar^{n+1}$ and is defined uniquely 
modulo an action of $\;\bos{\xi}$ such that
$$
\bos{\k}^{\pr}= \bos{\xi}^{-1}\bos{\k}\,\bos{\xi}\mod \hbar^{n+1}
$$

\item 
$\bos{f} \mod \hbar^{n+1}$ satisfies $\bos{K}\bos{f}=\bos{f}\bos{\k}\mod \hbar^{(n+1)}$
and is defined up to ``quantum homotopy'' modulo an action of $\;\bos{\xi}$
such that
$$
\bos{f}^{\pr}=\bos{f}\,\bos{\xi}
+ \bos{K}\,\bos{s}+\bos{s}\,\bos{\k}^{\pr} \mod \hbar^{n+1}
$$
where $\bos{s}=s^{(0)}+\hbar s^{(1)}+\cdots+\hbar^{n} s^{(n)}\mod \hbar^{n+1}$
is an arbitrary sequence of $\;\Bbbk$-linear maps,
parametrized by $\hbar$, on $H$ into $\sC$ of ghost number $-1$

\end{enumerate}
Then there is a sequence 
$$
\tilde\bos{\k}:=\hbar \k^{(1)}+\hbar^{2}\k^{(2)}+\cdots+\hbar^{n}\k^{(n)}
+\hbar^{n+1}\k^{(n+1)}
\mod \hbar^{n+2},
$$  
of $\;\Bbbk$-linear maps, parametrized by $\hbar$, on $H$ into $H$ of ghost number $1$
and a sequence 
$$
\tilde\bos{f} := f +\hbar f^{(1)}+\cdots + \hbar^{n}f^{(n)}
+ \hbar^{n+1}f^{(n+1)}\mod \hbar^{n+2},
$$
of  $\;\Bbbk$-linear maps, parametrized by $\hbar$, on $H$ into $\sC$ 
of ghost number $0$ such that
\begin{enumerate}

\item 
$\tilde\bos{\k}=\bos{\k}\mod\hbar^{n+1}$, and
$\tilde\bos{\k}$ 
satisfies $\tilde\bos{\k}^{2}=0 \mod \hbar^{n+2}$ and is defined uniquely 
modulo an action of $\;\bos{\xi}$ such that
$$
\tilde\bos{\k}^{\pr}= \bos{\xi}^{-1}\tilde\bos{\k}\,\bos{\xi}\mod \hbar^{n+2}
$$

\item 
$\tilde\bos{f}=\bos{f}\mod\hbar^{n+1}$,
and $\tilde\bos{f} \mod \hbar^{n+1}$ satisfies 
$\bos{K}\tilde\bos{f}=\tilde\bos{f}\tilde\bos{\k}\mod \hbar^{(n+2)}$
and is defined up to ``quantum homotopy'' modulo an action of $\bos{\xi}$
such that
$$
\tilde\bos{f}^{\pr}=\tilde\bos{f}\,\bos{\xi}
+ \bos{K}\,\tilde\bos{s}+\tilde\bos{s}\,\tilde\bos{\k}^{\pr} \mod \hbar^{n+2}
$$
where 
$\tilde\bos{s}=s^{(0)}+\hbar s^{(1)}+\cdots+\hbar^{n} s^{(n)}
+\hbar^{(n+1)}s^{(n+1)}\mod \hbar^{n+2}$
is an arbitrary sequence of $\;\Bbbk$-linear maps,
parametrized by $\hbar$, on $H$ into $\sC$ of ghost number $-1$.
\end{enumerate}
\end{lemma}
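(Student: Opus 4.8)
The plan is to run the induction by isolating the coefficient of $\hbar^{n+1}$ in each of the two asserted identities and reducing it to a statement about the $Q$-cohomology of $\sC$, exactly in the spirit of the explicit $n=1$ and $n=2$ computations above. Expanding $\bos{K}\tilde\bos{f}=\tilde\bos{f}\tilde\bos{\k}$ in powers of $\hbar$, its order-$\hbar^{n+1}$ part reads
$$Q f^{(n+1)} + g^{(n+1)} = f\k^{(n+1)}, \qquad g^{(n+1)} := \sum_{j=0}^{n} K^{(n+1-j)} f^{(j)} - \sum_{j=1}^{n} f^{(n+1-j)}\k^{(j)},$$
where $K^{(0)}=Q$, $f^{(0)}=f$, and $g^{(n+1)}$ is built entirely from the already-constructed maps $f^{(1)},\dots,f^{(n)}$ and $\k^{(1)},\dots,\k^{(n)}$. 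Thus the entire step reduces to solving $f\k^{(n+1)}-Qf^{(n+1)}=g^{(n+1)}$ for an unknown $\k^{(n+1)}\colon H\to H$ of ghost number $1$ and $f^{(n+1)}\colon H\to\sC$ of ghost number $0$.

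The conceptual crux is to establish $g^{(n+1)}\in\Ker Q$, which I would extract from one operator identity rather than a term-by-term expansion. Writing $\bos{D}:=\bos{K}\bos{f}-\bos{f}\bos{\k}$ for the order-$n$ truncated maps, the exact relation $\bos{K}^{2}=0$ gives $\bos{K}\bos{D}=-\bos{K}\bos{f}\,\bos{\k}=-(\bos{D}+\bos{f}\bos{\k})\bos{\k}$, that is,
$$\bos{K}\bos{D}+\bos{D}\bos{\k}+\bos{f}\bos{\k}^{2}=0.$$
By the inductive hypothesis $\bos{D}=\hbar^{n+1}g^{(n+1)}\mod\hbar^{n+2}$ and $\bos{\k}^{2}=0\mod\hbar^{n+1}$, so reading off the coefficient of $\hbar^{n+1}$ (in which $\bos{D}\bos{\k}$ does not contribute, being $O(\hbar^{n+2})$) yields $Q g^{(n+1)}=-f(\bos{\k}^{2})^{(n+1)}$ with $(\bos{\k}^{2})^{(n+1)}=\sum_{j=1}^{n}\k^{(n+1-j)}\k^{(j)}$. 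Passing to $Q$-cohomology classes and using that $f$ induces the identity on $H$ forces $(\bos{\k}^{2})^{(n+1)}=0$; since $\k^{(n+1)}$ enters $\bos{\k}^{2}$ only at order $\hbar^{n+2}$, this simultaneously secures $\tilde\bos{\k}^{2}=0\mod\hbar^{n+2}$ and $Qg^{(n+1)}=0$. I then set $\k^{(n+1)}:=[g^{(n+1)}]$, so that $g^{(n+1)}-f\k^{(n+1)}$ is $Q$-exact, say equal to $-Qf^{(n+1)}$, which defines $f^{(n+1)}$ uniquely modulo $\Ker Q$ and completes $\bos{K}\tilde\bos{f}=\tilde\bos{f}\tilde\bos{\k}\mod\hbar^{n+2}$.

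The remaining and most laborious task is the ambiguity bookkeeping: showing $(\tilde\bos{\k},\tilde\bos{f})$ is determined up to $\tilde\bos{\k}^{\pr}=\bos{\xi}^{-1}\tilde\bos{\k}\bos{\xi}$ and $\tilde\bos{f}^{\pr}=\tilde\bos{f}\bos{\xi}+\bos{K}\tilde\bos{s}+\tilde\bos{s}\tilde\bos{\k}^{\pr}$. I would feed the primed order-$n$ data $\bos{f}^{\pr}=\bos{f}\bos{\xi}+\bos{K}\bos{s}+\bos{s}\bos{\k}^{\pr}$ into the formula for $g^{(n+1)}$ and, after a direct computation paralleling \mora--\morb, show that $g^{\pr(n+1)}$ differs from $g^{(n+1)}$ by a $Q$-exact term plus $f$ applied to the order-$(n+1)$ commutator part of $\bos{\xi}^{-1}\bos{\k}\bos{\xi}$; taking cohomology classes then pins down $\k^{\pr(n+1)}=(\bos{\xi}^{-1}\tilde\bos{\k}\bos{\xi})^{(n+1)}$. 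The leftover freedom in $f^{\pr(n+1)}$ is a $\Bbbk$-linear map $w^{(n+1)}\colon H\to\sC$ with $Qw^{(n+1)}=0$; setting $\xi^{(n+1)}:=[w^{(n+1)}]$ and writing $w^{(n+1)}=f\xi^{(n+1)}+Qs^{(n+1)}$ absorbs it precisely into a fresh automorphism term $\xi^{(n+1)}$ and homotopy term $s^{(n+1)}$, extending $\bos{\xi}$ and $\bos{s}$ to order $n+1$. I expect this last paragraph, not the algebra, to be the real obstacle: the conceptual content is carried entirely by the operator identity above, so the difficulty is the organizational one of tracking signs and the $\bos{\xi}^{-1}$-expansion consistently, with $f$ and $\k^{(1)}$ staying fixed throughout because $\bos{\k}\bigr|_{\hbar=0}=0$.
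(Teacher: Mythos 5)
Your proposal is correct and runs the same overall induction as the paper, but your derivation of the key coherence identity is genuinely different and slicker. The paper proves
$$
Q g^{(n+1)}=-\sum_{\ell=1}^{n} f\,\k^{(n+1-\ell)}\k^{(\ell)}
$$
(its Proposition \ref{claimone}) by a componentwise expansion of $Qg^{(n+1)}$ and two rounds of re-summation against the relations packaged in $\bos{K}^{2}=0$ and $\bos{K}\bos{f}=\bos{f}\bos{\k}\mod\hbar^{n+1}$. You instead collect the inductive data into the defect $\bos{D}:=\bos{K}\bos{f}-\bos{f}\bos{\k}$ and use the exact one-line consequence of $\bos{K}^{2}=0$, namely $\bos{K}\bos{D}+\bos{D}\bos{\k}+\bos{f}\bos{\k}^{2}=0$; since $\bos{D}=\hbar^{n+1}g^{(n+1)}\mod\hbar^{n+2}$ and $\bos{\k}^{2}=0\mod\hbar^{n+1}$ by hypothesis, reading off the $\hbar^{n+1}$ coefficient (with $\bos{D}\bos{\k}=O(\hbar^{n+2})$ dropping out) gives the identity at once. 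Everything downstream coincides with the paper: passing to cohomology and using that $f$ induces the identity on $H$ forces $\sum_{\ell=1}^{n}\k^{(n+1-\ell)}\k^{(\ell)}=0$, hence $\tilde\bos{\k}^{2}=0\mod\hbar^{n+2}$ and $Qg^{(n+1)}=0$; then $\k^{(n+1)}:=\left[g^{(n+1)}\right]$ and $g^{(n+1)}=f\k^{(n+1)}-Qf^{(n+1)}$ with $f^{(n+1)}$ fixed modulo $\Ker Q$, matching the paper's \clotcx--\clotc. What your route buys is that the $\hbar$-adic filtration does the bookkeeping once and for all; what the paper's expansion buys is computational machinery it reuses verbatim in the ambiguity analysis.

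On that ambiguity analysis you state exactly the right target identity --- $g^{\pr(n+1)}-g^{(n+1)}$ is $Q$-exact plus $f$ applied to $\left(\bos{\xi}^{-1}\bos{\k}\,\bos{\xi}\right)^{(n+1)}$, which is literally the paper's Proposition \ref{claimtwo}, and your absorption of the residual $Q$-closed map $w^{(n+1)}$ into $f\xi^{(n+1)}+Qs^{(n+1)}$ is the paper's mechanism verbatim --- but you defer the computation, and in the paper this is where most of the labor sits (the six-line expansion \headache\ cancelled line by line). The deferral is not a gap of ideas, and in fact your own trick closes it cheaply: for the untruncated lifts one has the exact identity
$$
\bos{K}\left(\bos{f}\bos{\xi}+\bos{K}\bos{s}+\bos{s}\,\bos{\xi}^{-1}\bos{\k}\,\bos{\xi}\right)
-\left(\bos{f}\bos{\xi}+\bos{K}\bos{s}+\bos{s}\,\bos{\xi}^{-1}\bos{\k}\,\bos{\xi}\right)\bos{\xi}^{-1}\bos{\k}\,\bos{\xi}
=\bos{D}\,\bos{\xi}-\bos{s}\,\bos{\xi}^{-1}\bos{\k}^{2}\,\bos{\xi},
$$
whose $\hbar^{n+1}$ coefficient is $g^{(n+1)}$ once $\left(\bos{\k}^{2}\right)^{(n+1)}=0$ is known; since the truncated $\bos{f}^{\pr}$ and $\bos{\k}^{\pr}$ differ from these lifts only in their $\hbar^{n+1}$ coefficients, which enter the defect at that order only through $Q(\,\cdot\,)$ and $f\,(\,\cdot\,)$ respectively, Proposition \ref{claimtwo} falls out in a few lines instead of several pages. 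Two small points to make explicit when you write this up: first, that every choice of $\left(\xi^{(n+1)},s^{(n+1)}\right)$ conversely yields valid extended data (immediate from $\bos{K}^{2}=0$ and $\bos{K}\tilde\bos{f}=\tilde\bos{f}\,\tilde\bos{\k}\mod\hbar^{n+2}$, as in the paper's remark after the lemma); second, the consistency $\k^{\pr(n+1)}=\k^{(n+1)}+\left(\bos{\xi}^{-1}\bos{\k}\,\bos{\xi}\right)^{(n+1)}$, which is equivalent to your $\left(\bos{\xi}^{-1}\tilde\bos{\k}\,\bos{\xi}\right)^{(n+1)}$ since $\tilde\bos{\k}-\bos{\k}=\hbar^{n+1}\k^{(n+1)}$ and $\bos{\xi}=1+O(\hbar)$.
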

\begin{remark}
It is clear that $\bos{\k}^{\pr 2}=0\mod \hbar^{n+1}$ is implied
by $\bos{\k}^{2}=0\mod \hbar^{(n+1)}$. Also 
$\bos{K}\bos{f}=\bos{f}\bos{\k}\mod \hbar^{(n+1)}$ implies that
$\bos{K}\bos{f}^{\pr}=\bos{f}^{\pr}\bos{\k}^{\pr}\mod \hbar^{(n+1)}$;
by a direct computation we have
$$
\bos{K}\,\bos{f}^{\pr}
-\bos{f}^{\pr}\,\bos{\k}^{\pr}
=\bos{K}\bos{f}\bos{\xi} - \bos{f}\,\bos{\xi}\,\bos{\k}^{\pr} 
=\bos{K}\bos{f}\bos{\xi} - \bos{f}\,\bos{\k}\bos{\xi}
=0.
$$
\end{remark}
A proof the above lemma is equivalent to a proof of our theorem
since we have already shown that the assumption is true for $n=1$,
(we also did it for $n=2$ as a quick demonstration). 
Then we take the limit $n\longrightarrow \infty$.

So it remains to prove the lemma.   
Our proof relies on the following two propositions, which shall be proved later:
\begin{proposition}\label{claimone}
Let $g^{(n+1)}$ be the $\;\Bbbk$-linear map on $H$ into $\sC$ of ghost number $1$
defined by the formula
$$
g^{(n+1)}:=K^{(n+1)}f
+\sum_{\ell=1}^{n}K^{(n+1-\ell)}f^{(\ell)}
-\sum_{\ell=1}^{n}f^{(\ell)}\k^{(n+1-\ell)}.
$$
Then
$$
Q  g^{(n+1)}
=-\sum_{\ell=1}^{n} f\k^{(n+1-\ell)}\k^{(\ell)}.
$$
\end{proposition}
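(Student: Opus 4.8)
The plan is to identify $g^{(n+1)}$ as the coefficient of $\hbar^{n+1}$ in the morphism defect $\bos{E}:=\bos{K}\bos{f}-\bos{f}\bos{\k}$ and then to read off $Q g^{(n+1)}$ from a formal operator identity that $\bos{E}$ must satisfy. Since at this stage $\bos{f}$ and $\bos{\k}$ are defined only modulo $\hbar^{n+1}$, I would first choose \emph{arbitrary} $\Bbbk$-linear maps $f^{(n+1)}:H\to\sC$ of ghost number $0$ and $\k^{(n+1)}:H\to H$ of ghost number $1$, and assemble honest formal series $\bos{f}=f+\hbar f^{(1)}+\cdots+\hbar^{n+1}f^{(n+1)}$ and $\bos{\k}=\hbar\k^{(1)}+\cdots+\hbar^{n+1}\k^{(n+1)}$ (all higher coefficients zero). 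The final formula will not involve these auxiliary maps, so this is merely a device allowing one to manipulate genuine operators. Expanding $\bos{E}$ and separating the $\ell=n+1$ and $\ell=0$ contributions gives $E^{(n+1)}=Q f^{(n+1)}+g^{(n+1)}-f\k^{(n+1)}$, while the inductive hypothesis $\bos{K}\bos{f}=\bos{f}\bos{\k}\mod\hbar^{n+1}$ says precisely that $E^{(m)}=0$ for $m\le n$, so $\bos{E}=\hbar^{n+1}E^{(n+1)}+O(\hbar^{n+2})$.

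The crux is the purely associative (sign-free, since these are compositions of linear maps) identity
$$\bos{K}\bos{E}+\bos{E}\bos{\k}=\bos{K}^{2}\bos{f}-\bos{f}\bos{\k}^{2},$$
which with $\bos{K}^{2}=0$ reduces to $\bos{K}\bos{E}+\bos{E}\bos{\k}=-\bos{f}\bos{\k}^{2}$. I would then extract the coefficient of $\hbar^{n+1}$. On the left, $\bos{E}\bos{\k}$ is $O(\hbar^{n+2})$ because $\bos{\k}=O(\hbar)$, while $(\bos{K}\bos{E})^{(n+1)}=Q E^{(n+1)}$; since $Q^{2}=0$ and $Qf=0$ this equals $Q g^{(n+1)}$. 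On the right, the second half of the inductive hypothesis, $\bos{\k}^{2}=0\mod\hbar^{n+1}$, does the essential work: writing $(\bos{f}\bos{\k}^{2})^{(n+1)}=\sum_{a=0}^{n-1}f^{(a)}(\bos{\k}^{2})^{(n+1-a)}$, every term with $a\ge1$ vanishes because $n+1-a\le n$, and the auxiliary $\k^{(n+1)}$ first enters $(\bos{\k}^{2})^{(n+2)}$ and so is invisible here. Only $a=0$ survives, leaving $f\,(\bos{\k}^{2})^{(n+1)}=\sum_{\ell=1}^{n}f\k^{(\ell)}\k^{(n+1-\ell)}$. Equating the two sides and reindexing $\ell\mapsto n+1-\ell$ yields $Q g^{(n+1)}=-\sum_{\ell=1}^{n}f\k^{(n+1-\ell)}\k^{(\ell)}$, with no residual dependence on $f^{(n+1)}$ or $\k^{(n+1)}$.

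I expect the main obstacle to be conceptual rather than computational: one must ensure that the unwanted contributions $f^{(a)}(\bos{\k}^{2})^{(n+1-a)}$ for $a\ge1$ genuinely drop out, which relies on already knowing $\bos{\k}^{2}=0$ one order below where it is being constructed — exactly the clause of the inductive hypothesis that is easiest to overlook. A direct alternative would substitute the relations \toera\ for $Q K^{(j)}$ together with the order-$\ell$ relations $Q f^{(\ell)}=\sum_{p} f^{(p)}\k^{(\ell-p)}-\sum_{p}K^{(\ell-p)}f^{(p)}$ (the order-$\ell$ part of $\bos{K}\bos{f}=\bos{f}\bos{\k}$) straight into $Q g^{(n+1)}$; this reaches the same answer but only after a delicate telescoping of two triple sums, so I would keep the operator-identity route.
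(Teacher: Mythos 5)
Your proof is correct, but it takes a genuinely different route from the paper's. The paper proves this proposition by brute-force coefficient computation: it writes out $Q g^{(n+1)}$, substitutes the order-by-order relations $Q K^{(\ell)}=-K^{(\ell)}Q-\sum_{j=1}^{\ell-1}K^{(\ell-j)}K^{(j)}$ (from $\bos{K}^{2}=0$) and $Q f^{(\ell)}=\sum_{j=1}^{\ell}\bigl(-K^{(j)}f^{(\ell-j)}+f^{(\ell-j)}\k^{(j)}\bigr)$ (from $\bos{K}\bos{f}=\bos{f}\bos{\k}\bmod\hbar^{n+1}$), and then performs several re-summations in which two triple sums cancel, finally invoking $\sum_{\ell=1}^{j-1}\k^{(j-\ell)}\k^{(\ell)}=0$ for $j\le n$ to kill all terms except $-f\sum_{\ell=1}^{n}\k^{(n+1-\ell)}\k^{(\ell)}$ --- precisely the ``direct alternative'' you mention and sensibly avoid. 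Your packaging of the same inputs into the associativity identity
$$
\bos{K}\bos{E}+\bos{E}\bos{\k}=\bos{K}^{2}\bos{f}-\bos{f}\bos{\k}^{2},
\qquad \bos{E}:=\bos{K}\bos{f}-\bos{f}\bos{\k},
$$
is cleaner and makes the bookkeeping transparent: $\bos{K}^{2}=0$ holds exactly, the inductive hypothesis $E^{(m)}=0$ for $m\le n$ reduces $(\bos{K}\bos{E})^{(n+1)}$ to $Q E^{(n+1)}$ and kills $(\bos{E}\bos{\k})^{(n+1)}$ outright, and the clause $\bos{\k}^{2}=0\bmod\hbar^{n+1}$ isolates the single surviving term $f(\bos{\k}^{2})^{(n+1)}$ --- you correctly flag this last clause as the one whose role is obscured in the telescoping computation. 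Your device of appending arbitrary coefficients $f^{(n+1)}$, $\k^{(n+1)}$ is sound and checks out (they enter $E^{(n+1)}$ only through $Qf^{(n+1)}$ and $f\k^{(n+1)}$, both annihilated by $Q$ since $Q^{2}=Qf=0$, and $\k^{(n+1)}$ first touches $(\bos{\k}^{2})^{(n+2)}$), though it is dispensable: one can equally extract the $\hbar^{n+1}$ coefficient of the identity for the polynomial truncations directly. What your approach buys is a proof whose verification is a few lines of degree counting; what the paper's buys is an argument at the level of the individual maps actually constructed, with no auxiliary extension, at the cost of delicate sign-free but index-heavy manipulations. Note also that your structure mirrors what the paper later needs anyway: the same identity, read one order lower, is how one sees $Q g^{(n+1)}$ landing in $\hbox{Im}\,f$ up to $\hbox{Im}\,Q$, so your method would streamline the companion Claim (2) as well.
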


\begin{proposition}\label{claimtwo}
Let $g^{\pr(n+1)}$ be the $\;\Bbbk$-linear map on $H$ into $\sC$ of ghost number $1$
defined by the formula
$$
g^{\pr(n+1)}:=K^{(n+1)}f^{\pr}
+\sum_{\ell=1}^{n}K^{(n+1-\ell)}f^{\pr(\ell)}
-\sum_{\ell=1}^{n}f^{\pr(\ell)}\k^{\pr(n+1-\ell)}.
$$
Then
$$
\eqalign{
g^{\pr(n+1)}-g^{(n+1)}
=&
- Q
\left(
\sum_{\ell=1}^{n}f^{(n+1-\ell)}\xi^{(\ell)}
+\sum_{\ell=0}^{n}K^{(n+1-\ell)}s^{(\ell)} 
+\sum_{\ell=1}^{n}s^{(n+1-\ell)}\k^{\pr(\ell)}
\right)
\cr
&+ f\left(\bos{\xi}^{-1}\bos{\k}\bos{\xi}\right)^{(n+1)}
}
$$
\end{proposition}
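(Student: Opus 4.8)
The plan is to repackage the order-by-order bookkeeping into a single \emph{exact} operator identity between formal power series and then read off the coefficient of $\hbar^{n+1}$. Throughout I regard the given data $\bos{\k}=\hbar\k^{(1)}+\cdots+\hbar^{n}\k^{(n)}$ and $\bos{f}=f+\cdots+\hbar^{n}f^{(n)}$ as honest formal series whose coefficients vanish in degrees $>n$, and I \emph{extend} the primed objects to honest full series by the very formulas meant to define them,
$$
\bos{\k}^{\pr}:=\bos{\xi}^{-1}\bos{\k}\,\bos{\xi},\qquad
\bos{f}^{\pr}:=\bos{f}\,\bos{\xi}+\bos{K}\,\bos{s}+\bos{s}\,\bos{\k}^{\pr}.
$$
Their coefficients in degrees $\le n$ agree with the hypothesis relations $\bos{\k}^{\pr}\equiv\bos{\xi}^{-1}\bos{\k}\bos{\xi}$ and $\bos{f}^{\pr}\equiv\bos{f}\bos{\xi}+\bos{K}\bos{s}+\bos{s}\bos{\k}^{\pr}$ modulo $\hbar^{n+1}$, so the quantity $g^{\pr(n+1)}$ of the statement is unchanged by this extension; in addition $\k^{\pr(n+1)}=(\bos{\xi}^{-1}\bos{\k}\,\bos{\xi})^{(n+1)}$ and $f^{\pr(n+1)}=(\bos{f}^{\pr})^{(n+1)}$ now acquire definite values, and $f^{\pr(0)}=f+Q s^{(0)}$.

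First I would prove the master identity
$$
\bos{K}\bos{f}^{\pr}-\bos{f}^{\pr}\bos{\k}^{\pr}
=(\bos{K}\bos{f}-\bos{f}\,\bos{\k})\,\bos{\xi}-\bos{s}\,\bos{\xi}^{-1}\bos{\k}^{2}\,\bos{\xi}
$$
as an exact equality of operators $H[[\hbar]]\to\sC[[\hbar]]$. Substituting the definition of $\bos{f}^{\pr}$ and using $\bos{K}^{2}=0$ cancels the two copies of $\bos{K}\bos{s}\,\bos{\k}^{\pr}$; then $\bos{\xi}\,\bos{\k}^{\pr}=\bos{\k}\,\bos{\xi}$ and $(\bos{\k}^{\pr})^{2}=\bos{\xi}^{-1}\bos{\k}^{2}\bos{\xi}$ produce the stated right-hand side. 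Only $\bos{K}^{2}=0$ and the two defining relations are used, so this identity is exact rather than merely valid modulo $\hbar^{n+1}$, which is what lets me trust its degree-$(n+1)$ coefficient.

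Next I would extract the coefficient of $\hbar^{n+1}$. A direct expansion gives, for the truncated unprimed series, $(\bos{K}\bos{f}-\bos{f}\,\bos{\k})^{(n+1)}=g^{(n+1)}$, where the terms $Q f^{(n+1)}$ and $f\k^{(n+1)}$ are absent because those coefficients vanish; and for the honest primed series $(\bos{K}\bos{f}^{\pr}-\bos{f}^{\pr}\bos{\k}^{\pr})^{(n+1)}=Q f^{\pr(n+1)}+g^{\pr(n+1)}-f^{\pr(0)}\k^{\pr(n+1)}$. On the right of the master identity the hypotheses $\bos{K}\bos{f}\equiv\bos{f}\,\bos{\k}$ and $\bos{\k}^{2}\equiv 0$ modulo $\hbar^{n+1}$ force both $\bos{K}\bos{f}-\bos{f}\,\bos{\k}$ and $\bos{\k}^{2}$ to start in degree $n+1$, while $\bos{\xi},\bos{\xi}^{-1},\bos{s}$ start in degree $0$, so only their lowest coefficients survive and the right-hand coefficient is $g^{(n+1)}-s^{(0)}(\bos{\k}^{2})^{(n+1)}$. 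Equating and then expanding $f^{\pr(n+1)}=(\bos{f}\,\bos{\xi})^{(n+1)}+(\bos{K}\,\bos{s})^{(n+1)}+(\bos{s}\,\bos{\k}^{\pr})^{(n+1)}$ and applying $-Q$: the term $f\xi^{(n+1)}$ dies since $Q f=0$, the three remaining sums assemble into exactly the $-Q(\cdots)$ of the statement, and the single term $-Q\bigl(s^{(0)}\k^{\pr(n+1)}\bigr)=-(Q s^{(0)})\k^{\pr(n+1)}$ combines with $f^{\pr(0)}\k^{\pr(n+1)}=(f+Q s^{(0)})\k^{\pr(n+1)}$ to leave precisely $f\,(\bos{\xi}^{-1}\bos{\k}\,\bos{\xi})^{(n+1)}$.

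The one genuine obstacle is the stray term $-s^{(0)}(\bos{\k}^{2})^{(n+1)}$, which must vanish for the clean statement to hold. This is exactly where Proposition~\ref{claimone} enters: it gives $Q g^{(n+1)}=-f(\bos{\k}^{2})^{(n+1)}$, so $f(\bos{\k}^{2})^{(n+1)}$ is $Q$-exact; since $(\bos{\k}^{2})^{(n+1)}$ maps $H$ into $H$ and $f$ induces the identity on cohomology, $Q$-exactness forces $(\bos{\k}^{2})^{(n+1)}=0$, killing the stray term and reducing the equation to the assertion. I would carry out the coefficient extraction with some care, since keeping straight which coefficients vanish in the truncated unprimed series versus the honest primed extension is the only place where an index or sign can go astray; given the master identity the rest is routine, and as a check the computation reproduces the explicit $n=1$ formula \mora\ of the text.
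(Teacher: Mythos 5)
Your proof is correct, and it takes a genuinely different route from the paper's. The paper argues by brute force: it substitutes the hypothesis expressions for $f^{\pr}$, $f^{\pr(\ell)}$, $\k^{\pr(\ell)}$ into $g^{\pr(n+1)}-g^{(n+1)}$, obtains a six-line expression, and cancels it line by line via resummations, the expansion of $\bos{\xi}^{-1}$, and separately derived identities expressing $\k^{\pr(n+1-\ell)}$ and $\left(\bos{\xi}^{-1}\bos{\k}\,\bos{\xi}\right)^{(n+1)}$ in terms of the unprimed data. You compress all of this into the single exact operator identity
$$
\bos{K}\bos{f}^{\pr}-\bos{f}^{\pr}\bos{\k}^{\pr}
=\left(\bos{K}\bos{f}-\bos{f}\,\bos{\k}\right)\bos{\xi}
-\bos{s}\,\bos{\xi}^{-1}\bos{\k}^{2}\,\bos{\xi},
$$
valid once the primed series are extended by their defining formulas, and then read off the $\hbar^{n+1}$ coefficient; since the hypotheses force both series on the right to start in degree $n+1$, only $g^{(n+1)}$ and $s^{(0)}\big(\bos{\k}^{2}\big)^{(n+1)}$ survive there, and the rest is the routine expansion of $f^{\pr(n+1)}$. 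I verified the identity and the coefficient extraction: both are sound, and the cancellation of $Q s^{(0)}\k^{\pr(n+1)}$ against $f^{\pr(0)}\k^{\pr(n+1)}=(f+Qs^{(0)})\k^{\pr(n+1)}$ correctly produces the final term $f\big(\bos{\xi}^{-1}\bos{\k}\,\bos{\xi}\big)^{(n+1)}$. What each approach buys: yours is shorter, structurally transparent, and yields $\k^{\pr(n+1)}=\big(\bos{\xi}^{-1}\bos{\k}\,\bos{\xi}\big)^{(n+1)}$ (which the paper derives separately in the lemma's body) as a by-product; the apparent cost is that it imports $\big(\bos{\k}^{2}\big)^{(n+1)}=\sum_{\ell=1}^{n}\k^{(n+1-\ell)}\k^{(\ell)}=0$ from Proposition~\ref{claimone}, so Claim (2) is no longer logically independent of Claim (1). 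But this is not a real loss: the paper's own proof quietly uses the same fact --- in killing its fifth line it asserts $\sum_{j=1}^{n-\ell}\k^{(n+1-\ell-j)}\k^{(j)}=0$ for $0\leq\ell\leq n-1$, and the $\ell=0$ case is exactly the degree-$(n+1)$ identity, which does not follow from the assumption $\bos{\k}^{2}=0\bmod\hbar^{n+1}$ alone but only from the consequence of Claim (1) established earlier in the lemma's proof. Your explicit appeal is, if anything, the more honest bookkeeping. One small point to fix in a final write-up: specify that $\bos{s}$ is extended by $s^{(n+1)}=0$, or observe that any $Qs^{(n+1)}$ contribution to $f^{\pr(n+1)}$ dies under $-Q$ by $Q^{2}=0$, so that $(\bos{K}\,\bos{s})^{(n+1)}$ is exactly the sum $\sum_{\ell=0}^{n}K^{(n+1-\ell)}s^{(\ell)}$ appearing in the statement.
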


\begin{proof}
From proposition \ref{claimone} we have
$$
Q  g^{(n+1)}(a)
=-\sum_{\ell=1}^{n} f\left(\k^{(n+1-\ell)}\left(\k^{(\ell)}(a)\right)\right),
$$
for any $a \in H$.
By taking the $Q$-cohomology class for the both hand sides of the above
we have
$$
0=-\sum_{\ell=1}^{n} \left[f\left(\k^{(n+1-\ell)}\left(\k^{(\ell)}(a)\right)\right)\right].
$$
It follows that 
$$
\sum_{\ell=1}^{n}\k^{(n+1-\ell)}\left(\k^{(\ell)}(a)\right)=0,
$$
since $f$ induces the identity map on $H$, $\left[f(b)\right]=b$ for any $b\in H$.
Since the above identity is true for each and every element in $H$, it implies
that
\eqn\clotb{
\sum_{\ell=1}^{n}\k^{(n+1-\ell)}\k^{(\ell)}=0.
}
It also follows that
$$
Q g^{(n+1)}=0.
$$
Thus the image of $g^{(n+1)}:H^{\bullet}\longrightarrow \sC^{\bullet+1}$ is contained
in $\Ker Q \cap \sC$. By taking the cohomology class of $g^{(n+1)}(a)$ for each
$a \in H$
we obtain a $\;\Bbbk$-linear map
$$
\k^{(n+1)}:H^{\bullet}\longrightarrow H^{\bullet +1},
$$
which  is defined by 
$$
\k^{(n+1)}(a):=\left[g^{(n+1)}(a)\right],
$$
for each and every $a \in H$.
 By composing
$\k^{(n+1)}$ with $f$, we have a $\;\Bbbk$-linear map
$f\k^{(n+1)}: H^{\bullet}\longrightarrow \sC^{\bullet+1}$, such that
$\left[f\left(\k^{(n+1)}(a)\right)\right]= \left[g^{(n+1)}(a)\right]$ for every $a\in H$. 
Thus there is some
$\;\Bbbk$-linear map $f^{(n+1)}:H^{\bullet}\longrightarrow \sC^{\bullet}$ 
of ghost number $0$
such that
\eqn\clotcx{
g^{(n+1)}=f\k^{(n+1)} - Q f^{(n+1)},
}
where $f^{(n+1)}$ is defined modulo $\Ker Q$.
Then, after
using the definition of $g^{(n+1)}$ in proposition \ref{claimone}, we have
\eqn\clotc{
K^{(n+1)}f
+\sum_{\ell=1}^{n}K^{(n+1-\ell)}f^{(\ell)}
+Q f^{(n+1)}
=\sum_{\ell=1}^{n}f^{(\ell)}\k^{(n+1-\ell)}
+f\k^{(n+1)}. 
}

Using $\k^{(n+1)}$ and $f^{(n+1)}$
we can extend both $\bos{\k}\mod \hbar^{n+1}$ and $\bos{f}\mod \hbar^{n+1}$
to the next level $\tilde\bos{\k}\mod \hbar^{n+2}$ and $\tilde\bos{f}\mod \hbar^{n+2}$
as follows
$$
\eqalign{
\tilde\bos{\k} &:=\sum_{\ell=1}^{n}\hbar^{n}\k^{(n)}
+ \hbar^{(n+1)}\k^{(n+1)}\mod \hbar^{n+2}
,\cr
\tilde\bos{f} &:=f+\sum_{\ell=1}^{n}\hbar^{n}f^{(n)}
+ \hbar^{(n+1)}f^{(n+1)}\mod \hbar^{n+2}.
}
$$
It is obvious that $\tilde\bos{\k}=\bos{\k}\mod \hbar^{n+1}$ and
$\tilde\bos{f}=\bos{f}\mod \hbar^{n+1}$.
Then the assumption that $\bos{\k}^{2}=0 \mod \hbar^{n+1}$ 
together with the identity \clotb\
implies that
$$
\tilde\bos{\k}^{2}=0 \mod \hbar^{n+2}.
$$
Also the assumption that $\bos{K}\,\bos{f}= \bos{f}\,\bos{\k}\mod \hbar^{(n+1)}$ 
together with
the relation \clotc\ implies that
$$
\bos{K}\,\tilde\bos{f}= \tilde\bos{f}\,\tilde\bos{\k}\mod \hbar^{(n+2)}.
$$

Now we deal with every ambiguity in the above extension.
For this, we dully repeat the similar procedure with $\bos{f}^{\pr}\mod \hbar^{n+1}$ and
$\bos{\k}^{\pr}\mod \hbar^{n+1}$ with a twist. 
Let's first recall the identity in proposition \ref{claimtwo};
$$
\eqalign{
g^{\pr(n+1)}-g^{(n+1)}
=&
- Q\left(
\sum_{\ell=1}^{n}f^{(n+1-\ell)}\xi^{(\ell)}
+\sum_{\ell=0}^{n}K^{(n+1-\ell)}s^{(\ell)} 
+\sum_{\ell=1}^{n}s^{(n+1-\ell)}\k^{\pr(\ell)}
\right)
\cr
&+ f\left(\bos{\xi}^{-1}\bos{\k}\bos{\xi}\right)^{(n+1)},
}
$$
which implies that $Q g^{\pr(n+1)}=0$ since 
$Q g^{(n+1)}=0$ as seen previously and $Q^{2}=Q f=0$. 
Thus the image of $g^{\pr(n+1)}:H^{\bullet}\longrightarrow \sC^{\bullet+1}$ is contained
in $\Ker Q \cap \sC$. By taking the cohomology class of $g^{\pr(n+1)}(a)$ for each
$a \in H$
we obtain a $\;\Bbbk$-linear map
$$
\k^{\pr(n+1)}:H^{\bullet}\longrightarrow H^{\bullet +1}
$$
defined by $\k^{\pr(n+1)}(a):=\left[g^{\pr(n+1)}(a)\right]$ for each and every $a\in H$. 
Then Claim (2) 
implies that 
$$
\left[g^{\pr(n+1)}(a)\right]=\left[g^{(n+1)}(a)\right]
+\left[f\left(\left(\bos{\xi}^{-1}\bos{\k}\bos{\xi}\right)^{(n+1)}(a)\right)\right],
$$
for each and every $a\in H$.
Thus we obtain that
\eqn\mhiu{
\k^{\pr(n+1)}=\k^{(n+1)} + \left(\bos{\xi}^{-1}\bos{\k}\bos{\xi}\right)^{(n+1)}.
}

By composing
$\k^{\pr(n+1)}$ with $f^{\pr}$, we have a $\;\Bbbk$-linear map
$f^{\pr}\k^{\pr(n+1)}: H^{\bullet}\longrightarrow \sC^{\bullet+1}$, such that
$\left[f^{\pr}\left(\k^{\pr(n+1)}(a)\right)\right]= \left[g^{\pr(n+1)}(a)\right]$ 
for every $a\in H$. Thus there is some
$\;\Bbbk$-linear map $f^{\pr(n+1)}:H^{\bullet}\longrightarrow \sC^{\bullet}$ 
of ghost number $0$
such that
\eqn\clotd{
g^{\pr(n+1)}=f^{\pr}\k^{\pr(n+1)} - Q f^{\pr(n+1)}.
}
where $f^{\pr(n+1)}$ is defined modulo $\Ker Q$.
Now we want to compare \clotd\ with \clotcx.
We begin with rewriting \clotd\ as follows;
$$
\eqalign{
g^{\pr(n+1)}
&=f^{\pr}\k^{\pr(n+1)}- Q f^{\pr (n+1)}
\cr
&=f \k^{(n+1)} 
+f\left(\bos{\xi}^{-1}\bos{\k}\bos{\xi}\right)^{(n+1)}
-Q\left( f^{\pr(2)}-s^{(0)}\k^{\pr (n+1)}\right),
}
$$
where we have used $f^{\pr}= f + Q\,s^{(0)}$ and
the relation \mhiu\ between $\k^{\pr(n+1)}$ and $\k^{(n+1)}$.
Then we write the above equation as follows
$$
f \k^{(n+1)} 
=g^{\pr(n+1)}
-f\left(\bos{\xi}^{-1}\bos{\k}\bos{\xi}\right)^{(n+1)}
-Q\left( f^{\pr(2)}-s^{(0)}\k^{\pr (n+1)}\right),
$$
while, from \clotcx, we also have
$$
 f\k^{(n+1)}= g^{(n+1)} -Q f^{(n+1)}.
$$
Thus  we obtain the following equality;
$$
g^{\pr(n+1)}
-f\left(\bos{\xi}^{-1}\bos{\k}\bos{\xi}\right)^{(n+1)}
-Q\left( f^{\pr(n+1)}-s^{(0)}\k^{\pr (n+1)}\right)
= g^{(n+1)} -Q f^{(n+1)}.
$$
We, then, use Claim (2) to conclude
that
$$
Q w^{(n+1)}=0,
$$
where
$$
\eqalign{
w^{(n+1)}:= 
&
f^{\pr(n+1)}
-f^{(n+1)}
- s^{(0)}\k^{\pr(n+1)}
\cr
&
-\sum_{\ell=0}^{n}
\left(f^{(n+1-\ell)}\xi^{(\ell)}
+K^{(n+1-\ell)}s^{(\ell)} 
+s^{(n+1-\ell)}\k^{\pr(\ell)}
\right).
}
$$
Then by taking cohomology  we have a $\;\Bbbk$-linear
map $\left[w^{(n+1)}\right]:H^{\bullet}\longrightarrow H^{\bullet}$, 
which is an arbitrary $\;\Bbbk$-linear map. 
So we introduce a new ``ghost variable'' 
$\xi^{(n+1)}:H^{\bullet} \longrightarrow H^{\bullet}$ for the arbitrariness.
Then we have 
$$
w^{(n+1)}=f\xi^{(n+1)}+ Q s^{(n+1)}
$$ 
for some $\;\Bbbk$-linear
map $s^{(n+1)}$ on $H$ into $\sC$ of ghost number $0$.
Finally we use  the definition of $w^{(n+1)}$ to 
conclude that
$$
\eqalign{
f^{\pr(n+1)}=
&f^{(n+1)}+ s^{(0)}\k^{\pr(n+1)}
+f\xi^{(n+1)}+ Q s^{(n+1)}
\cr
&
+\left(
\sum_{\ell=1}^{n}f^{(n+1-\ell)}\xi^{(\ell)}
+\sum_{\ell=0}^{n}K^{(n+1-\ell)}s^{(\ell)} 
+\sum_{\ell=1}^{n}s^{(n+1-\ell)}\k^{\pr(\ell)}
\right)
}
$$
In more tidier form, we have
\eqn\mhix{
\eqalign{
f^{\pr(n+1)}
=
&f^{(n+1)} + \sum_{\ell=1}^{n}f^{(n+1-\ell)}\xi^{(\ell)}+ f \xi^{(n+1)}
\cr
&
+\sum_{\ell=0}^{n}K^{(n+1-\ell)}s^{(\ell)} + Q s^{(n+1)}
+\sum_{\ell=1}^{n+1} s^{(n+1-\ell)}\k^{\pr(\ell)}.
}
}

Using $\k^{\pr(n+1)}$ and $f^{\pr(n+1)}$
we extend both $\bos{\k}^{\pr}\mod \hbar^{n+1}$ and $\bos{f}^{\pr}\mod \hbar^{n+1}$
to the next level $\tilde\bos{\k}^{\pr}\mod \hbar^{n+2}$ 
and $\tilde\bos{f}^{\pr}\mod \hbar^{n+2}$
as follows
$$
\eqalign{
\tilde\bos{\k}^{\pr} &:=\sum_{\ell=1}^{n}\hbar^{n}\k^{\pr(n)}
+ \hbar^{(n+1)}\k^{\pr(n+1)}\mod \hbar^{n+2}
,\cr
\tilde\bos{f}^{\pr} &:=f^{\pr}+\sum_{\ell=1}^{n}\hbar^{n}f^{\pr(n)}
+ \hbar^{(n+1)}f^{\pr(n+1)}\mod \hbar^{n+2}.
}
$$
Then the assumption that
$\bos{\k}^{\pr}=\bos{\xi}^{-1}\bos{\k}\,\bos{\xi}\,\mod \hbar^{n+1}$
together with the relation \mhiu\ implies that
$$
\tilde\bos{\k}^{\pr}=\bos{\xi}^{-1}\tilde\bos{\k}\,\bos{\xi}\,\mod \hbar^{n+2}.
$$
Also the assumption that
$\bos{f}^{\pr}= \bos{f}\,\bos{\xi} + \bos{K}\,\bos{s} + \bos{s}\,\bos{\k}^{\pr}
\,\mod \hbar^{n+1}$ together with the relation \mhix\ implies
that
$$
\tilde\bos{f}^{\pr}= \tilde\bos{f}\,\bos{\xi} + \bos{K}\,\tilde\bos{s} 
+ \tilde\bos{s}\,\tilde\bos{\k}^{\pr}\,
\mod \hbar^{n+2},
$$
where  $\tilde\bos{s}=s^{(0)}+\hbar s^{(1)}+\cdots+\hbar^{n}s^{(n)}
+\hbar^{(n+1)} s^{(n+1)}\mod\hbar^{n+2}$.

Thus our proof shall be complete once we prove the two claims we have made.
\qed.
\end{proof}

\begin{corollary}
Let $O$ be a classical observable. Then $O$ can be extended to a quantum observable
$\bos{O}$ if and only if $\;\bos{\k}\left(\left[O\right]\right)=0$,
i.e., $\k^{(\ell)}\left(\left[O\right]\right)=0$ for all $\ell=1,2,\cdots$.
Let $\;\bos{\k}\left(\left[O\right]\right)=0$. Then $\bos{f}\left(\left[O\right]\right)$
is the extension of the classical observable modulo ``quantum homotopy'',
i.e., modulo  $\bos{K}$-exact term.
\end{corollary}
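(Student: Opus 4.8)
The plan is to read the corollary off the quantum extension theorem, whose identity $\bos{K}\bos{f}=\bos{f}\bos{\k}$ carries all the content; throughout set $a:=[O]\in H$. For sufficiency, suppose $\bos{\k}(a)=0$. Applying $\bos{f}$ and using $\bos{K}\bos{f}=\bos{f}\bos{\k}$ gives $\bos{K}\bos{f}(a)=\bos{f}\bigl(\bos{\k}(a)\bigr)=0$, so $\bos{f}(a)$ is a $\bos{K}$-cocycle whose classical limit $f(a)$ represents $a$; writing $f(a)=O+Q\l$ and setting $\bos{O}:=\bos{f}(a)-\bos{K}\l$ yields $\bos{K}\bos{O}=0$ and $\bos{O}\big|_{\hbar=0}=O$, so $O$ is extendable. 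The quantum-homotopy clause $\bos{f}^{\pr}=\bos{f}+\bos{K}\bos{s}+\bos{s}\bos{\k}$ collapses, when $\bos{\k}(a)=0$, to $\bos{f}^{\pr}(a)=\bos{f}(a)+\bos{K}\bos{s}(a)$, which is exactly the assertion that the extension $\bos{f}(a)$ is determined modulo $\bos{K}$-exact (``quantum homotopy'') terms.

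For necessity, suppose $O$ extends to some $\bos{O}$ with $\bos{K}\bos{O}=0$ and $\bos{O}\big|_{\hbar=0}=O$. Since $f(a)$ and $O$ are $Q$-cohomologous, replacing $\bos{O}$ by $\bos{O}+\bos{K}\m$ with $f(a)=O+Q\m$ I may assume $\bos{O}\big|_{\hbar=0}=f(a)$. The structural input I would use is that $\bos{f}$ is a quasi-isomorphism over $\Bbbk[[\hbar]]$: its reduction modulo $\hbar$ is the quasi-isomorphism $f\colon(H,0)\to(\sC,Q)$, and an $\hbar$-adic cochain map whose leading term is a quasi-isomorphism is itself one (the standard filtration argument on $\Bbbk[[\hbar]]=\lim_{\leftarrow}\Bbbk[\hbar]/\hbar^{n}\Bbbk[\hbar]$). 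Hence the cocycle $\bos{O}$ is $\bos{K}$-cohomologous to $\bos{f}(\bos{\b})$ for some $\bos{\k}$-closed $\bos{\b}\in H[[\hbar]]$; passing to classical limits and using $[f(\,\cdot\,)]=(\,\cdot\,)$ forces $\bos{\b}\big|_{\hbar=0}=a$. Thus $a$ admits a $\bos{\k}$-closed lift $\bos{\b}$ reducing to $a$ modulo $\hbar$, i.e.\ $a$ lies in the image of the projection $\Ker\bos{\k}\to H$.

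It remains to convert this lift into the literal statement $\bos{\k}(a)=0$, and this is where the automorphism ambiguity of the theorem is essential. Choosing a basis of $H$ through $a$, let $\bos{\xi}$ be the automorphism of $H[[\hbar]]$ sending $a\mapsto\bos{\b}$ and fixing the other basis vectors; since $\bos{\b}$ reduces to $a$ modulo $\hbar$ we have $\bos{\xi}\big|_{\hbar=0}=1$, so $\bos{\xi}$ is admissible, and in the gauge $\bos{\k}^{\pr}=\bos{\xi}^{-1}\bos{\k}\bos{\xi}$ one has $\bos{\k}^{\pr}(a)=\bos{\xi}^{-1}\bos{\k}(\bos{\b})=0$. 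Because the theorem fixes $\bos{\k}$ only up to such conjugation, the criterion ``$\bos{\k}([O])=0$'' is precisely the automorphism-normalized obstruction. I expect this last step to be the main obstacle: order by order the raw obstruction $\k^{(n)}(a)$ is only guaranteed (by the very induction proving the theorem) to land in the image of the lower maps $\k^{(\ell)}$, so a na\"ive term-by-term vanishing in an \emph{arbitrary} gauge is false in general; it is exactly the automorphism normalization that upgrades ``extendable'' to the clean criterion $\bos{\k}([O])=0$, and any careful write-up must make this reduction explicit rather than assert term-by-term vanishing.
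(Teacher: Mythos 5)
Your proof is correct, and your necessity argument takes a genuinely different route from the paper's. The paper never writes a separate proof of this corollary: it falls out of the order-by-order induction behind the theorem (see the explicit mod-$\hbar^{2}$ and mod-$\hbar^{3}$ discussions in Section 3.2), where at each order ``extendable mod $\hbar^{n+1}$'' is matched against ``$\bos{\k}([O])=0$ mod $\hbar^{n+1}$'' and every ambiguity is swept into the pair $(\bos{s},\bos{\xi})$; sufficiency there, as in your first paragraph, is just $\bos{K}\bos{f}(a)=\bos{f}\bos{\k}(a)$. You instead prove necessity globally: $\bos{f}\colon(H[[\hbar]],\bos{\k})\to(\sC[[\hbar]],\bos{K})$ is a cochain map of free, $\hbar$-complete $\Bbbk[[\hbar]]$-modules reducing mod $\hbar$ to the quasi-isomorphism $f$, hence is itself a quasi-isomorphism (a standard fact, valid here since both modules are $\hbar$-torsion-free and complete, but you should cite or prove it rather than gesture at it), so any extension $\bos{O}$ is cohomologous to $\bos{f}(\bos{\b})$ for a $\bos{\k}$-closed lift $\bos{\b}$ of $a=[O]$, which an admissible automorphism then rotates onto $a$. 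Your closing diagnosis is accurate and is exactly the point the paper compresses into the proviso ``modulo its natural automorphism'': for a fixed choice of $f^{(1)}$, writing $f(a)=O+Q\l$ and $u:=f^{(1)}(a)-O^{(1)}-K^{(1)}\l\in\Ker Q$ for an extension $O+\hbar O^{(1)}+\hbar^{2}O^{(2)}$, one computes $\k^{(2)}(a)=\k^{(1)}([u])$, so extendability pins the raw obstruction only into $\Im\k^{(1)}$, and conjugating by $\xi^{(1)}$ with $\k^{(1)}\xi^{(1)}(a)\neq 0$ shows the literal condition $\bos{\k}(a)=0$ is not invariant under the theorem's admissible automorphisms; the clean ``if and only if'' holds only after the normalization you perform. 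Your per-element $\bos{\xi}$ also upgrades to a single gauge for all extendable classes at once, since the subspace $\{a\in H:\exists\,\bos{\b}\in\Ker\bos{\k},\ \bos{\b}|_{\hbar=0}=a\}$ is automorphism-invariant and one may send a basis of it onto chosen lifts while fixing a complement. What the paper's induction buys in exchange is constructiveness --- the extension $\bos{f}(a)$ is produced order by order with no limit or five-lemma argument --- while yours makes transparent exactly where the automorphism freedom is indispensable. One caution on the second clause: your reading that $\bos{f}(a)$ is well defined up to $\bos{K}$-exact terms within a fixed homotopy class and gauge is the right one, but note it does not say every extension of $O$ is $\bos{K}$-cohomologous to $\bos{f}(a)$ (the paper's own example $\bos{X}+\hbar\bos{Y}$ shows otherwise); the residual discrepancies are absorbed precisely by $\bos{\xi}$.
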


%
%

\subsubsection{Proofs of Claims (1) and (2)}
\begin{claim}[1]
Let $g^{(n+1)}$ be the $\;\Bbbk$-linear map on $H$ into $\sC$ of ghost number $1$
defined by the formula
$$
g^{(n+1)}:=K^{(n+1)}f
+\sum_{\ell=1}^{n}K^{(n+1-\ell)}f^{(\ell)}
-\sum_{\ell=1}^{n}f^{(\ell)}\k^{(n+1-\ell)}.
$$
Then
$$
Q  g^{(n+1)}
=-\sum_{\ell=1}^{n} f\k^{(n+1-\ell)}\k^{(\ell)}.
$$
\end{claim}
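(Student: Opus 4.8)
The plan is to recognize the map $g^{(n+1)}$ as a single Taylor coefficient of a natural defect series and then to exploit $\bos{K}^{2}=0$. First I would introduce the truncations $\bos{f}_{\le n}:=f+\hbar f^{(1)}+\cdots+\hbar^{n}f^{(n)}$ and $\bos{\k}_{\le n}:=\hbar\k^{(1)}+\cdots+\hbar^{n}\k^{(n)}$ and form the genuine power series $\bos{\Phi}:=\bos{K}\bos{f}_{\le n}-\bos{f}_{\le n}\bos{\k}_{\le n}$ in $\hbar$ (the higher maps $K^{(j)}$ make $\bos{\Phi}$ have terms of all orders). The hypothesis $\bos{K}\bos{f}=\bos{f}\bos{\k}\mod\hbar^{n+1}$ of Lemma \ref{speea} reads $\bos{\Phi}=0\mod\hbar^{n+1}$, so its leading term sits in degree $\hbar^{n+1}$; write $\bos{\Phi}^{(n+1)}$ for that coefficient. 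Collecting the $\hbar^{n+1}$ terms of $\bos{\Phi}$ and using $K^{(0)}=Q$, $f^{(0)}=f$ and $\k^{(0)}=0$, I would check that the as-yet-unconstructed $f^{(n+1)}$ and $\k^{(n+1)}$ never enter the truncations, so that $\bos{\Phi}^{(n+1)}=K^{(n+1)}f+\sum_{\ell=1}^{n}K^{(n+1-\ell)}f^{(\ell)}-\sum_{\ell=1}^{n}f^{(\ell)}\k^{(n+1-\ell)}=g^{(n+1)}$.

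Next I would apply $\bos{K}$ to $\bos{\Phi}$ and use $\bos{K}^{2}=0$, which holds exactly. Substituting $\bos{K}\bos{f}_{\le n}=\bos{\Phi}+\bos{f}_{\le n}\bos{\k}_{\le n}$ gives the identity $\bos{K}\bos{\Phi}=-\bos{K}\bos{f}_{\le n}\bos{\k}_{\le n}=-\bos{\Phi}\bos{\k}_{\le n}-\bos{f}_{\le n}\bos{\k}_{\le n}^{2}$. The merit of this form is that every term is now controlled in $\hbar$: since $\bos{\Phi}=O(\hbar^{n+1})$ and $\bos{\k}_{\le n}=O(\hbar)$, the term $\bos{\Phi}\bos{\k}_{\le n}$ lies in $O(\hbar^{n+2})$, while on the left only the $Q$-part of $\bos{K}$ can reach order $\hbar^{n+1}$ (each $K^{(j)}$ with $j\ge 1$ applied to $O(\hbar^{n+1})$ lands in $O(\hbar^{n+2})$). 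Reading off the $\hbar^{n+1}$ coefficient therefore yields $Q\,\bos{\Phi}^{(n+1)}=-\big(\bos{f}_{\le n}\bos{\k}_{\le n}^{2}\big)^{(n+1)}$.

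It remains to evaluate that right-hand coefficient, and here I would invoke the other inductive hypothesis, $\bos{\k}^{2}=0\mod\hbar^{n+1}$. Because $\bos{\k}_{\le n}=\bos{\k}\mod\hbar^{n+1}$ and both are $O(\hbar)$, their squares agree modulo $\hbar^{n+2}$, so $\bos{\k}_{\le n}^{2}=O(\hbar^{n+1})$ with leading coefficient $\big(\bos{\k}^{2}\big)^{(n+1)}=\sum_{\ell=1}^{n}\k^{(n+1-\ell)}\k^{(\ell)}$, the $\ell=0$ and $\ell=n+1$ terms vanishing since $\k^{(0)}=0$. As $\bos{f}_{\le n}=f+O(\hbar)$, the $\hbar^{n+1}$ coefficient of $\bos{f}_{\le n}\bos{\k}_{\le n}^{2}$ is exactly $f\sum_{\ell=1}^{n}\k^{(n+1-\ell)}\k^{(\ell)}$. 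Combining this with $\bos{\Phi}^{(n+1)}=g^{(n+1)}$ produces the asserted $Q g^{(n+1)}=-\sum_{\ell=1}^{n}f\,\k^{(n+1-\ell)}\k^{(\ell)}$.

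I expect the only delicate point to be the order bookkeeping in the second step: one must argue carefully that, at order $\hbar^{n+1}$, the left side collapses to $Q\,\bos{\Phi}^{(n+1)}$ and the $\bos{\Phi}\bos{\k}_{\le n}$ term drops out entirely, both of which follow once the degrees $O(\hbar^{n+1})$ of $\bos{\Phi}$ and $O(\hbar)$ of $\bos{\k}_{\le n}$ are tracked. Everything else is formal manipulation of the two exact relations $\bos{K}^{2}=0$ and $\bos{\k}^{2}=0\mod\hbar^{n+1}$, so this route avoids the brute-force reindexing of the triple sums that a direct expansion of $Q g^{(n+1)}$ would otherwise require.
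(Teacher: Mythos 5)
Your proof is correct, and it takes a genuinely different route from the paper's. The paper proves Claim (1) by direct expansion: it applies $Q$ to the defining sum for $g^{(n+1)}$, substitutes the coefficient-level identities $Qf^{(\ell)}=\sum_{j=1}^{\ell}\bigl(-K^{(j)}f^{(\ell-j)}+f^{(\ell-j)}\k^{(j)}\bigr)$ (from $\bos{K}\bos{f}=\bos{f}\bos{\k}\bmod\hbar^{n+1}$) and $QK^{(\ell)}=-K^{(\ell)}Q-\sum_{j=1}^{\ell-1}K^{(\ell-j)}K^{(j)}$ (from $\bos{K}^{2}=0$), and then kills the resulting double sums against each other by explicit re-summations, with one final use of $\bos{\k}^{2}=0\bmod\hbar^{n+1}$ leaving exactly $-\sum_{\ell=1}^{n}f\k^{(n+1-\ell)}\k^{(\ell)}$. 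You instead absorb all of that into the single exact operator identity $\bos{K}\bos{\Phi}=-\bos{\Phi}\bos{\k}_{\le n}-\bos{f}_{\le n}\bos{\k}_{\le n}^{2}$ for the defect $\bos{\Phi}=\bos{K}\bos{f}_{\le n}-\bos{f}_{\le n}\bos{\k}_{\le n}$ and read off the $\hbar^{n+1}$ coefficient; your bookkeeping checks out at every point where it matters --- the truncations exclude $f^{(n+1)}$ and $\k^{(n+1)}$, so $\bos{\Phi}^{(n+1)}$ is exactly $g^{(n+1)}$ (with no stray $Qf^{(n+1)}$ or $f\k^{(n+1)}$ terms); $\bos{\Phi}\bos{\k}_{\le n}=O(\hbar^{n+2})$; only the $Q$-part of $\bos{K}$ survives at order $\hbar^{n+1}$ on the left; and $\bos{\k}_{\le n}^{2}$ has leading coefficient $\sum_{\ell=1}^{n}\k^{(n+1-\ell)}\k^{(\ell)}$ at order $\hbar^{n+1}$ precisely because $\bos{\k}^{2}=0\bmod\hbar^{n+1}$. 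Both arguments consume the same three inputs, so this is a repackaging rather than a new idea, but it is a worthwhile one: the paper's re-indexing gymnastics (its equation \couya\ and the cancellation that follows) become one associativity-and-nilpotency manipulation of formal series, it is transparent where each hypothesis of Lemma \ref{speea} enters, and the same scheme adapts with minimal change to the companion Claim (2). What the paper's degree-by-degree expansion buys in exchange is that it works purely with the finitely many maps the induction actually provides, never introducing auxiliary truncated series, which is the uniform style of Section 3; also note that since $\bos{K}^{2}=0$ holds exactly in a BV QFT algebra, your use of it without truncation is legitimate (the paper only needs it modulo $\hbar^{n+2}$).
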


\begin{proof}

It is convenient to denote $f=f^{(0)}$ so that 
$$
g^{(n+1)}=\sum_{\ell=0}^{n}K^{(n+1-\ell)}f^{(\ell)}
-\sum_{\ell=1}^{n}f^{(\ell)}\k^{(n+1-\ell)}.
$$
Applying $Q$ to above, we have
\eqn\couya{
\eqalign{
Q g^{(n+1)}
=&\sum_{\ell=0}^{n}Q K^{(n+1-\ell)}f^{(\ell)} 
-\sum_{\ell=1}^{n}Q f^{(\ell)}\k^{(n+1-\ell)}
\cr
=&\sum_{\ell=1}^{n+1}Q K^{(\ell)}f^{(n+1-\ell)}
-\sum_{\ell=1}^{n}
\sum_{j=1}^{\ell}\left(-K^{(j)}f^{(\ell-j)} +f^{(\ell-j)}\k^{(j)}
\right)\k^{(n+1-\ell)}
\cr 
=&-\sum_{\ell=1}^{n+1}\left(
 K^{(\ell)}Q
+\sum_{j=1}^{\ell-1}K^{(\ell-j)}K^{(j)}\right)f^{(n+1-\ell)}
+\sum_{\ell=1}^{n}\sum_{j=1}^{\ell}
K^{(j)}f^{(\ell-j)}k^{(n+1-\ell)}
\cr
&
-\sum_{\ell=1}^{n}\sum_{j=1}^{\ell}f^{(\ell-j)}\k^{(j)}\k^{(n+1-\ell)}
,
}
}
where 
(i) for the $2$-nd equality we used a re-summation and the assumption
$\bos{K}\bos{f}= \bos{f}\bos{\k}\mod\hbar^{n+1}$, which  implies 
that 
\eqn\couyb{
\left\{
\eqalign{
Q f^{(0)}&=0,\cr
Q f^{(\ell)}&=\sum_{j=1}^{\ell}\left(-K^{(j)}f^{(\ell-j)} +f^{(\ell-j)}\k^{(j)}\right),\qquad \ell=1,\cdots,n.
}
\right.
}
(ii) for the $3$rd equality we used the condition 
$\bos{K}^{2}=0$ modulo $\hbar^{n+2}$, which implies that
$$ 
Q K^{(\ell)}=- K^{(\ell)}Q
- \sum_{j=1}^{\ell-1}K^{(\ell-j)}K^{(j)},\qquad \ell=1,\cdots,n+1.
$$
Now consider the first two terms after the last equality in \couya. After a re-summation 
we have
$$
\eqalign{
&-\sum_{\ell=1}^{n+1}\sum_{j=1}^{\ell}
\left( K^{(\ell)}Q+ K^{(\ell-j)}K^{(j)}\right)f^{(n+1-\ell)}
+\sum_{\ell=1}^{n}\sum_{j=1}^{\ell}
K^{(j)}f^{(\ell-j)}k^{(n+1-\ell)}
\cr
&\qquad\qquad=
- K^{(n+1)}Q f^{(0)}
-\sum_{j=1}^{n}K^{(n+1-j)}\left(Q f^{(j)}+\sum_{\ell=1}^{j}K^{(\ell)}f^{(j-\ell)}
-\sum_{\ell=1}^{j}f^{(j-\ell)}\k^{(\ell)}
\right)
\cr
&\qquad\qquad
=0,
}
$$
where the last equality is due to \couyb.
Thus we obtain that
$$
\eqalign{
Q g^{(n+1)}
=-\sum_{\ell=1}^{n}\sum_{j=1}^{\ell}f^{(\ell-j)}\k^{(j)}\k^{(n+1-\ell)}.
}
$$
After a re-summation we have
$$
\eqalign{
Q g^{(n+1)}
=&-\sum_{j=1}^{n}f^{(n+1-j)}\left( \sum_{\ell=1}^{j-1}\k^{(j-\ell)}\k^{(\ell)}\right)
-f^{(0)}\sum_{\ell=1}^{n}\k^{(n+1-\ell)}\k^{(\ell)}.
}
$$
Finally we use the assumption that $\bos{\k}\bos{\k}=0 \mod\hbar^{n+1}$,
which implies that
$$
\sum_{\ell=1}^{j-1}\k^{(j-\ell)}\k^{(\ell)}=0, j=1,2,\cdots,n,
$$
to prove the claim that
$$
\eqalign{
Q  g^{(n+1)}
&=-f^{(0)}\sum_{\ell=1}^{n}\k^{(n+1-\ell)}\k^{(\ell)}
\cr
&\equiv
-\sum_{\ell=1}^{n}f\k^{(n+1-\ell)}\k^{(\ell)}.
}
$$
\qed
\end{proof}

\begin{claim}[2]
Let $g^{\pr(n+1)}$ be the $\;\Bbbk$-linear map on $H$ into $\sC$ of ghost number $1$
defined by the formula
$$
g^{\pr(n+1)}:=K^{(n+1)}f^{\pr}
+\sum_{\ell=1}^{n}K^{(n+1-\ell)}f^{\pr(\ell)}
-\sum_{\ell=1}^{n}f^{\pr(\ell)}\k^{\pr(n+1-\ell)}.
$$
Then
$$
\eqalign{
g^{\pr(n+1)}-g^{(n+1)}
=&
- Q
\left(
\sum_{\ell=1}^{n}f^{(n+1-\ell)}\xi^{(\ell)}
+\sum_{\ell=0}^{n}K^{(n+1-\ell)}s^{(\ell)} 
+\sum_{\ell=1}^{n}s^{(n+1-\ell)}\k^{\pr(\ell)}
\right)
\cr
&+ f\left(\bos{\xi}^{-1}\bos{\k}\bos{\xi}\right)^{(n+1)}
}
$$
\end{claim}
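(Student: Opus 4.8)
The plan is to recast the two combinatorial objects $g^{(n+1)}$ and $g^{\pr(n+1)}$ in generating-function form and let the gauge relations of the Lemma do the work. Write $\bos{E}:=\bos{K}\bos{f}-\bos{f}\bos{\k}$ for the ``defect'' operator, extended to all orders by the truncations $\bos{f}=\sum_{\ell=0}^{n}\hbar^{\ell}f^{(\ell)}$ and $\bos{\k}=\sum_{\ell=1}^{n}\hbar^{\ell}\k^{(\ell)}$ (that is, $f^{(n+1)}=\k^{(n+1)}=0$). A routine index shift shows that the definition of $g^{(n+1)}$ in Claim $(1)$ is exactly the $\hbar^{n+1}$-coefficient $E^{(n+1)}=[\bos{K}\bos{f}-\bos{f}\bos{\k}]^{(n+1)}$: the only would-be extra contributions $K^{(0)}f^{(n+1)}=Qf^{(n+1)}$ and $f^{(0)}\k^{(n+1)}=f\k^{(n+1)}$ are killed by the truncation. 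First I would record this identity and its primed analogue $g^{\pr(n+1)}=\hat E^{\pr(n+1)}-Q\hat f^{\pr(n+1)}+f^{\pr}\hat\k^{\pr(n+1)}$, where $\hat\bos{\k}^{\pr}:=\bos{\xi}^{-1}\bos{\k}\bos{\xi}$ and $\hat\bos{f}^{\pr}:=\bos{f}\bos{\xi}+\bos{K}\bos{s}+\bos{s}\hat\bos{\k}^{\pr}$ are the all-order completions prescribed by the gauge formulas; by the inductive hypothesis their coefficients coincide with the given $f^{\pr(\ell)},\k^{\pr(\ell)}$ for $\ell\le n$, which is all that enters $g^{\pr(n+1)}$, so that the identity is insensitive to the order-$(n+1)$ completion.

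The heart of the argument is then a one-line operator computation. Using $\bos{K}^{2}=0$ to kill $\bos{K}^{2}\bos{s}$, the cancellation of the two $\bos{K}\bos{s}\hat\bos{\k}^{\pr}$ terms, the conjugation relation $\bos{\xi}\hat\bos{\k}^{\pr}=\bos{\k}\bos{\xi}$, and $(\hat\bos{\k}^{\pr})^{2}=\bos{\xi}^{-1}\bos{\k}^{2}\bos{\xi}$, I would obtain
$$\hat\bos{E}^{\pr}=\bos{K}\hat\bos{f}^{\pr}-\hat\bos{f}^{\pr}\hat\bos{\k}^{\pr}=(\bos{K}\bos{f}-\bos{f}\bos{\k})\bos{\xi}-\bos{s}\,(\hat\bos{\k}^{\pr})^{2}=\bos{E}\,\bos{\xi}-\bos{s}\,\bos{\xi}^{-1}\bos{\k}^{2}\bos{\xi}.$$
Extracting the $\hbar^{n+1}$-coefficient and using that $\bos{E}\equiv 0$ and $\bos{\k}^{2}\equiv 0\mod\hbar^{n+1}$ (so that only the leading survivors, with $\xi^{(0)}=1$, contribute) gives $\hat E^{\pr(n+1)}=E^{(n+1)}-s^{(0)}[\bos{\k}^{2}]^{(n+1)}$. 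Here I would invoke the consequence \clotb of Claim $(1)$, namely $[\bos{\k}^{2}]^{(n+1)}=\sum_{\ell=1}^{n}\k^{(n+1-\ell)}\k^{(\ell)}=0$, to conclude $\hat E^{\pr(n+1)}=E^{(n+1)}=g^{(n+1)}$, whence $g^{\pr(n+1)}-g^{(n+1)}=-Q\hat f^{\pr(n+1)}+f^{\pr}\hat\k^{\pr(n+1)}$.

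It remains to expand $\hat f^{\pr(n+1)}=[\bos{f}\bos{\xi}+\bos{K}\bos{s}+\bos{s}\hat\bos{\k}^{\pr}]^{(n+1)}$ by matching orders in $\hbar$ and to reconcile the leftover order-$(n+1)$ pieces. The three products furnish, respectively, $\sum_{\ell=1}^{n}f^{(n+1-\ell)}\xi^{(\ell)}+f\xi^{(n+1)}$, then $\sum_{\ell=0}^{n}K^{(n+1-\ell)}s^{(\ell)}$, and then $\sum_{\ell=1}^{n}s^{(n+1-\ell)}\hat\k^{\pr(\ell)}+s^{(0)}\hat\k^{\pr(n+1)}$. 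Applying $Q$ and using $Qf=0$ annihilates the stray $f\xi^{(n+1)}$ term, while $-Q(s^{(0)}\hat\k^{\pr(n+1)})=-(Qs^{(0)})\hat\k^{\pr(n+1)}$ cancels exactly against the matching piece of $f^{\pr}\hat\k^{\pr(n+1)}=(f+Qs^{(0)})\hat\k^{\pr(n+1)}$, leaving $f\hat\k^{\pr(n+1)}=f[\bos{\xi}^{-1}\bos{\k}\bos{\xi}]^{(n+1)}$. Collecting the surviving $Q$-exact terms and recalling $\hat\k^{\pr(\ell)}=\k^{\pr(\ell)}$ for $\ell\le n$ reproduces the asserted formula. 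The main obstacle is precisely the bookkeeping of these order-$(n+1)$ ``phantom'' coefficients: one must check that both $g^{(n+1)}$ and $g^{\pr(n+1)}$ are independent of the chosen completions, and that the single genuinely dangerous contribution $s^{(0)}[\bos{\k}^{2}]^{(n+1)}$ is exactly the one removed by the cohomological vanishing \clotb from Claim $(1)$; everything else is a re-summation of the same kind already performed there.
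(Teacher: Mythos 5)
Your proposal is correct, and it reaches the paper's formula by a genuinely different route. The paper's own proof is a direct coefficient computation: it substitutes the gauge formulas for $f^{\pr}$, $f^{\pr(\ell)}$, $\k^{\pr(\ell)}$ into the definition of $g^{\pr(n+1)}$, arrives at a six-line expression, and cancels it line by line using $\bos{K}^{2}=0$, $\bos{K}\bos{f}=\bos{f}\bos{\k}\mod\hbar^{n+1}$, $\bos{\k}^{2}=0\mod\hbar^{n+1}$, together with two separately derived expansion identities for $\left(\bos{\xi}^{-1}\bos{\k}\bos{\xi}\right)^{(n+1)}$ and for $\k^{\pr(n+1-\ell)}$, the first of which is invoked at the very last step to assemble the answer. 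You instead package the whole computation into the single all-order conjugation identity
$$
\bos{K}\hat\bos{f}^{\pr}-\hat\bos{f}^{\pr}\hat\bos{\k}^{\pr}
=\left(\bos{K}\bos{f}-\bos{f}\bos{\k}\right)\bos{\xi}
-\bos{s}\,\bos{\xi}^{-1}\bos{\k}^{2}\,\bos{\xi}
$$
for the completed data and then extract the $\hbar^{n+1}$-coefficient; since composition of $\Bbbk[[\hbar]]$-linear maps is associative, the three cancellations you cite are legitimate, and the term $f\left(\bos{\xi}^{-1}\bos{\k}\bos{\xi}\right)^{(n+1)}$ appears tautologically as $f\,\hat\k^{\pr(n+1)}$, so the paper's expansion identity for the conjugated $\bos{\k}$ is never needed. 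Your bookkeeping of the phantom order-$(n+1)$ coefficients is also sound: with the truncations $f^{(n+1)}=\k^{(n+1)}=s^{(n+1)}=0$ one has $g^{(n+1)}=E^{(n+1)}$ exactly (and the truncated $\left(\bos{\xi}^{-1}\bos{\k}\bos{\xi}\right)^{(n+1)}$ involves no $\xi^{(n+1)}$ or $\bar\xi^{(n+1)}$, matching the paper's meaning of that symbol), the stray $f\xi^{(n+1)}$ dies under $Qf=0$, and the two $\left(Qs^{(0)}\right)\hat\k^{\pr(n+1)}$ pieces cancel between $-Q\hat f^{\pr(n+1)}$ and $f^{\pr}\hat\k^{\pr(n+1)}$. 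What your route buys is brevity and a structural explanation — the formula expresses gauge covariance of the defect operator $\bos{E}=\bos{K}\bos{f}-\bos{f}\bos{\k}$ — at the cost of the completion/truncation conventions, which you state explicitly. One point worth highlighting in your favor: your explicit appeal to the degree-$(n+1)$ identity $\sum_{\ell=1}^{n}\k^{(n+1-\ell)}\k^{(\ell)}=0$, derived from Claim (1) earlier in the Lemma's proof, to kill $s^{(0)}\bigl[\bos{\k}^{2}\bigr]^{(n+1)}$ is genuinely necessary; at the corresponding step of the paper's proof (the ``$5$-th line''), the $\ell=0$ case of the displayed vanishing is exactly this degree-$(n+1)$ identity and is not a consequence of the assumption $\bos{\k}^{2}=0\mod\hbar^{n+1}$ alone, so both arguments depend on Claim (1) in the same way and your ordering of dependencies matches how the Lemma actually deploys the two claims.
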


\begin{proof}
Recall that, from the assumptions of Lemma,
$$
\bos{\k}^{\pr}=\bos{\xi}^{-1}\bos{\k}\bos{\xi} \mod \hbar^{n+1}
$$
and
$$
\bos{f}^{\pr}=\bos{f}\bos{\xi} + \bos{K}\bos{s} +\bos{s}\bos{\k}^{\pr}\mod \hbar^{n+1}
$$
where $\bos{\xi}=1+\hbar\xi^{(1)}+\hbar^{2}\xi^{(2)}+\cdots$.
The expansion of $\bos{\xi}^{-1}$
shall be denoted by
$$
\bos{\xi}^{-1}=1+\hbar\bar\xi^{(1)}+\hbar^{2}\bar\xi^{(2)}+\cdots.
$$
Then, the identity $\bos{\xi}^{-1}\bos{\xi}=1$ implies that
\eqn\headx{
\eqalign{
\bar\xi^{(1)}&=-\xi^{(1)}
,\cr
\bar\xi^{(\ell)}&=-\sum_{j=1}^{\ell-1} \xi^{(j)}\bar \xi^{(\ell-j)} -\xi^{(\ell)}\hbox{ for } \ell>1.
}
}

Now we consider the expression 
$\left(\bos{\xi}^{-1}\bos{\k}\bos{\xi} \right)^{(n+1)}$,
which can be dully expanded as follows;
$$
\left(\bos{\xi}^{-1}\bos{\k}\bos{\xi} \right)^{(n+1)}
= \sum_{\ell=1}^{n}\k^{(n+1-\ell)}\xi^{(\ell)}
+\sum_{\ell=1}^{n}\bar\xi^{(\ell)}\k^{(n+1-\ell)}
+\sum_{\ell=1}^{n-1}\bar\xi^{(\ell)}\sum_{j=1}^{n-\ell}\k^{(n+1-\ell-j)}\xi^{(j)}.
$$
Also, for $1\leq \ell\leq n-2$, we have
$$
\eqalign{
\k^{\pr(n+1-\ell)}
=& \k^{(n+1-\ell)}
+\sum_{j=1}^{n-\ell}\k^{(n+1-\ell-j)}\xi^{(j)}
\cr
&
+\sum_{j=1}^{n-\ell}\bar\xi^{(j)}\k^{(n+1-\ell-j)}
+\sum_{j=1}^{n-\ell-1}\bar\xi^{(j)}\sum_{i=1}^{n-\ell-j}\k^{(n+1-\ell-j-i)}\xi^{(i)},
}
$$
while 
$$
\eqalign{
\k^{\pr(2)}
&=\k^{(2)}+\k^{(1)}\xi^{(1)}-\xi^{(1)}\k^{(1)}
,\cr
\k^{\pr(1)}
&=\k^{(1)}.
}
$$
Using \headx,
we have
$$
\eqalign{
\left(\bos{\xi}^{-1}\bos{\k}\bos{\xi} \right)^{(n+1)}
=& \sum_{\ell=1}^{n}\k^{(n+1-\ell)}\xi^{(\ell)}
-\sum_{\ell=1}^{n}\xi^{(\ell)}\k^{(n+1-\ell)}
-\sum_{\ell=1}^{n-1}\xi^{(\ell)}\sum_{j=1}^{n-\ell}\k^{(n+1-\ell-j)}\xi^{(j)}
\cr
&
-\sum_{\ell=2}^{n}\sum_{j=1}^{\ell-1}\xi^{(j)}\bar\xi^{(\ell-j)}\k^{(n+1-\ell)}
-\sum_{\ell=2}^{n-1}\sum_{i=1}^{\ell-1}\xi^{(i)}\bar\xi^{(\ell-i)}\sum_{j=1}^{n-\ell}\k^{(n+1-\ell-j)}\xi^{(j)}
}
$$
After re-summations of the last two terms in the right hand side of the above,
we have
$$
\eqalign{
\left(\bos{\xi}^{-1}\bos{\k}\bos{\xi} \right)^{(n+1)}
=& \sum_{\ell=1}^{n}\k^{(n+1-\ell)}\xi^{(\ell)}
-\sum_{\ell=1}^{n}\xi^{(\ell)}\k^{(n+1-\ell)}
-\sum_{\ell=1}^{n-1}\xi^{(\ell)}\sum_{j=1}^{n-\ell}\k^{(n+1-\ell-j)}\xi^{(j)}
\cr
&
-\sum_{\ell=1}^{n-1}\xi^{(\ell)}\sum_{j=1}^{n-\ell}\bar\xi^{(j)}\k^{(n+1-\ell-j)}
-\sum_{\ell=1}^{n-2}\xi^{(\ell)}\sum_{j=1}^{n-\ell-1}\bar\xi^{(j)}
\sum_{i=1}^{n-\ell-j}\k^{n-\ell +1-j-i}\xi^{(i)}.
}
$$
Comparing above with the definitions of $\k^{\pr(n+1-\ell }$ for $1\leq\ell\leq n$,
we obtain the following identity; 
\eqn\heady{
\left(\bos{\xi}^{-1}\bos{\k}\bos{\xi} \right)^{(n+1)}
= 
\sum_{\ell=1}^{n}\k^{(n+1-\ell)}\xi^{(\ell)}
-\sum_{\ell=1}^{n}\xi^{(\ell)}\k^{\pr(n+1-\ell)}.
}
After the similar manipulations we also obtain 
\eqn\headya{
\eqalign{
\k^{\pr(n+1-\ell)}
&=\k^{(n+1-\ell)} 
+\sum_{j=1}^{n-\ell}\k^{(n+1-\ell-j)}\xi^{(j)}
-\sum_{j=1}^{n-\ell}\xi^{(j)}\k^{\pr(n+1-\ell-j)}
\quad\hbox{for }1\leq\ell\leq n-1,
\cr
\k^{\pr(1)}&=\k^{(1)}.
}
}
We also note that
\eqn\headz{
\eqalign{
f^{\pr}&=f+ Q s^{(0)}
,\cr
f^{\pr(\ell)}&=f^{(\ell)} 
f\xi^{(\ell)}+\sum_{j=1}^{\ell}f^{(j)}\xi^{(\ell-j)}
+
Q s^{(\ell)} 
+ \sum_{j=1}^{\ell}K^{(j)}s^{(\ell-j)} 
+ \sum_{j=1}^{\ell}s^{(\ell-j)}\k^{\pr(j)},
}
}
for $1\leq \ell \leq n$. 

From the definitions of $g^{\pr(n+1)}$ and $g^{(n+1)}$, we
have
$$
\eqalign{
g^{\pr(n+1)}-g^{(n+1)}
=&
K^{(n+1)}\left(f^{\pr}-f\right)
+\sum_{\ell=1}^{n}K^{(n+1-\ell)}\left(f^{\pr(\ell)}-f^{(\ell)}\right)
\cr
&
-\sum_{\ell=1}^{n} f^{\pr(\ell)}\k^{\pr(n+1-\ell)}
+\sum_{\ell=1}^{n} f^{\pr(n)}\k^{(1)},
}
$$
which leads, after substituting $f^{\pr}$ and $f^{\pr(\ell)}$ using \headz, to 
the following complicated formula;
\eqn\headache{
\eqalign{
g^{\pr(n+1)}-g^{(n+1)}
=&
K^{(n+1)}Q s^{(0)}
+\sum_{\ell=1}^{n}K^{(n+1-\ell)}Q s^{(\ell)}
+\sum_{\ell=1}^{n}\sum_{j=1}^{\ell}K^{(n+1-\ell)}K^{(j)}s^{(\ell-j)}
\cr
&
+ \sum_{\ell=1}^{n}K^{(n+1-\ell)}f\xi^{(\ell)}
+ \sum_{\ell=1}^{n}\sum_{j=1}^{\ell}K^{(n+1-\ell)}f^{(j)}\xi^{(\ell-j)}
\cr
&
-\sum_{\ell=1}^{n} f^{(\ell)} \k^{\pr(n+1-\ell)}
+\sum_{\ell=1}^{n}f^{(\ell)}\k^{(n+1-\ell)}
-\sum_{\ell=1}^{n}\sum_{j=1}^{\ell}f^{(j)}\xi^{(\ell-j)}\k^{\pr(n+1-\ell)}
\cr
&
+\sum_{\ell=1}^{n}\sum_{j=1}^{\ell}K^{(n+1-\ell)}s^{(\ell-j)}\k^{\pr(j)}
- \sum_{\ell=1}^{n}\sum_{j=1}^{\ell}K^{(j)}s^{(\ell-j)} \k^{\pr(n+1-\ell)}
\cr
&
- \sum_{\ell=1}^{n}\sum_{j=1}^{\ell}s^{(\ell-j)}\k^{\pr(j)}\k^{\pr(n+1-\ell)}
\cr
&
-Q \sum_{\ell=1}^{n}s^{(\ell)} \k^{\pr(n+1-\ell)}
-f\sum_{\ell=1}^{n}\xi^{(\ell)}\k^{\pr(n+1-\ell)}.
}
}
%
%
%
We examine the right hand side of the above equality line by line:

\begin{enumerate}
\item The $1$st line: After a re-summation of the last term we have
$$
K^{(n+1)}Q s^{(0)}
+\sum_{\ell=0}^{n-1}
\left( K^{(n+1-\ell)}Q+\sum_{j=1}^{n-\ell}K^{(n+1-\ell-j)}K^{(j)}\right)s^{(\ell)}.
$$
Using the identity $\bos{K}^{2}=0$, which implies that
$$
\eqalign{
K^{(n+1-\ell)}Q+\sum_{j=1}^{n-\ell}K^{(n+1-\ell-j)}K^{(j)}&=-
Q K^{n+1-\ell} \quad\hbox{for }1\leq\ell\leq n-1
,\cr 
K^{(n+1)}Q&= -Q K^{(n+1)},
}
$$
we have
$$
L1=- Q\sum_{\ell=0}^{n} K^{n+1-\ell} s^{(\ell)}.
$$

\item The $2$-nd line:
 After a re-summation of the last term we have
$$
K^{(1)}f\xi^{(n)}
+\sum_{\ell=1}^{n-1}
\left( K^{(n+1-\ell)}f+\sum_{j=1}^{n-\ell}K^{(n+1-\ell-j)}f^{(j)}\right)\xi^{(\ell)}
$$
Using the assumption  $\bos{K}\bos{f}=\bos{f}\bos{\k} \mod \hbar^{n+1}$,
which implies that
$$
K^{(1)}f=- Q f^{(1)}+ f\k^{(1)},
$$
and, for $1\leq \ell\leq n-1$,
$$
K^{(n+1-\ell)}f+\sum_{j=1}^{n-\ell}K^{(n+1-\ell-j)}f^{(j)}
+Q f^{(n+1-\ell)}=f\k^{(n+1-\ell)}+\sum_{j=1}^{n-\ell}f^{(j)}\k^{(n+1-\ell-j)},
$$
we have
$$
\eqalign{
L2=&-Q \sum_{\ell=1}^{n}f^{(n+1-\ell)}\xi^{(\ell)}
+f\sum_{\ell=1}^{n}\k^{(n+1-\ell)}\xi^{\ell}
+\sum_{\ell=1}^{n-1}\sum_{j=1}^{n-\ell}f^{(j)}\k^{(n+1-\ell-j)}\xi^{(\ell)}.
}
$$
\item 
 The $3$rd line:
After a re-summation of the last term we have
$$
-\sum_{\ell=1}^{n} f^{(\ell)} 
\left(\k^{\pr(n+1-\ell)}
-\k^{(n+1-\ell)}\right)
-\sum_{\ell=1}^{n-1}\sum_{j=1}^{n-\ell}f^{(\ell)}\xi^{(j)}\k^{\pr(n+1-\ell-j)},
$$
which can be re-grouped as follows
$$
-f^{(n)}\left(\k^{\pr(1)}-\k^{(1)}\right)
-\sum_{\ell=1}^{n-1} f^{(\ell)} 
\left(
\k^{\pr(n+1-\ell)}
-\k^{(n+1-\ell)}
+\sum_{j=1}^{n-\ell}\xi^{(j)}\k^{\pr(n+1-\ell-j)}
\right).
$$
Now we use  the identities in \headya\ to have
$$
L3= -\sum_{\ell=1}^{n-1}\sum_{j=1}^{n-\ell} f^{(\ell)} \k^{(n+1-\ell-j)}\xi^{(j)}.
$$
Note that $L3$ cancels the last term  of $L2$;
$$
L2+L3=-Q \sum_{\ell=1}^{n}f^{(n+1-\ell)}\xi^{(\ell)}
+f\sum_{\ell=1}^{n}\k^{(n+1-\ell)}\xi^{\ell}.
$$

\item The $4$-th line:
The two terms  cancel each others;
$$
L4=0.
$$

\item 
 The $5$-th line:
After a re-summation we have
$$
-\sum_{\ell=0}^{n-1}s^{(\ell)}\left(\sum_{j=1}^{n-\ell}\k^{n+1-\ell-j}\k^{(j)}\right).
$$
Using the assumption that $\bos{\k}^{2}=0\mod \hbar^{(n+1)}$, which implies
that
$$
\sum_{j=1}^{n-\ell}\k^{n+1-\ell-j}\k^{(j)}=0\quad \hbox{for}\quad 0\leq\ell\leq n-1,
$$
we have
$$
L5=0.
$$
\item
The $6$-th (the last) line:
We do nothing;
$$
L6=
-Q \sum_{\ell=1}^{n}s^{(\ell)} \k^{\pr(n+1-\ell)}
-f\sum_{\ell=1}^{n}\xi^{(\ell)}\k^{\pr(n+1-\ell)}.
$$
\end{enumerate}
Adding everything together, $g^{\pr(n+1)}-g^{(n+1)}=L1+L2+L3+L4+L5+L6$,
we have
$$
\eqalign{
g^{\pr(n+1)}-g^{(n+1)}
=&
- Q
\left(
\sum_{\ell=1}^{n}f^{(n+1-\ell)}\xi^{(\ell)}
+\sum_{\ell=0}^{n}K^{(n+1-\ell)}s^{(\ell)} 
+\sum_{\ell=1}^{n}s^{(n+1-\ell)}\k^{\pr(\ell)}
\right)
\cr
&+ f\left(
\sum_{\ell=1}^{n}\k^{(n+1-\ell)}\xi^{(\ell)}
-\sum_{\ell=1}^{n}\xi^{(\ell)}\k^{\pr(n+1-\ell)}
\right).
}
$$
Finally, after using the identity in \heady, we are done.
\qed
\end{proof}

\subsection{BV QFT}
\def\expb#1{\left<#1\right>}
A BV QFT for us is a BV QFT algebra $\left(\sC[[\hbar]],\bos{K},\,\cdot\,\right)$ with
an additional algebraic notion corresponding to Batalin-Vilkovisky-Feynman path integral. 

Before we jump into making a definition, let's examine
an arbitrary $\;\Bbbk[[\hbar]]$-linear map $\left<\hbox{ }\right>$ of certain ghost number $N$
on $\sC[[\hbar]]$ into $\;\Bbbk[[\hbar]]$, which is a sequence
${\left<\phantom{O}\right>}=
{\left<\phantom{O}\right>}^{(0)}+ \hbar {\left<\phantom{O}\right>}^{(1)}
+\hbar^{2}{\left<\phantom{O}\right>}^{(2)}+\cdots$
of $\;\Bbbk$-linear maps on $\sC$ into $\Bbbk$ and satisfies
 $\left<\bos{K}\bos{\l}\right>=0$
for any $\bos{\l}=\l^{(0)}+\hbar \l^{(1)}+\hbar^{2}\l^{(2)}
+\cdots\in \sC[[\hbar]]$.
Then
the following formal sum vanishes;
$$
\sum_{n=0}^{\infty}\hbar^{n}\sum_{k=0}^{n}
\expb{\left(\bos{K}\bos{\l}\right)^{(n-k)}}^{(k)}
=0,
$$
where $\left(\bos{K}\bos{\l}\right)^{(j)}= Q \l^{(j)}+\sum_{i=1}^{j}K^{(i)}\l^{(j-i)}$.
Thus  the condition $\expb{\bos{K}\bos{\l}}=0$ for any $\bos{\l}\in\sC[[\hbar]]$
is equivalent to the following infinite
sequence of conditions; 
\eqn\qftia{
\sum_{k=0}^{n}
\expb{\left(\bos{K}\bos{\l}\right)^{(n-k)}}^{(k)}=0
\quad\hbox{ for all } n=0,1,2,\cdots, \&\hbox{ for any $\bos{\l}\in\sC[[\hbar]]$}.
}
The first few leading relations, for a demonstration, are
$$
\eqalign{
\expb{Q \l^{(0)}}^{(0)}=0
,\cr
\expb{Q \l^{(1)}+K^{(1)}\l^{(0)}}^{(0)}
+\expb{Q \l^{(0)}}^{(1)}
=0
,\cr
\expb{Q \l^{(2)}+K^{(1)}\l^{(1)}+K^{(2)}\l^{(0)}}^{(0)}
+\expb{Q \l^{(1)}+K^{(1)}\l^{(0)}}^{(1)}
+\expb{Q \l^{(0)}}^{(2)}
=0.
}
$$
The first condition in the above implies  that $\expb{Q x}^{(0)}=0$ for any $x\in \sC$.
It follows that $\expb{Q \l^{(1)}}^{(0)}=\expb{Q \l^{(2)}}^{(0)}=0$ 
since $\l^{(1)},\l^{(2)}\in \sC$.
This property can be used to simplify the remaining  relations as follows;
$$
\eqalign{
\expb{K^{(1)}\l^{(0)}}^{(0)}
+\expb{Q \l^{(0)}}^{(1)}
=0
,\cr
\expb{K^{(1)}\l^{(1)}+K^{(2)}\l^{(0)}}^{(0)}
+\expb{Q \l^{(1)}+K^{(1)}\l^{(0)}}^{(1)}
+\expb{Q \l^{(0)}}^{(2)}
=0.\cr
}
$$
etc.
Now  the  first condition in the above implies that
$\expb{K^{(1)}x}^{(0)}
+\expb{Q x}^{(1)}
=0$ for any $x\in \sC$.
Thus we have a further simplification
$$
\eqalign{
\expb{K^{(2)}\l^{(0)}}^{(0)}
+\expb{K^{(1)}\l^{(0)}}^{(1)}
+\expb{Q \l^{(0)}}^{(2)}
=0,\cr
}
$$
implying that $\expb{K^{(2)}x}^{(0)}
+\expb{K^{(1)}x}^{(1)}
+\expb{Q x}^{(2)}
=0$ for any $x\in \sC$.
This demonstration suggests that the condition that $\expb{\bos{K}\bos{\l}}=0$
for all $\bos{\La}\in \sC[[\hbar]]$ is equivalent to the following
infinite sequence of conditions;
\eqn\qftib{
\eqalign{
\expb{Q x}^{(0)}&=0 \hbox{ for any $x\in\sC$}
,\cr
\expb{Q x}^{(n)} +\sum_{\ell=1}^{n}\expb{K^{(\ell)}x}^{(n-\ell)}&=0
\hbox{ for any $x\in\sC$} \& \hbox{ for all } n=1,2,\cdots.
}
}
Proof is omitted.

Now let's adopt more usual notation such that 
$c^{(\ell)}(y):= \expb{y}^{(\ell)}$, $y\in \sC$, the  condition \qftib\ 
can be written in more suggestive way as follows;
$$
c^{(n)} Q +\sum_{\ell=1}^{n}c^{(n-\ell)} K^{(\ell)}=0
\quad\hbox{ for all } n=0,1,2,\cdots,
$$
which are conditions  for the sequence $c^{(0)},c^{(1)},c^{(2)},\cdots$ 
of $\Bbbk$-linear maps
on $\sC$ into $\Bbbk$. It is obvious that the above conditions can be written
as $\bos{c}\circ \bos{K}=0$ where $\bos{c}= c^{(0)}+\hbar c^{(1)}+\hbar^{2}c^{(2)}+\cdots$.
Then we have a natural notion of homotopy, in the sense that $\bos{c}^{\pr}:= \bos{c} 
+\bos{r} \bos{K}$ for any $\bos{r}=r^{(0)}+\hbar r^{(1)}+\hbar r^{(2)}+\cdots$,
where $r^{(0)},r^{(1)},r^{(2)},\cdots$ is a sequence of $\;\Bbbk$-linear maps of ghost number
$N-1$ on $\sC$ into $\Bbbk$, we automatically have $\bos{c}^{\pr} \bos{K}=0$ 
since $\bos{K}^{2}=0$. Explicitly 
$$
\eqalign{
c^{\pr(0)}-c^{(0)}&= r^{(0)}Q,\cr
c^{\pr(n)}-c^{(n)}&= r^{(n)}Q +\sum_{\ell=1}^{n}r^{(n-\ell)} K^{(\ell)}\hbox{ for all $n=1,2,\cdots$}.
}
$$
We, then, say $\bos{c}$ and $\bos{c}^{\pr}$ are ''quantum homotopic'' and denote
$\bos{c}\,\bos{\sim}\,\bos{c}^{\pr}$. We shall denote the ``quantum homotopy type (class)''
of $\bos{c}$ by $\{\bos{c}\}$. 
\begin{remark}
Variations of $\bos{c}$ within the same quantum homotopy type is a realization of continuous 
deformations or homologous deformations of Lagrangian subspace $\mL$ (gauge choice)
in the BV quantization scheme.
\end{remark}

\begin{definition}
An unital  BV QFT is a 
BV QFT algebra $(\sC[[\hbar]], \bos{K}, \hbox{ }\cdot\hbox{ })$ 
with a sequence $c^{(0)},c^{(1)}, c^{(2)},\cdots$ of $\;\Bbbk$-linear maps
of ghost number zero
on $\sC$ into $\;\Bbbk$ such that $\bos{c}:= c^{(0)}+\hbar c^{(1)}+\hbar^{2}c^{(2)}+
\cdots$ satisfies $\bos{c}\,\bos{K}=0$, $\bos{c}(1)=1$ and defined up to ``quantum homotopy'';
$$
\bos{c}\,\bos{\sim}\,\bos{c}^{\pr}=\bos{c} + \bos{r}\circ\bos{K},
$$
for some  $\bos{r}= r^{(0)}+\hbar r^{(1)}+\hbar^{2}r^{(2)}+ \cdots$,
where $r^{(0)},r^{(1)}, r^{(2)},\cdots$ is a sequence of $\;\Bbbk$-linear maps
$r^{(\ell)}:\sC\longrightarrow \Bbbk$ with ghost number $-1$.
\end{definition}
Note that the ghost number of $\Bbbk$ (and $\Bbbk[[\hbar]]$) is concentrated to
zero. So the sequence $c^{(0)},c^{(1)}, c^{(2)},\cdots$ of $\Bbbk$-linear maps
should be zero maps on $\sC^{n}$ for $n\neq 0$.

A BV QFT does not need to be restricted to be unital.  Let's
assume the $\bos{c}(1)\neq 1$. Consider the case that $\bos{c}(1)\neq 0$. Then we can
simply divide everything by $\bos{c}(1)$ provided that $c^{(0)}(1)\neq 0$
to get an unital theory. The case $\bos{c}(1)=0$
is simply uninteresting. Alternatively we can consider the case that the ghost number
of $\bos{c}$ is non-zero. 
\begin{definition}
A BV QFT  with ghost number anomaly $N\in \bos{Z}$, $N\neq 0$, is a 
BV QFT algebra $(\sC[[\hbar]], \bos{K}, \hbox{ }\cdot\hbox{ })$ 
with a sequence $c^{(0)},c^{(1)}, c^{(2)},\cdots$ of $\Bbbk$-linear maps
of ghost number $-N$
on $\sC$ into $\Bbbk$ such that $\bos{c}:= c^{(0)}+\hbar c^{(1)}+\hbar^{2}c^{(2)}+
\cdots$ satisfies $\bos{c}\,\bos{K}=0$ and is defined up to ``quantum homotopy'';
$$
\bos{c}\,\bos{\sim}\,\bos{c}^{\pr}=\bos{c} + \bos{r}\bos{K},
$$
for some  $\bos{r}= r^{(0)}+\hbar r^{(1)}+\hbar^{2}r^{(2)}+ \cdots$,
where $r^{(0)},r^{(1)}, r^{(2)},\cdots$ is a sequence of $\,\Bbbk$-linear maps
$r^{(\ell)}:\sC\longrightarrow \Bbbk$ with ghost number $-N-1$.
\end{definition}
Note again that the ghost number of $\Bbbk$ (and $\Bbbk[[\hbar]]$) is concentrated to
zero. So the sequence $c^{(0)},c^{(1)}, c^{(2)},\cdots$ of $\Bbbk$-linear maps
should be zero maps on $\sC^{n}$ for $n\neq N$.

\begin{remark}
The terminology  'ghost number anomaly' has the following origin. Consider
a path integral $``\int_{\mL}d\!\m '' e^{-\bos{S}/\hbar}$
in the BV quantization scheme. The BV quantum master action $\bos{S}$ has ghost number
zero, while the path integral measure $d\!\m$ may have non-zero ghost number. The later
is usually due to the possible  zero-modes of anti-commuting classical fields, which modes 
 do not contribute to $\bos{S}\bigl|_{\mL}$ but contribute to 
 the path integral measure $d\!\m$.
The net violation of ghost number in $d\!\m$ due to those zero-modes is called 
the ghost number anomaly, which is closely related with the index theory.
Assume that the theory has ghost number anomaly $N$. Then, by the properties of
Berezin integral of anti-commuting field, $\int d\!\th 1=0$ and $\int d\!\th \th=1$,
the path integrals always vanish unless one insert
suitable observable with ghost number $N$. Ghost number anomaly is an important 
feature of Witten's topological field theory, see \cite{W}. 
The ghost number anomaly should not depend on continuous or homologous deformations
of $\mL$, but may depend on ``homology'' type of $\mL$.
\end{remark}

We may also accommodate the various possible cases with different ghost number anomalies
into a single definition by replacing $\Bbbk$ with a $\bos{Z}$-graded free 
$\Bbbk$-module $V=\sum_{j\in \bos{Z}}V^{j}$, where $V^{j}\simeq \Bbbk$ but with
ghost number $j$.

\begin{definition}
A BV QHT is a 
BV QFT algebra $(\sC[[\hbar]], \bos{K}, \hbox{ }\cdot\hbox{ })$ 
with a sequence $\bos{c}:= c^{(0)}+\hbar c^{(1)}+\hbar^{2}c^{(2)}+
\cdots$ of $\Bbbk$-linear maps, parametrized by $\hbar$,
of ghost number $0$
on $\sC$ into a $\bos{Z}$-graded free $\Bbbk$-module $V=\sum_{j\in \bos{Z}}V^{j}$
such that $\bos{c}\,\bos{K}=0$ and $\bos{c}$ is defined up to ``quantum homotopy'';
$$
\bos{c}\,\bos{\sim}\,\bos{c}^{\pr}=\bos{c} + \bos{r}\bos{K},
$$
for some  $\bos{r}= r^{(0)}+\hbar r^{(1)}+\hbar^{2}r^{(2)}+ \cdots$,
where $r^{(0)},r^{(1)}, r^{(2)},\cdots$ is a sequence of $\,\Bbbk$-linear maps
$r^{(\ell)}:\sC\longrightarrow V$ with ghost number $-1$.
\end{definition}
In the above definition it is understood that there is a
sequence $c_{j}^{(0)},c_{j}^{(1)}, c_{j}^{(2)},\cdots$ of $\;\Bbbk$-linear maps on
$\sC^{j}$ into $V^{j}\simeq \Bbbk$ for each $j$.

Now we are ready to define expectation values of  observables. 
We recall that our first theorem
give a canonical sequence $f^{(0)}, f^{(1)}, f^{(2)},\ldots $ of $\Bbbk$-linear maps of ghost
number $0$
on $H$, the space of equivalence classes of classical observables, to $\sC$ such
that $\bos{f}:=f^{(0)} +\hbar f^{(1)}+\hbar^{2} f^{(2)}+\cdots$ satisfies
$\bos{K}\,\bos{f}= \bos{f}\,\bos{\k}$ and is defined up to ``quantum homotopy'';
$$
\bos{f}\,\bos{\sim}\,\bos{f}^{\pr}= \bos{f}+\bos{K}\,\bos{s} + \bos{s}\,\bos{\k},
$$
for any sequence   $\bos{s}= s^{(0)}+\hbar s^{(1)}+\hbar^{2}s^{(2)}+ \cdots$,
$\Bbbk$-linear maps of ghost number $-1$, parametrized by $\hbar$, on $H$ to $\sC$.
We can compose the map $\bos{f}$, regarded as a $\Bbbk[[\hbar]]$-linear map
on $H[[\hbar]]=H\otimes_{\Bbbk}\Bbbk[[\hbar]]$ into $\sC[[\hbar]]$, with
the map $\bos{c}:= c^{(0)}+\hbar c^{(1)}+\hbar^{2}c^{(2)}+
\cdots$, regarded as a $\Bbbk[[\hbar]]$-linear map
on $\sC[[\hbar]]$ into $\Bbbk[[\hbar]]$ (or into $V[[\hbar]]$, to obtain a
$\Bbbk[[\hbar]]$-linear map 
$\bos{\iota}:=\bos{c}\bos{f}=\iota^{(0)}+\hbar \iota^{(1)}+\hbar^{2}\iota^{(2)}+\cdots$ 
on $H[[\hbar]]$ into $\Bbbk[[\hbar]]$ (or into $V[[\hbar]]$, 
such that 
$$
\iota^{(n)}=\sum_{\ell=0}^{n}c^{(n-\ell)} f^{(\ell)}, \quad n=0,1,2,\cdots.
$$

Note that the ambiguity of $\bos{\iota}$ due to the ambiguities of 
$\bos{f}\,\bos{\sim}\,\bos{f}^{\pr}=\bos{f} +\bos{K}\,\bos{s} + \bos{s}\,\bos{\k}$ 
and $\bos{c}\,\bos{\sim}\,\bos{c}^{\pr}=\bos{c} +\bos{r}\,\bos{K}$
is
$$
\eqalign{
\bos{\iota}^{\pr}-\bos{\iota}
&\equiv \bos{c}^{\pr}\bos{f}^{\pr} -\bos{c}\,\bos{f}
\cr
&=\bos{c}\left(\bos{K}\,\bos{s} + \bos{s}\,\bos{\k}\right)
+\bos{r}\,\bos{K}\left(\bos{f} +\bos{K}\,\bos{s} + \bos{s}\,\bos{\k}\right)
\cr
&=\bos{c}\,\bos{s}\,\bos{\k} + \bos{r}\,\bos{K}\,\bos{f}+ \bos{r}\,\bos{K}\,\bos{s}\,\bos{\k}
\cr
&=
 \left(\bos{c}\,\bos{s}+\bos{r}\,\bos{f} +\bos{r}\,\bos{K}\,\bos{s}\right)\bos{\k},
}
$$
where we used $\bos{c}\,\bos{K}=0$ and $\bos{K}^{2}=0$ for the second equality
and $\bos{K}\,\bos{f}=\bos{f}\,\bos{\k}$ for the third equality.
An automorphism $\bos{g}$ on $\sC[[\hbar]]$ sends $\bos{f}$ to
$\bos{g}\,\bos{f}$ and $\bos{c}$ to $\bos{c}\,\bos{g}^{-1}$,
since $\bos{f}$ and $\bos{c}$ are $\Bbbk[[\hbar]]$-linear maps
to $\sC[[\hbar]]$ and from $\sC[[\hbar]]$, respectively.
It follows that the composition $\bos{\iota}=\bos{c}\bos{f}$ is invariant
under the automorphism of BV QFT algebra.
We also recall that a classical observable $O$ is extendable to 
a quantum observable $\bos{O}$
if and only if $\bos{\k}\left(\left[O\right]\right)=0$. 
If follows that
$\bos{\iota}\left(\left[O\right]\right)=\bos{\iota}^{\pr}\left(\left[O\right]\right)$.

By the way, it is the cohomology class of classical observable 
that is observable to a classical observer.  Also there is no genuine classical observable so
that every classical observation must be classical approximation of quantum observation.
So we shall omit the decorations ``classical'' and ``quantum'' and define observables
and their expectation values;

%
%
\begin{definition}[Theorem]
An observable $o$ is an element of the cohomology $H$ of the complex $(\sC,Q)$
satisfying $\k^{(n)}\left(o\right)=0$ for all $n=1,2,3,\cdots$, i.e., $\bos{\k}(o)=0$. 
The quantum expectation value of
an observable $o$ is 
$$
\bos{\iota}
\left(o\right)
=\bos{c}\left(\bos{f}(o)\right)
=\sum_{n=0}^{\infty}\hbar^{n}\sum_{\ell=0}^{n}c^{(n-\ell)}\left(f^{(\ell)}\left(o\right)\right),
$$
which  is a ``quantum'' homotopy invariant as well as invariant under the automorphism of BV
QFT algebra.
\end{definition}
Recall the $\bos{f}$ maps the identity $1\in H$ to the identity $1\in \sC\subset\sC[[\hbar]]$.
Thus, for an unital BV QFT, we have $\bos{\iota}(1)=1$ since $\bos{c}(1)=1$.
\begin{definition}
Let $o$ be an observable. Then we call $\bos{f}(o)\in \sC[[\hbar]]$ a
quantum representative of the observable $o$, or the quantum representative of
observable $o$ with respect to the quantum extension map $\bos{f}$.
Similarly we call $f(o)\in \sC$ a classical representative of the observable $o$.
\end{definition}

\begin{remark}
We say an element in $H$ which is not annihilated by $\bos{\k}$ an invisible.
An important question is that why invisibles exist and what is the
meaning of their existence? We will not discuss this issue here, but the answer shall
be that the invisibles are responsible to  the fundamental  quantum symmetry and any non-Abelian
classical gauge symmetry is its avatar. 
\end{remark}

%
%

\begin{remark}
From now on we shall use the time-honored symbol $\expb{\phantom{O}}$ instead of $\bos{c}$
for a BV QFT, where it is understood that a ``quantum'' homotopy type of $\bos{c}$ is fixed,
such that $\iota(a) =\left<\bos{f}(a)\right>$ for $a\in H$.
\end{remark}


\newsec{Quantum master equation, quantum coordinates on moduli space and
an exact solution of BV QFT}

This is the beginning of the second part of this paper on an exact solution of generating
functional of quantum correlations functions of a BV QFT.
We shall assume that $\bos{\k}=0$
identically on $H$ such that every element of $H$ is observable. We shall also
assume that $H$ is finite dimensional for each ghost number.

From the assumption that $\bos{\k}=0$ and theorem \ref{Propos1}, we have a sequence 
$\bos{f}=f +\hbar f^{(1)}+\cdots$ of 
$\Bbbk$-linear maps on $H$ into $\sC$
of ghost number zero
such that $\bos{K} \bos{f}=0$, which classical limit  
$f=\bos{f}\bigr|_{\hbar=0}$ is 
a quasi-isomorphism of complexes $f:(H, 0)\longrightarrow (\sC, Q)$,
which induces the identity map on $H$. 
From the condition $\bos{K}1=0$, thus $Q1=0$, in the definition of BV QFT algebra,
there is a distinguished element $e \in H^0$ corresponding to the cohomology class
$[1]$ of the unit $1$ in $(\sC,\;\cdot\;)$.  On $H$ there is also 
an unique binary product $m_2:H\otimes H \longrightarrow H$ of ghost number $0$
induced from the product in the CDGA $(\sC, Q,\;\cdot\;)$;
let $a,b \in H$ then  $m_2(a,b):=\left[f(a)\cdot f(b)\right]$ 
which is an homotopy invariant since $Q$ is a derivation of the product $\cdot$,
and $m_2(e, b)=m_2(b,e)=b$, such that $(H,0, m_2)$ is a CDGA with unit $e$ with zero differential.
The product $m_2$ is super-commutative $m_2(a,b)= (-1)^{|a||b|}m_2(b,a)$ since
the product $\cdot$ is super-commutative.
It is natural to fix $f$ and $\bos{f}$ such that $f(e)=1$ and $\bos{f}(e)=1$.

It is convenient to choose a homogeneous basis $\{e_\a\}$ of $H$ such that one of its component,
say $e_0$, is the distinguished element $e$. Let $t_H=\{t^\a\}$ be the dual basis (basis of $H^*$) 
such that $|t^\a| + |e_\a|=0$, which is 
a coordinates system on $H$ with a distinguished coordinate $t^0$.
We denote $n$-th  symmetric product of the graded vector space $H^*$
by $S^n(H^*)$;
$$
S^n(H^*)=(H^*)^{\otimes n}\Big/a\otimes b -(-1)^{|a||b|}b\otimes a,
$$
and consider the following natural increasing filtration
\eqn\filt{
S^{(0)}(H^*)\subset S^{(1)} (H^*)\subset \cdots\subset S^{(k-1)} (H^*) \subset S^{(k)} (H^*)\subset 
}
where
$$
S^{(k)}(H^*) =\bigoplus_{j=0}^k S^j(H^*).
$$
Let
$$
S(H^*)= \lim_{n\rightarrow \infty} S^{(k)}(H^*),
$$
which is a super-commutative and associative filtered algebra over $\Bbbk$ isomorphic
to $\Bbbk[[t_H]]$.

The product $m_2$ on $H$ is a bilinear map on $S^2(H)$ into $H$ of ghost number $0$,
since it is super-commutative.
The product $m_2$ is specified
by structure constants $\{m_{\a\b}{}^\g\}$ such that 
$$
\eqalign{
m_2(e_\a, e_\b)= m_{\a\b}{}^\g e_\g,
\qquad
m_2(e_0, e_\b)=m_2(e_\b, e_0)= e_\b,
}
$$
where we are using the Einstein summation conventions that a repeated upper 
and lower index is summed over. Note that $m_{\b 0}{}^\g= m_{0\b}{}^\g =\d_{\b}{}^\g$ 
(the Kronecker delta).
The binary map $m_2: S^2(H)\rightarrow H$ is dualize to $m^*_2 :H^*\rightarrow S^2(H^*)$
and is extended uniquely to a $\Bbbk$-linear map $m_2^\sharp :S(H^*)\rightarrow S(H^*)$
as a derivation of ghost number zero. Explicitly
$m_2^* t^\g = \Fr{1}{2}t^\b t^\a m_{\a\b}{}^\g$ and
$$
m_2^\sharp= \Fr{1}{2}t^\b t^\a m_{\a\b}{}^\g \rd_\g
$$
where the derivative symbol
$\rd_\g=\Fr{\rd}{\rd t^\g}$ means the extension of $m_2^*$ as a derivation. 
Then we have
$$
\left[\rd_0, m_2^\sharp\right]= t^\a \rd_\a,
$$
since $m_{\b 0}{}^\g=m_{0\b}{}^\g = \d_{\b}{}^\g$.
Any $\Bbbk$-multilinear map $m_n: S^n(H)\longrightarrow H$ of
ghost number zero is similarly dualized to a $\Bbbk$-linear map $m^*_n: H^* \longrightarrow
S^n(H^*)$ of ghost number zero, which can be extended uniquely  to
a $\Bbbk$-linear map $m^\sharp_n: S(H^*) \longrightarrow
S(H^*)$  as a derivation of ghost number zero - let
$m_n(e_{\a_1}, \cdots, e_{\a_n})= m_{\a_1\cdots \a_n}{}^\g e_{\g}$,
then 
$m^*_n t^\g =\Fr{1}{n!}t^{\a_n}\cdots t^{\a_1}m_{\a_1\cdots \a_n}{}^\g$ and
$m^\sharp_n =\Fr{1}{n!}t^{\a_n}\cdots t^{\a_1}m_{\a_1\cdots \a_n}{}^\g\rd_\g$.
We shall often use the single notation $m_n$ for  $m_n,m^*_n,m_n^\sharp$.

Now the triple $\big(\Bbbk[[t_H]]\otimes\sC[[\hbar]], \bos{K}, \;\cdot\;\big)$ is a BV QFT
algebra, where $\bos{K}$ and $\cdot$ are the shorthand notations 
for $1\otimes \bos{K}$ and 
$
(a\otimes \bos{x})\cdot (b\otimes \bos{y})= (-1)^{|\bos{x}||{b}|}ab\otimes \bos{x}\cdot\bos{y}
$
for $a,b \in \Bbbk[[t_H]]$ and $\bos{x},\bos{y}\in \sC[[\hbar]]$, respectively.
We denote its descendant algebra by 
$\big(\Bbbk[[t_H]]\otimes\sC[[\hbar]], \bos{K}, (\hbox{ },\hbox{ })\big)$,
where 
$
\big( a\otimes \bos{x}, b\otimes \bos{y}\big)
= (-1)^{(|\bos{x}|+1)|{b}|}ab\otimes \big(\bos{x},\bos{y}\big)$. 
The operator $m_n^\sharp$ acts on $\Bbbk[[t_H]]\otimes\sC[[\hbar]]$
as a derivation, $m_n^\sharp\otimes 1$, increasing the word length of $t_H$ by $n-1$.
We shall omit the tensor product
symbol whenever possible. 

Let $O_\a= f(e_\a)$ and $\bos{O}_\a = \bos{f}(e_\a)$.
Then $\{O_\a\}$ is  a set of representative of the basis $\{e_a\}$ of
$H$ such that $Q O_\a=0$ and $[O_\a]=e_\a$.
The set $\{\bos{O}_\a\}$ is 
then a fixed  quantization of the generating set $\{O_\a\}$ of classical observables
such that $\bos{K}\bos{O}_\a=0$ and $\bos{O}_\a\big|_{\hbar=0}=O_\a$.
Finally we let $\bos{\Theta}_1= t^\a\bos{O}_\a$. It follows that
$$
\bos{K}\bos{\Theta}_1=0,\qquad \rd_0\bos{\Theta}_1=1.
$$
Now the following theorem contains
the complete information of quantum correlation functions;
\begin{theorem}
On $H$
there is a sequence $m_2, m_3, m_4, \cdots$ of multilinear products 
$m_n:S^n H\rightarrow H$ of ghost number $0$ such that
$m_2(e_0, e_\a)= e_\a$ and $m_n(e_0,e_{\a_2},\cdots, e_{\a_n})=0$ for all $n=3,4,5,\cdots$.
And, 
there is a family of BV QFTs specified by
$$
\bos{\Theta}=\bos{\Theta}_1 + \bos{\Theta}_2+\bos{\Theta}_3+\cdots
\in \big(\Bbbk[[t_H]]\otimes \sC[[\hbar]]\big)^0,
$$
where $\;\bos{\Theta}_n =\Fr{1}{n!}t^{\a_n}\cdots t^{\a_1} \bos{O}_{\a_1\cdots\a_n}
 \in \big(S^n(H^*)\otimes \sC[[\hbar]]\big)^0$,
satisfying

\begin{enumerate}
\item quantum master equation:
$$
\eqalign{
0=&\bos{K}\bos{\Theta}_1
,\cr
\hbar \bos{\Theta}_2  =&\Fr{1}{2}\bos{\Theta}_1\cdot \bos{\Theta}_1- m^\sharp_2 \bos{\Theta}_1 - \bos{K} \La_2
,\cr
\hbar \bos{\Theta}_3  =&
\Fr{2}{3}\bos{\Theta}_1\cdot \bos{\Theta}_2
-\Fr{1}{3}m_2^\sharp\bos{\Theta}_2
-\Fr{1}{3}\big(\bos{\Theta}_1,\La_2\big) 
-m_3^\sharp\bos{\Theta}_1
-\bos{K}\La_3
,\cr
\vdots\;&
\cr
\hbar\bos{\Theta}_{n} =& 
\sum_{k=1}^{n-1}\Fr{ k(n-k)}{n(n-1)} \bos{\Theta}_{k}\cdot  \bos{\Theta}_{n-k}
-\sum_{k=2}^{n-1}\Fr{k(k-1)}{n(n-1)}\left(  m^\sharp_{k} \bos{\Theta}_{n-k+1}
+ \big(\bos{\Theta}_{n-k},\La_{k}\big)\right)
\cr
&
- m^\sharp_n \bos{\Theta}_1-\bos{K}\La_n
,\cr
\vdots\;&
}
$$
for some $\La_n \in \big(\Bbbk[[t_H]]\otimes
\sC\big)^{-1}$ defined  modulo $\Ker \bos{K}$

\medskip
\item quantum identity: $\rd_0 \bos{\Theta}=1$.
\item quantum descendant equation
$$
\bos{K}\bos{\Theta} +\Fr{1}{2}\big(\bos{\Theta},\bos{\Theta}\big)=0,
$$
as  a consequence of quantum master equation.
\end{enumerate}
\end{theorem}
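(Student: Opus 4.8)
\emph{Strategy.} The plan is to treat the quantum master equation not as a system to be solved but as a recursive \emph{definition} of the products $m_n$, the primitives $\La_n$, and the components $\bos{\Theta}_n$, organized by word-length in $t_H$, and then to read off items~2 and~3 as consequences. The inductive hypothesis at level $n$ is that $\bos{\Theta}_1,\dots,\bos{\Theta}_{n-1}$, the maps $m_2,\dots,m_{n-1}$, and the primitives $\La_2,\dots,\La_{n-1}$ have been produced so that each lower-order line of the quantum master equation holds and, decisively, so that the descendant equation holds at each lower word-length, $\bos{K}\bos{\Theta}_k+\Fr{1}{2}\sum_{j=1}^{k-1}(\bos{\Theta}_j,\bos{\Theta}_{k-j})=0$ for $k<n$. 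The base case is immediate: setting $\bos{\Theta}_1:=t^\a\bos{O}_\a=t^\a\bos{f}(e_\a)$ gives $\bos{K}\bos{\Theta}_1=0$, since the standing hypothesis $\bos{\k}=0$ guarantees that every $e_\a$ lifts to a quantum observable $\bos{O}_\a$. Its classical limit is $Q\Theta_1=0$, hence $\Theta_1\cdot\Theta_1\in\Ker Q$ because $Q$ is a derivation of the product; since $f$ is a quasi-isomorphism inducing the identity on $H$, every $Q$-closed element equals $f$ of its cohomology class up to a $Q$-exact term, which is precisely the relation $\Fr{1}{2}\Theta_1\cdot\Theta_1=m_2^\sharp\Theta_1+Q\La_2$. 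This \emph{defines} $m_2$ uniquely and $\La_2$ modulo $\Ker Q$; the combination $\Fr{1}{2}\bos{\Theta}_1\cdot\bos{\Theta}_1-m_2^\sharp\bos{\Theta}_1-\bos{K}\La_2$ then has vanishing classical limit, so is divisible by $\hbar$, and we take this as the definition of $\bos{\Theta}_2$.

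\emph{Inductive step.} Given the data up to level $n-1$, I would first assemble the right-hand side
$$\bos{R}_n:=\sum_{k=1}^{n-1}\Fr{k(n-k)}{n(n-1)}\bos{\Theta}_k\cdot\bos{\Theta}_{n-k}-\sum_{k=2}^{n-1}\Fr{k(k-1)}{n(n-1)}\left(m_k^\sharp\bos{\Theta}_{n-k+1}+(\bos{\Theta}_{n-k},\La_k)\right),$$
which lives in word-length exactly $n$. The crux is to show its classical limit $R_n$ is $Q$-closed; granting this, the quasi-isomorphism property of $f$ lets me write $R_n=m_n^\sharp\Theta_1+Q\La_n$, which \emph{defines} $m_n$ as a cohomology class (hence uniquely, with $m_n(e_0,\dots)$ read off to meet the asserted normalizations) and $\La_n$ modulo $\Ker Q$. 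Setting $\hbar\bos{\Theta}_n:=\bos{R}_n-m_n^\sharp\bos{\Theta}_1-\bos{K}\La_n$ is then legitimate, since the classical limit of the right side vanishes by construction.

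\emph{Closing the induction and item~3.} To advance the hypothesis I would apply $\bos{K}$ to the definition of $\hbar\bos{\Theta}_n$, using $\bos{K}\bos{\Theta}_1=0$, $\bos{K}^2=0$, the commutation of $\bos{K}$ with each $m_k^\sharp$ (which act on disjoint tensor factors), the defining identity~\djo\ relating $\bos{K}$, the product and the bracket, the graded Poisson law and Jacobi identity of the descendant DGLA, and the lower-order descendant equations. The claim is that all of this collapses to $\bos{K}\bos{\Theta}_n+\Fr{1}{2}\sum_{k=1}^{n-1}(\bos{\Theta}_k,\bos{\Theta}_{n-k})=0$, which both completes the induction and, summed over $n$, yields item~3, equivalently $\hbar^2\bos{K}\,e^{-\bos{\Theta}/\hbar}=0$.

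\emph{Main obstacle and the quantum identity.} The genuine work is the bookkeeping behind ``collapses'': verifying $QR_n=0$ and that $\bos{K}\bos{R}_n$ reassembles into a single half-sum of brackets modulo $\bos{K}$-exact and $m_n^\sharp$-image terms. Both rest on the specific weights $\Fr{k(n-k)}{n(n-1)}$ and $\Fr{k(k-1)}{n(n-1)}$, chosen precisely so that the quadratic $(\bos{\Theta}_j,\bos{\Theta}_{n-j})$ contributions telescope---the BV-algebraic shadow of the homotopy-transfer relations carried by the minimal $L_\infty$/$C_\infty$ data $\{m_n\}$. I expect this to demand careful re-summation of double sums of the type already handled in the proofs of Claims~(1) and~(2), together with repeated use of the graded Jacobi identity to reduce nested brackets. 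Finally, item~2 ($\rd_0\bos{\Theta}=1$) I would prove by a simultaneous induction: $\rd_0\bos{\Theta}_1=\bos{O}_0=\bos{f}(e)=1$, while for $n\ge2$, applying $\rd_0$ to the $n$-th master equation and using $m_2(e_0,e_\a)=e_\a$, the vanishing $m_n(e_0,\dots)=0$ for $n\ge3$, and $[\rd_0,m_2^\sharp]=t^\a\rd_\a$ forces $\rd_0\bos{\Theta}_n=0$; since the vanishing of $m_n(e_0,\dots)$ is itself extracted from $R_n$ under $\rd_0$, the identity and the construction are best established in one combined induction.
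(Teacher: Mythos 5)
Your proposal is correct and takes essentially the same route as the paper's own proof: the paper constructs exactly your inductive system organized by word-length in $t_H$ (its triplets $\bos{P}(n)$), your $\bos{R}_n$ is its $\mathbb{M}_n$, and your ``collapse'' claim is its key Proposition~4.1, which establishes $\bos{K}\mathbb{M}_n=-\Fr{\hbar}{2}\sum_{k=1}^{n-1}\left(\bos{\Theta}_k,\bos{\Theta}_{n-k}\right)$ and $\rd_0\mathbb{M}_n=0$ by precisely the re-summations, Leibniz/Jacobi manipulations, and the derivation property of $m_k^\sharp$ that you anticipate, so that $Q R_n=0$ follows in the classical limit and the descendant equation closes the induction. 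Your combined induction for the quantum identity via $\rd_0$, including extracting $m_n(e_0,\cdots)=0$ from $\rd_0 R_n=0$, likewise matches the paper's treatment.
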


\begin{remark}
The conditions 
$m_2(e_0, e_\a)= e_\a$ and $m_n(e_0,e_{\a_2},\cdots, e_{\a_n})=0$ for $n\geq 3$
are equivalent to 
$$
\left[\rd_0, m_2^\sharp\right]= t^\a\rd_\a, \qquad
\left[\rd_0, m_n^\sharp\right]=0 \hbox{ for } n\geq 3.
$$
\end{remark}

In section $4.1.$ an idea of  proof will be presented for a pedagogical reason
before an actual proof in section $4.2$. In section $4.3$
we shall derived the algebra of quantum correlation functions.
 In section $4.4$ we shall discuss some
corollaries of our theorem comparing our notion of quantum coordinates with
the flat coordinates on moduli spaces topological strings.

\subsection{Idea of Proof}

The quantum master equation to be consistent its classical limit should make sense
as well. The classical limit of quantum master equation modulo $t_H^{n}$, $n\geq 3$, is
$Q \Theta_1=0$,
and 
\eqn\cuta{
\CM_k= m_k^\sharp \Theta_1 + \bos{K}\La_k
}
for $k=2,3,\cdots, n-1$,
where $\CM_k \in \big(S^k(H^*)\otimes \sC\big)^0$ is given by
$$
\eqalign{
\CM_{k}
:=&\Fr{1}{k(k-1) }\sum_{\ell=1}^{k-1} \ell(k-\ell) {\Theta}_{\ell}\cdot  {\Theta}_{k-\ell}
-\Fr{1}{k(k-1) }\sum_{\ell=2}^{k-1}\ell(\ell-1)
\left( {\Theta}_{k-\ell},\La_{\ell}\right)
\cr
&
-\Fr{1}{k(k-1) }\sum_{\ell=2}^{k-1}(k-\ell+1)(k-\ell) m^\sharp_{k-\ell+1} {\Theta}_{\ell}.
}
$$
Then $\CM_k$ should belong to $\Ker Q$ to make sense of the equation \cuta.
We may decompose $\CM_{k}$ as
$$
\CM_k =\Fr{1}{k!}t^{\a_k}\cdots t^{\a_1} M_{\a_1\cdots \a_k}
$$
such that $M_{\a_1\cdots \a_k} \in \sC^{|e_{\a_1}|+\cdots +|e_{\a_k}|}$.
Thus $M_{\a_1\cdots\a_k}$ must satisfy
$Q M_{\a_1\cdots\a_k}=0$. Then the expression $M_{\a_1\cdots\a_k}$
can be written as 
\eqn\cutb{
M_{\a_1\cdots\a_k}= m_{\a_1 \cdots \a_k}{}^\g O_\g +Q \l_{\a_1\cdots \a_k}
}
for uniquely defined set of constants $\{m_{\a_1\cdots\a_k}{}^\g\}$ and for some $\l_{\a_1\cdots\a_k}
\in \sC^{|e_{\a_1}|+\cdots +|e_{\a_k}|-1}$ defined modulo $\Ker Q$.
Once we make the following identifications
$$
\eqalign{
m_k^\sharp &= \Fr{1}{k!}t^{\a_k}\cdots t^{\a_1} m_{\a_1\cdots \a_k}{}^\g \rd_\g
,\cr
\La_k& =\Fr{1}{k!}t^{\bar\a_k}\cdots t^{\bar\a_1} \l_{\a_1\cdots \a_k},
}
$$
where
$t^{\bar \a} = (-1)^{|e_\a|} t^\a$, the equations \cuta\ and \cutb\ are equivalent.  

Set $n=3$, to begin with, we have $\CM_2=\Fr{1}{2}\Theta_1\cdot \Theta_1 \in \Ker Q$.
 Thus $\CM_2=m_2^\sharp \Theta_1 + Q \La_2$. We define
$\bos{\Theta}_2\in\big(S^2(H^*)\otimes \sC[[\hbar]]\big)^0$ by the formula 
$$
\bos{\Theta}_2:=\Fr{1}{\hbar}\left( \Fr{1}{2}\bos{\Theta}_1\cdot \bos{\Theta}_1 
- m_2^\sharp \bos{\Theta}_1 -\bos{K} \La_2\right)
$$
and show that $\rd_0 \bos{\Theta}_2=0$.
 Then we take the classical limit
$\Theta_2$ of $\bos{\Theta}_2$ and show that 
$\CM_3 =\Fr{2}{3}\Theta_1\cdot \Theta_2 -\Fr{1}{3}m_2^\sharp \Theta_2 -\Fr{1}{3}\big(\Theta_1,\La_2\big)$ 
satisfies $Q\CM_3=0$, so that
$\CM_3$ can be expressed as $\CM_3=m_3^\sharp \Theta_1 + Q \La_3$. 
Thus we are defining
$m_3^\sharp$ and $\La_3$ to proceed one step further, after showing that 
$\left[\rd_0,m_3^\sharp\right]=\rd_0 \La_3=0$, to 
define
$\bos{\Theta}_3\in\big(S^3(H^*)\otimes \sC[[\hbar]]\big)^0$ by the formula 
$$
\bos{\Theta}_3:=\Fr{1}{\hbar}\left(\Fr{2}{3}\bos{\Theta}_1\cdot \bos{\Theta}_2
 -\Fr{1}{3}m_2^\sharp \bos{\Theta}_2 -\Fr{1}{3}\big(\bos{\Theta}_1,\La_2\big)\right)
$$
and prove that $\rd_0 \bos{\Theta}_3=0$, et cetera, ad infinitum.

We are going to build an inductive system
$$
\bos{P}(1)\subset\bos{P}(2) \subset \bos{P}(3)\subset \cdots\subset \bos{P}(n-1)\subset \bos{P}(n),
$$
with respect to the filtration \filt\ such that
$\bos{P}(k)$ for each $2\leq k\leq n-1$
is a triplet;
$$
\bos{P}(k+1)=\left\{
\eqalign{
\bos{\Theta}^{(k+1)}&=\bos{\Theta}^{(k)} +\bos{\Theta}_{k+1} \in \big(S^{(k+1)}(H^*)\otimes \sC[[\hbar]]\big)^0
,\cr
m^{(k+1)}&=m^{(k)} + m_{k+1} :H^*\rightarrow S^{(k+1)}(H^*)
,\cr
\La^{(k+1)}&=\La^{(k)}+\La_{k+1} \in \big(S^{(k+1)}(H^*)\otimes \sC[[\hbar]]\big)^{-1}
}
\right.
$$
satisfying quantum master equation on $S^{(k+1)}(H^*)\otimes \sC[[\hbar]]$, i.e.\ modulo $t_H^{k+2}$,
with the initial conditions that $\bos{\Theta}^{(1)}=\bos{\Theta}_1$ and $m^{(1)}=\La^{(1)}=0$.
Then we send $n\rightarrow \infty$.

\subsubsection{$\bos{P}(1)\subset\bos{P}(2)$.}
Set 
$\bos{P}(1)=\left\{\bos{\Theta}^{(1)}=\bos{\Theta}_1, 0, 0\right\}$ so
that
$\bos{K}\bos{\Theta}_1=0$ and $\rd_0 \bos{\Theta}_1=1$.
 Let 
$\Theta_1=t^\a O_\a\in \big(H^*\otimes \sC\big)^0$ denote the classical limit of 
$\bos{\Theta}_1$. Then $Q\Theta_1=0$ and $\rd_0 {\Theta}_1=1$. 
Consider the expression
$\bos{\Theta}_1\cdot \bos{\Theta}_1 \in \big(S^2 (H^*)\otimes \sC\big)[[\hbar]]^0$
built from $\bos{P}(1)$,
which satisfies
\eqn\devtwo{
\bos{K}\left(\bos{\Theta}_1\cdot \bos{\Theta}_1\right) = -\hbar \big(\bos{\Theta}_1,\bos{\Theta}_1\big).
}
since $\bos{K}\bos{\Theta}_1=0$. Thus the classical
limit $\Theta_1\cdot \Theta_1$ of $\bos{\Theta}_1\cdot \bos{\Theta}_1$
belongs to $\Ker Q\cap  \big(S^2(H^*)\otimes \sC\big)^0$. It follows that
\eqn\mevtwo{
\Fr{1}{2}{\Theta}_1\cdot {\Theta}_1= m_2 \Theta_1 + Q \La_2,
}
for uniquely defined map $m_2: H^*\longrightarrow S^2(H^*)$ of ghost number $0$
and some $\La_2 \in \big(S^2(H^*)\otimes \sC\big)^{-1}$ defined modulo $\Ker Q$.
We note that the ghost number $0$ map $m_2: H^*\longrightarrow S^2(H^*)$ has
an unique extension to $m_2^\sharp :S(H^*)\rightarrow S(H^*)$ as a derivation.

From $1\cdot  {\Theta}_1=\Theta_1$, we deduce that  $\left[\rd_0, m_2^\sharp\right]= t^\a\rd_\a$ and
$\rd_0 \La_2=0$. Fix such a $\La_2$.
It follows that the expression
$$
\Fr{1}{2}\bos{\Theta}_1\cdot \bos{\Theta}_1- m_2^\sharp \bos{\Theta}_1 - \bos{K} \La_2
$$
is divisible by $\hbar$ and does not depend on $t^0$, i.e.,
$$
\rd_0\left(
\Fr{1}{2}\bos{\Theta}_1\cdot \bos{\Theta}_1- m_2 \bos{\Theta}_1 - \bos{K} \La_2\right)
=1\cdot \bos{\Theta}_1- t^\a\rd_\a \bos{\Theta}_1 =0,
$$
where we have used that $t^\a\rd_\a \bos{\Theta}_1=\bos{\Theta}_1$ since $\bos{\Theta}_1$
is a degree $1$ homogeneous polynomial of  $t_H$. 
Thus we can 
define $\bos{\Theta}_2 \in \big(S^2 (H^*)\otimes \sC\big)[[\hbar]]^0$
by the formula
\eqn\levtwo{
\hbar \bos{\Theta}_2 =\Fr{1}{2}\bos{\Theta}_1\cdot \bos{\Theta}_1- m_2^\sharp \bos{\Theta}_1 - \bos{K} \La_2,
}
which does not depend on $t^0$, i.e., $\rd_0\bos{\Theta}_2=0$.
Applying $\bos{K}$ to the above we obtain that
$\hbar\bos{K} \bos{\Theta}_2 =-\Fr{\hbar}{2}\big(\bos{\Theta}_1, \bos{\Theta}_1\big)$,
using the properties \devtwo, $\bos{K}\bos{\Theta}_1=0$ and $\bos{K}^2=0$. 
Thus we conclude that
\eqn\dstwo{
\bos{K}\bos{\Theta}_2 +\Fr{1}{2}\big(\bos{\Theta}_1, \bos{\Theta}_1\big)=0.
}

We record all the previous data
by the triplet $\bos{P}(2)$;
$$
\bos{P}(2)=
\left\{
\eqalign{
\bos{\Theta}^{(2)}&:=\bos{\Theta}_{1}+\bos{\Theta}_{2}
\in \left(S^{(2)}(H^*)\otimes \sC[[\hbar]]\right)^0
,\cr
m^{(2)\sharp}&:= m_2^\sharp: S(H^*)\longrightarrow S(H^*)
,\cr
\La^{(2)}&:=\La_2 
\in \left(S^{(2)}(H^*)\otimes \sC\right)^{-1},
}\right.
$$
satisfying 
$\rd_0 \bos{\Theta}^{(2)}=\rd_0\bos{\Theta}_1=1$, $[\rd_0, m^{(2)}]=[\rd_0, m_2]=t^\a\rd_\a$,
$\rd_0 \La_2=0$,
and the quantum master equation modulo $t_H^3$;
$$
\eqalign{
0&=\bos{K}\bos{\Theta}_1
,\cr
\hbar \bos{\Theta}_2  &=\Fr{1}{2}\bos{\Theta}_1\cdot \bos{\Theta}_1
- m_2^\sharp \bos{\Theta}_1 - \bos{K} \La_2
,\cr
}
$$
which implies the quantum descendant equation modulo $t_H^3$
$$
\eqalign{
\bos{K}\bos{\Theta}_1=0
,\cr
\bos{K}\bos{\Theta}_2 +\Fr{1}{2}\big(\bos{\Theta}_1, \bos{\Theta}_1\big)=0.
}
$$

\begin{corollary}
The bracket $(\hbox{ },\hbox{ })$ vanishes on the $Q$-cohomology $H$.
\end{corollary}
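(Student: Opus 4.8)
The plan is to extract the statement directly from the classical limit of the second-order quantum descendant relation that we have just established in building $\bos{P}(2)$. First I would recall that in the classical limit $\big(\sC, Q, (\bullet,\bullet)\big)$ of the descendant DGLA the differential $Q$ is a graded derivation of the bracket (this is part of the differential $0$-algebra structure of the classical limit). Consequently $(\Ker Q,\Ker Q)\subset \Ker Q$ and $(\Ker Q,\Im Q)\subset \Im Q$, so the bracket descends to a well-defined bilinear operation on $H$, namely $\big([\alpha],[\beta]\big):=\big[(\alpha,\beta)\big]$ for any $Q$-closed representatives $\alpha,\beta$. Thus the assertion ``$(\bullet,\bullet)$ vanishes on $H$'' is \emph{equivalent} to saying that $(\alpha,\beta)$ is $Q$-exact whenever $Q\alpha=Q\beta=0$.

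Next I would pass to the classical limit $\hbar\to 0$ in the relation
$\bos{K}\bos{\Theta}_2+\Fr{1}{2}\big(\bos{\Theta}_1,\bos{\Theta}_1\big)=0$
recorded in $\bos{P}(2)$, which gives
$$
Q\Theta_2+\Fr{1}{2}\big(\Theta_1,\Theta_1\big)=0,
$$
where $\Theta_1=t^\a O_\a$ and $\Theta_2$ are the classical limits of $\bos{\Theta}_1,\bos{\Theta}_2$. This exhibits $\big(\Theta_1,\Theta_1\big)$ as a $Q$-exact element of $\big(S^2(H^*)\otimes\sC\big)^0$.

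Then I would expand $\big(\Theta_1,\Theta_1\big)$ using the tensored bracket convention $\big(a\otimes\bos{x},b\otimes\bos{y}\big)=(-1)^{(|\bos{x}|+1)|b|}ab\otimes(\bos{x},\bos{y})$, so that $\big(\Theta_1,\Theta_1\big)$ is a $\Bbbk$-linear combination of the degree-two monomials $t^\a t^\b$ with coefficients proportional to the classical brackets $(O_\a,O_\b)$. Each $(O_\a,O_\b)$ is already $Q$-closed, since $QO_\a=QO_\b=0$ and $Q$ derives the bracket. Passing to $Q$-cohomology monomial by monomial in $S^2(H^*)$ — taking the cohomology class commutes with multiplication by the coordinate variables $t^\a$ — the established $Q$-exactness of $\big(\Theta_1,\Theta_1\big)$ forces every class $\big[(O_\a,O_\b)\big]$ to vanish. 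Because $\{e_\a=[O_\a]\}$ is a basis of $H$ and the induced bracket is bilinear, this yields $(e_\a,e_\b)=0$ for all $\a,\b$, i.e. the bracket vanishes identically on $H$.

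The only delicate point is the bookkeeping of the graded signs and the legitimacy of matching coefficients of the symmetric monomials $t^\a t^\b$; I expect this to be the main, though entirely routine, obstacle. It is handled exactly as in the identification of $m_2^\sharp$ and $\La_2$ carried out above, where writing $\Theta_2=\Fr{1}{2}t^{\a_2}t^{\a_1}O_{\a_1\a_2}$ and comparing term by term with $-\Fr{1}{2}\big(\Theta_1,\Theta_1\big)$ reads off the relation $QO_{\a_1\a_2}=-(O_{\a_1},O_{\a_2})$ up to the appropriate sign, so that the $Q$-exactness — and hence the vanishing on $H$ — is manifest.
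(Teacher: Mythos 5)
Your proof is correct and takes essentially the same route as the paper: you pass to the classical limit of the second-order descendant relation $\bos{K}\bos{\Theta}_2+\Fr{1}{2}\big(\bos{\Theta}_1,\bos{\Theta}_1\big)=0$ to exhibit $\big(\Theta_1,\Theta_1\big)=-2Q\Theta_2$ as $Q$-exact, and then conclude from $\{e_\a=[O_\a]\}$ being a basis of $H$ that the induced bracket vanishes. Your additional remarks — that $Q$ derives the bracket so it descends to $H$, and the explicit coefficient-matching in the monomials $t^\a t^\b$ — only spell out what the paper's two-line proof leaves implicit.
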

\begin{proof}
Taking the classical limit of \dstwo, we obtain that
$$
\left(\Theta_1,\Theta_1\right)= -2Q \Theta_2
$$
where $\Theta_1 = t^a O_\a=t^\a f(e_\a)$. 
This imply the corollary since $\{e_\a=[O_\a]\}$ form of a basis of $H$.
\qed
\end{proof}

\subsubsection{$\bos{P}(2)\subset\bos{P}(3)$.}
The following explicit description may be redundant  but is presented here to demonstrate
the method of proof.

Let $\mathbb{M}_3 \in \big(S^3(H^*)\otimes\sC\big)[[\hbar]]^0$ be the following
expression
$$
\mathbb{M}_3
=\Fr{1}{3}\bos{\Theta}_1\cdot \bos{\Theta}_2+\Fr{1}{3}\bos{\Theta}_2\cdot \bos{\Theta}_1
-\Fr{1}{3}m_2^\sharp\bos{\Theta}_2
-\Fr{1}{3}\big(\bos{\Theta}_1,\La_2\big) 
,
$$
which is defined from data in the system $\bos{P}(2)$. 

\begin{proposition}\label{devthree}
We have
$\bos{K}\mathbb{M}_3
=-\hbar\left(\bos{\Theta}_1, \bos{\Theta}_2\right)$
and $\rd_0 \mathbb{M}_3=0$.
\end{proposition}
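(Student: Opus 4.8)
The plan is to prove both assertions by a direct expansion of $\bos{K}\mathbb{M}_3$ term by term, using the structural identities of the descendant algebra (Corollary on $\big(\sC[[\hbar]],\bos{K},(\bullet,\bullet)\big)$ being a DGLA whose bracket is a derivation of $\cdot$) and then substituting the relations already recorded in the system $\bos{P}(2)$. First I would apply $\bos{K}$ to the two product terms $\bos{\Theta}_1\cdot\bos{\Theta}_2$ and $\bos{\Theta}_2\cdot\bos{\Theta}_1$ via the defining relation of the descendant bracket, namely that the failure of $\bos{K}$ to be a derivation of $\cdot$ is $-\hbar(-1)^{|\bos{a}|}(\bos{a},\bos{b})$. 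Since $\bos{K}\bos{\Theta}_1=0$ and $|\bos{\Theta}_1|=|\bos{\Theta}_2|=0$, each product term contributes a piece $\bos{\Theta}_1\cdot\bos{K}\bos{\Theta}_2$ (the two coinciding after super-commutativity, as $\bos{K}\bos{\Theta}_2$ has ghost number $1$) together with a summand $-\hbar(\bos{\Theta}_1,\bos{\Theta}_2)$; graded commutativity of the bracket also identifies $(\bos{\Theta}_2,\bos{\Theta}_1)$ with $(\bos{\Theta}_1,\bos{\Theta}_2)$. For the remaining two terms I would use that $m_2^\sharp=m_2^\sharp\otimes1$ commutes with $\bos{K}=1\otimes\bos{K}$, so $\bos{K}(m_2^\sharp\bos{\Theta}_2)=m_2^\sharp\bos{K}\bos{\Theta}_2$, and that $\bos{K}$ is a graded derivation of the bracket, giving $\bos{K}(\bos{\Theta}_1,\La_2)=-(\bos{\Theta}_1,\bos{K}\La_2)$.

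Next I would eliminate $\bos{K}\bos{\Theta}_2$ and $\bos{K}\La_2$ in favour of the lower data from $\bos{P}(2)$. The quantum descendant equation \dstwo\ gives $\bos{K}\bos{\Theta}_2=-\tfrac12(\bos{\Theta}_1,\bos{\Theta}_1)$, while the defining formula \levtwo\ for $\bos{\Theta}_2$ gives $\bos{K}\La_2=\tfrac12\bos{\Theta}_1\cdot\bos{\Theta}_1-m_2^\sharp\bos{\Theta}_1-\hbar\bos{\Theta}_2$. Substituting these, and expanding $(\bos{\Theta}_1,\tfrac12\bos{\Theta}_1\cdot\bos{\Theta}_1-m_2^\sharp\bos{\Theta}_1)$ using the graded Poisson law ($(\bullet,\bullet)$ a derivation of $\cdot$) and the fact that $m_2^\sharp$ is a derivation of the bracket, I expect the two potentially obstructing terms $\bos{\Theta}_1\cdot(\bos{\Theta}_1,\bos{\Theta}_1)$ and $m_2^\sharp(\bos{\Theta}_1,\bos{\Theta}_1)$ to appear with canceling coefficients. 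Concretely, the product terms yield $-\tfrac13\bos{\Theta}_1\cdot(\bos{\Theta}_1,\bos{\Theta}_1)-\tfrac23\hbar(\bos{\Theta}_1,\bos{\Theta}_2)$, the $m_2^\sharp$ term yields $+\tfrac16 m_2^\sharp(\bos{\Theta}_1,\bos{\Theta}_1)$, and the bracket term yields $+\tfrac13\bos{\Theta}_1\cdot(\bos{\Theta}_1,\bos{\Theta}_1)-\tfrac16 m_2^\sharp(\bos{\Theta}_1,\bos{\Theta}_1)-\tfrac13\hbar(\bos{\Theta}_1,\bos{\Theta}_2)$. The $\bos{\Theta}_1\cdot(\bos{\Theta}_1,\bos{\Theta}_1)$ and $m_2^\sharp(\bos{\Theta}_1,\bos{\Theta}_1)$ contributions cancel exactly — this is precisely what the coefficients $\tfrac13,\tfrac23$ in $\mathbb{M}_3$ are arranged to do — leaving $\bos{K}\mathbb{M}_3=-\hbar(\bos{\Theta}_1,\bos{\Theta}_2)$.

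For the identity $\rd_0\mathbb{M}_3=0$ I would use that $\rd_0$ is an even derivation of both $\cdot$ and the bracket that commutes with $\bos{K}$, together with the recorded relations $\rd_0\bos{\Theta}_1=1$, $\rd_0\bos{\Theta}_2=0$, $\rd_0\La_2=0$ and $[\rd_0,m_2^\sharp]=t^\a\rd_\a$. Each product term contributes $\bos{\Theta}_2$; the term $\rd_0(m_2^\sharp\bos{\Theta}_2)$ becomes $m_2^\sharp(\rd_0\bos{\Theta}_2)+t^\a\rd_\a\bos{\Theta}_2=t^\a\rd_\a\bos{\Theta}_2=2\bos{\Theta}_2$ by the commutator relation and the fact that $\bos{\Theta}_2$ is homogeneous of word length $2$ in $t_H$ (Euler relation); and $\rd_0(\bos{\Theta}_1,\La_2)$ vanishes because $\rd_0\bos{\Theta}_1=\bos{O}_0=1$ is the unit and $(1,\bullet)=0$, itself a consequence of $\bos{K}1=0$ applied to the defining relation of the bracket. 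Collecting, $\tfrac13\bos{\Theta}_2+\tfrac13\bos{\Theta}_2-\tfrac13\cdot2\bos{\Theta}_2=0$.

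The main obstacle is not conceptual but one of disciplined bookkeeping: one must correctly play off the \emph{non}-derivation property of $\bos{K}$ with respect to $\cdot$ against the derivation properties of the bracket and of $m_2^\sharp$, and keep every sign and rational coefficient straight so that the product and $m_2^\sharp$ terms cancel cleanly. The all-even ghost numbers here ($|\bos{\Theta}_1|=|\bos{\Theta}_2|=0$) keep the signs mild; but this is the very same cancellation pattern, now with general degrees and the full sums over $k$, that will have to be controlled in the inductive step $\bos{P}(n-1)\subset\bos{P}(n)$, so it is worth recording the sign conventions carefully at this stage.
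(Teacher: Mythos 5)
Your proposal is correct and takes essentially the same route as the paper's proof: a direct expansion of $\bos{K}\mathbb{M}_3$, substitution of $\bos{K}\bos{\Theta}_2=-\Fr{1}{2}\left(\bos{\Theta}_1,\bos{\Theta}_1\right)$ and of $\bos{K}\La_2$ from the defining formula for $\bos{\Theta}_2$, with the graded Poisson law and the derivation property of $m_2^\sharp$ yielding exactly the cancellations (and the leftover $-\hbar\left(\bos{\Theta}_1,\bos{\Theta}_2\right)$) that you describe. Your verification of $\rd_0\mathbb{M}_3=0$ also matches the paper's, down to the use of $\left(1,\La_2\right)=0$, the commutator $[\rd_0,m_2^\sharp]=t^\a\rd_\a$, and the Euler relation $t^\a\rd_\a\bos{\Theta}_2=2\bos{\Theta}_2$.
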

 
\begin{proof}
Note that $\bos{\Theta}_1\cdot \bos{\Theta}_2
=\bos{\Theta}_2\cdot \bos{\Theta}_1$.
We have, by a direct computation,
$$
\bos{K}\mathbb{M}_3
=-\Fr{2\hbar}{3}\left(\bos{\Theta}_1, \bos{\Theta}_2\right)
+\Fr{2}{3}\bos{\Theta}_1\cdot \bos{K}\bos{\Theta}_2
-\Fr{1}{3}m_2^\sharp\bos{K}\bos{\Theta}_2
+\Fr{1}{3}\big(\bos{\Theta}_1,\bos{K}\La_2\big).
$$
Using $\bos{K}\bos{\Theta}_2=-\Fr{1}{2}\big(\bos{\Theta}_1,\bos{\Theta}_1\big)$, we obtain that
$$
\bos{K}\mathbb{M}_3
=-\hbar\left(\bos{\Theta}_1, \bos{\Theta}_2\right)
+\Fr{1}{6}\left(\bos{\Theta}_1,\bos{\Theta}_1\cdot \bos{\Theta}_1\right)
-\Fr{1}{3}\bos{\Theta}_1\cdot\left(\bos{\Theta}_1, \bos{\Theta}_1\right)
+\Fr{1}{6}m_2^\sharp\left(\bos{\Theta}_1, \bos{\Theta}_1\right)
-\Fr{1}{3}\left(\bos{\Theta}_1, m_2^\sharp\bos{\Theta}_1\right).
$$
Thus, after the Leibniz law of the bracket $\big(\hbox{ },\hbox{ }\big)$
and the fact that $m_2^\sharp$ is a derivation of the bracket, we have the
first claim. For the second claim, we obtain that
$$
\rd_0\mathbb{M}_3= \Fr{2}{3}\bos{\Theta}_2 - \Fr{1}{3}t^\a\rd_\a \bos{\Theta}_2=0,
$$
where we have used the properties that
$\rd_0 \bos{\Theta}_1 =1$, $\rd_0 \bos{\Theta}_2 =0$, $\rd_0 \La_2=0$, $\big(1, \La_2\big)=0$, 
$\left[\rd_0, m_2^\sharp\right]=t^\a\rd_\a$
and $t^\a\rd_\a \bos{\Theta}_2=2\bos{\Theta}_2$, since $\bos{\Theta}_2$ is a degree $2$
homogeneous polynomial of $t_H$.
 \qed
\end{proof}

Thus the classical
limit $\CM_3$ of $\bos{M}_3$
belongs to $\Ker Q\cap  \big(S^3(H^*)\otimes \sC\big)^0$ and $\rd_0 \CM_3=0$,
 where
$$
{\CM}_3
=\Fr{1}{3}{\Theta}_1\cdot {\Theta}_2+\Fr{1}{3}{\Theta}_2\cdot {\Theta}_1
-\Fr{1}{3}m_2^\sharp{\Theta}_2
-\Fr{1}{3}\big({\Theta}_1,\La_2\big). 
$$
%
 It follows that
\eqn\mevthree{
\CM_3= m_3^\sharp \Theta_1 + Q \La_3,
}
for uniquely defined map $m_3^\sharp: S(H^*)\rightarrow S(H^*)$ of ghost number $0$
and some $\La_3 \in \big(S^3(H^*)\otimes \sC\big)^{-1}$ defined modulo $\Ker Q$
such that $[\rd_0, m_3^\sharp]=0$ and $\rd_0 \La_3=0$. Fix a $\La_3$.
Then,  the expression
$
\mathbb{M}_3- m_3^\sharp \bos{\Theta}_1 - \bos{K} \La_2
$
must be divisible by $\hbar$ and independent to $t^0$. Thus we can 
define $\bos{\Theta}_3 \in \big(S^3 (H^*)\otimes \sC\big)[[\hbar]]^0$
by the formula
\eqn\levthree{
\hbar \bos{\Theta}_3 =\mathbb{M}_3- m_3^\sharp \bos{\Theta}_1 - \bos{K} \La_3,
}
which does not depend on $t^0$, i.e., $\rd_0 \bos{\Theta}_3=0$.
Applying $\bos{K}$ to the above we have
$\hbar\bos{K} \bos{\Theta}_2 =\bos{K}\mathbb{N}_3$.
Then, from proposition \ref{devthree},
we conclude that
\eqn\dsthree{
\bos{K}\bos{\Theta}_3 +\big(\bos{\Theta}_1, \bos{\Theta}_2\big)=0.
}
Thus
we have
the system $\bos{P}(3)$;
$$
\bos{P}(3)=
\left\{
\eqalign{
\bos{\Theta}^{(3)}&:=\bos{\Theta}^{(2)}+\bos{\Theta}_{3}
=
\bos{\Theta}_{1}+\bos{\Theta}_{2}+\bos{\Theta}_{3}
\in \left(S^{(3)}(H^*)\otimes \sC\right)[[\hbar]]^0
,\cr
m^{(3)\sharp}&:= m^{(2)\sharp}+m_3^\sharp=m_2^\sharp+m_3^\sharp: S(H^*)\longrightarrow S(H^*)
,\cr
\La^{(3)}&:=\La^{(2)}+\La_3=\La_2 +\La_3
\in \left(S^{(3)}(H^*)\otimes \sC\right)^{-1}\mod \Ker Q,
}\right.
$$
satisfying $\rd_0 \bos{\Theta}^{(3)}=\rd_0\bos{\Theta}_1=1$, 
$[\rd_0, m^{(3)\sharp}]=[\rd_0, m_2^\sharp]=t^\a\rd_\a$, $\rd_0 \La^{(3)}=0$,
and
the quantum master equation modulo $t_H^4$;
$$
\eqalign{
0&=\bos{K}\bos{\Theta}_1
,\cr
\hbar \bos{\Theta}_2  &
=\Fr{1}{2}\bos{\Theta}_1\cdot \bos{\Theta}_1- m_2^\sharp \bos{\Theta}_1 - \bos{K} \La_2
,\cr
\hbar \bos{\Theta}_3  &=
\Fr{2}{3}\bos{\Theta}_1\cdot \bos{\Theta}_2
-m_3^\sharp\bos{\Theta}_1
-\Fr{1}{3}m_2^\sharp\bos{\Theta}_2
-\bos{K}\La_3
 -\Fr{1}{3}\big(\bos{\Theta}_1,\La_2\big) 
}
$$
which implies quantum descendant equation modulo $t_H^4$
$$
\eqalign{
\bos{K}\bos{\Theta}_1=0
,\cr
\bos{K}\bos{\Theta}_2 +\Fr{1}{2}\big(\bos{\Theta}_1, \bos{\Theta}_1\big)=0
,\cr
\bos{K}\bos{\Theta}_3 +\big(\bos{\Theta}_1, \bos{\Theta}_2\big)=0
.
}
$$

\subsection{Proof}

\subsubsection{$\bos{P}(n-1)$.}

Fix $n > 3$,  assume that we have the following  inductive system
\eqn\sysi{
\bos{P}(1)\subset \bos{P}(2) \subset \bos{P}(3)\subset \cdots\subset \bos{P}(n-1),
}
where, for each $1\leq j \leq n-1$,  
$
\bos{P}(j)=\left\{
\bos{\Theta}^{(j)},
m^{(j)},
\La^{(j)}
\right\}
$
is a system with
$$
\left\{
\eqalign{
\bos{\Theta}^{(j)}&:=\bos{\Theta}_{1}+\bos{\Theta}_{2}+\cdots+\bos{\Theta}_{j}
\in \left(S^{(j)}(H^*)\otimes \sC\right)[[\hbar]]^0
,\cr
m^{(j)\sharp}&:= m_2^\sharp+m_3^\sharp+\cdots+m_j^\sharp: H^*\longrightarrow S^{(j)}(H^*)
,\cr
\La^{(j)}&:=\La_2 +\La_3+\cdots +\La_j
\in \left(S^{(j)}(H^*)\otimes \sC\right)^{-1},
}\right.
$$
satisfying $\rd_0 \bos{\Theta}^{(j)}=1$, $[\rd_0, m^{(j)}]=t^\a\rd_\a$, $\rd_0\La^{(j)}=0$,
and the
quantum master and descendant equations modulo $t_H^{j+1}$;
\eqn\eea{
\left\{
\eqalign{
0&=\bos{K}\bos{\Theta}_1
,\cr
\hbar \bos{\Theta}_{2}&=\mathbb{M}_2 - m_2^\sharp \bos{\Theta}_1 -\bos{K}\La_2
,\cr
&\;\vdots
\cr
\hbar \bos{\Theta}_{j}&=\mathbb{M}_j - m_j^\sharp \bos{\Theta}_1 -\bos{K}\La_j,
}
\right.
\qquad
\left\{
\eqalign{
\bos{K}\bos{\Theta}_1&=0
,\cr
\bos{K}\bos{\Theta}_2 +\Fr{1}{2}\left( \bos{\Theta}_{1}, \bos{\Theta}_{1}\right)&=0
,\cr
&\;\vdots
\cr
\bos{K}  \bos{\Theta}_{j}
 +\Fr{1}{2}  \sum_{\ell=1}^{j-1}\left( \bos{\Theta}_{\ell}, \bos{\Theta}_{j-\ell}\right)&=0,
}\right.
}
where, for $k=2,\cdots, j$,
$$
\eqalign{
\mathbb{M}_{k}
:=&\Fr{1}{k(k-1) }\sum_{\ell=1}^{k-1} \ell(k-\ell) \bos{\Theta}_{\ell}\cdot   \bos{\Theta}_{k-\ell}
-\Fr{1}{k(k-1) }\sum_{\ell=2}^{k-1}\ell(\ell-1)
\left( \bos{\Theta}_{k-\ell},\La_{\ell}\right)
\cr
&
-\Fr{1}{k(k-1) }\sum_{\ell=2}^{k-1}(k-\ell+1)(k-\ell) m_{k-\ell+1}^\sharp \bos{\Theta}_{\ell}.
}
$$
(Note that the expression
$\mathbb{M}_{k}\in \left(S^k(H^*)\otimes \sC[[\hbar]]\right)^0$ 
is built from  data of
$\bos{P}({k-1})$). 

\subsubsection{$\bos{P}(n-1)\subset \bos{P}(n)$.}

We are going to extend the system \sysi\ to
$$
\bos{P}(1)\subset \bos{P}(2) \subset \bos{P}(3)\subset \cdots\subset \bos{P}(n-1) \subset \bos{P}(n)
$$
such that $\bos{\Theta}_n =\bos{\Theta}^{(n)}-\bos{\Theta}^{(n-1)}$
and $m_n^\sharp =m^{(n)\sharp} -m^{(n-1)\sharp}$ are defined uniquely, while 
$\La_n =\La^{(n)}-\La^{(n-1)}$ is defined modulo $\Ker \bos{K}$.
Then we take $n\rightarrow \infty$ limit.

We shall need the following technical proposition:
\begin{proposition}\label{tosca}
The expression $\mathbb{M}_{n}\in \left(S^n(H^*)\otimes \sC[[\hbar]]\right)^0$;
$$
\eqalign{
\mathbb{M}_{n} := 
&
\Fr{1}{n(n-1)}\sum_{k=1}^{n-1} k(n-k) \bos{\Theta}_{k}\cdot  \bos{\Theta}_{n-k}
-\Fr{1}{n(n-1)}\sum_{k=2}^{n-1}(n-k+1)(n-k) m_{n-k+1}^\sharp \bos{\Theta}_{k}
\cr
&
-\Fr{1}{n(n-1)}\sum_{k=2}^{n-1}k(k-1)
\left( \bos{\Theta}_{n-k},\La_{k}\right),
}
$$
which is defined in terms of $\bos{P}(n-1)$,
satisfies
$$
\bos{K}\mathbb{M}_{n} = - \Fr{\hbar}{2} \sum_{k=1}^{n-1}\left( \bos{\Theta}_{k}, \bos{\Theta}_{n-k}\right).
$$
and
$\rd_0 \mathbb{M}_{n} =0$.
\end{proposition}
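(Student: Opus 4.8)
The proposition splits into two independent assertions—a $\bos{K}$-identity and a $\rd_0$-identity—and I would establish them separately, following the template already displayed for $n=3$ in Proposition~\ref{devthree} but keeping the combinatorial weights general. For the $\bos{K}$-identity I would first apply $\bos{K}$ termwise to the three sums defining $\mathbb{M}_n$ and expand each piece by the three derivation rules available in a BV QFT algebra and its descendant. On each product $\bos{\Theta}_k\cdot\bos{\Theta}_{n-k}$ the failure of $\bos{K}$ to be a derivation gives $\bos{K}(\bos{\Theta}_k\cdot\bos{\Theta}_{n-k})=\bos{K}\bos{\Theta}_k\cdot\bos{\Theta}_{n-k}+\bos{\Theta}_k\cdot\bos{K}\bos{\Theta}_{n-k}-\hbar(\bos{\Theta}_k,\bos{\Theta}_{n-k})$, since every $\bos{\Theta}_k$ has ghost number $0$; on each $m_{n-k+1}^\sharp\bos{\Theta}_k$ the operator $m^\sharp$ commutes with $\bos{K}$, because $m^\sharp$ acts only on the $\Bbbk[[t_H]]$ factor while $\bos{K}=1\otimes\bos{K}$; and on each bracket $(\bos{\Theta}_{n-k},\La_k)$ I would use that $\bos{K}$ is a graded derivation of $(\bullet,\bullet)$ (the corollary that $\bos{K}$ is a derivation of the descendant bracket). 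This produces the anomalous terms $-\hbar(\bos{\Theta}_k,\bos{\Theta}_{n-k})$ together with a collection of terms linear in $\bos{K}\bos{\Theta}_m$ and in $\bos{K}\La_k$.

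Next I would eliminate those $\bos{K}$-derivatives using the inductive data recorded in $\bos{P}(n-1)$. Every $\bos{K}\bos{\Theta}_m$ with $m\le n-1$ is replaced by the descendant equation $\bos{K}\bos{\Theta}_m=-\Fr{1}{2}\sum_{\ell=1}^{m-1}(\bos{\Theta}_\ell,\bos{\Theta}_{m-\ell})$, and each $\bos{K}\La_k$ is replaced, via the level-$k$ quantum master equation $\hbar\bos{\Theta}_k=\mathbb{M}_k-m_k^\sharp\bos{\Theta}_1-\bos{K}\La_k$, by $\bos{K}\La_k=\mathbb{M}_k-m_k^\sharp\bos{\Theta}_1-\hbar\bos{\Theta}_k$. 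After substitution the whole expression is a weighted sum of three shapes: products of a bracket with a $\bos{\Theta}$, the operator $m^\sharp$ applied to a bracket, and nested brackets coming from $(\bos{K}\bos{\Theta}_{n-k},\La_k)$ and from the bracket terms hidden inside $(\bos{\Theta}_{n-k},\mathbb{M}_k)$.

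The endgame is to show that all non-anomalous terms cancel while the $\hbar$-terms reassemble into $-\Fr{\hbar}{2}\sum_{k=1}^{n-1}(\bos{\Theta}_k,\bos{\Theta}_{n-k})$. Collecting the $-\hbar(\bos{\Theta}_k,\bos{\Theta}_{n-k})$ from the products with the $\hbar\bos{\Theta}_k$ pieces arriving through $\bos{K}\La_k$, and using the coefficient identities built from $\Fr{k(n-k)}{n(n-1)}$ and $\Fr{k(k-1)}{n(n-1)}$ after re-indexing, should leave exactly the target anomalous sum. The cancellation of the product-type and $m^\sharp$-type terms proceeds as in the $n=3$ case, using the graded Poisson law so that a bracket against a product splits and using $m^\sharp$ being a derivation of the bracket. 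The genuinely new feature, absent at $n=3$, is the nested brackets: here I would invoke the graded Jacobi identity to reorganize $((\bos{\Theta}_\ell,\bos{\Theta}_j),\La_k)$ against $(\bos{\Theta}_{n-k},(\bos{\Theta}_\ell,\La_j))$, and the recursive shape of $\mathbb{M}_k$ guarantees that the lower-level master equations supply precisely the terms needed to annihilate them.

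For the $\rd_0$-identity I would apply the derivation $\rd_0$, which acts on the $t_H$ factor and hence commutes with both $\bos{K}$ and the bracket while obeying the Leibniz rule for $\cdot$. Feeding in $\rd_0\bos{\Theta}_1=1$, $\rd_0\bos{\Theta}_k=0$ for $k\ge2$, $\rd_0\La_k=0$, the commutators $[\rd_0,m_2^\sharp]=t^\a\rd_\a$ and $[\rd_0,m_k^\sharp]=0$ for $k\ge3$, and the Euler homogeneity $t^\a\rd_\a\bos{\Theta}_k=k\bos{\Theta}_k$, the three sums telescope term-by-term to zero; this part is routine bookkeeping. I expect the real difficulty to lie in the cancellation of the previous paragraph—matching the numerical weights after re-indexing the nested sums and carrying out the Jacobi reorganization of the nested brackets. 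The $n=3$ computation is the degenerate case in which no nested brackets arise, so it does not itself reveal where the Jacobi identity enters; isolating that mechanism for general $n$ is the crux.
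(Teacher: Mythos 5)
Your proposal is correct and follows essentially the same route as the paper's own proof: termwise application of $\bos{K}$ via the three derivation rules, elimination of $\bos{K}\bos{\Theta}_m$ and $\bos{K}\La_k$ through the descendant and quantum master equations stored in $\bos{P}(n-1)$, cancellation of the residual terms by exactly the paper's three mechanisms (graded Poisson law, graded Jacobi identity, and $m_k^\sharp$ being a derivation of the bracket), and the identical $\rd_0$ bookkeeping using $\rd_0\bos{\Theta}_1=1$, $[\rd_0,m_2^\sharp]=t^\a\rd_\a$, and Euler homogeneity. You even locate the crux where the paper does, namely the Jacobi reorganization of the nested brackets $\left(\left(\bos{\Theta}_\ell,\bos{\Theta}_{k-\ell}\right),\La_{n-k}\right)$ against $\left(\bos{\Theta}_{n-k},\left(\bos{\Theta}_{k-\ell},\La_\ell\right)\right)$, which is precisely the paper's second cancellation claim.
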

We shall postpone proving the above proposition and examine its consequences first.
An immediate consequence is that
the classical limit 
$\CM_{n}\in \left(S^n(H^*)\otimes \sC\right)^0$ of
$\mathbb{M}_{n}$ satisfies $Q \CM_{n}=0$ and is independent of $t^0$, i.e., $\rd_0 \CM_n=0$,
where
$$
\eqalign{
\CM_{n} := 
&
\Fr{1}{n(n-1)}\sum_{k=1}^{n-1} k(n-k) {\Theta}_{k}\cdot  {\Theta}_{n-k}
-\Fr{1}{n(n-1)}\sum_{k=2}^{n-1}(n-k+1)(n-k) m_{n-k+1}^\sharp {\Theta}_{k}
\cr
&
-\Fr{1}{n(n-1)}\sum_{k=2}^{n-1}k(k-1)
\left({\Theta}_{n-k},\La_{k}\right).
}
$$

It follows that
\eqn\mevn{
\CM_n= m_n^\sharp \Theta_1 + Q \La_n,
}
for uniquely defined map $m_n^\sharp: S(H^*)\rightarrow S(H^*)$ of ghost number $0$
and some $\La_n \in \big(S^n(H^*)\otimes \sC\big)^{-1}$ defined modulo $\Ker Q$
such that $[\rd_0, m_n^\sharp]=0$ and $\rd_0 \La_n=0$.
Then,  the expression
$
\mathbb{M}_n- m_n \bos{\Theta}_1 - \bos{K} \La_n
$
must be divisible by $\hbar$. Thus we can 
define $\bos{\Theta}_n \in \big(S^n (H^*)\otimes \sC\big)[[\hbar]]^0$
by the formula
\eqn\levn{
\hbar \bos{\Theta}_n =\mathbb{M}_n- m_n^\sharp \bos{\Theta}_1 - \bos{K} \La_n.
}
Applying $\bos{K}$ to the above we have
$\hbar\bos{K} \bos{\Theta}_n =\bos{K}\mathbb{M}_n$.
Then, from proposition \ref{tosca}, 
we conclude that
\eqn\dsnl{
\bos{K}\bos{\Theta}_n +\Fr{1}{2} \sum_{k=1}^{n-1}\left( \bos{\Theta}_{k}, \bos{\Theta}_{n-k}\right),
}
and $\rd_0 \bos{\Theta}_n=0$.

Thus we have defined $\bos{P}(n)$;
$$
\bos{P}(n)=
\left\{
\eqalign{
\bos{\Theta}^{(n)}&:=\bos{\Theta}^{(n-1)}+\bos{\Theta}_{n}
\in \left(S^{(n)}(H^*)\otimes \sC[[\hbar]]\right)^0
,\cr
m^{(n)\sharp}&:= m^{(n-1)\sharp}+m_n^\sharp: S(H^*)\longrightarrow S(H^*)
,\cr
\La^{(n)}&:=\La^{(n-1)} +\La_n
\in \left(S^{(n)}(H^*)\otimes \sC\right)^{-1},
}\right.
$$
satisfying $\rd_0 \bos{\Theta}^{(n)}=1$, $\left[\rd_0, m^{(n)\sharp}\right]=t^\a \rd_\a$, $\rd_0 \La^{(n)}=0$
and
\eqn\eea{
\left\{
\eqalign{
0&=\bos{K}\bos{\Theta}_1
,\cr
\hbar \bos{\Theta}_{2}&=\mathbb{M}_2 - m_2^\sharp \bos{\Theta}_1 -\bos{K}\La_2
,\cr
&\;\vdots
\cr
\hbar \bos{\Theta}_{n}&=\mathbb{M}_n - m_n^\sharp \bos{\Theta}_1 -\bos{K}\La_n,
}
\right.
\qquad
\left\{
\eqalign{
\bos{K}\bos{\Theta}_1&=0
,\cr
\bos{K}\bos{\Theta}_2 +\Fr{1}{2}\left( \bos{\Theta}_{1}, \bos{\Theta}_{1}\right)&=0
,\cr
&\;\vdots
\cr
\bos{K}  \bos{\Theta}_{n}
 +\Fr{1}{2}  \sum_{\ell=1}^{n-1}\left( \bos{\Theta}_{\ell}, \bos{\Theta}_{n-\ell}\right)&=0,
}\right.
}
where, for $k=2,\cdots, n$,
$$
\eqalign{
\mathbb{M}_{k}
:=&\Fr{1}{k(k-1) }\sum_{\ell=1}^{k-1} \ell(k-\ell) \bos{\Theta}_{\ell}\cdot   \bos{\Theta}_{k-\ell}
-\Fr{1}{k(k-1) }\sum_{\ell=2}^{k-1}\ell(\ell-1)
\left( \bos{\Theta}_{k-\ell},\La_{\ell}\right)
\cr
&
-\Fr{1}{k(k-1) }\sum_{\ell=2}^{k-1}(k-\ell+1)(k-\ell) m_{k-\ell+1}^\sharp \bos{\Theta}_{\ell}.
}
$$

Now we set
$$
\bos{\Theta} =\lim_{n\rightarrow \infty}\bos{\Theta}^{(n)}
$$ 
and the theorem follows once we prove proposition \ref{tosca}.
%
%
%

\subsubsection{Proof of proposition \ref{tosca}.}

Let $\tilde \mathbb{M}_n=n(n-1)\mathbb{M}$. Then, from a direct computation,
we have
$$
\eqalign{
\bos{K}\tilde\mathbb{M}_{n} = 
&
-\hbar\sum_{k=1}^{n-1} k(n-k)\left( \bos{\Theta}_{k},  \bos{\Theta}_{n-k}\right)
\cr
&
+\sum_{k=1}^{n-1} k(n-k)\bos{K} \bos{\Theta}_{k}\cdot  \bos{\Theta}_{n-k}
+\sum_{k=1}^{n-1} k(n-k) \bos{\Theta}_{k}\cdot \bos{K} \bos{\Theta}_{n-k}
\cr
&
-\sum_{k=2}^{n-1}k(k-1)
\left(\bos{K} \bos{\Theta}_{n-k},\La_{k}\right)
+\sum_{k=2}^{n-1}k(k-1)
\left( \bos{\Theta}_{n-k},\bos{K}\La_{k}\right)
\cr
&
-\sum_{k=2}^{n-1}(n-k+1)(n-k) m_{n-k+1}^\sharp\bos{K} \bos{\Theta}_{k},
}
$$
where we used the following identity
$$
\bos{K} \left( \bos{\Theta}_{k}\cdot  \bos{\Theta}_{n-k}\right) = -\hbar 
 \left( \bos{\Theta}_{k}, \bos{\Theta}_{n-k}\right) + \bos{K} \bos{\Theta}_{k}\cdot  \bos{\Theta}_{n-k} 
 +  \bos{\Theta}_{k}\cdot\bos{K}  \bos{\Theta}_{n-k},
 $$
as well as the properties that $\bos{K}$ is a derivation of the BV bracket and commutes
with $m_{n-k+1}^\sharp$. Further using 
 $\bos{K} \bos{\Theta}_{1}=0$ and the commutativity of the product
$ \bos{\Theta}_{k}\cdot \bos{K} \bos{\Theta}_{n-k}
= \bos{K} \bos{\Theta}_{n-k}\cdot \bos{\Theta}_{k}$,
we have
$$
\eqalign{
\bos{K}\tilde\mathbb{M}_{n} = 
&
-\hbar\sum_{k=1}^{n-1} k(n-k)\left( \bos{\Theta}_{k},  \bos{\Theta}_{n-k}\right)
\cr
&
+2\sum_{k=2}^{n-1} k(n-k)\bos{K} \bos{\Theta}_{k}\cdot  \bos{\Theta}_{n-k}
-\sum_{k=2}^{n-2}(n-k)(n-k-1)
\left(\bos{K} \bos{\Theta}_{k},\La_{n-k}\right)
\cr
&
+\sum_{k=2}^{n-1}k(k-1)
\left( \bos{\Theta}_{n-k},\bos{K}\La_{k}\right)
-\sum_{k=2}^{n-1}(n-k+1)(n-k) m_{n-k+1}^\sharp\bos{K} \bos{\Theta}_{k}.
}
$$ 
Then we use the assumptions from $\bos{P}(n-1)$ that
for all $k=2,\ldots, n-1$
$$
\eqalign{
k(k-1)\bos{K}\La_{k}
=&
-\hbar k(k-1) \bos{\Theta}_{k}
\cr
&
+\sum_{\ell=1}^{k-1} \ell(k-\ell) \bos{\Theta}_{\ell}\cdot   \bos{\Theta}_{k-\ell}
-\sum_{\ell=2}^{k-1}\ell(\ell-1)
\left( \bos{\Theta}_{k-\ell},\La_{\ell}\right)
\cr
&
-\sum_{\ell=1}^{k-1}(k-\ell+1)(k-\ell) m_{k-\ell+1}^\sharp \bos{\Theta}_{\ell}
\cr
\bos{K}  \bos{\Theta}_{k}
=& 
 - \Fr{1}{2}  \sum_{\ell=1}^{k-1}\left( \bos{\Theta}_{\ell}, \bos{\Theta}_{k-\ell}\right).
}
$$
to have
\eqn\zzx{
\eqalign{
\bos{K}\tilde\mathbb{M}_{n} = 
&
-\hbar\sum_{k=1}^{n-1} k(n-k)\left( \bos{\Theta}_{k},  \bos{\Theta}_{n-k}\right)
-\hbar\sum_{k=2}^{n-1}
 k(k-1)
\left( \bos{\Theta}_{n-k}, \bos{\Theta}_{k}\right)
\cr
&
-\sum_{k=2}^{n-1}\sum_{\ell=1}^{k-1} k(n-k)
\left( \bos{\Theta}_{\ell}, \bos{\Theta}_{k-\ell}\right)\cdot  \bos{\Theta}_{n-k}
\cr
&
+\sum_{k=2}^{n-1}
\sum_{\ell=1}^{k-1} \ell(k-\ell)
\left( \bos{\Theta}_{n-k}, \bos{\Theta}_{\ell}\cdot   \bos{\Theta}_{k-\ell}\right)
\cr
&
+\Fr{1}{2}\sum_{k=2}^{n-2}\sum_{\ell=1}^{k-1}(n-k)(n-k-1)
\left(\left( \bos{\Theta}_{\ell}, \bos{\Theta}_{k-\ell}\right),\La_{n-k}\right)
\cr
&
-\sum_{k=2}^{n-1}
\sum_{\ell=2}^{k-1}\ell(\ell-1)
\left( \bos{\Theta}_{n-k},
\left( \bos{\Theta}_{k-\ell},\La_{\ell}\right)\right)
\cr
&
+\Fr{1}{2}\sum_{k=2}^{n-1}\sum_{\ell=1}^{k-1}
(n-k+1)(n-k) m_{n-k+1}^\sharp\left( \bos{\Theta}_{\ell}, \bos{\Theta}_{k-\ell}\right)
\cr
&
-\sum_{k=2}^{n-1}
\sum_{\ell=1}^{k-1}(k-\ell+1)(k-\ell) 
\left( \bos{\Theta}_{n-k},m_{k-\ell+1}^\sharp \bos{\Theta}_{\ell}\right).
}
}
Now we claim that 
(i) the $2$-nd and the $3$-rd lines of the right hand side of the above cancel with each others
due to the Leibniz law,
(ii) the $4$-th and the $5$-th lines of the right hand side of the above cancel with each others
due to the Jacobi law,
(iii) the $6$-th and the $7$-th lines of the right hand side of the above cancel with each others
due to $m_k$ being a derivation of the BV bracket.
\begin{itemize}
\item
To check the claim (i), rewrite the $3$rd line of the right hand side of \zzx\ 
$$
\eqalign{
\sum_{k=2}^{n-1}&
\sum_{\ell=1}^{k-1} \ell(k-\ell)
\left( \bos{\Theta}_{n-k}, \bos{\Theta}_{\ell}\cdot   \bos{\Theta}_{k-\ell}\right)
\cr
&=
\sum_{k=2}^{n-1}
\sum_{\ell=1}^{k-1} \ell(k-\ell)\left\{
\left( \bos{\Theta}_{n-k}, \bos{\Theta}_{\ell}\right)\cdot   \bos{\Theta}_{k-\ell}
+ \bos{\Theta}_{\ell}\cdot \left( \bos{\Theta}_{n-k},  \bos{\Theta}_{k-\ell}\right)
\right\}
\cr
&=
\sum_{k=2}^{n-1}
\sum_{\ell=1}^{k-1} \ell(k-\ell)\left\{
\left( \bos{\Theta}_{n-k}, \bos{\Theta}_{\ell}\right)\cdot   \bos{\Theta}_{k-\ell}
+ \left( \bos{\Theta}_{n-k},  \bos{\Theta}_{k-\ell}\right)\cdot \bos{\Theta}_{\ell}
\right\}
\cr
&=
\sum_{k=2}^{n-1}\sum_{\ell=1}^{k-1} k(n-k)
\left( \bos{\Theta}_{\ell}, \bos{\Theta}_{k-\ell}\right)\cdot  \bos{\Theta}_{n-k},
}
$$
where we applied the Leibniz law for the first equality, used the 
super-commutativity of the product in the second equality
and  did a re-summation for the last equality. By 
comparing with the $2$-nd line of the right hand side of \zzx, we have proved  the claim.

\item
To check the claim (ii), rewrite the $5$-th line of the right hand side of \zzx\ 
$$
\eqalign{
-\sum_{k=2}^{n-1}&
\sum_{\ell=2}^{k-1}\ell(\ell-1)
\left( \bos{\Theta}_{n-k},
\left( \bos{\Theta}_{k-\ell},\La_{\ell}\right)\right)
\cr
&=-\Fr{1}{2}\sum_{k=2}^{n-1}
\sum_{\ell=2}^{k-1}\ell(\ell-1)
\left(\left( \bos{\Theta}_{n-k},
 \bos{\Theta}_{k-\ell}\right),\La_{\ell}\right)
\cr
&=-\Fr{1}{2}\sum_{k=2}^{n-2}\sum_{\ell=1}^{k-1}(n-k)(n-k-1)
\left(\left( \bos{\Theta}_{\ell}, \bos{\Theta}_{k-\ell}\right),\La_{n-k}\right),
}
$$
where we applied the Jacobi law for the first equality and did a re-summation for the
second equality. By
comparing with the $4$-th line of the right hand side of \zzx, we have proved the claim.
\item
To check the claim (iii), rewrite the $7$-th line of the right hand side of \zzx\ 
$$
\eqalign{
-\sum_{k=2}^{n-1}
&
\sum_{\ell=1}^{k-1}(k-\ell+1)(k-\ell) 
\left( \bos{\Theta}_{n-k},m_{k-\ell+1}^\sharp \bos{\Theta}_{\ell}\right)
\cr
&= 
-\Fr{1}{2}\sum_{k=2}^{n-1}
\sum_{\ell=1}^{k-1}(k-\ell+1)(k-\ell) 
m_{k-\ell+1}^\sharp
\left( \bos{\Theta}_{n-k}, \bos{\Theta}_{\ell}\right)
\cr
&= 
-\Fr{1}{2}\sum_{k=2}^{n-1}
\sum_{\ell=1}^{k-1}(k-\ell+1)(k-\ell) 
m_{k-\ell+1}^\sharp
\left( \bos{\Theta}_{\ell}, \bos{\Theta}_{n-k}\right)
\cr
&
=-\Fr{1}{2}\sum_{k=2}^{n-1}\sum_{\ell=1}^{k-1}
(n-k+1)(n-k) m_{n-k+1}^\sharp\left( \bos{\Theta}_{\ell}, \bos{\Theta}_{k-\ell}\right)
}
$$
where we used $m_{k-\ell+1}$ being a derivation of the bracket  for the first equality,
used the commutativity of the BV bracket for the second equality and
did a re-summation for the
last equality. By
comparing with the $6$-th line of the right hand side of \zzx, we have proved the claim.
\end{itemize}
Thus we are left with
$$
\eqalign{
\bos{K}\tilde\mathbb{M}_{n} = 
&
-\hbar\sum_{k=1}^{n-1} k(n-k)\left( \bos{\Theta}_{k},  \bos{\Theta}_{n-k}\right)
-\hbar\sum_{k=2}^{n-1}
 k(k-1)
\left( \bos{\Theta}_{n-k}, \bos{\Theta}_{k}\right),
}
$$
which gives, after a re-summation,
$$
\bos{K}\tilde\mathbb{M}_{n} 
= - \hbar n(n-1)\Fr{1}{2} \sum_{k=1}^{n-1}\left( \bos{\Theta}_{k}, \bos{\Theta}_{n-k}\right),
$$
which proves the first claim;
$$
\bos{K}\mathbb{M}_{n} 
= - \Fr{\hbar}{2} \sum_{k=1}^{n-1}\left( \bos{\Theta}_{k}, \bos{\Theta}_{n-k}\right).
$$
For the second claim, we have
$$
\eqalign{
\rd_0 \mathbb{M}_n
&= \rd_0\left(\Fr{2(n-1)}{n(n-1)}\bos{\Theta}_1 \cdot \bos{\Theta}_{n-1}
-\Fr{2}{n(n-1)}m_2\bos{\Theta}_{n-1} -\Fr{(n-1)(n-2)}{n(n-1)}\big(\bos{\Theta}_{1}, \La_{n-1}\big)
\right)
\cr
&= \Fr{2}{n}\bos{\Theta}_{n-1}
-\Fr{2}{n(n-1)}t^\a\rd_\a\bos{\Theta}_{n-1} -\Fr{(n-1)(n-2)}{n(n-1)}\big(1, \La_{n-1}\big)
\cr
&=\Fr{2}{n}\left(\bos{\Theta}_{n-1}
-\Fr{1}{(n-1)}t^\a\rd_\a\bos{\Theta}_{n-1}\right),
}
$$
where we have used the assumptions that 
$\rd_0 \bos{\Theta}_{k}=\rd_0 \La_k=0$ for $2\geq k\geq n-1$
and $[\rd_0,m_k^\sharp]=0$ for $3\geq k \geq n-1$ for the first equality, the fact that
$\rd_0 \bos{\Theta}_1=1$ and $[\rd_0,m_2^\sharp]=t^\a\rd_\a$ for the second equality. 
Finally we conclude that $\rd_0 \mathbb{M}_n=0$ since 
$t^\a \rd_\a \bos{\Theta}_{n-1}= (n-1)\bos{\Theta}_{n-1}$. \qed


\subsection{Algebras of Quantum Correlation Functions}

The solution $\bos{\Theta}$ of  quantum master equation can be used to
determine generating function of all
quantum correlators by the formula
$$
e^{-\bos{\Theta}/\hbar}
=1 + \sum_{n=1}^\infty \Fr{1}{n!}\Fr{(-1)^n}{\hbar^{n}}\bos{\Theta}^n
=1+\sum_{n=1}^\infty \Fr{(-1)^n}{\hbar^n} \bos{\Omega}_n,
$$
where the sequence $\bos{\Omega}_1,\bos{\Omega}_2, \cdots$  is defined 
by matching the word-lengths in $t$,
such that $\bos{\Omega}_{n}$ generates $n$-point quantum correlators;
$$
\bos{\Omega}_n=\Fr{1}{n!}t^{\a_n}\cdots t^{\a_1}\bos{\pi}_{\a_1\cdots \a_n}
\quad\hbox{where}\quad 
\bos{\pi}_{\a_1\cdots \a_n}\in \sC[[\hbar]]^{|\a_1|+\cdots+|\a_n|}.
$$
From the decomposition $\bos{\Theta} =\sum_{n=1}^\infty \bos{\Theta}_n$ of $\bos{\Theta}$
by the word-length in $t_H$, we have the following recursive formula
\eqn\qcfa{
\eqalign{
\bos{\Omega}_1&=\bos{\Theta}_1,\cr
\bos{\Omega}_n &= (-\hbar)^{n-1}\bos{\Theta}_{n}
+\Fr{1}{n}\sum_{j=1}^{n-1} j  (-\hbar)^{j-1}\bos{\Theta}_{j}\cdot\bos{\Omega}_{n-j}.
}
}
Equivalently,
\eqn\qcfa{
\bos{\pi}_{\a_1\cdots \a_n}:= (-\hbar)^n
\rd_{\a_1}\cdots \rd_{\a_n}e^{-\bos{\Theta}/\hbar}\biggl|_{t=0}.
}
Note that
$$
\bos{\pi}_{\a_1\cdots \a_n}\big|_{\hbar=0}= O_{\a_1}\cdots O_{\a_n}
$$
The quantum descendant equation implies that
$\bos{K}\bos{\Omega}_{n}=0$ for all $n=1,2,\ldots$ since
it is equivalent to $\bos{K}\,e^{-\bos{\Theta}/\hbar}=0$. 
Thus 
$\bos{\pi}_{\a_1\cdots \a_n}$ is the canonical quantum correlator -quantization of
classical correlator $ O_{\a_1}\cdots O_{\a_n}$.  
We define the generating
functional $\mb{\CZ}(t_H)$ of all correlation functions 
by the formula
$$
\eqalign{
\mb{\CZ}(t_H)
&:= <1> + \sum_{n=1}^\infty \Fr{(-1)^n}{\hbar^{n}}\left<\bos{\Omega}_n\right>
\cr
&=<1> + \sum_{n=1}^\infty \Fr{1}{n!}\Fr{(-1)^n}{\hbar^{n}}
t^{\a_{n}}\cdots t^{\a_{1}}\left<\bos{\pi}_{\a_{1}\cdots\a_{n}}\right>
}
$$
so that an arbitrary $n$-point correlation function 
$\left<\bos{\pi}_{\a_1\cdots\a_n}\right>$
is obtained as follows;
$$
\left<\bos{\pi}_{\a_1\cdots\a_n}\right>
\equiv (-\hbar)^n \rd_{\a_1}\cdots \rd_{\a_n}\mb{\CZ}(t_H)\bigl|_{t=0}.
$$

The quantum identity $\rd_0 \bos{\Theta}_1=1$ and $\rd_0 \bos{\Theta}_n=0$ for $n\geq 2$
implies that
\begin{corollary}\label{qids}
$\rd_0 \bos{\Omega}_1=1$ and $\rd_0 \bos{\Omega}_n= \bos{\Omega}_{n-1}$ for $n\geq 2$.
\end{corollary}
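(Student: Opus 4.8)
The plan is to work directly with the generating series $e^{-\bos{\Theta}/\hbar}$ rather than grinding through the recursion \qcfa\ term by term. The starting observation is that the quantum identity furnishes $\rd_0\bos{\Theta}=1$ as a single identity: since $\rd_0\bos{\Theta}_1=1$ and $\rd_0\bos{\Theta}_n=0$ for all $n\geq 2$, summing over the word-length decomposition $\bos{\Theta}=\sum_{n\geq 1}\bos{\Theta}_n$ gives $\rd_0\bos{\Theta}=1$, where $1$ denotes the unit of $\Bbbk[[t_H]]\otimes\sC[[\hbar]]$. Here $\rd_0=\rd/\rd t^0$ acts as $\rd_0\otimes 1$, a ghost-number-zero derivation of the product that commutes with multiplication by $\hbar$; because $\bos{\Theta}$ carries ghost number $0$, no Koszul signs intervene.

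First I would compute $\rd_0$ of the exponential. Writing $e^{-\bos{\Theta}/\hbar}=\sum_{k\geq 0}\Fr{1}{k!}(-\bos{\Theta}/\hbar)^k$ and applying the Leibniz rule, each factor of $\bos{\Theta}$ in $(-\bos{\Theta}/\hbar)^k$ is differentiated in turn; since $\rd_0\bos{\Theta}=1$ is the central unit, every one of the $k$ resulting terms equals $(-1/\hbar)(-\bos{\Theta}/\hbar)^{k-1}$, and the central factor can be pulled to the front. This collapses the sum to
$$
\rd_0\, e^{-\bos{\Theta}/\hbar}
=-\Fr{1}{\hbar}\sum_{k\geq 1}\Fr{1}{(k-1)!}\left(-\bos{\Theta}/\hbar\right)^{k-1}
=-\Fr{1}{\hbar}\,e^{-\bos{\Theta}/\hbar}.
$$
The only point requiring care is this centrality argument: it is valid precisely because $\rd_0\bos{\Theta}$ is the scalar unit and the product on $\Bbbk[[t_H]]\otimes\sC[[\hbar]]$ is associative, so the usual chain rule for the exponential of a single ``variable'' applies despite $\sC$ being noncommutative in general. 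I expect this to be the only substantive step; everything else is bookkeeping.

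Finally I would match coefficients. Substituting the defining expansion $e^{-\bos{\Theta}/\hbar}=1+\sum_{n\geq 1}\Fr{(-1)^n}{\hbar^n}\bos{\Omega}_n$ into both sides of the displayed identity, the left-hand side becomes $\sum_{n\geq 1}\Fr{(-1)^n}{\hbar^n}\rd_0\bos{\Omega}_n$, while the right-hand side becomes $-\Fr{1}{\hbar}+\sum_{n\geq 1}\Fr{(-1)^{n+1}}{\hbar^{n+1}}\bos{\Omega}_n$. Re-indexing the right-hand sum by $m=n+1$ puts it in the form $\Fr{(-1)^1}{\hbar^1}\cdot 1+\sum_{m\geq 2}\Fr{(-1)^m}{\hbar^m}\bos{\Omega}_{m-1}$. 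Comparing the coefficient of $\Fr{(-1)^n}{\hbar^n}$ on each side---legitimate since the two sides are equal as elements of $\Bbbk[[t_H]]\otimes\sC[[\hbar]]$ and the powers of $\hbar$ are independent---yields $\rd_0\bos{\Omega}_1=1$ from $n=1$ and $\rd_0\bos{\Omega}_n=\bos{\Omega}_{n-1}$ from $n\geq 2$, which is the claim. As a consistency check, the same two relations can be obtained by induction straight from \qcfa\ using $\rd_0\bos{\Theta}_1=1$ and $\rd_0\bos{\Theta}_n=0$, but the generating-function route avoids the combinatorial bookkeeping and makes the role of the quantum identity transparent.
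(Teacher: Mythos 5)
Your route---differentiating the generating series and using $\rd_0\bos{\Theta}=1$ to get $\rd_0\, e^{-\bos{\Theta}/\hbar}=-\Fr{1}{\hbar}\,e^{-\bos{\Theta}/\hbar}$---is genuinely different from the paper's proof, which instead runs a direct induction on the recursion $\bos{\Omega}_n=(-\hbar)^{n-1}\bos{\Theta}_n+\Fr{1}{n}\sum_{j=1}^{n-1}j(-\hbar)^{j-1}\bos{\Theta}_j\cdot\bos{\Omega}_{n-j}$, using only $\rd_0\bos{\Theta}_1=1$ and $\rd_0\bos{\Theta}_j=0$ for $j\geq 2$. Your exponential computation itself is sound: the algebra is super-commutative and $\bos{\Theta}$ has ghost number $0$, so the centrality worry you raise does not even arise ($\sC$ is super-commutative by definition, and $\rd_0\bos{\Theta}=1$ is the unit), and the Leibniz collapse of $\rd_0(\bos{\Theta}^k)$ to $k\,\bos{\Theta}^{k-1}$ is valid. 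What your argument buys is conceptual economy: it exhibits the corollary as precisely the statement that $e^{-\bos{\Theta}/\hbar}$ is an eigenvector of $\rd_0$ with eigenvalue $-1/\hbar$, the exponentiated form of the quantum identity, whereas the paper's induction stays at the level of the individual $\bos{\Omega}_n$ and never has to discuss in which grading the expansion is taken.

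There is, however, one incorrect justification you must repair. You compare coefficients of $\Fr{(-1)^n}{\hbar^n}$ and claim this is legitimate because ``the powers of $\hbar$ are independent.'' They are not: each $\bos{\Omega}_n$ lies in $\big(S^n(H^*)\otimes\sC[[\hbar]]\big)$ and itself contains all nonnegative powers of $\hbar$, so the terms $\hbar^{-n}\bos{\Omega}_n$ for different $n$ overlap in every power of $\hbar$ (for instance $\hbar^{-1}\bos{\Omega}_1$ and $\hbar^{-2}\bos{\Omega}_2$ both contribute to the coefficient of $\hbar^{-1}$). The decomposition $e^{-\bos{\Theta}/\hbar}=1+\sum_{n\geq 1}\Fr{(-1)^n}{\hbar^n}\bos{\Omega}_n$ is, by the paper's own definition, a decomposition by word-length in $t_H$, not by $\hbar$-degree. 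Your matching step remains valid, but for a different reason: $\rd_0$ lowers the $t_H$-word-length by exactly one, and each $\bos{\Omega}_n$ is homogeneous of word-length $n$; comparing the word-length-$(n-1)$ components of the two sides of $\rd_0\,e^{-\bos{\Theta}/\hbar}=-\Fr{1}{\hbar}\,e^{-\bos{\Theta}/\hbar}$ yields $\rd_0\bos{\Omega}_1=1$ at length $0$ and $\rd_0\bos{\Omega}_n=\bos{\Omega}_{n-1}$ for $n\geq 2$. With that one substitution your proof is complete and correct.
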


\begin{proof}
We use  induction.
It is obvious $\rd_0 \bos{\Omega}_1=1$ since $\bos{\Omega}_1= \bos{\Theta}_1$.
From $\bos{\Omega}_2 =\Fr{1}{2!}\bos{\Theta}_1^2 -\hbar \bos{\Theta}_2$, we have
$\rd_0\bos{\Omega}_2 =\rd_0\bos{\Theta}_1\cdot \bos{\Theta}_1 -\hbar \rd_0\bos{\Theta}_2
= \bos{\Theta}_1= \bos{\Omega}_1$. Fix $n > 3$ and assume that
$\rd_0\bos{\Omega}_k= \bos{\Omega}_{k-1}$ for $2\leq k\leq n-1$.
From \qcfa, we have
$$
\bos{\Omega}_n = (-\hbar)^{n-1}\bos{\Theta}_{n}
+\Fr{1}{n}\sum_{j=1}^{n-1} j  (-\hbar)^{j-1}\bos{\Theta}_{j}\cdot\bos{\Omega}_{n-j}.
$$
Then
$$
\eqalign{
\rd_0\bos{\Omega}_n &=
\Fr{1}{n}\rd_0\bos{\Theta}_{1}\cdot\bos{\Omega}_{n-1}
+\Fr{1}{n}\sum_{j=1}^{n-1} j  (-\hbar)^{j-1}\bos{\Theta}_{j}\cdot\rd_0\bos{\Omega}_{n-j}
\cr
&=
\Fr{1}{n}\bos{\Omega}_{n-1}
+\Fr{1}{n}\sum_{j=1}^{n-1} j  (-\hbar)^{j-1}\bos{\Theta}_{j}\cdot\rd_0\bos{\Omega}_{n-j}
\cr
&=
\Fr{1}{n}\bos{\Omega}_{n-1}
+\Fr{n-1}{n}(-\hbar)^{n-2}\bos{\Theta}_{n-1}
+\Fr{1}{n}\sum_{j=1}^{n-2} j  (-\hbar)^{j-1}\bos{\Theta}_{j}\cdot\bos{\Omega}_{n-1-j}
\cr
&=
\Fr{1}{n}\bos{\Omega}_{n-1}
+\Fr{n-1}{n}\left((-\hbar)^{n-2}\bos{\Theta}_{n-1}
+\Fr{1}{n-1}\sum_{j=1}^{n-2} j  (-\hbar)^{j-1}\bos{\Theta}_{j}\cdot\bos{\Omega}_{n-1-j}
\right)
\cr
&= \bos{\Omega}_{n-1}.
}
$$
\qed
\end{proof}

From the quantum master equation we shall show the following:
\begin{lemma}
for every $n >1$, we have
$$
\bos{\Omega}_n = \bos{p}_n^\sharp \bos{\Theta}_1 + \bos{K}\bos{x}_n
$$
where $\bos{p}_2^\sharp=m_2^\sharp$, $\bos{x}_2=\La_2$, and
$$
\eqalign{
\bos{p}_n^\sharp &= 
(-\hbar)^{n-2}m_n^\sharp
+\Fr{1}{n(n-1)}\sum_{k=2}^{n-1}(-\hbar)^{k-2}k(k-1) m_{k}^\sharp \bos{p}_{n+1-k}^\sharp
,\cr
\bos{x}_n &= 
(-\hbar)^{n-2}\l_n
+\Fr{1}{n(n-1)}\sum_{k=2}^{n-1}(-\hbar)^{k-2}k(k-1)
\Big(
m_{k}^\sharp \bos{x}_{n+1-k}
+\La_k\cdot\bos{\Omega}_{n-k}\Big).
}
$$
\end{lemma}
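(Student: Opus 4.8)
The plan is to proceed by strong induction on $n$, with inductive hypothesis that $\bos{\Omega}_m=\bos{p}_m^\sharp\bos{\Theta}_1+\bos{K}\bos{x}_m$ for every $2\le m<n$ (together with $\bos{\Omega}_1=\bos{\Theta}_1$). The base case $n=2$ is immediate from $\bos{\Omega}_2=\Fr{1}{2}\bos{\Theta}_1\cdot\bos{\Theta}_1-\hbar\bos{\Theta}_2$ and the second line of the quantum master equation, which give $\bos{\Omega}_2=m_2^\sharp\bos{\Theta}_1+\bos{K}\La_2$, i.e.\ $\bos{p}_2^\sharp=m_2^\sharp$ and $\bos{x}_2=\La_2$. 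Two structural facts drive the whole argument: each $m_k^\sharp$ acts only on the $\Bbbk[[t_H]]$-factor, so it commutes with $\bos{K}=1\otimes\bos{K}$, and each $m_k^\sharp$ is a ghost-number-zero derivation of the product $\cdot$. Because of the first fact, the lemma will follow once I establish the intermediate identity
$$
\bos{\Omega}_n=(-\hbar)^{n-2}m_n^\sharp\bos{\Theta}_1
+\Fr{1}{n(n-1)}\sum_{k=2}^{n-1}(-\hbar)^{k-2}k(k-1)\,m_k^\sharp\bos{\Omega}_{n+1-k}+\bos{K}\bos{y}_n,
$$
with $\bos{y}_n=(-\hbar)^{n-2}\La_n+\Fr{1}{n(n-1)}\sum_{k=2}^{n-1}(-\hbar)^{k-2}k(k-1)\,\La_k\cdot\bos{\Omega}_{n-k}$: substituting the inductive hypothesis $\bos{\Omega}_{n+1-k}=\bos{p}_{n+1-k}^\sharp\bos{\Theta}_1+\bos{K}\bos{x}_{n+1-k}$ (legitimate since $2\le n+1-k\le n-1$) and pulling $\bos{K}$ through $m_k^\sharp$ converts the $m_k^\sharp\bos{\Omega}_{n+1-k}$ terms into exactly the stated recursions for $\bos{p}_n^\sharp$ and $\bos{x}_n$.

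To prove the intermediate identity I would start from the recursion $\bos{\Omega}_n=(-\hbar)^{n-1}\bos{\Theta}_n+\Fr{1}{n}\sum_{j=1}^{n-1}j(-\hbar)^{j-1}\bos{\Theta}_j\cdot\bos{\Omega}_{n-j}$, write $(-\hbar)^{n-1}\bos{\Theta}_n=-(-\hbar)^{n-2}\,\hbar\bos{\Theta}_n$, and eliminate $\hbar\bos{\Theta}_n$ using the $n$-th line of the quantum master equation. This produces the isolated leading terms $(-\hbar)^{n-2}m_n^\sharp\bos{\Theta}_1$ and $(-\hbar)^{n-2}\bos{K}\La_n$, together with three families of remaining terms: products $\bos{\Theta}_a\cdot\bos{\Theta}_b$ (arising both from $\mathbb{M}_n$ and from the $\bos{\Theta}_j\cdot\bos{\Omega}_{n-j}$ of the recursion), multiplication terms $m_k^\sharp\bos{\Theta}_{n+1-k}$, and bracket terms $(\bos{\Theta}_{n-k},\La_k)$. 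The products are made to cancel exactly as in the proof of proposition \ref{tosca}, using that $m_k^\sharp$ is a derivation (so, e.g., $m_2^\sharp\bos{\Theta}_1^2=2\,\bos{\Theta}_1\cdot m_2^\sharp\bos{\Theta}_1$) and the super-commutativity of $\cdot$, after rewriting any surviving $\bos{\Theta}_m$ with $m\ge2$ through the lower master equations. The multiplication terms are reassembled into $m_k^\sharp\bos{\Omega}_{n+1-k}$ once $\bos{\Theta}_{n+1-k}$ is re-expressed through the $\bos{\Omega}$-recursion, the correction products again cancelling. Finally the bracket terms are handled by the defining BV identity \djo, which for a ghost-number-zero first argument reads $\bos{K}(\bos{a}\cdot\La_k)=\bos{K}\bos{a}\cdot\La_k+\bos{a}\cdot\bos{K}\La_k-\hbar(\bos{a},\La_k)$; combined with the descendant equations $\bos{K}\bos{\Omega}_m=0$ this turns each bracket, paired with its companion product, into the $\bos{K}$-exact piece $\bos{K}(\La_k\cdot\bos{\Omega}_{n-k})$, which is the source of the $\La_k\cdot\bos{\Omega}_{n-k}$ summand of $\bos{y}_n$.

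The main obstacle is the combinatorial bookkeeping: the conversions $\bos{\Theta}_m\leftrightarrow\bos{\Omega}_m$ and the elimination of $\bos{\Theta}_m$ by the master equation cascade, so the double sum over $k$ and the inner sums hidden in $\mathbb{M}_n$ must be reindexed repeatedly and one must verify that every cancellation leaves precisely the coefficient $\Fr{1}{n(n-1)}(-\hbar)^{k-2}k(k-1)$. I would keep this under control by grading everything by word-length in $t_H$ (so $\bos{\Theta}_m,\bos{\Omega}_m,\La_m$ carry weight $m$ and $t^\a\rd_\a$ is multiplication by the weight), which simultaneously fixes all powers of $\hbar$ and lets me reuse verbatim the re-summation manipulations already carried out in proposition \ref{tosca}; the normalization identities $\rd_0\bos{\Theta}_n=\d_{n,1}$, $\rd_0\La_n=0$ and $[\rd_0,m_k^\sharp]=t^\a\rd_\a\,\d_{k,2}$ serve as running consistency checks. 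As a global check that no term has been dropped, applying $\bos{K}$ to the asserted identity gives $0=0$: the left side is $\bos{K}\bos{\Omega}_n=0$ by the quantum descendant equation, the $\bos{p}_n^\sharp\bos{\Theta}_1$ piece contributes $\bos{p}_n^\sharp\bos{K}\bos{\Theta}_1=0$ since $\bos{p}_n^\sharp$ commutes with $\bos{K}$, and $\bos{K}^2\bos{x}_n=0$ automatically. Passing to the limit $n\to\infty$ then yields the lemma for all $n>1$.
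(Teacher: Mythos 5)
Your proposal is correct, and its endgame coincides with the paper's: both arguments hinge on the single recursion
$$
\bos{\Omega}_n=\Fr{1}{n(n-1)}\sum_{k=2}^{n}(-\hbar)^{k-2}k(k-1)\,m_k^\sharp\bos{\Omega}_{n+1-k}
+\bos{K}\Big(\Fr{1}{n(n-1)}\sum_{k=2}^{n-1}(-\hbar)^{k-2}k(k-1)\,\La_k\cdot\bos{\Omega}_{n-k}+(-\hbar)^{n-2}\La_n\Big)
$$
(your ``intermediate identity''; note that your isolated leading term $(-\hbar)^{n-2}m_n^\sharp\bos{\Theta}_1$ is just the $k=n$ summand, since $\bos{\Omega}_1=\bos{\Theta}_1$), and once this is in hand, your substitution of the inductive hypothesis together with $[m_k^\sharp,\bos{K}]=0$ is verbatim the paper's induction. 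Where you genuinely diverge is in how this recursion is obtained. The paper avoids all cascading cancellations by a generating-function device: it packages the entire hierarchy of quantum master equations into the single identity $-\hbar\, t^\b t^\a\rd_\a\rd_\b\bos{\Theta}+t^\a\rd_\a\bos{\Theta}\cdot t^\b\rd_\b\bos{\Theta}-A\bos{\Theta}=\bos{K}\La+(\bos{\Theta},\La)$, where $A=\sum_{n\geq 2} n(n-1)m_n^\sharp$ and $\La=\sum_{n\geq 2} n(n-1)\La_n$, multiplies through by $e^{-\bos{\Theta}/\hbar}$, and uses $\bos{K}e^{-\bos{\Theta}/\hbar}=0$ to absorb the bracket term into the $\bos{K}$-exact expression $\bos{K}\left(\La\, e^{-\bos{\Theta}/\hbar}\right)$; expanding in $\bos{\Omega}_n$ and matching word-lengths then yields the recursion in one stroke. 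Your term-by-term route --- eliminating $\hbar\bos{\Theta}_n$ by the $n$-th master equation, reassembling $m_k^\sharp\bos{\Theta}$-terms via the $\bos{\Omega}$-recursion and the derivation property, and converting each product-plus-bracket pair into $\bos{K}(\La_k\cdot\bos{\Omega}_{n-k})$ through $\bos{K}(\La_k\cdot\bos{\Omega}_{n-k})=\bos{K}\La_k\cdot\bos{\Omega}_{n-k}+\hbar(\La_k,\bos{\Omega}_{n-k})$ --- is exactly the degree-by-degree shadow of that computation (your bracket step is the $t_H$-homogeneous piece of the paper's use of $\bos{K}e^{-\bos{\Theta}/\hbar}=0$), and it does close up: at $n=3$, for instance, $\bos{\Omega}_3=\hbar^2\bos{\Theta}_3+\Fr{1}{3}\bos{\Theta}_1\cdot\bos{\Omega}_2-\Fr{2\hbar}{3}\bos{\Theta}_1\cdot\bos{\Theta}_2$ reduces in a few lines to $-\hbar m_3^\sharp\bos{\Theta}_1+\Fr{1}{3}m_2^\sharp\bos{\Omega}_2+\bos{K}\left(\Fr{1}{3}\La_2\cdot\bos{\Theta}_1-\hbar\La_3\right)$. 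What the paper's route buys is that the whole cascade you must re-sum for general $n$ is performed once, multiplicatively; what your route buys is that no clever repackaging needs to be guessed in advance. Two caveats: your closing $\bos{K}$-check is only a consistency test (any $\bos{K}$-closed discrepancy would pass it), so it cannot substitute for the uncarried-out general-$n$ bookkeeping, which is the real labor of your version; and the final ``passing to the limit $n\to\infty$'' is superfluous, since your strong induction already establishes the identity for every fixed $n>1$.
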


\begin{proof}
Consider the decomposition of $\bos{\Theta}$ in terms of the word-length of $t_H$;
$$
\bos{\Theta}=\bos{\Theta}_1+ \bos{\Theta}_2+\bos{\Theta}_3+\cdots 
$$
where $\bos{\Theta}_n=\Fr{1}{n!}t^{\a_n}\cdots t^{\a_1}\bos{O}_{\a_1\cdots\a_n}$ is a 
homogeneous polynomial of degree $n$ in $t_H$.
It follows that,  for all $n=1,2,3,\ldots$,
$$
t^\a \rd_\a \bos{\Theta}_n = n \bos{\Theta}_n,\qquad
t^\b t^\a \rd_\a\rd_\b \bos{\Theta}_n =n(n-1) \bos{\Theta}_n.
$$
It is convenient to introduce the following notations 
$$
A= \sum_{n=2}^\infty n(n-1)m_n^\sharp, \qquad \La = \sum_{n=2}^\infty n(n-1)\La_n.
$$
Now the quantum master equation can be rewritten in the following form;
\eqn\qmeq{
-\hbar t^\b t^\a \rd_\a\rd_\b \bos{\Theta} + t^\a \rd_\a \bos{\Theta}\cdot t^\b \rd_\b \bos{\Theta} 
- A\bos{\Theta} = \bos{K}\La
+\left(\bos{\Theta}, \La\right), 
}
since the above equation, after decomposing  in terms of the word-length of $t_H$,
is equivalent to the following  infinite sequence of equations;
$$
\mathbb{E}_2=0,\quad\mathbb{E}_3=0,\quad\mathbb{E}_4=0, \quad\cdots
$$
where
$$
\eqalign{
\mathbb{E}_n=
&
-\hbar t^\b t^\a \rd_\a\rd_\b \bos{\Theta}_n 
+\sum_{k=1}^{n-1} \left(t^\a \rd_\a \bos{\Theta}_k\right)\left( t^\b \rd_\b \bos{\Theta}_{n-k}\right) 
- \sum_{k=2}^n k(k-1) m_k^\sharp\bos{\Theta}_{n-k+1} 
\cr
&
-\bos{K}\La_n
-\sum_{k=2}^{n-1}k(k-1)\left(\bos{\Theta}_{n-k}, \La_k\right)
\cr
=&
-\hbar n(n-1) \bos{\Theta}_n 
+\sum_{k=1}^{n-1} k(n-k) \bos{\Theta}_k \bos{\Theta}_{n-k} 
- \sum_{k=2}^n k(k-1)m_k^\sharp\bos{\Theta}_{n-k+1} 
\cr
&
-\bos{K}\l_n
-\sum_{k=2}^{n-1}k(k-1)\left(\bos{\Theta}_{n-k}, \La_k\right).
}
$$
Now the equation \qmeq\ is equivalent to the followings;
\eqn\qmeqa{
\left(\hbar^2 t^\b t^\a\rd_\a\rd_\b +\hbar A\right)e^{-\bos{\Theta}/\hbar}
=\bos{K}\left(\La e^{-\bos{\Theta}/\hbar}\right),
}
since the LHS of the above is
$$
LHS=\left(-\hbar t^\b t^\a \rd_\a\rd_\b \bos{\Theta} + t^\a \rd_\a \bos{\Theta}\cdot t^\b \rd_\b \bos{\Theta} 
- A\bos{\Theta}\right)e^{-\bos{\Theta}/\hbar}
$$
while its RHS is
$$
\eqalign{
RHS
=&
\left(\bos{K}\La+\left(\bos{\Theta}, \La\right)\right)e^{-\bos{\Theta}/\hbar}
-\La \cdot\bos{K}e^{-\bos{\Theta}/\hbar}
\cr
=&
\left(\bos{K}\La+\left(\bos{\Theta}, \La\right)\right)e^{-\bos{\Theta}/\hbar},
}
$$
where we have used the quantum descendant equation 
$\bos{K}\bos{\Theta} +\Fr{1}{2}(\bos{\Theta},\bos{\Theta})=0$, which is equivalent to
$\bos{K}e^{-\bos{\Theta}/\hbar}=0$.
Substituting $e^{-\bos{\Theta}/\hbar}$ in equation \qmeqa\ by the formula;
$$
e^{-\bos{\Theta}/\hbar} = 1 +\sum_{n=1}^\infty \Fr{(-1)^n}{\hbar^n} \bos{\Omega}_n,
$$
we obtain the following infinite sequence of equations, for $n=2,3,4,\cdots$,
\eqn\caseg{
\eqalign{
\bos{\Omega}_n 
=
& 
\Fr{1}{n(n-1)}\sum_{k=2}^{n}(-\hbar)^{k-2}k(k-1)m_k^\sharp \bos{\Omega}_{n-k+1}
\cr
&
+\bos{K} \left(\Fr{1}{n(n-1)}\sum_{k=2}^{n-1}(-\hbar)^{k-2}k(k-1)\l_k \cdot \bos{\Omega}_{n-k} 
+(-\hbar)^{n-2}\La_n\right).
}
}
\begin{enumerate}
\item
For $n=2$, we have
$$
\eqalign{
\bos{\Omega}_2 
&=m_2 \bos{\Omega}_1+\bos{K}\l_2. 
}
$$
The above is just the same with the lowest quantum master equation
$\hbar\bos{\Theta}_2 =\Fr{1}{2}\bos{\Theta}_1^2-m_2 \bos{\Theta}_1-\bos{K}\l_2$,
since $\bos{\Omega}_1= \bos{\Theta}_1$ and $\bos{\Omega}_2 
= \Fr{1}{2}\bos{\Theta}_1^2 -\hbar\bos{\Theta}_2$.
Thus
\eqn\casew{
\eqalign{
\bos{\Omega}_1 &= \bos{\Theta}_1
,\cr
\bos{\Omega}_2 
&=\bos{p}_2^\sharp \bos{\Theta}_1+\bos{K}\bos{x}_2,
}
}
where  $\bos{p}^\sharp_2=m_2^\sharp$ and $\bos{x}_2=\La_2$.

\item
For $n=3$ we have
$$
\bos{\Omega}_3= \Fr{1}{3} m_2^\sharp \bos{\Omega}_2 -\hbar m_3^\sharp \bos{\Omega}_1
+ \bos{K}\left(\La_2 \cdot\bos{\Omega}_1 -\hbar \La_3\right)
$$
Using \casew, we conclude that
$$
\bos{\Omega}_3= \bos{p}_3^\sharp \bos{\Theta}_1 
+ \bos{K}\bos{x}_3,
$$
where
$$
\eqalign{
\bos{p}_3^\sharp &=\Fr{1}{3} m_2^\sharp \bos{p}_2^\sharp -\hbar m_3^\sharp
,\cr
\bos{x}_3&=\Fr{1}{3}m_2^\sharp \bos{x}_2 +\La_2\cdot \bos{\Omega}_1 -\hbar \La_3
}
$$
\item
Iterating the above procedure up to some $n > 3$ assume that,
for all $k=2,3,\ldots, n-1$,
$$
\bos{\Omega}_k =\bos{p}_k^\sharp \bos{\Theta}_1 +\bos{K} \bos{x}_k,
$$
where
$$
\eqalign{
\bos{p}_k^\sharp 
&= (-\hbar)^{k-2} m_k^\sharp 
+\Fr{1}{k(k-1)}\sum_{\ell=2}^{k-1} (-\hbar)^{\ell-2}\ell(\ell-1)m_\ell^\sharp \bos{p}_{n+1-\ell}^\sharp
,\cr
\bos{x}_k &= (-\hbar)^{k-2}\La_k +\Fr{1}{n(n-1)}\sum_{\ell=2}^{k-1}\ell(\ell-1)
\left( m_\ell^\sharp \bos{x}_{k+1-\ell} +\La_\ell \cdot \bos{\Omega}_{k-\ell}\right).
}
$$

Substituting above to \caseg\ we immediately obtain that
$\bos{\Omega}_n = \bos{p}_n^\sharp \bos{\Theta}_1 +\bos{K} \bos{x}_n$
as was claimed. 
\end{enumerate}
By induction the formula is true for every $n >1$.\qed
\end{proof}
It follows that $\left<\bos{\Omega}_1\right> 
= \left<\bos{\Theta}_1\right>$ and
$\left<\bos{\Omega}_n\right> = \bos{p}_n^\sharp\left<\bos{\Theta}_1\right>$ for $n=2,3,\cdots$,
more explicitly
$$
\eqalign{
\left<\bos{\Omega}_1\right>&=t^\g\left<\bos{O}_\g\right>
,\cr
\left<\bos{\Omega}_n\right> &=\Fr{1}{n!}t^{\a_n}\cdots t^{\a_1} \bos{p}_{\a_1\cdots\a_n}{}^\g
\left<\bos{O}_\g\right>.
}
$$
Combining with the corollary \ref{qids} we obtain that
\begin{corollary}\label{qidd}
$$
\left[\rd_0,\bos{p}_2^\sharp\right]=t^\a\rd_\a,\qquad
\left[\rd_0, \bos{p}_n^\sharp\right]= \bos{p}_{n-1}^\sharp
\hbox{ for } n\geq 3,
$$
or, equivalently $\bos{p}_{0\a_1\cdots\a_n}{}^\g=\bos{p}_{\a_1\cdots\a_n}{}^\g$
for $n\geq 2$.
\end{corollary}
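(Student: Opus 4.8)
The plan is to read the identity off from Corollary~\ref{qids} together with the representation $\bos{\Omega}_n=\bos{p}_n^\sharp\bos{\Theta}_1+\bos{K}\bos{x}_n$ supplied by the preceding lemma. The case $n=2$ is immediate: since $\bos{p}_2^\sharp=m_2^\sharp$, the assertion $\left[\rd_0,\bos{p}_2^\sharp\right]=t^\a\rd_\a$ is nothing but the relation $\left[\rd_0,m_2^\sharp\right]=t^\a\rd_\a$ already recorded above (equivalently $m_{0\a}{}^\g=\d_\a{}^\g$). So I would fix $n\geq 3$ and write the correlator form of the lemma, $\bos{\Omega}_n=\Fr{1}{n!}t^{\a_n}\cdots t^{\a_1}\bos{\pi}_{\a_1\cdots\a_n}$ with $\bos{\pi}_{\a_1\cdots\a_n}=\bos{p}_{\a_1\cdots\a_n}{}^\g\bos{O}_\g+\bos{K}\bos{\mu}_{\a_1\cdots\a_n}$.

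First I would apply $\rd_0$ to this polynomial. Because $|t^0|=0$ the operator $\rd_0$ is even, commutes with $\bos{K}$ (it acts only on the $t_H$-variables), and annihilates the constants $\bos{O}_\g$; moreover it differentiates a degree-$n$ monomial into $n$ equal terms once contracted against the symmetric coefficient. Hence
$$
\rd_0\bos{\Omega}_n=\Fr{1}{(n-1)!}t^{\a_{n-1}}\cdots t^{\a_1}\bos{p}_{0\a_1\cdots\a_{n-1}}{}^\g\bos{O}_\g+\bos{K}\rd_0\bos{x}_n,
$$
whereas $\bos{\Omega}_{n-1}=\Fr{1}{(n-1)!}t^{\a_{n-1}}\cdots t^{\a_1}\bos{p}_{\a_1\cdots\a_{n-1}}{}^\g\bos{O}_\g+\bos{K}\bos{x}_{n-1}$. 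By Corollary~\ref{qids} the two left-hand sides coincide, and since the monomials $t^{\a_{n-1}}\cdots t^{\a_1}$ are linearly independent over $\sC[[\hbar]]$, matching coefficients gives, for every multi-index,
$$
\big(\bos{p}_{0\a_1\cdots\a_{n-1}}{}^\g-\bos{p}_{\a_1\cdots\a_{n-1}}{}^\g\big)\bos{O}_\g\in\Im\bos{K}.
$$

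The main obstacle is to pass from this membership to the vanishing of the bracketed constants, i.e.\ to $\bos{p}_{0\a_1\cdots\a_{n-1}}{}^\g=\bos{p}_{\a_1\cdots\a_{n-1}}{}^\g$. This is exactly the statement that the quantum observables $\{\bos{O}_\g=\bos{f}(e_\g)\}$ are $\Bbbk[[\hbar]]$-linearly independent in the cohomology of $(\sC[[\hbar]],\bos{K})$: if $\l^\g\bos{O}_\g=\bos{K}\bos{\eta}$ with $\l^\g\in\Bbbk[[\hbar]]$ then $\l^\g=0$. Under the running hypothesis $\bos{\k}=0$, $\bos{f}$ is a cochain map from $(H[[\hbar]],0)$ to $(\sC[[\hbar]],\bos{K})$ whose classical limit $f$ is a quasi-isomorphism, and I would establish the required injectivity by an $\hbar$-adic induction: at the lowest nonvanishing order $r$ one has $\l_r{}^\g O_\g\in\Im Q$, so $\l_r{}^\g[O_\g]=\l_r{}^\g e_\g=0$ forces $\l_r{}^\g=0$; any $Q$-closed remainder of $\bos{\eta}$ surfacing at the next order can be absorbed into a $\Bbbk$-combination of the $\bos{K}$-closed $\bos{O}_\g$ plus a $\bos{K}$-exact term, raising the order of $\bos{\eta}$ and letting the induction proceed. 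Applying this with $\l^\g=\bos{p}_{0\a_1\cdots\a_{n-1}}{}^\g-\bos{p}_{\a_1\cdots\a_{n-1}}{}^\g$ finishes the proof, and the equivalent operator form follows since $\bos{p}_n^\sharp$ carries no constant term, whence $\left[\rd_0,\bos{p}_n^\sharp\right]t^\g=\rd_0\big(\bos{p}_n^\sharp t^\g\big)$; this is the sense in which $\left[\rd_0,\bos{p}_n^\sharp\right]=\bos{p}_{n-1}^\sharp$ is to be read, the two operators needing not agree off the linear functions because $t^\a\rd_\a$ is not scalar there.

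As an independent check sidestepping the cohomological input altogether, I would verify the identity directly by induction from the recursion for $\bos{p}_n^\sharp$. Applying $\left[\rd_0,\,\cdot\,\right]$ and using $\left[\rd_0,m_2^\sharp\right]=t^\a\rd_\a$, $\left[\rd_0,m_k^\sharp\right]=0$ for $k\geq 3$, the inductive hypothesis for $\bos{p}_{n+1-k}^\sharp$, and the homogeneity $t^\a\rd_\a\big(\bos{p}_{n-1}^\sharp t^\g\big)=(n-1)\bos{p}_{n-1}^\sharp t^\g$, the $k=2$ contribution produces $\Fr{2}{n}\bos{p}_{n-1}^\sharp t^\g$ while the surviving sum reassembles, with the convention $\bos{p}_1^\sharp=\mathrm{id}$, via
$$
\sum_{k=2}^{n-1}(-\hbar)^{k-2}k(k-1)\,m_k^\sharp\bos{p}_{n-k}^\sharp=(n-1)(n-2)\,\bos{p}_{n-1}^\sharp,
$$
into $\Fr{n-2}{n}\bos{p}_{n-1}^\sharp t^\g$; the two pieces sum to $\bos{p}_{n-1}^\sharp t^\g$. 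I expect the cohomological independence to be the only genuinely delicate point, the recursion check being purely combinatorial.
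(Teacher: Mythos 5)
Your proposal is correct, and it in fact contains two proofs. Your first argument is the route the paper intends: the text obtains the corollary in one line by ``combining'' the lemma $\bos{\Omega}_n=\bos{p}_n^\sharp\bos{\Theta}_1+\bos{K}\bos{x}_n$ with Corollary \ref{qids}, and you correctly isolate the step this leaves implicit, namely that $\l^\g\bos{O}_\g\in \Im\bos{K}$ with $\l^\g\in\Bbbk[[\hbar]]$ forces $\l^\g=0$. This really is needed: taken literally at the level of expectation values the paper's one-liner would not suffice (the map $\bos{c}$ could be degenerate), and the remark in Section 3 that a single quantum observable with nontrivial classical limit cannot be $\bos{K}$-exact does not cover combinations such as $\hbar\big(\bos{O}_1-\bos{O}_2\big)$, whose classical limit vanishes. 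Your $\hbar$-adic descent handles exactly this, provided it is run as you indicate: normalize $\bos{\eta}$ by subtracting $\bos{f}\big(\big[\eta^{(0)}\big]\big)+\bos{K}\beta$ --- legitimate because $\bos{\k}=0$ makes $\bos{f}(e_\g)=\bos{O}_\g$ a $\bos{K}$-closed $\Bbbk$-combination and $\{e_\g\}$ is a basis of $H$ --- so that $\bos{\eta}$ becomes divisible by $\hbar$; since $\sC[[\hbar]]$ is $\hbar$-torsion-free the equation descends, and iterating gives $\l^\g\in\hbar^N\Bbbk[[\hbar]]$ for every $N$. In effect you are proving that $\bos{f}$ is injective on $\bos{K}$-cohomology, a statement the paper nowhere records. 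Your second, recursion-based induction is a genuinely different proof which the paper does not give: it is purely combinatorial, bypasses Corollary \ref{qids} and all cohomology, and is more in the spirit of the paper's own inductive proof of Corollary \ref{qids}. I checked its two ingredients: on linear functions the homogeneity $t^\a\rd_\a\big(\bos{p}_{n-1}^\sharp t^\g\big)=(n-1)\,\bos{p}_{n-1}^\sharp t^\g$ turns the $k=2$ term into $\Fr{2}{n}\bos{p}_{n-1}^\sharp t^\g$, and your telescoping identity is precisely the defining recursion for $\bos{p}_{n-1}^\sharp$ multiplied by $(n-1)(n-2)$, once the $k=n-1$ summand is read with $\bos{p}_1^\sharp=\mathrm{id}$ on linear functions; then $\Fr{2}{n}+\Fr{n-2}{n}=1$ closes the induction, with base case $\big[\rd_0,\bos{p}_3^\sharp\big]t^\g=\Fr{1}{3}\big(t^\a\rd_\a m_2^\sharp+m_2^\sharp t^\a\rd_\a\big)t^\g=m_2^\sharp t^\g$. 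Your closing caveat is also accurate and is a precision the paper glosses over: the commutator identity holds only on linear functions, equivalently at the level of the coefficients $\bos{p}_{\a_1\cdots\a_n}{}^\g$; for instance $\big[\rd_0,\bos{p}_3^\sharp\big]$ acts on homogeneous polynomials of degree $d$ as $\Fr{2d+1}{3}m_2^\sharp$, which agrees with $\bos{p}_2^\sharp=m_2^\sharp$ only at $d=1$.
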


\begin{example}
The first few quantum correlators are 
$$
\eqalign{
\bos{\Omega}_2 &=\Fr{1}{2!}\bos{\Theta}_1^2 -\hbar \bos{\Theta}_2,\cr
\bos{\Omega}_3 &=\Fr{1}{3!}\bos{\Theta}_1^3 -\hbar \bos{\Theta}_1\bos{\Theta}_2+\hbar^2\bos{\Theta}_3,
\cr
\bos{\Omega}_4 &=\Fr{1}{4!}\bos{\Theta}_1^4 -\Fr{\hbar}{2}\bos{\Theta}_1^2\bos{\Theta}_2
+\hbar^2\left(\bos{\Theta}_1\bos{\Theta}_3+\Fr{1}{2}\bos{\Theta}_2^2\right)
-\hbar^3 \bos{\Theta}_4
}
$$
or, in component
$$
\eqalign{
\bos{\pi}_\a:=&\bos{O}_\a
\cr
\bos{\pi}_{\a_{1}\a_{2}}:=&\bos{O}_{\a_{1}} \bos{O}_{\a_{2}} -\hbar \bos{O}_{\a_{1}\a_{2}}
,\cr
\bos{\pi}_{\a_{1}\a_{2}\a_{3}}=&
 \bos{O}_{\a_{1}} \bos{O}_{\a_{2}} \bos{O}_{\a_{3}}
-\hbar  \bos{O}_{\a_{1}\a_{2}}  \bos{O}_{\a_{3}}
-\hbar  \bos{O}_{\a_{1}} \bos{O}_{\a_{2}\a_{3}}
-\hbar (-1)^{|\a_{1}||\a_{2}|} \bos{O}_{\a_{2}}  \bos{O}_{\a_{1}\a_{3}}
\cr
&
+\hbar^2\bos{O}_{\a_{1}\a_{2}\a_{3}},
,\cr
\bos{\pi}_{\a_{1}\a_{2}\a_{3}\a_{4}}
=&
 \bos{O}_{\a_{1}} \bos{O}_{\a_{2}}  \bos{O}_{\a_{3}}  \bos{O}_{\a_{4}}
 \cr
 &
 -\hbar \bos{O}_{\a_{1}\a_{2}}  \bos{O}_{\a_{3}}  \bos{O}_{\a_{4}}
 -\hbar(-1)^{|\a_{1}||\a_{2}|}  \bos{O}_{\a_{2}}  \bos{O}_{\a_{1}\a_{3}}  \bos{O}_{\a_{4}}
 \cr
 &
 -\hbar  \bos{O}_{\a_{1}} \bos{O}_{\a_{2}\a_{3}}  \bos{O}_{\a_{4}}
 -\hbar (-1)^{|\a_{2}||\a_{3}|}  \bos{O}_{\a_{1}}\bos{O}_{\a_{3}}  \bos{O}_{\a_{2}\a_{4}}
\cr
&
-\hbar  \bos{O}_{\a_{1}} \bos{O}_{\a_{2}}  \bos{O}_{\a_{3}\a_{4}}
  -\hbar(-1)^{|\a_{1}|(|\a_{2}|+|\a_{3}|)} \bos{O}_{\a_{2}}  \bos{O}_{\a_{3}}  \bos{O}_{\a_{1}\a_{4}} 
   \cr
&
+\hbar^{2}  \bos{O}_{\a_{1}\a_{2}\a_{3}}  \bos{O}_{\a_{4}}
+\hbar^{2}(-1)^{|\a_{1}||\a_{2}|}  \bos{O}_{\a_{2}}  \bos{O}_{\a_{1}\a_{3}\a_{4}}
\cr
&
+\hbar^{2}  \bos{O}_{\a_{1}\a_{2}}  \bos{O}_{\a_{3}\a_{4}}
+\hbar^{2} (-1)^{|\a_{2}||\a_{3}|} \bos{O}_{\a_{1}\a_{3}}  \bos{O}_{\a_{2}\a_{4}}
+\hbar^{2} (-1)^{|\a_{1}|(|\a_{2}|+|\a_{3}|)}  \bos{O}_{\a_{2}\a_{3}}  \bos{O}_{\a_{1}\a_{4}}
\cr
&
+\hbar^2 \bos{O}_{\a_{1}}\bos{O}_{\a_{2}\a_{3}\a_{4}}
+\hbar^{2}(-1)^{(|\a_{1}|+|\a_{2}|)|\a_{3}|} \bos{O}_{\a_{3}}  \bos{O}_{\a_{1}\a_{2}\a_{4}}
\cr
&
-\hbar^3\bos{O}_{\a_{1}\a_{2}\a_{3}\a_{4}}
}
$$
Then
$$
\eqalign{
\left<\bos{\Omega}_2\right> =&\left<\bos{\Theta}_1\right>,\cr
\left<\bos{\Omega}_2\right> =& m_2^\sharp \left<\bos{\Theta}_1\right>,\cr
\left<\bos{\Omega}_3\right>
=&\left(\Fr{1}{3}m_2^{\sharp} m_2^{\sharp}  -\hbar m_3^\sharp\right)\left<\bos{\Theta}_1\right>
,\cr
\left<\bos{\Omega}_4\right> 
=& \left(\Fr{1}{18}m_2^{\sharp} m_2^{\sharp} m_2^{\sharp} -\Fr{\hbar}{6} m_2^\sharp m_3^\sharp
 -\Fr{\hbar}{2}m_3^\sharp m_2^\sharp +\hbar^2 m_4^\sharp  \right)
\left<\bos{\Theta}_1\right>
.
}
$$
The above examples illustrate some nature of quantum correlations.
\end{example}

Now the generating function $\mb{\CZ}(t_H)$ of all correlation functions
can be expressed as
$$
\mb{\CZ}(t_H) =\left<1\right> -\Fr{1}{\hbar} \mb{T}^\g(t_H)\left<\bos{O}_{\g}\right>
$$
where 
$$
\mb{T}^\g:=t^{\g} -\Fr{1}{2\hbar}t^\b t^\a m_{\a\b}{}^\g
+\sum_{n=3}^\infty \Fr{1}{n!}\Fr{(-1)^{n-1}}{\hbar^{n-1}}
t^{\a_{n}}\cdots t^{\a_{1}}\bos{p}_{\a_{1}\cdots\a_{n}}{}^{\g} \in \Bbbk\left[\!\left[t_H,\hbar^{-1}\right]\!\right]
$$
From the corollary \ref{qidd}, we have
$$
\rd_0 \mb{T}^\g = \d_0{}^\g -\Fr{1}{\hbar}\mb{T}^\g. 
$$
A detailed study of properties of $\{\mb{T}^\g\}$ is a subject of the next paper.

\subsection{Quantum versus flat coordinates}

An immediate consequence of the  theorem  $4.1.$ is that
that the classical limit $\Theta$ of $\bos{\Theta}$ is a solution
to the DGLA $\big(\sC, Q.\;\cdot\;\big)$ of very special kind.

\begin{corollary}
There exists a solution 
to  the classical descendant equation
\eqn\sdeform{
\eqalign{
Q{\Theta} +\Fr{1}{2}\big({\Theta},{\Theta}\big)=0,
\qquad
{\Theta} = t^\a {O}_\a 
+ \sum_{n=2}^\infty \Fr{1}{n!}t^{\a_n}\cdots t^{\a_1} {O}_{\a_1\cdots\a_n}
 \in \big(\Bbbk[[t_H]]\otimes\sC\big)^0
}
}
such that
\begin{enumerate}
\item (versality) the set of cohomology classes $[O_\a]$ 
form a basis of cohomology $H$ of the classical complex $(\sC, Q)$

\item (quantum coordinates) $\Theta$ is the classical limit of the solution to quantum master equation

\item (quantum identity) $\rd_0 \Theta=1$.
\end{enumerate}
\end{corollary}

We recall some standard relations between deformation theory, DGLA and  
$L_\infty$-algebra (see \cite{Kontsevich} and references therein for details). 
$L_\infty$-algebra is  natural homotopy generalization
of DGLA in the following sense. A morphism of DGLA is naturally a cochain map which is also a (graded)
Lie algebra map. However, a cochain map homotopic to a morphism of DGLA is not a Lie algebra map in
general but it can be viewed as a morphism of $L_\infty$-algebra. Thus it is natural to replace
the category of DGLA to more flexible category of $L_\infty$-algebra, which localize well under homotopy.
Also, on cohomology of DGLA, there is a structure of minimal $L_\infty$-algebra (an $L_\infty$-algebra with
zero-differential), which is quasi-isomorphic to the DGLA at the chain level as $L_\infty$-algebra. 
Furthermore such minimal $L_\infty$-structure on cohomology is the obstruction to have a
versal solution to Maurer-Cartan equation the DGLA. 
A DGLA is called formal if the minimal $L_\infty$-algebra on its cohomology is a graded
Lie algebra, and a formal DGLA has an associated smooth moduli space if and only if 
the graded Lie algebra on its cohomology is Abelian, i.e., the graded Lie bracket vanishes on $H$.
Then  a versal solution to the Maurer-Cartan equation is nothing but a quasi-isomorphism
from $H$ to the DGLA as $L_\infty$-algebra.\cite{Kontsevich}

The versal solution we have is an $L_\infty$-quasi-isomorphism of very special kind 
since not every versal solution  
of \sdeform\ arises as the classical limit of solution of quantum master equation.
Hence we say that an anomaly-free BV QFT has its natural family parametrized by
a smooth moduli space $\CM$, a formal super-manifold, in quantum coordinates.

Now we are going to demonstrate that
the notion of quantum coordinates is a natural generalization 
of that of flat or special coordinates on moduli spaces of topological strings 
in the context of Witten-Dijkgraaf-Verlinde-Verlinde (WDDV)
equation \cite{W,DVV} as well as the mirror map 
of Candelas-de la Ossa-Green-Parkes for Calabi-Yau quintic 
\cite{COGP,Witten}. 
For the mathematical sides, both the pioneering work of K.\ Saito on his flat structure
on moduli space of universal unfolding of simple singularities \cite{Saito}
and the flat coordinates in certain differential
BV algebra due to Barannikov-Kontsevich \cite{BK} are also examples of
quantum coordinates.  Those correspondences shall be discussed briefly in
three examples of this subsection.
We  remark that our result is {\it not} specific
to topological strings or $2d$-dimensional topological conformal theory.
It should be also noted that our general package does not include flat metric
over the moduli space $\CM$ in it. In the following three examples 
one may easily supply such a metric as additional data.

Consider a BV QFT algebra $\big(\sC[[\hbar]],\bos{K},\;\cdot\;\big)$ such that
$\bos{K}= Q +\hbar K^{(1)}$. Then the quadruple $(\sC, Q, \Delta,\;\cdot\;)$,
where $\Delta:=-K^{(1)}$, is a differential BV algebra. Conversely, let 
$(\sC, Q, \Delta,\;\cdot\;)$ be a differential BV algebra, then 
$\big(\sC[[\hbar]],\bos{K}=-\hbar\Delta +Q,\;\cdot\;\big)$ is a BV QFT algebra.
 We say such a BV QFT algebra
{\it semi-classical} if there is a $\Bbbk$-linear map $f$ on $H$ into $\sC$ such that
(i) $f$ is cochain map on $(H,0)$ into $(\sC, Q)$ which induces the identity map
on the cohomology $H$, (ii) $f(e)=1$ and (iii) $\Delta f =0$.  Then $\bos{K} f =0$,
i.e., $\bos{f}=f$, and $\bos{\k}=0$ on $H$ identically.
It follows that $\bos{\Theta}_1= t^a f(e_\a) =\Theta_1$, and the quantum master equation
is decomposed into the following set of equations
\eqn\semicl{
\eqalign{
&\left\{
\eqalign{
0=&Q{\Theta}_1
,\cr
0=&\Fr{1}{2}{\Theta}_1\cdot {\Theta}_1- m^\sharp_2 {\Theta}_1 - Q \La_2
,\cr
%
\vdots\;&
\cr
0=& 
\sum_{k=1}^{n-1}\Fr{ k(n-k)}{n(n-1)} {\Theta}_{k}\cdot  {\Theta}_{n-k}
-\sum_{k=2}^{n-1}\Fr{k(k-1)}{n(n-1)}\left(  m^\sharp_{k} {\Theta}_{n-k+1}
+ \big({\Theta}_{n-k},\La_{k}\big)\right)
\cr
&
- m^\sharp_n {\Theta}_1-Q\La_k
,\cr
\vdots\;&
}
\right.
\cr
&\left\{
\eqalign{
\Delta \Theta_1&=0,\cr
\Theta_n&=\Delta \La_n \hbox{ for } n\geq 2.
}\right.
}
}
The above shall be called semi-classical master equation.
Then the quantum descendant equation also has decomposition as follows
$$
\eqalign{
\Delta \Theta=0,\cr
Q\Theta +\Fr{1}{2}\big(\Theta,\Theta\big)=0,
}
$$
which shall be called semi-classical descendant equation.

Hence we have (see also \cite{Park2,Terilla})
\begin{corollary}\label{flatco}
Let $\big(\sC, Q, \Delta, \;\cdot\;\big)$ be a differential BV algebra with
associated BV bracket $(\hbox{ },\hbox{ })$ such that every $Q$-cohomology
class has a representative in $\Ker \Delta$. Then
there exists a solution $\Theta$
to  the MC equation of the DGLA $\big(\sC, Q, (\hbox{ },\hbox{ })\big)$;
$$
\eqalign{
Q{\Theta} +\Fr{1}{2}\big({\Theta},{\Theta}\big)=0,
\qquad
{\Theta} = t^\a {O}_\a 
+ \sum_{n=2}^\infty \Fr{1}{n!}t^{\a_n}\cdots t^{\a_1} {O}_{\a_1\cdots\a_n}
 \in \big(\Bbbk[[t_H]]\otimes\sC\big)^0
}
$$
such that
\begin{enumerate}
\item (versality) the set of cohomology classes $[O_\a]$ 
form a basis of cohomology $H$ of the complex $(\sC, Q)$

\item (flat coordinates) $O_\a \in \Ker \Delta$ and $O_{\a_1\cdots\a_n}\in \Im \Delta$ for $n\geq 2$

\item (flat identity) $\rd_0 \Theta=1$, where $\rd_0$ is the coordinate vector field corresponding to $[1]\in H^0$.
\end{enumerate}
\end{corollary}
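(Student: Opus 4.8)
The plan is to read off the statement from the main theorem of this section after specializing to the semi-classical situation, where the problem becomes essentially $\hbar$-free. First I would pass from the given differential BV algebra $(\sC, Q, \Delta, \cdot)$ to the BV QFT algebra $\bigl(\sC[[\hbar]], \bos{K} = -\hbar\Delta + Q, \cdot\bigr)$ discussed in an earlier example of Section 2. For this $\bos{K}$ the failure of being a derivation receives no contribution from $Q$ (which is already a derivation), so the descendant bracket is exactly the $\hbar$-independent BV bracket $(\bullet,\bullet)$ of the dBV algebra; this is the point that keeps the whole construction $\hbar$-free.

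Next I would use the hypothesis that every $Q$-cohomology class has a $\Delta$-closed representative to build a $\Bbbk$-linear map $f:H\longrightarrow\sC$ with $f(e)=1$, $Qf=0$ and $\Delta f=0$: for each basis vector $e_\a$ pick $O_\a:=f(e_\a)\in\Ker Q\cap\Ker\Delta$ representing $e_\a$, taking $O_0=1$ (legitimate since $\Delta 1=0$ and $[1]=e$), and extend linearly. Because $\bos{K}f=Qf-\hbar\Delta f=0$, the map $\bos{f}=f$ is a quantum extension, so by the uniqueness in theorem \ref{Propos1} we have $\bos{\k}=0$ on $H$. We are therefore exactly in the framework of this section, and theorem $4.1$ together with the corollary that follows it produces a solution $\bos{\Theta}$ of the quantum master equation with $\bos{\Theta}_1=t^\a O_\a$, $\rd_0\bos{\Theta}=1$, the versality of $\{[O_\a]\}$, and, in the classical limit, the Maurer-Cartan equation $Q\Theta+\Fr{1}{2}(\Theta,\Theta)=0$. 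This already gives conditions (1) and (3), while $O_\a=f(e_\a)\in\Ker\Delta$ gives half of (2).

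The only new content is $O_{\a_1\cdots\a_n}\in\Im\Delta$ for $n\geq 2$, which I would extract by an induction on $n$ showing simultaneously that every $\bos{\Theta}_n$ is $\hbar$-independent and equal to $\Delta\La_n$. Assuming $\bos{\Theta}_k=\Theta_k$ for $k<n$, and recalling that the $m_k^\sharp$ and $\La_k$ are produced $\hbar$-free from the cohomological equation \mevn\ and that the bracket carries no $\hbar$, the building block $\mathbb{M}_n$ equals its own classical limit $\CM_n$. Substituting into the $n$-th line of the quantum master equation and using $\bos{K}=Q-\hbar\Delta$ gives
$$\hbar\bos{\Theta}_n=\mathbb{M}_n-m_n^\sharp\bos{\Theta}_1-\bos{K}\La_n=\bigl(\CM_n-m_n^\sharp\Theta_1-Q\La_n\bigr)+\hbar\,\Delta\La_n=\hbar\,\Delta\La_n,$$
where the bracketed term vanishes by \mevn. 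Hence $\bos{\Theta}_n=\Theta_n=\Delta\La_n$ is $\hbar$-independent, which is exactly the decomposition recorded in the semi-classical master equation \semicl; matching word-length-$n$ coefficients yields $O_{\a_1\cdots\a_n}\in\Im\Delta$. As a bonus, $\Theta_1\in\Ker\Delta$ and $\Theta_n\in\Im\Delta$ force $\Delta\Theta=0$ via $\Delta^2=0$, recovering the semi-classical descendant equation.

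The main obstacle is making this induction airtight: one must verify that the freedom in choosing $m_n^\sharp$ and $\La_n$ can genuinely be exercised inside $\sC$ (rather than $\sC[[\hbar]]$) and that $\mathbb{M}_n$ inherits $\hbar$-independence at every stage. Both reduce to the two structural facts isolated above --- the $\hbar$-freeness of the descendant bracket for $\bos{K}=Q-\hbar\Delta$ and the identity \mevn. Once these are in place, versality, the Maurer-Cartan equation, the flat-coordinate property and the flat identity all follow, and the finite-dimensionality hypothesis of this section is used only to guarantee the moduli interpretation, not the existence of $\Theta$.
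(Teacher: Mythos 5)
Your proof is correct and follows essentially the same route as the paper: Section 4.4 defines the semi-classical situation by exactly your map $f$ (so that $\bos{f}=f$ and $\bos{\k}=0$ on $H$), notes that the descendant bracket of $\bos{K}=Q-\hbar\Delta$ is the $\hbar$-independent BV bracket, and then decomposes the quantum master equation of Theorem 4.1 into its $\hbar^{0}$ part (the classical equations for $m_n^\sharp$, $\La_n$) and its $\hbar^{1}$ part ($\Delta\Theta_1=0$ and $\Theta_n=\Delta\La_n$ for $n\geq 2$), which is precisely the content of your induction. The only difference is presentational: the paper asserts this decomposition directly in its displayed semi-classical master equation, while you make explicit the inductive verification that each $\bos{\Theta}_n$ is $\hbar$-free and equal to $\Delta\La_n$ --- a check the paper leaves implicit.
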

In the above we've changed the adjective quantum to flat due to the
following famous examples.

\begin{example}   We say a differential BV-algebra $(\sC, \Delta, Q, \cdot)$ has the $\Delta
  Q$-{\it property} if
$$
\left(\Ker Q \cap \Ker \Delta\right) \cap \left(\hbox{Im }
 \Delta \oplus \hbox{Im } Q\right) = \hbox{Im }\Delta Q
= \hbox{Im }Q\Delta.
$$
The corresponding BV QFT algebra is semi-classical since 
the $\Delta Q$-{\it property} implies that every $Q$-cohomology class has a
representative in $\Ker \Delta$.  Then the corollary \ref{flatco} is exactly
the lemma $6.1$ of  Barannikov-Kontsevich in \cite{BK}.
The standard example of  such a differential BV algebra 
that controlling (extended)-deformation of complex structures of Calabi-Yau manifold,
corresponding to the extended moduli space of topological string B model \cite{Witten,Park0}. 
The  $\Delta Q$-property is a direct consequence the $\rd\bar\rd$-lemma of 
K\"{a}her manifold in \cite{DGMS}.
\end{example}

A semi-classical BV QFT can be also constructed from a differential BV algebra without the
$\Delta Q$-{\it property}. 

\begin{example}
  Let $\sC=\C[x^1,\ldots, x^m, \eta_1,\ldots, \eta_m]$ be a
  super-commutative polynomial algebra with free associative product subject to
  the super-commutative relations 
  $$x^i \cdot x^j= x^j\cdot x^i,\quad x^i\cdot\eta_j=\eta_j\cdot x^i,
  \quad \eta_i\cdot \eta_j=-\eta_j\cdot \eta_i
  $$
  Assign ghost number $0$ to $\{x^i\}$ and
  $-1$ to $\{\eta_i\}$. Then $\sC = \sC^{-m}\oplus\cdots\oplus
  \sC^{-1} \oplus \sC^0$. Note that $\sC^0=\Bbbk[x^1,\ldots, x^m]$.
  Define 
  $$
  \Delta:= \Fr{\rd^2}{\rd x^i \eta_i}:
  \sC^{k}\rightarrow \sC^{k +1}.
  $$
   It is obvious that
  $\Delta^2=0$, and the triple $(\sC, \Delta, \;\cdot\;)$ is a BV algebra
  over $\Bbbk$. We also note that $\sC^0 \in \hbox{Im }\Delta$.  To
  see this, it suffices to consider an arbitrary monomial
  $$(x^1)^{N_1}\cdots (x^m)^{N_m} \in \sC^0=\Bbbk[x^1,\ldots, x^m]$$ and
  observe that, for instance,
$$ (x^1)^{N_1}\cdots (x^m)^{N_m}
=\Fr{1}{(N_1+1)} \Delta\left(\eta_1\cdot (x^1)^{N_1+1}\cdots
  (x^m)^{N_m}\right).
$$
For any $S \in \sC^0$, we always have $\Delta S=(S,S)=0$. Fix $S$ and
define $Q=(S,\hbox{ })$, then  the quadruple $(\sC, \Delta, Q, \cdot)$ is a dBV algebra.
Denote by $H$ the cohomology of the complex $(\sC, Q)$.
Then
$$
H^0= \Bbbk[x^1,\ldots,x^m]\biggl/\left<\Fr{\rd S}{\rd x^1},\cdots, \Fr{\rd S}{\rd x^m}\right>,
$$
since $\sC^0\subset \hbox{Ker }Q$, and any element $R \in \sC^{-1}$
with the ghost number $-1$ is in the form $R =R^i \eta_i$, where
$\{R_i\}$ is a set of $m$ elements in $\sC^0$, such that $Q R = R^i
\Fr{\rd S}{\rd x^i}$.
Now we assume that $S$ is a polynomial (in
$x's$) with isolated singularities, so that the cohomology $H$ of the
complex $(\sC, Q)$ is concentrated in the ghost number zero part,
i.e., $H=H^0$. Then any representative of $H$ belongs to $\hbox{Ker
}\Delta$, since $\sC^0$ itself belongs to $\hbox{Ker }\Delta$.  Thus
the corresponding BV QFT  algebra is obviously semi-classical.  We already know that
$\sC^0 \subset \Ker Q \cap \hbox{Im} \Delta$, so that $\Delta
Q$-property would imply that $\sC^0 \subset \hbox{Im }Q\Delta$ and,
in particular, $H^0=0$, which is not generally true. The corollary \ref{flatco},
then, should be attributed to K.\ Saito \cite{Saito} and, independently, 
to Dijkgraaf-Verlinde-Verlinde \cite{DVV}.

\end{example}

\begin{remark}
The above example could be regarded as the simplest example of BV QFT
- a class of (0+0)-dimensional quantum field theories  without gauge symmetry,
where the polynomial $S$ is  classical action. In the next example
we will present a class of (0+0)-dimensional quantum field theories  with
Abelian gauge symmetry.
\end{remark}

\begin{example}
Let $\sC_{cl}=\C[z^\m |z^\bullet_\m]$, $\m=0,1,2,\cdots,n+2$, 
 a super-commutative polynomial algebra with free associative product subject to
  the super-commutative relations $z^\m \cdot z^\n= z^\m\cdot z^\n$,
  $z^\m\cdot z^\bullet_\n=z^\bullet_\n\cdot z^\m$ and $z^\bullet_\m\cdot
  z^\bullet_\n=-z^\bullet_\n\cdot z^\bullet_\m$. Assign ghost number $0$ to $\{z^\m\}$ and
  $-1$ to $\{z^\bullet_\n\}$. Then $\sC _{cl}= \sC_{cl}^{-n-2}\oplus\cdots\oplus
  \sC_{cl}^{-1} \oplus \sC_{cl}^0$. Note that $\sC^0=\C[z^\m]$.
Define 
$$
\Delta_{cl}:= \Fr{\rd^2}{\rd\! z^\m \rd\! z^\bullet_\m} :\sC_{cl}^k\rightarrow \sC_{cl}^{k+1},
$$
which satisfies $\Delta_{cl}^2=0$ such that $\big(\sC_{cl}, \Delta_{cl},\;\cdot\;\big)$
is a BV algebra over $\C$. The associated BV bracket $(\hbox{ },\hbox{ })_{cl}$ satisfies
that
$$
(z^\m, z^\n)_{cl}=0,\qquad (z^\n, z^\bullet_\m)_{cl}=-(z^\bullet_\m, z^\n)_{cl}=\d_\m{}^\n,
\qquad (z^\bullet_\m, z^\bullet_\n)_{cl}=0.
$$
We have $\sC_{cl}^0 \subset \Im \Delta_{cl}$ and $(\hbox{ },\hbox{ })_{cl}=0$ on $\sC^0_{cl}$.

Let 
$$
S(z)_{cl}= p\cdot G(x) \in \sC^0_{cl}
$$
where we denote $p=z^0$ and $x^i= z^i$ for $i=1,2,\cdots,n+2$ and $G(x)$ is 
a generic homogeneous polynomials in 
$\{x^i\}$ of degree $n+2$.
Let
$$
Q_{cl}:=(S_{cl},\hbox{ })_{cl}=G(x)\Fr{\rd}{\rd p^\bullet} 
+ p\left(\Fr{\rd G(x)}{\rd x^i}\right)\Fr{\rd}{\rd x^\bullet_i}.
$$
It follows that $Q_{cl}\Delta_{cl} +\Delta_{cl} Q_{cl}=Q_{cl}^2=0$ since  it is trivial that
$$
\eqalign{
\Delta S_{cl} =0,\cr
\big(S_{cl}, S_{cl}\big)_{cl}=0.
}
$$
Thus we have  constructed a BV QFT algebra $\big(\sC[[\hbar]], \bos{K}_{cl},\;\cdot\;\big)$
 with the descendant algebra 
$\big(\sC[[\hbar]], \bos{K}_{cl},(\hbox{ },\hbox{ })\big)$, where
$$
\bos{K}_{cl}:=-\hbar\Delta_{cl} + Q_{cl}.
$$
Let $H_{cl}$ denote cohomology of the cochain complex $\big(\sC_{cl}, Q_{cl}\big)$.
We note  that $\sC^0_{cl} \in \Ker Q_{cl}$.

There are two differences between the present case with the previous example;
(i) $H_{cl}^0$ is degenerated  and (ii) $H_{cl}^{-1}$ is
non-empty. Actually the property (i) is a consequence of (ii). We claim that
$H_{cl}=H_{cl}^{-1}\oplus H_{cl}^0$ and the BV QFT algebra is
semi-classical. The non-triviality of $H^{-1}$ corresponds to an obvious symmetry of
$S_{cl}$;
$$
\left(x^i\Fr{\rd}{\rd x^i} - (n+2) p\Fr{\rd}{\rd p}\right) S_{cl}=0,
$$
since $S_{cl}$ is a weighted homogeneous polynomial with degree $0$;
assign weight $1$ to $\{x^i\}$ and $-n-2$ to $x^0$. Or, equivalently $S_{cl}$
is invariant under the following $\C^*$-action on $\C^{n+3}$;
\eqn\cstar{
\eqalign{
x^i&\rightarrow \vr x^i,
\cr
p&\rightarrow \vr^{-n-2} p,
}
}
where $\vr\in \C^*$.
Let 
$$
R= x^i x^\bullet_i - (n+2) p p^\bullet, 
$$
Then $R=\in \sC^{-1}$ and
$Q_{cl} R =0$ 
since
$$
Q_{cl} R \equiv \big(S_{cl},R\big)_{cl}=-\big(R,S_{cl}\big)_{cl}= \left(x^i\Fr{\rd}{\rd x^i} 
- (n+2) p\Fr{\rd}{\rd p}\right) S_{cl},
$$
and $R$ can not be $Q_{cl}$-exact simply by the degree reason. Thus the 
$Q_{cl}$-cohomology class $[R]_{cl}$ of $R$ is a non-trivial element in $H_{cl}^{-1}$.
We claim that $H_{cl}^{-1}$ is generated by $R$ as a left $\sC_{cl}^0$-module.
Before we proceed further, here are some physics terminology:
\begin{itemize}
\item $\{z^\m\}$: classical fields.
\item $\{z^\bullet_\m\}$: anti-classical fields.
\item $S(z^\m)_{cl}$: classical action.
\item $R$: classical gauge symmetry vector.
\end{itemize}

We also note that $\Delta_{cl} R =0$ such that $\bos{K}_{cl}R=0$.
Thus we may say the classical symmetry vector is anomaly-free.
We also
note that the classical equation of motion, $\d S_{cl}/\d z^\m=0$, is
$$
\eqalign{
G(x)&=0,\cr
p \Fr{\rd G(x)}{\rd x^i}&=0, \quad i=1,2,\cdots,n+2.
}
$$
We may call the solution space of the above modulo the classical gauge symmetry
the space of classical observer. 
Assuming that $p\neq 0$, we must have $\Fr{\rd G(x)}{\rd x^i}=0$ for all $i$ to solve the
classical equation of motion.
Then $x^1=x^2=\cdots=x^{n+2}=0$ since $G$ is generic.
If $p=0$, then the solution space of classical equation is the zero set of homogeneous
polynomial $G(x^i)$ of degree $n+2$.
Then the space of solutions of classical equation of motion modulo the classical symmetry
is a $n$-dimensional Calabi-Yau hypersurface $X$ of $\mathbb{CP}^{n+1}$.
In general the space of classical observer can be viewed as the solution space of
classical equation of motion in the GIT quotient $\C^{n+3}//\C^*$, which depends on
choice of polarization imposing either $p\neq 0$ or $p=0$ -- see section $4$ in \cite{Witten1}.

Now we kill the classical gauge symmetry as follows. Introduce the dual basis $c$ of $H_{cl}^{-1}$
with ghost number $1$. Let $c^\bullet$ denote the corresponding basis of $H_{cl}^*[-2]$
with ghost number $-2$.
Let $\sC=\C[z^\m, c|z^\bullet_\m, c^\bullet]$
and
$$
\eqalign{
\Delta &:= \Delta_{cl}
-\Fr{\rd^2}{\rd c \rd c^\bullet}
,\cr
 S &:= S_{cl} + c R= p\cdot G(x) + c x^i x^\bullet_i -(n+2)c p p^\bullet.
 }
 $$
 Then
$$
\eqalign{
\Delta S =0,\cr
(S,S)=0.\cr
}
$$
Let $Q=(S,\hbox{ })$, then
we have another BV QFT algebra $\big(\sC[[\hbar]], \bos{K}=-\hbar \Delta + Q,\;\cdot\;\big)$
with the descendant algebra $\big(\sC[[\hbar]], \bos{K},(\hbox{ },\hbox{ })\big)$
where $(\hbox{ },\hbox{ })\big|_{\sC_{cl}}=(\hbox{ },\hbox{ })_{cl}$ and
$$
(c,c^\bullet)=-(c^\bullet, c)=1, \qquad
(c^\bullet, c^\bullet)=(c,c)=(c, z^\m)=(c,z^\bullet_\m)=(c^\bullet, z^\m)=(c^\bullet,z^\bullet_\m)=0,
$$
such that
$$
\eqalign{
Q =&(S_{cl}+c R, \hbox{ })
\cr
=&Q_{cl} + c(R, \hbox{ }) - R \Fr{\rd}{\rd c^\bullet}
\cr
=&G(x^j)\Fr{\rd}{\rd p^\bullet} + p\left(\Fr{\rd G(x)}{\rd x^i}\right)\Fr{\rd}{\rd x^\bullet_i}
-R \Fr{\rd}{\rd c^\bullet}
\cr
&+c\left(x^i\Fr{\rd}{\rd x^i} - (n+2) p\Fr{\rd}{\rd p}\right)
-c\left(x^\bullet_i\Fr{\rd}{\rd x^\bullet_i} - (n+2) p^\bullet\Fr{\rd}{\rd p^\bullet}\right)
}
$$ 
In particular $Q c^\bullet = R$ and, hence, we just have killed the classical symmetry vector.
Before we proceed further, here are some more physics terminology:
\begin{itemize}
\item $c$: Faddev-Popov ghost field.
\item $c^\bullet$: anti-field of Faddev-Popov ghost field.
\item $S$: BV quantum master action which is semi-classical.
\item $\d_{BRST}:=Q\big|_{z^\bullet_\m=c^\bullet=0}=
c\left(x^i\Fr{\rd}{\rd x^i} - (n+2) p\Fr{\rd}{\rd p}\right)$: the  BRST operator
which corresponds to the Euler vector field associated with the $\C^*$-action.
\end{itemize}

Now the $Q$-cohomology $H$ is concentrate on $H^0$ and
$$
H^0 =\C[p, x^i]^{inv}\biggl/\left<G(x), p\Fr{\rd G(x)}{\rd x^i}\right>^{inv},
$$
where the superscript {\it inv} means the invariant part under the $\C^*$-action \cstar.\foot{
The $\Ker Q$ in $\sC^0$ is  in the form $f + c g^\m z^\bullet_\m$ for any $f\in \C[p, x^i]^{inv}$
and $g\in\C[p, x^i]$, since $cc=0$. On the other hand it can be shown that any expression
$c g^\m z^\bullet_\m$ belongs to $\Im Q$.
}
We note that any  element in $\C[p, x^i]^{inv}$ is a $\C$-linear combinations of
monomials in the form $ p^{k}M(x)_{k(n+2)}$, $k=0,1,2,\cdots$,
where $M(x)_{k(n+2)}$ are monomials in $\{x^i\}$ of degree $k(n+2)$. 
 It is a standard exercise in commutative algebra to show  that
$Q$-cohomology class $\left[p^k M_{k(n+2)}\right]$ 
of $p^k M_{k(n+2)}$ is trivial
 for $k > n$. Then the following isomorphism is obvious;
$$
H^0 
\simeq \bigoplus_{k=0}^{n}\C\left[x^i\right]^{k(n+2)}\biggl/\left<\Fr{\rd G(x)}{\rd x^i}\right>^{k(n+2)}
$$
where the superscript $k(n+2)$ denote homogeneous polynomial of degree $k(n+2)$.
What is not obvious is the following isomorphism
$$
H^0 \simeq \bigoplus_{k=0}^{n} H_{prim}^{n-k, k}(X)
$$
where $H_{prim}^{n-k,k}(X)$ denote the primitive part of Dolbeault cohomology of the Calabi-Yau
$n$-fold $X$. This is due to the residue map Griffths \cite{Griffiths}.
 Now choose a basis $\{e_\a\}$ of $H^0$ as a finite-dimensional $\C$-vector space 
 such that its set  $\{O_\a\}$of representatives  
are invariant monomials among
$$
1, p M_{n+2}, p^2 M_{2n+2}, \cdots, p^{n} M_{n(n+2)},
$$
and $O_\a=1$. Then we have a $\Bbbk$-linear map $f:H=H^0\rightarrow \sC^0$
such that $f(e_\a)=O_\a$, $f(1)=1$, $[f(e_\a)]=e_\a$ and $\bos{K}f=0$. Thus
the BV QFT $\big(\sC[[\hbar]], \bos{K}=-\hbar \Delta + Q,\;\cdot\;\big)$ is semi-classical
and the semi-classical master equation \semicl\ as well as the corollary \ref{flatco} applies.
It is also a matter of computation to find explicit solution once a particular form of $S_{cl}$
is given. For example, consider the following classical action
$$
S_{cl}= p \sum_{i=1}^{n+2} (x^i)^{n+2}.
$$ 
The problem of solving the semi-classical master equation reduces to  a sequence of 
ideal membership problem, which can be implanted as a code for an algebraic package
dealing Gr\"obner basis. For $n=3$, thus for the Fermat quintic hypersurface the
dimension of $H=H^0$ is $204=1+101+101+1$, which is a large number requiring some CPU time.
Candelas et al. in \cite{COGP} originally studied one parameter family of Calabi-Yau Quintic,
i.e., $\Theta_1 = 5 t x^1\cdots x^5$ and determined the special coordinates and Picard-Fuchs
equation for period. 

\end{example}

\begin{remark}

The classical action $S(z)_{cl}= p\cdot G(x) \in \sC^0_{cl}$ has been borrowed from a holomorphic
superpotential in the gauged linear sigma model of Witten \cite{Witten1}. 
The above example may be generalized to any Calabi-Yau space $X$ based on toric geometry - toric
hypersurface, complete intersection of toric hypersurfaces.
The generating functional of quantum correlation functions for the present case is closely related 
with certain extended variations of Hodge structure on $X$, which details is beyond the scope of this
paper \cite{Comp1}. Solution of semi-classical master equation implies the
Picard-Fuchs equation. 
\end{remark}

In our philosophy  the quantum coordinates is nothing but a geometrical avatar of 
quasi-isomorphism as QFT algebra. All these seem suggesting that mirror symmetry 
may be stated in the similar fashion via QFT algebra with certain additional structure. 
The goal of this on going series is to file up evidences that quantum field theory is 
a study of morphism of QFT algebras such that two quantum field theories are physically equivalent
 if the associated QFT algebras are quasi-isomorphic, while "renormalizing" the notion 
 of QFT algebra as we proceed further.

%

\end{document}